\it\color{black},
\bfseries\color{red!40!black},
\bfseries\color{black},
\bfseries\color{green!40!black},
\ttfamily\color{yellow!30!black},
\definecolor{backcolour}{rgb}{0.95,0.95,0.92}
\definecolor{codegreen}{rgb}{0,0.6,0}
\definecolor{codegray}{rgb}{0.5,0.5,0.5}
\definecolor{codepurple}{rgb}{0.58,0,0.82}
\definecolor{magenta}{rgb}{0.6,0,0.6}
\lstdefinestyle{mystyle}{
    backgroundcolor=\color{backcolour},   
    commentstyle=\color{codegreen},
    keywordstyle=\color{magenta},
    numberstyle=\tiny\color{codegray},
    stringstyle=\color{codepurple},
    basicstyle=\ttfamily\footnotesize,
    breakatwhitespace=false,         
    breaklines=true,                 
    captionpos=b,                    
    keepspaces=true,                 
    numbers=left,                    
    numbersep=5pt,                  
    showspaces=false,                
    showstringspaces=false,
    showtabs=false,                  
    tabsize=2
}
\newtheorem{theorem}{\indent Theorem}[section]
\newtheorem{proposition}[theorem]{\indent Proposition}
\theoremstyle{definition}
\newtheorem{definition}{\indent Definition}[section]
\newtheorem{example}{\indent Example}[section]
\newtheorem{remark}{\indent Remark}[section]
\newcounter{chapter}
\renewcommand{\thechapter}{\arabic{chapter}}
\titleformat{\chapter}[block]
  {\normalfont\LARGE\bfseries\raggedright}
  {Chapter \thechapter\quad}
  {0pt}
  {}
\titlespacing*{\chapter}{0pt}{20pt}{10pt}
\bfseries\color{blue}}
\quad\color{blue}}
\hfill\color{black}\contentspage} 
\hfill\color{black}\contentspage} 
\begin{document}
	
\title{\LARGE \bfseries Optical Caustics as Lagrangian Singularities: Classification and Geometric Structure}

\author{
    \large Rongqi Shang\textsuperscript{1}, Donglin Ma\textsuperscript{1,2}\thanks{Corresponding author: madonglin@hust.edu.cn} \\[1em]
    \small \textsuperscript{1}School of Mathematics and Statistics, Huazhong University of Science and Technology, Wuhan 430074, China \\
    \small \textsuperscript{2}School of Optical and Electronic Information, Huazhong University of Science and Technology, Wuhan 430074, China
}

\date{}

\maketitle

\begin{abstract}
This paper develops a rigorous mathematical framework for light propagation by constructing the optical phase space with its symplectic structure and the extended phase space with its contact structure. We prove that light rays in three-dimensional Euclidean space correspond to Reeb orbits in a five-dimensional contact manifold, which are then projected onto a four-dimensional symplectic manifold via symplectic reduction. Leveraging the advantages of phase space, we provide a rigorous definition of caustic surfaces as singularities of the Lagrangian submanifold projection and derive explicit expressions for caustic surfaces in convex lens systems. Furthermore, based on singularity theory, we present a complete classification of stable caustic surfaces and establish a correspondence with classical Seidel aberration theory. Building upon this theory, we propose a method of \emph{topological optical correction} that overcomes the limitations of traditional optimization algorithms in dealing with complex caustic structures. This work provides a new mathematical paradigm for the design and correction of high-precision optical systems.

\textbf{Keywords:} Geometric optics, caustic surfaces, symplectic geometry, contact geometry, Reeb vector field, singularity theory, Seidel aberrations
\end{abstract}

\section{Introduction}
\label{sec:introduction}

\subsection{Background}

Geometric optics models light propagation by tracing individual rays. Hamiltonian mechanics offers a powerful geometric perspective for this model. In this framework, the set of all possible rays---each determined by its position and direction---forms a region in the \textbf{optical phase space}, which naturally carries a \textbf{symplectic structure}.

When the optical medium is inhomogeneous, or when one wishes to treat the evolution parameter (e.g., the optical axis coordinate $z$) on an equal footing, it is often advantageous to work in the \textbf{extended phase space}, which possesses a \textbf{contact structure}. Caustic surfaces, as envelopes of families of rays, naturally appear as singularities of the projection from these spaces to physical space. Traditional geometric optics relies on solving ray equations, which is effective for analyzing conventional imaging systems. However, when analyzing behavior near foci, regions with strong aberrations, and the distribution of singularities in light field intensity, the ray-based description encounters fundamental difficulties---the light intensity diverges on caustic surfaces. This divergence indicates a breakdown of the mathematical model at singular points.

To overcome this limitation, since the mid-20th century, mathematicians such as V. I. Arnold \cite{arnold1990contact,arnold2000symplectic,arnold1985singularities} and R. Thom \cite{Gol} have introduced symplectic geometry and catastrophe theory into optics. Symplectic geometric optics no longer treats rays as isolated trajectories but as points in a four-dimensional phase space or as curves in a contact manifold. Caustic surfaces are redefined as critical value sets of the projection of Lagrangian submanifolds in phase space onto configuration space.

This shift in perspective is profound: it reveals that optical aberrations are not merely ``errors'' but intrinsic breakdowns of the system's structural stability, whose forms are constrained by deep topological invariants.

\subsection{Motivation and Structure}

This report aims to establish a rigorous mathematical framework that integrates modern mathematical physics---particularly symplectic and contact geometry---into the analysis of optical aberrations, and to precisely correlate classical concepts in optical engineering (Seidel aberrations and Zernike polynomials) with normal forms in catastrophe theory. This correspondence is not merely morphological but an isomorphism of algebraic structures.

By deriving the mathematical generation mechanisms of $A_k$ and $D_k$ series catastrophes, we gain deep insight into the topological origins of aberrations such as spherical aberration, coma, and astigmatism in the formation of caustic surfaces. Furthermore, to address the challenge of correcting high-order aberrations in modern high-precision optical systems, this report proposes a general optical correction method, termed \emph{Topological Optical Correction} (TOC). This method uses Zernike coefficients as control parameters to plan optimal correction paths within the geometric structure of bifurcation sets, thereby overcoming the non-convexity and tendency of traditional least-squares methods to get stuck in local minima when dealing with complex caustic structures.

The report is structured as follows. Section~\ref{sec:symplectic} constructs the Hamiltonian formulation of symplectic optics. Section~\ref{sec:contact} introduces the contact geometry of the extended phase space. Section~\ref{sec:3d-phase-space} details the reduction from a six-dimensional symplectic space to a four-dimensional optical phase space. Section~\ref{sec:wavefronts} discusses the representation of wavefronts in geometric optics. Section~\ref{sec:caustics} provides a rigorous mathematical theory of caustic surfaces. Section~\ref{sec:lens-caustic} derives explicit expressions for caustic surfaces in convex lens systems. Section~\ref{sec:classification} presents a complete classification of caustic surfaces based on singularity theory. Section~\ref{sec:correspondence} establishes the correspondence between caustic surfaces and Seidel aberrations. Section~\ref{sec:correction} proposes the TOC method for optical correction. Finally, Section~\ref{sec:conclusion} summarizes the main contributions and suggests future research directions.

\section{Symplectic Geometry of Optical Phase Space}
\label{sec:symplectic}

\begin{definition}[Configuration Space]
Let $M$ be a two-dimensional smooth manifold. In planar geometry, $M$ can be the plane $\mathbb{R}^2$ (a screen), or for rotationally symmetric systems, a region on the sphere $S^2$ (direction space). The \textbf{optical phase space} is defined as the cotangent bundle $\mathscr{P} = T^*M$ of this manifold.
\end{definition}

Points in $\mathscr{P}$ are denoted $(\bm{q}, \bm{p})$, where $\bm{q} \in M$ denotes position coordinates and $\bm{p} \in T^*_q M$ is the optical momentum. In a local chart $(q^1, q^2)$ of $M$, the momentum $\bm{p}$ has components $(p_1, p_2)$. For a ray passing through point $\bm{q}$ in a medium with local refractive index $n(\bm{q})$ and with direction vector $\dot{\bm{q}}$, the momentum is given by $\bm{p} = n(\bm{q}) \, \dot{\bm{q}} / \|\dot{\bm{q}}\|$ (under normalized parameterization), satisfying the constraint $\|\bm{p}\|^2 = n(\bm{q})^2$.

\begin{definition}[Canonical Symplectic Form]
The manifold $\mathscr{P} = T^*M$ carries a \textbf{canonical symplectic form} $\omega$, which in local coordinates $(q^1, q^2, p_1, p_2)$ is expressed as:
\[
\omega = dp_1 \wedge dq^1 + dp_2 \wedge dq^2.
\]
This 2-form is closed ($d\omega = 0$) and non-degenerate, making $(\mathscr{P}, \omega)$ a \textbf{symplectic manifold}.
\end{definition}

\begin{definition}[Hamiltonian Function and Hamiltonian Flow]
Let $H : \mathscr{P} \to \mathbb{R}$ be a smooth function, called the \textbf{optical Hamiltonian}. For homogeneous media, a common choice is $H(\bm{q}, \bm{p}) = \|\bm{p}\|^2 - n^2$, which constrains rays to the level set $H=0$. The Hamiltonian vector field $X_H$ associated with $H$ is implicitly defined by:
\[
\iota_{X_H} \omega = -dH,
\]
where $\iota$ denotes contraction. In local coordinates, this yields Hamilton's equations:
\begin{align}
\frac{dq^i}{dz} &= \frac{\partial H}{\partial p_i}, \\
\frac{dp_i}{dz} &= -\frac{\partial H}{\partial q^i},
\end{align}
where $z$ is a parameter along the ray (often the coordinate along the optical axis). The flow generated by $X_H$ consists of \textbf{symplectomorphisms} (canonical transformations) that preserve the symplectic form, i.e., $\phi_z^* \omega = \omega$.
\end{definition}

\section{Contact Geometry of Extended Phase Space}
\label{sec:contact}

\begin{definition}[Extended Optical Phase Space]
The \textbf{extended phase space} is defined as $\mathscr{E} = T^*M \times \mathbb{R}$, where the additional $\mathbb{R}$-coordinate, denoted $t$ or $z$, is the evolution parameter. Points in $\mathscr{E}$ are $(\bm{q}, \bm{p}, t)$. This is a $(2\dim M + 1) = 5$-dimensional manifold.
\end{definition}

\begin{definition}[Canonical Contact Form]
The manifold $\mathscr{E}$ carries a \textbf{canonical contact form}. In local coordinates $(q^1, q^2, p_1, p_2, t)$, this 1-form is defined as:
\begin{equation}
    \alpha = p_1 dq^1 + p_2 dq^2 - H(\bm{q}, \bm{p}, t) \, dt.
\end{equation}
Here,
\begin{itemize}
    \item $p_1 dq^1 + p_2 dq^2$ is the work form, computing the work done by the momentum along the displacement direction.
    \item $- H(\bm{q}, \bm{p}, t) \, dt$ is the energy form, where $H dt$ represents the change in system energy over time $dt$.
\end{itemize}

In classical mechanics, the familiar action is defined as:
\begin{equation}
    S = \int_{t_1}^{t_2} L(\bm{q}, \dot{\bm{q}}, t) \, dt = \int_{t_1}^{t_2} [\bm{p}\cdot\dot{\bm{q}} - H(\bm{q},\bm{p},t)] \, dt= \int [\bm{p} \cdot d\bm{q} - H \, dt]=\int_\gamma \alpha,
\end{equation}
where the second equality uses the Legendre transform $L = \bm{p}\cdot\dot{\bm{q}} - H$.

If $H$ is non-singular and $\alpha \wedge (d\alpha)^{\wedge n} \neq 0$ is a volume form, then the 5-dimensional space $(\mathscr{E}, \alpha)$ is a \textbf{contact manifold}.
\end{definition}

The kernel of $\alpha$ defines the \textbf{contact distribution}, $\xi = \ker \alpha \subset T\mathscr{E}$, which contains all instantaneous directions satisfying energy conservation. A curve $\gamma(s) = (\bm{q}(s), \bm{p}(s), t(s))$ in $\mathscr{E}$ is a \textbf{Legendrian curve} if it is tangent to $\xi$, i.e., $\gamma^*\alpha = 0$. This condition reproduces a form of Hamilton's equations:
\begin{equation}
    p_i \frac{dq^i}{ds} - H \frac{dt}{ds} = 0.
\end{equation}

\section{Three-Dimensional Phase Space Optics}
\label{sec:3d-phase-space}

\subsection{Basic Construction}

\begin{definition}[Optical Phase Space]
Consider light propagating in three-dimensional Euclidean space $Q = \mathbb{R}^3$, with position vector described by $\bm{q} = (q_1, q_2, q_3)$. To fully describe the dynamical state of light, both position and momentum variables must be considered. In geometric optics, the momentum $\bm{p}$ corresponds to the \textbf{wave vector} $\bm{k}$, related to the wavelength $\lambda$ by $\|\bm{p}\| = 2\pi/\lambda$. All possible "position-momentum" pairs constitute the \textbf{cotangent bundle} $M = T^*Q$ of the configuration space $Q$.
\end{definition}

Since $Q$ is a 3-dimensional manifold, its cotangent bundle $T^*\mathbb{R}^3$ is naturally a 6-dimensional differential manifold, with local coordinates $(\bm{q}, \bm{p}) = (q_1, q_2, q_3, p_1, p_2, p_3)$.

\begin{definition}[Liouville Form and Symplectic Form]
This space is equipped with the natural \textbf{Liouville form}:
\[
\theta_{(\bm{q},\bm{p})}(X) = \bm{p}(d\pi(X))
\]
In local coordinates:
\[
\theta = \sum_{i=1}^3 p_i dq_i = \bm{p} \cdot d\bm{q}.
\]
Taking the exterior derivative of $\theta$ and negating gives the \textbf{symplectic form}:
\[
\omega = -d\theta = \sum_{i=1}^3 dq_i \wedge dp_i.
\]
This 2-form $\omega$ is closed ($d\omega=0$) and non-degenerate, meaning $(T^*\mathbb{R}^3, \omega)$ constitutes a 6-dimensional symplectic manifold.
\end{definition}

The symplectic form $\omega$ provides a "volume" concept for phase space (via $\omega \wedge \omega \wedge \omega$) and establishes an isomorphism between tangent and cotangent spaces, allowing us to convert differentials of functions (like $dH$) into vector fields (like $X_H$).

\subsection{From 6D Symplectic Space to 5D Contact Space}

\begin{definition}
Physical rays are constrained to the hypersurface corresponding to the zero level set of the Hamiltonian:
\begin{equation}
 \Sigma = H^{-1}(0) = \{ (\bm{q}, \bm{p}) \in T^*\mathbb{R}^3 \mid \|\bm{p}\|^2 = n^2(\bm{q}) \}.   
\end{equation}
Since $H(\bm{q}, \bm{p})$ is a scalar function, its zero-energy level set $\Sigma$ is a hypersurface in 6-dimensional space, with dimension $6-1=5$. This five-dimensional manifold $\Sigma$ is called the \textbf{extended optical phase space} or \textbf{optical contact manifold}.
\end{definition}

\begin{proposition}[Induced Contact Structure]
The restriction of the Liouville form $\alpha = \theta|_{\Sigma}$ to the five-dimensional hypersurface $\Sigma$ endows $\Sigma$ with a \textbf{contact structure}.
\end{proposition}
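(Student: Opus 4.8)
The plan is to verify directly that $\alpha = \theta|_\Sigma$ satisfies the contact condition $\alpha \wedge (d\alpha)^{\wedge 2} \neq 0$ on the five-dimensional manifold $\Sigma$. The natural strategy is to restrict attention to a point of $\Sigma$, choose adapted coordinates, and compute. First I would note that $\Sigma = H^{-1}(0)$ is a smooth hypersurface because $0$ is a regular value: $dH = 2\sum p_i\,dp_i - \nabla(n^2)\cdot d\bm{q}$ never vanishes on $\Sigma$ (since $\|\bm{p}\|^2 = n^2 > 0$ forces some $p_i \neq 0$, so the $dp_i$-part of $dH$ is nonzero). Hence $T_x\Sigma = \ker dH_x$ is a genuine 5-plane at each $x \in \Sigma$, and $\alpha$, $d\alpha = -\omega|_\Sigma$ are well-defined smooth forms on it.

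The key step is the linear-algebra fact that $\ker(\alpha_x) \cap T_x\Sigma$ is a symplectic subspace of $(T_xT^*\mathbb{R}^3, \omega_x)$ of dimension $4$, equivalently that $\alpha_x \wedge (d\alpha_x)^{\wedge 2} \neq 0$ when restricted to $T_x\Sigma$. I would establish this via the Reeb-type argument: the characteristic direction. On the full symplectic manifold, let $Z$ be the Liouville (Euler) vector field, $\iota_Z\omega = -\theta$, i.e. $Z = \sum p_i\,\partial_{p_i}$ in canonical coordinates. One checks $\theta(Z) = 0$ and the key computation $dH(Z) = \iota_Z\,dH = 2\|\bm{p}\|^2 = 2n^2 \neq 0$ on $\Sigma$; therefore $Z$ is transverse to $\Sigma$. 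Now decompose: for $v \in T_x\Sigma$, the condition $\omega_x(v,w) = 0$ for all $w \in T_x\Sigma$ means $\iota_v\omega_x$ annihilates $\ker dH_x$, hence $\iota_v\omega_x = \mu\,dH_x$ for some scalar $\mu$; pairing with $Z$ gives $-\theta_x(v) = \mu\cdot 2n^2$, so if additionally $\theta_x(v) = 0$ (i.e. $v \in \ker\alpha_x$) then $\mu = 0$, forcing $\iota_v\omega_x = 0$ and hence $v = 0$ by non-degeneracy of $\omega$. This shows $\ker\alpha_x$ restricted to $T_x\Sigma$ is a symplectic (non-degenerate) subspace; since $T_x\Sigma$ is $5$-dimensional and $\alpha_x \neq 0$ on it (as $dH_x$ and $\theta_x$ are not proportional — check: $\iota_Z dH = 2n^2 \neq 0 = \iota_Z\theta$), the hyperplane $\ker\alpha_x \subset T_x\Sigma$ has dimension $4$ and carries the non-degenerate form $d\alpha_x|_{\ker\alpha_x} = -\omega_x|_{\ker\alpha_x}$, whence $(d\alpha_x)^{\wedge 2}$ is a volume form on it and $\alpha_x\wedge(d\alpha_x)^{\wedge 2} \neq 0$ on $T_x\Sigma$.

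Concretely, to make this transparent I would also exhibit Darboux-type coordinates near a point of $\Sigma$: after a symplectic change of variables one may assume locally that $H = p_3$ (straightening the regular hypersurface and its characteristic foliation via Darboux's theorem), so that $\Sigma = \{p_3 = 0\}$ and $\theta|_\Sigma = p_1\,dq_1 + p_2\,dq_2$ with $q_3$ a free coordinate; then $\alpha \wedge d\alpha \wedge d\alpha = (p_1 dq_1 + p_2 dq_2)\wedge(dp_1\wedge dq_1 + dp_2\wedge dq_2)^{\wedge 2}$, a direct expansion of which is a nonzero multiple of $dq_3\wedge dq_1\wedge dp_1\wedge dq_2\wedge dp_2$ wherever $(p_1,p_2)\neq(0,0)$ — and on $\Sigma$ the constraint reads $p_1^2 + p_2^2 = n^2 > 0$ in these coordinates, so this never vanishes.

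The main obstacle I anticipate is purely bookkeeping rather than conceptual: being careful that the restriction $\theta|_\Sigma$ — which as a $1$-form on the ambient space is the tautological form — genuinely has the claimed non-degeneracy after restriction, since restricting a form can destroy non-degeneracy in general; the Liouville-field transversality computation $dH(Z) = 2n^2 \neq 0$ is exactly what rules this out, and identifying that this hypothesis is equivalent to $\Sigma$ being \emph{transverse to the Liouville vector field} (a "starshaped"/"convex" hypersurface) is the crux. One should also remark that the contact structure $\xi = \ker\alpha$ is only well-defined up to the conformal factor inherent in restricting $\theta$, but this does not affect the contact condition, which is conformally invariant. I would close by observing that the Reeb vector field of $\alpha$ on $\Sigma$ is (proportional to) the restriction of the Hamiltonian vector field $X_H$, so its orbits are exactly the lifted light rays — tying the proposition back to the paper's stated goal of realizing light rays as Reeb orbits.
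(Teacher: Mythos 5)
Your argument is correct and rests on the same key computation as the paper's — the Euler/Liouville vector field $Y=\sum p_i\,\partial_{p_i}$ is transverse to $\Sigma$ because $dH(Y)=2\|\bm{p}\|^2=2n^2\neq 0$ — but you convert this into the contact condition by a different mechanism. The paper contracts the symplectic volume form with $Y$, using $\iota_Y(\omega^{\wedge 3})=3(\iota_Y\omega)\wedge\omega^{\wedge 2}$ to identify $\theta\wedge(d\theta)^{\wedge 2}$ (up to a constant) with $\iota_Y\Omega$, and then argues that contracting a volume form with a vector field transverse to a hypersurface yields a volume form on that hypersurface. You instead run the pointwise ``hypersurface of contact type'' argument: the characteristic line $(T_x\Sigma)^{\omega}$ is not contained in $\ker\theta_x$ (this is exactly what $dH(Y)=2n^2\neq 0$ together with $\theta(Y)=0$ encodes), hence $\ker\alpha_x$ is a symplectic $4$-plane on which $(d\alpha_x)^{\wedge 2}$ is a volume form. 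Your route is the standard textbook criterion and makes the conformal/linear-algebraic content more transparent (and you correctly get $\iota_Y\omega=-\theta$, where the paper has a sign slip); the paper's route is a slicker one-line global identity that avoids discussing symplectic complements. One caveat on your supplementary Darboux computation: after a general symplectomorphism straightening $H=p_3$, the tautological form $\theta$ does \emph{not} retain the shape $\sum p_i\,dq_i$, and the constraint $p_1^2+p_2^2=n^2$ has no meaning in the new coordinates, so that paragraph should either be dropped or replaced by a Darboux theorem for the pair $(\omega,\theta)$ (equivalently for the primitive $\theta$); your main linear-algebra argument does not need it.
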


\begin{proof}
Consider a vector field $Y$ satisfying:
\[
\iota_Y \omega = \theta,
\]
where $\iota$ denotes contraction. In canonical coordinates, it is easy to solve:
\[
Y = \sum_{i=1}^3 p_i \frac{\partial}{\partial p_i}.
\]
We need to show that $Y$ is "transverse" to the hypersurface $\Sigma$, i.e., $Y$ is not tangent to $\Sigma$.

Compute the action of $Y$ on the Hamiltonian function $H$:
\begin{equation}
 \begin{aligned}dH(Y) &= Y(H) = \sum_{i=1}^3 p_i \frac{\partial}{\partial p_i} \left( \sum_{j=1}^3 p_j^2 - n(\bm{q})^2 \right) \\&= \sum_{i=1}^3 p_i (2p_i) = 2\|\bm{p}\|^2.\end{aligned}   
\end{equation}
On $\Sigma$, from the definition of refractive index, $\|\bm{p}\|^2 = n(\bm{q})^2 \neq 0$, so:
\[
dF(Y) = 2n(\bm{q})^2 \neq 0.
\]
The symplectic volume form in $T^*\mathbb{R}^3$ is:
\[
\Omega = \frac{1}{3!} \omega \wedge \omega \wedge \omega.
\]
Compute the contraction of $Y$ with $\Omega$:
\[
\begin{aligned}
\iota_Y(\omega \wedge \omega \wedge \omega) &= 3(\iota_Y \omega) \wedge \omega \wedge \omega \\
&= 3\theta \wedge \omega \wedge \omega.
\end{aligned}
\]
Since $\omega = -d\theta$,
\[
\omega \wedge \omega = (-d\theta) \wedge (-d\theta) = d\theta \wedge d\theta,
\]
so
\[
\iota_Y(\omega \wedge \omega \wedge \omega) = 3\theta \wedge (d\theta) \wedge (d\theta),
\]
i.e.,
\[
\iota_Y \Omega = \frac{1}{2} \theta \wedge (d\theta)^2.
\]
On the hypersurface $\Sigma$, we define:
\begin{itemize}
    \item $\theta|_{\Sigma} = \alpha$ (restricted Liouville form)
    \item $d\theta|_{\Sigma} = d\alpha$ (exterior derivative commutes with restriction)
\end{itemize}
Thus,
\[
\iota_Y \Omega|_{\Sigma} = \frac{1}{2} \alpha \wedge (d\alpha)^2.
\]
Since $\Omega$ is a volume form on $T^*\mathbb{R}^3$, non-zero everywhere, and $Y$ is transverse to $\Sigma$, the contraction operation $\iota_Y$ does not degenerate the form, so $\iota_Y \Omega|_{\Sigma}$ is a volume form on $\Sigma$, i.e.,
\[
\alpha \wedge (d\alpha)^2 = 2(\iota_Y \Omega|_{\Sigma}) \neq 0.
\]
Thus we have proved that $(\Sigma, \alpha)$ constitutes a contact manifold, with contact structure given by the kernel of $\alpha$.
\end{proof}

\subsection{Equivalence of Rays and Reeb Vector Field}

\begin{definition}[Reeb Vector Field]
Let $(\Sigma, \alpha)$ be a contact manifold, where $\alpha$ is the contact 1-form. There exists a unique vector field $R$, called the \textbf{Reeb vector field}, satisfying the two conditions:
\begin{align}
\iota_R d\alpha &= 0, \\
\iota_R \alpha &= 1.
\end{align}
The Reeb flow $\phi_t: \Sigma \to \Sigma$ is the one-parameter group of diffeomorphisms generated by $R$.
\end{definition}

The Reeb vector field is a fundamental object in contact geometry with the following important properties \cite{geiges2008introduction}:

\begin{itemize}
\item \textbf{Transversality}: The Reeb vector field is everywhere transverse to the contact distribution $\xi = \ker\alpha$.
\item \textbf{Contact Invariance}: The Reeb flow preserves the contact form, i.e., $\mathscr{L}_R \alpha = 0$.
\item \textbf{Uniqueness}: Given the contact form $\alpha$, the Reeb vector field is uniquely determined.
\end{itemize}

\begin{proposition}
On the contact manifold $(\Sigma, \alpha)$, there exists a unique vector field $R$ satisfying the two conditions in the definition, i.e.,
\begin{itemize}
    \item $\iota_R d\alpha = 0$,
    \item $\iota_R \alpha = 1$.
\end{itemize}
\end{proposition}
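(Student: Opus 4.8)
The statement asserts existence and uniqueness of the Reeb vector field on the contact manifold $(\Sigma, \alpha)$. The plan is to exploit the fact, established in the preceding proposition, that $\alpha \wedge (d\alpha)^2$ is a volume form on the five-dimensional manifold $\Sigma$. Uniqueness should come first since it is the easier half: suppose $R_1$ and $R_2$ both satisfy $\iota_R d\alpha = 0$ and $\iota_R \alpha = 1$. Then $R_1 - R_2$ lies in the kernel of $d\alpha$ and also in the kernel of $\alpha$. So it suffices to show that $\ker(d\alpha) \cap \ker(\alpha) = 0$ pointwise. The two-form $d\alpha$ restricted to the four-dimensional contact distribution $\xi = \ker\alpha$ is nondegenerate --- otherwise $\alpha \wedge (d\alpha)^2$ would vanish --- so any vector in $\ker\alpha$ on which $d\alpha$ vanishes must be zero; combined with $\iota_R\alpha = 1 \neq 0$ this forces $R_1 = R_2$.

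For existence I would argue pointwise and then invoke smoothness. Fix $x \in \Sigma$. The skew-symmetric bilinear form $(d\alpha)_x$ on the five-dimensional space $T_x\Sigma$ has even rank, so rank at most $4$; since $\alpha \wedge (d\alpha)^2 \neq 0$ the rank is exactly $4$, and therefore $\ker(d\alpha)_x$ is exactly one-dimensional, spanned by some vector $v \neq 0$. It remains to check $\alpha_x(v) \neq 0$: if $\alpha_x(v) = 0$, then $v \in \ker\alpha_x$ would be a nonzero vector in the radical of $d\alpha|_{\xi_x}$, contradicting the nondegeneracy just used. Hence we may rescale, setting $R_x = v / \alpha_x(v)$, which satisfies both defining conditions. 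For smoothness, note $R$ is characterized locally by the linear system $\iota_R d\alpha = 0$, $\iota_R\alpha = 1$, whose coefficient matrix (in a local frame) has the constant maximal rank guaranteed by the contact condition; by Cramer's rule the unique solution depends smoothly on the coefficients, hence on $x$. Alternatively, one can write $R$ explicitly in the canonical coordinates of $\Sigma$ inherited from $\theta|_\Sigma$ and verify the two identities by direct computation.

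The main obstacle --- really the only subtle point --- is the rank argument: establishing that the contact condition $\alpha \wedge (d\alpha)^2 \neq 0$ forces $d\alpha$ to have corank exactly one on $T\Sigma$, with the kernel transverse to $\ker\alpha$. This is a piece of linear symplectic algebra: if $\mathrm{rank}\, d\alpha_x \le 2$ then $(d\alpha)^2_x = 0$ and the volume form degenerates; if the one-dimensional kernel of $d\alpha_x$ were contained in $\ker\alpha_x$, then $(d\alpha)_x$ restricted to $\ker\alpha_x$ would be a degenerate skew form on a $4$-dimensional space, whence $(d\alpha|_{\ker\alpha_x})^2 = 0$, and since $\alpha\wedge(d\alpha)^2$ evaluated on a basis adapted to $\ker\alpha_x \oplus \langle R \rangle$ reduces to $\alpha(R)$ times $(d\alpha|_{\ker\alpha})^2$, the volume form would again vanish. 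Once this linear-algebra fact is in hand, existence and uniqueness follow immediately, and the smoothness of $R$ is routine. I would present the uniqueness argument, then the pointwise existence via the rank computation, then a one-line remark on smoothness.
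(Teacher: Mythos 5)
Your proposal is correct, and its uniqueness half coincides with the paper's: both reduce to showing $\ker\alpha \cap \ker d\alpha = 0$ via the nondegeneracy of $d\alpha$ on the contact distribution $\xi = \ker\alpha$. Where you genuinely diverge is in the existence half. The paper invokes Darboux's theorem for contact structures to put $\alpha$ into the normal form $dz - \sum y_i\,dx_i$ and then exhibits $R = \partial/\partial z$ by inspection. You instead argue pointwise by linear symplectic algebra: the skew form $(d\alpha)_x$ on the odd-dimensional space $T_x\Sigma$ has even rank, the contact condition $\alpha\wedge(d\alpha)^2\neq 0$ forces that rank to be exactly $4$, the resulting one-dimensional kernel must be transverse to $\xi_x$ (else the adapted-basis expansion of $\alpha\wedge(d\alpha)^2$ would vanish), and a rescaling plus a constant-rank/Cramer argument gives a smooth global $R$. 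Both routes are standard, but yours is more self-contained: it uses only the defining inequality of a contact form, whereas the paper's rests on Darboux's theorem --- a substantially deeper result whose usual proofs (via Moser's trick or the flow of the Reeb field itself) presuppose essentially the linear-algebra facts you establish directly, so your version avoids even the appearance of circularity. The price is that you must carry out the rank computation and the transversality check explicitly, which the normal form hands you for free; you have correctly identified this as the one subtle point and your sketch of it is sound.
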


\begin{proof}
We construct the Reeb vector field $R$ step by step and prove its uniqueness.

\subsubsection*{Existence of Reeb Vector Field}
Since $\alpha$ is a contact form, it satisfies:
\begin{equation}
\alpha \wedge (d\alpha)^{n-1} \neq 0.      
\end{equation}
This means that at each point $p \in \Sigma$, $d\alpha|_{\xi_p}$ is a non-degenerate 2-form, where $\xi_p = \ker\alpha_p$.

Since $d\alpha$ is non-degenerate on $\xi$ but degenerate on the entire tangent space (because $\dim\Sigma = 2n-1$ is odd), we need to use the contact condition.

In local coordinates, by Darboux's theorem for contact structures, there exist coordinates $(x_1, \ldots, x_{n-1}, y_1, \ldots, y_{n-1}, z)$ such that:
\begin{equation}
    \alpha = dz - \sum_{i=1}^{n-1} y_i dx_i,
\end{equation}
then:
\begin{equation}
  d\alpha = -\sum_{i=1}^{n-1} dy_i \wedge dx_i = \sum_{i=1}^{n-1} dx_i \wedge dy_i.  
\end{equation}
Direct verification shows that the vector field $R = \frac{\partial}{\partial z}$ satisfies:
\begin{align*}
\iota_R d\alpha &= \sum_{i=1}^{n-1} [dx_i(R)dy_i - dy_i(R)dx_i] = 0, \\
\iota_R \alpha &= dz(R) - \sum_{i=1}^{n-1} y_i dx_i(R) = 1 - 0 = 1.
\end{align*}

\subsubsection*{Uniqueness of Reeb Vector Field}
Suppose there exists another vector field $R'$ also satisfying the conditions. Then:
\begin{align*}
d\alpha(R - R', X) &= d\alpha(R, X) - d\alpha(R', X) = 0 \quad \forall X, \\
\alpha(R - R') &= \alpha(R) - \alpha(R') = 1 - 1 = 0.
\end{align*}
Therefore $R - R' \in \ker\alpha \cap \ker d\alpha$. But since $d\alpha|_\xi$ is non-degenerate, on $\xi$ only the zero vector is $d\alpha$-orthogonal to all vectors. Hence $R - R' = 0$.

We have thus proved the existence and uniqueness of the Reeb vector field in contact geometry.
\end{proof}

\subsection{Correspondence Between Rays and Reeb Vector Field}

\begin{definition}[Rays in Geometric Optics]
In geometric optics, a \textbf{ray} in three-dimensional Euclidean space $\mathbb{R}^3$ is a light propagation path satisfying:
\begin{enumerate}
\item In homogeneous media, rays are straight line segments.
\item In media with variable refractive index $n(\bm{q})$, rays satisfy the equation:
\begin{equation}
 \frac{d}{ds}\left(n(\bm{q})\frac{d\bm{q}}{ds}\right) = \nabla n(\bm{q}),   
\end{equation}
where $s$ is the arc length parameter.
\item Equivalently, rays are extremal curves of \textbf{Fermat's principle}: the actual ray extremizes the optical path $\int n(\bm{q})\, ds$.
\end{enumerate}
\end{definition}

\begin{definition}[Ray Trajectories in Extended Phase Space]
In the extended phase space $T^*\mathbb{R}^3$, rays correspond to curves $(\bm{q}(t), \bm{p}(t))$ that are \textbf{zero sections} of the Hamiltonian $H(\bm{q},\bm{p}) = \|\bm{p}\|^2 - n(\bm{q})^2$, i.e., satisfying \cite{grabowska2022geometric}:
\begin{itemize}
    \item $H(\bm{q},\bm{p}) = 0 \quad \text{(dispersion relation)}$,
    \item 
    \begin{align}
\frac{d\bm{q}}{dt} &= \frac{\partial H}{\partial \bm{p}} = 2\bm{p}, \\
\frac{d\bm{p}}{dt} &= -\frac{\partial H}{\partial \bm{q}} = 2n(\bm{q})\nabla n(\bm{q}).
\end{align}
\end{itemize}
\end{definition}

\begin{theorem}[Correspondence Between Rays and Reeb Flow]
Let:
\begin{itemize}
\item $T^*\mathbb{R}^3$ be the 6D symplectic manifold with standard symplectic form $\omega = \sum_{i=1}^3 dq_i \wedge dp_i$,
\item $H(\bm{q},\bm{p}) = \|\bm{p}\|^2 - n(\bm{q})^2$ be the optical Hamiltonian, where $n(\bm{q}) > 0$ is the refractive index function,
\item $\Sigma = H^{-1}(0) = \{(\bm{q},\bm{p}) \in T^*\mathbb{R}^3 : \|\bm{p}\| = n(\bm{q})\}$ be the 5D constraint hypersurface,
\item $\alpha = \theta|_{\Sigma}$ be the contact form on $\Sigma$, where $\theta = \sum_{i=1}^3 p_i dq_i$ is the Liouville form,
\item $X_H$ be the Hamiltonian vector field, satisfying $\iota_{X_H} \omega = dH$.
\end{itemize}
Then the following hold:
\begin{enumerate}
\item $(\Sigma, \alpha)$ is a contact manifold.
\item There exists a unique Reeb vector field $R$ satisfying $\iota_R d\alpha = 0$ and $\iota_R \alpha = 1$.
\item The restriction of the Hamiltonian vector field to $\Sigma$ is proportional to the Reeb vector field:
\begin{equation}
 X_H|_{\Sigma} = 2n(\bm{q})^2 R.   
\end{equation}
\item Rays in three-dimensional space correspond to Reeb orbits on $\Sigma$.
\end{enumerate}
\end{theorem}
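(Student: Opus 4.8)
The plan is to dispatch the four claims in order, observing that (1) and (2) are nothing other than the two propositions already proved above---the induced contact structure on $\Sigma = H^{-1}(0)$, and the existence and uniqueness of the Reeb field associated with a contact form---so the genuinely new content is claim (3), from which claim (4) follows almost at once.

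For (3) I would first record that $X_H$ is tangent to $\Sigma$, so that $X_H|_\Sigma$ is a well-defined vector field on the hypersurface: since $dH(X_H) = (\iota_{X_H}\omega)(X_H) = \omega(X_H, X_H) = 0$ by antisymmetry, $X_H$ is tangent to every level set of $H$, in particular to $\Sigma$. Next I would compute the two Reeb-type contractions. From Hamilton's equations for $H = \|\bm p\|^2 - n(\bm q)^2$ one has $X_H = \sum_i\bigl(2p_i\,\partial_{q_i} + 2n\,\partial_{q_i}n\,\partial_{p_i}\bigr)$, and with $\theta = \sum_i p_i\,dq_i$ this gives $\iota_{X_H}\theta = 2\|\bm p\|^2$, hence on $\Sigma$, where $\|\bm p\|^2 = n(\bm q)^2$, we get $\iota_{X_H}\alpha = 2n(\bm q)^2$. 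For the second contraction, let $\iota\colon\Sigma\hookrightarrow T^*\mathbb{R}^3$ be the inclusion; then $d\alpha = \iota^*(d\theta)$ and $d\theta = -\omega$, and because $X_H$ is tangent to $\Sigma$ the interior product commutes with this pullback, so $\iota_{X_H|_\Sigma} d\alpha = \iota^*\bigl(\iota_{X_H}(-\omega)\bigr) = \iota^*(-dH) = -d(H\circ\iota) = 0$ since $H$ vanishes identically on $\Sigma$. Now put $Z := X_H|_\Sigma - 2n(\bm q)^2 R$; then $\iota_Z\alpha = 2n^2 - 2n^2 = 0$ and $\iota_Z d\alpha = 0 - 0 = 0$, so $Z$ lies in $\xi = \ker\alpha$ and is $d\alpha$-orthogonal to all of $T\Sigma$, in particular to $\xi$. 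Since $d\alpha|_\xi$ is non-degenerate (the contact condition established in (1)), this forces $Z\equiv 0$, i.e. $X_H|_\Sigma = 2n(\bm q)^2 R$. Equivalently, and perhaps more cleanly, one invokes the uniqueness half of (2): the field $R' := \tfrac{1}{2n(\bm q)^2}X_H|_\Sigma$ satisfies $\iota_{R'}\alpha = 1$ and $\iota_{R'}d\alpha = 0$, hence coincides with $R$.

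For (4), note that $2n(\bm q)^2 > 0$ everywhere on $\Sigma$, so $R$ and $X_H|_\Sigma$ are everywhere positive multiples of one another and therefore share the same unparametrised integral curves; a Reeb orbit and a Hamiltonian orbit on $\Sigma$ differ only by the reparametrisation $ds = 2n\,dt$. The integral curves of $X_H$ on $\Sigma$ are exactly the ray trajectories $(\bm q(t),\bm p(t))$ of the definition above, and projecting by $\pi\colon T^*\mathbb{R}^3\to\mathbb{R}^3$ and passing to arc length $s$ one reads off $\|d\bm q/ds\| = 1$, $\;n\,d\bm q/ds = \bm p$, and $\frac{d}{ds}\bigl(n\,d\bm q/ds\bigr) = \nabla n$, which is the Euler--Lagrange equation of Fermat's principle; hence the projections of Reeb orbits are precisely geometric rays.

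I expect no serious obstacle, but the one step that deserves care is the interchange of the interior product with the pullback $\iota^*$: it is legitimate exactly because $X_H$ is tangent to $\Sigma$, and without that prior observation the expression $\iota_{X_H|_\Sigma}d\alpha$ is not even well-posed, so I would make the tangency the very first step. A secondary bookkeeping point is the sign convention: here $\omega = \sum_i dq_i\wedge dp_i$ together with $\iota_{X_H}\omega = dH$ reproduces $\dot{\bm q} = \partial_{\bm p}H$, $\dot{\bm p} = -\partial_{\bm q}H$ as used in the ray-trajectory definition, and adopting that convention throughout is what makes the proportionality constant come out to $+2n(\bm q)^2$ rather than its negative.
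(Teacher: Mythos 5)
Your proposal is correct and follows essentially the same route as the paper: compute $\alpha(X_H)=2\|\bm p\|^2=2n^2$ on $\Sigma$, show $\iota_{X_H}d\alpha$ vanishes on $T\Sigma$ because $H\equiv 0$ there, and rescale by $1/(2n^2)$ to identify the Reeb field. You additionally supply two details the paper leaves implicit---the tangency of $X_H$ to $\Sigma$ (needed for $X_H|_\Sigma$ and the pullback interchange to be well-posed) and the consistent sign $d\theta=-\omega$---which strengthen rather than change the argument.
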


\begin{proof}
\subsubsection*{Definition of Hamiltonian Vector Field}
The Hamiltonian vector field $X_H$ is defined by:
\[
\iota_{X_H} \omega = dH.
\]
In standard coordinates:
\begin{equation}
dH = 2\sum_{i=1}^3 p_i dp_i - 2n(\bm{q})\nabla n\cdot d\bm{q}.
\end{equation}
Also:
\[
\iota_{X_H} \omega = \iota_{X_H} \left(\sum_{i=1}^3 dq_i \wedge dp_i\right) = \sum_{i=1}^3 [dq_i(X_H)dp_i - dp_i(X_H)dq_i].
\]
Comparing coefficients gives:
\begin{align*}
\frac{dq_i}{dt} &= dq_i(X_H) = 2p_i, \\
\frac{dp_i}{dt} &= dp_i(X_H) = 2n(\bm{q})\frac{\partial n}{\partial q_i}.
\end{align*}
Therefore:
\begin{equation}
    X_H = 2\sum_{i=1}^3 \left(p_i \frac{\partial}{\partial q_i} + n(\bm{q})\frac{\partial n}{\partial q_i} \frac{\partial}{\partial p_i}\right).
\end{equation}

\subsubsection*{Construction of Reeb Vector Field}
We need to show that $X_H|_{\Sigma}$ satisfies (a scaled version of) the Reeb conditions.

First compute $\alpha(X_H)$:
\begin{equation}
\alpha(X_H) = \theta(X_H) = \sum_{i=1}^3 p_i dq_i(X_H) = \sum_{i=1}^3 p_i (2p_i) = 2\|\bm{p}\|^2.
\end{equation}
On $\Sigma$, $\|\bm{p}\|^2 = n(\bm{q})^2$, so:
\[
\alpha(X_H) = 2n^2.
\]

Next compute $\iota_{X_H} d\alpha$:
Since $d\alpha = d\theta|_{\Sigma} = \omega|_{\Sigma}$, we have:
\[
\iota_{X_H} d\alpha = \iota_{X_H} \omega = dH.
\]
But on $\Sigma$, $H \equiv 0$, and $\Sigma$ is a level set, so $dH|_{\Sigma}$ has zero tangential component. More precisely, for any $Y \in T\Sigma$:
\[
dH(Y) = Y(H) = 0 \quad \text{because $H$ is constant on $\Sigma$}.
\]
Thus $\iota_{X_H} d\alpha$ vanishes on $T\Sigma$.

\subsubsection*{Correspondence Between Rays and Reeb Vector Field}
Define the Reeb vector field $R = \frac{1}{2n^2} X_H|_{\Sigma}$. Then:
\begin{align*}
\iota_R \alpha &= \frac{1}{2n^2} \alpha(X_H) = 1, \\
\iota_R d\alpha &= \frac{1}{2n^2} \iota_{X_H} d\alpha = 0 \quad \text{on $T\Sigma$}.
\end{align*}
Therefore $R$ is indeed the Reeb vector field on $\Sigma$, and:
\[
X_H|_{\Sigma} = 2n^2 R.
\]
Rays in three-dimensional space satisfy Hamilton's equations; their trajectories are integral curves of $X_H$. On $\Sigma$, these trajectories differ from integral curves of $R$ only by time parameterization (since $X_H$ is proportional to $R$). Thus, rays correspond to Reeb orbits.
\end{proof}

We summarize the correspondence:
\begin{itemize}
\item Rays in three-dimensional space are described by Hamilton's equations and are integral curves of $X_H$.
\item On the constraint surface $\Sigma$, $X_H$ is proportional to $R$.
\item Therefore, ray trajectories differ from Reeb orbits only by time parameterization. Specifically, if $\gamma(t)$ is an integral curve of $X_H$, then $\gamma(\tau)$ is an integral curve of $R$, where:
\begin{equation}
\frac{dt}{d\tau} = \frac{1}{2n(\bm{q})^2}.
\end{equation}
\item In particular, in homogeneous media ($n$ constant), the Reeb flow corresponds to uniform motion along straight lines; in variable refractive index media, the Reeb flow automatically encodes the bending of rays \cite{rajeev2008hamilton}.
\end{itemize}

\subsection{Second Reduction: From 5D Contact Space to 4D Symplectic Space}

\begin{definition}[Optical Phase Space $\mathcal{N}$]
The \textbf{optical phase space} $\mathcal{N}$ is defined as the quotient space of $\Sigma$ by the equivalence relation of Reeb orbits:
\[
\mathcal{N} = \Sigma / \sim,
\]
where the equivalence relation is defined by: $x \sim y$ if and only if there exists $t \in \mathbb{R}$ such that $y = \phi_t(x)$, i.e., $x$ and $y$ lie on the same Reeb orbit.
\end{definition}

Basic properties of the quotient space $\mathcal{N}$:
\begin{itemize}
    \item Since $\Sigma$ is a 5-dimensional manifold, quotienting by one-dimensional curves yields a 4-dimensional manifold $\mathcal{N}$.
    \item The quotient map $\pi: \Sigma \to \mathcal{N}$ is surjective, and for each $[x] \in \mathcal{N}$, the fiber $\pi^{-1}([x])$ is a complete Reeb orbit.
\end{itemize}

\begin{theorem}[Symplectic Reduction of Optical Phase Space]
Let $(\Sigma, \alpha)$ be the contact manifold defined above, $\mathcal{N} = \Sigma / \sim$ the optical phase space, and $\pi: \Sigma \to \mathcal{N}$ the quotient map. Then:
\begin{enumerate}
\item There exists a unique symplectic form $\bar{\omega}$ on $\mathcal{N}$ such that:
\[
\pi^* \bar{\omega} = d\alpha.
\]
\item $(\mathcal{N}, \bar{\omega})$ is a 4-dimensional symplectic manifold.
\item $\bar{\omega}$ is a closed, non-degenerate 2-form.
\end{enumerate}
\end{theorem}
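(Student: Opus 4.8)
The plan is to recognize this statement as the \emph{coisotropic symplectic reduction} of the hypersurface $\Sigma \subset T^*\mathbb{R}^3$ and to obtain $\bar\omega$ simply by pushing the $2$-form $d\alpha$ forward along $\pi$. Since $\Sigma = H^{-1}(0)$ is a smooth hypersurface of the symplectic manifold $(T^*\mathbb{R}^3,\omega)$, it is coisotropic, and the kernel of the restricted form $\omega|_\Sigma$ (equivalently of $d\alpha$) is the $1$-dimensional characteristic distribution, spanned pointwise by $X_H|_\Sigma$ and hence --- by the preceding correspondence theorem, where $X_H|_\Sigma = 2n^2 R$ --- by the Reeb field $R$. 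Thus $\mathcal{N} = \Sigma/\!\sim$ is exactly the leaf space of the characteristic foliation. The one structural hypothesis I will carry along is that the Reeb flow is \emph{regular}, so that this leaf space is a Hausdorff smooth manifold and $\pi\colon\Sigma\to\mathcal{N}$ is a surjective submersion; granting this, the dimension count is immediate, $\dim\mathcal{N} = \dim\Sigma - 1 = 5 - 1 = 4$, giving that part of item (2).

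For item (1), I would use the standard fact that a differential form on the total space of a surjective submersion descends to the base precisely when it is \emph{basic}, i.e.\ horizontal and flow-invariant along the fibers. For $d\alpha$, horizontality with respect to the Reeb fibers is $\iota_R\,d\alpha = 0$, the first Reeb equation, and invariance follows from Cartan's formula, $\mathscr{L}_R\,d\alpha = d\,\iota_R\,d\alpha + \iota_R\,d(d\alpha) = d(0) + 0 = 0$. Hence there is a unique $\bar\omega \in \Omega^2(\mathcal{N})$ with $\pi^*\bar\omega = d\alpha$; uniqueness is also automatic, since a submersion is fiberwise surjective on tangent vectors, so $\pi^*$ is injective on forms. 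Closedness (part of item (3)) then drops out: $\pi^*(d\bar\omega) = d(\pi^*\bar\omega) = d(d\alpha) = 0$, and injectivity of $\pi^*$ forces $d\bar\omega = 0$.

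For non-degeneracy (items (2) and (3)), fix $x \in \Sigma$ and identify $T_{[x]}\mathcal{N} \cong T_x\Sigma / \langle R_x\rangle$ via $d\pi_x$, whose kernel is exactly the Reeb line. Because $\alpha$ is a contact form, $R$ is transverse to $\xi = \ker\alpha$, so $\xi_x$ is a linear complement of $\langle R_x\rangle$ in $T_x\Sigma$ and $d\pi_x$ restricts to a linear isomorphism $\xi_x \to T_{[x]}\mathcal{N}$ under which $\bar\omega_{[x]}$ pulls back to $d\alpha_x|_{\xi_x}$. The contact condition $\alpha\wedge(d\alpha)^2 \neq 0$ is exactly the statement that $d\alpha_x|_{\xi_x}$ is a non-degenerate $2$-form on the $4$-dimensional space $\xi_x$ (equivalently $(d\alpha_x|_{\xi_x})^2 \neq 0$). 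Therefore $\bar\omega_{[x]}$ is non-degenerate at every point, so $\bar\omega$ is symplectic and $(\mathcal{N},\bar\omega)$ is a $4$-dimensional symplectic manifold.

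The main obstacle is not any of the form manipulations above --- these are forced by the Reeb equations and Cartan's formula --- but the implicit assumption that the Reeb-orbit space $\mathcal{N}$ is genuinely a smooth Hausdorff manifold with $\pi$ a submersion. For an arbitrary contact manifold this can fail: Reeb orbits may be dense or may accumulate, and the quotient need not even be Hausdorff. I will therefore either restrict to the regular regime (replacing $\Sigma$ by a Reeb-saturated open set on which the flow defines a free, proper $\mathbb{R}$- or $S^1$-action, so that the quotient-manifold theorem applies), or argue locally: every point of $\Sigma$ has a Reeb-saturated flow-box neighbourhood whose quotient is a manifold, on which the preceding steps produce $\bar\omega$; these local symplectic forms then agree on overlaps by the uniqueness clause of item (1) and patch to the global $\bar\omega$. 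With that caveat handled, the three items follow.
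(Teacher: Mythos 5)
Your proposal is correct and follows essentially the same route as the paper: descent of $d\alpha$ to the quotient via horizontality ($\iota_R\,d\alpha=0$) and Reeb-invariance ($\mathscr{L}_R\,d\alpha=0$ by Cartan's formula), closedness via injectivity of $\pi^*$, and non-degeneracy from the contact condition forcing $\ker d\alpha = \mathbb{R}R$. The one place you go beyond the paper is in flagging that the Reeb-orbit space need not a priori be a Hausdorff smooth manifold and proposing to restrict to the regular regime or argue locally --- a hypothesis the paper silently assumes when it asserts that quotienting the $5$-manifold by orbits yields a $4$-manifold, so your added caveat is a genuine improvement rather than a deviation.
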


\begin{proof}
\subsubsection*{Step 1: Verify that $d\alpha$ is invariant under Reeb flow}
By Cartan's formula:
\[
\mathcal{L}_R = d \circ \iota_R + \iota_R \circ d.
\]
Applying this to $d\alpha$:
\[
\mathcal{L}_R (d\alpha) = d(\iota_R d\alpha) + \iota_R d(d\alpha).
\] 
From the definition of Reeb vector field:
\begin{align*}
\iota_R d\alpha &= 0 \quad \text{(first condition)}, \\
d(d\alpha) &= 0 \quad \text{(nilpotency of exterior derivative)}.
\end{align*}
Therefore:
\[
\mathcal{L}_R (d\alpha) = d(0) + \iota_R(0) = 0.
\]
This shows $d\alpha$ is invariant under Reeb flow: $\phi_t^* (d\alpha) = d\alpha$.

\subsubsection*{Step 2: Verify horizontality of $d\alpha$}
The horizontality condition requires $d\alpha$ to vanish in the Reeb direction. This is already guaranteed by $\iota_R d\alpha = 0$.

More precisely, for any vector field $X$ on $\Sigma$:
\[
d\alpha(R, X) = (\iota_R d\alpha)(X) = 0.
\]
So $d\alpha$ indeed degenerates in the $R$ direction.

\subsubsection*{Step 3: Show well-definedness of $\bar{\omega}$}
We need to show there exists a unique 2-form $\bar{\omega}$ on $\mathcal{N}$ such that $\pi^* \bar{\omega} = d\alpha$.

\paragraph{Existence}:
Since $d\alpha$ is invariant under Reeb flow and degenerates in the Reeb direction, it "descends" to the quotient space. Specifically:

For $[x] \in \mathcal{N}$ and tangent vectors $v, w \in T_{[x]}\mathcal{N}$, choose lifts $\tilde{v}, \tilde{w} \in T_x\Sigma$ such that $\pi_* \tilde{v} = v$, $\pi_* \tilde{w} = w$. Define:
\[
\bar{\omega}_{[x]}(v, w) = d\alpha_x(\tilde{v}, \tilde{w}).
\]

We need to verify this definition is independent of the choice of lifts. Let $\tilde{v}'$, $\tilde{w}'$ be another set of lifts; then there exist real numbers $a, b$ such that:
\[
\tilde{v}' = \tilde{v} + aR, \quad \tilde{w}' = \tilde{w} + bR.
\]

Then:
\[
d\alpha_x(\tilde{v}', \tilde{w}') = d\alpha_x(\tilde{v} + aR, \tilde{w} + bR).
\]

Due to bilinearity and degeneracy in the $R$ direction:
\begin{align*}
d\alpha_x(\tilde{v} + aR, \tilde{w} + bR) &= d\alpha_x(\tilde{v}, \tilde{w}) + d\alpha_x(\tilde{v}, bR) \\
&\quad + d\alpha_x(aR, \tilde{w}) + d\alpha_x(aR, bR) \\
&= d\alpha_x(\tilde{v}, \tilde{w}) + 0 + 0 + 0.
\end{align*}

So the definition is well-defined.

\paragraph{Uniqueness}:
If $\pi^* \bar{\omega} = \pi^* \bar{\omega}' = d\alpha$, then since $\pi$ is surjective, $\bar{\omega} = \bar{\omega}'$.

\subsubsection*{Step 4: Show $\bar{\omega}$ is a symplectic form}

\paragraph{Closedness}:
Since $\pi^* \bar{\omega} = d\alpha$ and $d\alpha$ is closed:
\[
\pi^* (d\bar{\omega}) = d(\pi^* \bar{\omega}) = d(d\alpha) = 0.
\]
Because $\pi$ is surjective, $d\bar{\omega} = 0$.

\paragraph{Non-degeneracy}:
Suppose there exists $v \in T_{[x]}\mathcal{N}$ such that for all $w \in T_{[x]}\mathcal{N}$, $\bar{\omega}(v, w) = 0$.

Choose a lift $\tilde{v} \in T_x\Sigma$ such that $\pi_* \tilde{v} = v$. Then for any $\tilde{w} \in T_x\Sigma$:
\[
d\alpha(\tilde{v}, \tilde{w}) = \bar{\omega}(v, \pi_* \tilde{w}) = 0.
\]

This means $\tilde{v}$ is in the kernel of $d\alpha$. Since $d\alpha|_{\ker\alpha}$ is non-degenerate (a consequence of the contact condition $\alpha \wedge (d\alpha)^2 \neq 0$), and $\iota_R d\alpha = 0$, the kernel of $d\alpha$ is exactly $\mathbb{R}R$.

Thus $\tilde{v} = cR$ for some $c \in \mathbb{R}$, and then:
\[
v = \pi_* \tilde{v} = \pi_*(cR) = 0,
\]
because $R$ is vertical (along the fibers).

So $\bar{\omega}$ is non-degenerate.
\end{proof}

We have proved that the quotient space $(\mathcal{N}, \bar{\omega})$ is a 4-dimensional symplectic manifold.

Points in $\mathcal{N}$ represent entire "rays," not just a point on a ray:
\begin{itemize}
\item In $\Sigma$: point $(\bm{q},\bm{p})$ represents the state of a ray at position $\bm{q}$ with direction $\frac{\bm{p}}{n(\bm{q})}$.
\item In $\mathcal{N}$: point $[(\bm{q},\bm{p})]$ represents the entire ray (all states at all times).
\end{itemize}

\begin{example}[Three-Dimensional Representation of Phase Space]
Assume rays primarily propagate in the positive $z$-axis direction. Place a plane (screen) at $z = z_0$. Each ray crossing this plane is uniquely determined by its position $(x, y)$ and momentum projections $(p_x, p_y)$. The quadruple $(x, y, p_x, p_y)$ forms a local chart of the 4-dimensional optical phase space $\mathcal{N}$, with corresponding symplectic form:
\[
\bar{\omega} = dx \wedge dp_x + dy \wedge dp_y.
\]
This is precisely the standard symplectic form on $T^*\mathbb{R}^2$, satisfying:
\begin{itemize}
    \item For any two vectors $X, Y \in T_p\mathcal{N}$, $\omega(X,Y)$ gives the oriented area of the parallelogram they span.
    \item The symplectic form $\bar{\omega}$ is a \textbf{closed} 2-form: $d\omega = 0$.
    \item The symplectic form $\bar{\omega}$ is \textbf{non-degenerate}: if $\omega(X,Y) = 0$ for all $Y$, then $X = 0$.
\end{itemize}
\end{example}

Transformations between different symplectic spaces that preserve the symplectic structure are called symplectic transformations:
\begin{definition}[Symplectic Transformation]
Let $(\mathcal{N}, \omega_1)$ and $(\mathcal{M}, \omega_2)$ be two symplectic manifolds. A smooth map $\phi: \mathcal{N} \to \mathcal{M}$ is called a \textbf{symplectic transformation} if it preserves the symplectic form:
\[
\phi^* \omega_2 = \omega_1,
\]
where $\phi^* \omega_2$ is the pullback of $\omega_2$.

If $\mathcal{N}=\mathcal{M}$, we also call $\phi$ a symplectomorphism.
\end{definition}

In particular, if the symplectic transformation $\phi$ is a linear transformation between linear spaces $\mathcal{M}$ and $\mathcal{N}$, then $\phi$ corresponds to a special matrix called a symplectic matrix.

\begin{definition}[Symplectic Matrix]
A $2n \times 2n$ real matrix $M$ is called a \textbf{symplectic matrix} if it satisfies:
\begin{equation}
M^T J M = J,
\end{equation}
where $J$ is the standard symplectic matrix:
\[
J = \begin{pmatrix}
0 & I_n \\
-I_n & 0
\end{pmatrix}.
\]
Here $I_n$ is the $n \times n$ identity matrix, and $0$ is the $n \times n$ zero matrix. The group of all symplectic matrices is denoted $\mathrm{Sp}(2n,\mathbb{R})$.
\end{definition}

\begin{proposition}[Properties of Symplectic Matrices]
Symplectic matrices have the following properties:
\begin{enumerate}
\item The inverse of a symplectic matrix is easy to compute: $M^{-1} = J^{-1} M^T J$.
\item The determinant of a symplectic matrix is 1: $\det(M) = 1$.
\item The transpose of a symplectic matrix is also symplectic.
\item The product of symplectic matrices is symplectic.
\item The identity matrix is symplectic.
\end{enumerate}
\end{proposition}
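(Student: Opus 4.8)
The plan is to derive all five properties from the defining relation $M^T J M = J$ together with the elementary identities $J^T = -J$, $J^2 = -I_{2n}$, and $\det J = 1$ (the last obtained by permuting the two blocks of rows). Only the determinant statement requires a genuine idea; the rest is bookkeeping, so I would dispose of those first.

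Taking determinants in $M^T J M = J$ gives $(\det M)^2 \det J = \det J$, hence $(\det M)^2 = 1$; in particular $M$ is invertible. For property (1), rewrite the defining relation as $M^T J = J M^{-1}$ and left-multiply by $J^{-1}$ to obtain $M^{-1} = J^{-1} M^T J$. Property (5) is immediate: $I^T J I = J$. For property (4), if $M$ and $N$ both satisfy the relation, then $(MN)^T J (MN) = N^T(M^T J M)N = N^T J N = J$. For property (3), multiply $M^T J M = J$ on the left by $(M^T)^{-1}$ and on the right by $M^{-1}$ to get $J = (M^T)^{-1} J M^{-1}$; inverting both sides and using $J^{-1} = -J$ yields $-J = M(-J)M^T$, i.e. $M J M^T = J$, which is exactly the symplectic condition for $M^T$.

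The remaining and only substantive step is to upgrade $(\det M)^2 = 1$ to $\det M = 1$, since the defining equation alone cannot fix the sign. I see three routes. The first is via the Pfaffian: using $\mathrm{Pf}(A^T B A) = \det(A)\,\mathrm{Pf}(B)$ for skew-symmetric $B$, apply it with $B = J$ to get $\mathrm{Pf}(J) = \det(M)\,\mathrm{Pf}(J)$, and conclude $\det M = 1$ since $\mathrm{Pf}(J) \neq 0$. The second is an exterior-algebra argument: the form $\omega_0 = \sum_i dq_i \wedge dp_i$ satisfies $\omega_0^{\wedge n} \neq 0$, and $M^T J M = J$ says precisely that the linear map with matrix $M$ pulls $\omega_0$ back to itself, hence fixes the volume form $\omega_0^{\wedge n}$, forcing $\det M = 1$. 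The third is topological: $\mathrm{Sp}(2n,\mathbb{R})$ is path-connected and contains $I$, so the continuous function $\det$ is constantly $1$ on it. I would present the exterior-algebra route as the main argument, since it ties the statement directly to the symplectic form already constructed earlier in the paper, and remark that the Pfaffian identity gives the same conclusion purely algebraically. The main obstacle is exactly this: the sign of $\det M$ is not visible from naive manipulation of $M^T J M = J$ and requires one genuinely external input — the Pfaffian, the top exterior power, or connectedness of the group.
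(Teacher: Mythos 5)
Your proof is correct, and in fact it supplies something the paper does not: the paper states this proposition without any proof, so there is nothing to compare against on that side. Your handling of items (1), (3), (4), (5) by direct manipulation of $M^TJM=J$ together with $J^{-1}=-J$ is exactly the standard bookkeeping, and you correctly identify that the only nontrivial point is upgrading $(\det M)^2=1$ to $\det M=1$. All three routes you sketch (Pfaffian, invariance of the top exterior power $\omega_0^{\wedge n}$, connectedness of $\mathrm{Sp}(2n,\mathbb{R})$) are valid; the exterior-algebra argument is indeed the most natural choice here, since it uses precisely the symplectic form $\bar{\omega}=dx\wedge dp_x+dy\wedge dp_y$ and the volume form $dx\wedge dp_x\wedge dy\wedge dp_y$ that the paper introduces in the surrounding discussion of Liouville's theorem and conservation of étendue. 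One minor remark: your parenthetical claim that $\det J=1$ is true but not needed for the determinant-squared step — $\det J\neq 0$ suffices to cancel it from $(\det M)^2\det J=\det J$.
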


For the four-dimensional linear symplectic space obtained from three-dimensional Euclidean space, the symplectic transformation from $z=z_1$ to $z=z_2$ can be represented by a $4\times4$ matrix $M$ satisfying:
\[
M^T J M = J, \quad \text{where} \quad J = \begin{pmatrix}
0 & 0 & 1 & 0 \\
0 & 0 & 0 & 1 \\
-1 & 0 & 0 & 0 \\
0 & -1 & 0 & 0
\end{pmatrix}.
\]
The above derivation reveals why paraxial optics can be described by $4 \times 4$ matrices (generalizations of ABCD matrices), and symplectic transformations are ubiquitous in various optical transformations. In fact, all optical transformations (lenses, free propagation) must be \textbf{symplectic transformations} on this 4-dimensional space, i.e., they preserve $\bar{\omega}$.

\begin{proposition}[Symplecticity of Optical Transformations]
Any physically realizable optical transformation (free propagation, reflection, refraction at interfaces, etc.) corresponds to a symplectic transformation on $\mathcal{N}$ in the screen representation.
\end{proposition}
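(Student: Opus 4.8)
The plan is to decompose an arbitrary optical transformation into elementary pieces and verify each one, invoking the elementary fact — immediate from the Definition of a symplectic transformation, since $(\psi\circ\phi)^*\omega=\phi^*\psi^*\omega$, and stated for matrices in the Proposition on properties of symplectic matrices — that a composition of symplectic maps is symplectic. Any physically realizable system is, up to refinement, a finite alternation of (i) \emph{free propagation} through a homogeneous region between two transverse screens and (ii) \emph{refraction or reflection} at a smooth interface $S$, so it suffices to treat these two cases separately and then chain them.

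The structural observation I would isolate as a lemma is that the ray space $\mathcal{N}$ carries, besides $\bar\omega$, a natural chart attached to \emph{every} surface $S$ transverse to the ray bundle: the map $\Phi_S$ sending a ray $\ell$ to $(\bm{q}_0,\bm{p}_0^{\mathrm{tan}})$, where $\bm{q}_0=S\cap\ell$ and $\bm{p}_0^{\mathrm{tan}}\in T^*_{\bm{q}_0}S$ is the tangential part of the optical momentum there, is a local diffeomorphism from $\mathcal{N}$ onto an open set in $T^*S$; and $\Phi_S$ is (anti)symplectic. To see the latter, let $s:\mathcal{N}\supset U\to\Sigma$ be the section assigning to each ray its point on $S$. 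Since $\pi\circ s=\mathrm{id}$, the reduction identity $\pi^*\bar\omega=d\alpha$ gives $\bar\omega=s^*d\alpha=d(s^*\theta)$. Because $\bm{q}_0$ runs over $S$ as the ray varies, $d\bm{q}_0\in TS$ along the image of $s$, so $s^*\theta=s^*(\bm{p}\cdot d\bm{q})$ is exactly the pullback under $\Phi_S$ of the tautological $1$-form of $T^*S$; hence $\bar\omega=\pm\,\Phi_S^*\omega_{T^*S}$, the sign being a convention. For a flat screen this is precisely the assertion $\bar\omega=dx\wedge dp_x+dy\wedge dp_y$ of the Example.

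Granting the lemma, both elementary cases collapse to one line. For free propagation between screens $\Pi_1$ and $\Pi_2$, the transfer map read in screen coordinates is $\Phi_{\Pi_2}\circ\Phi_{\Pi_1}^{-1}$, a composition of two (anti)symplectic identifications, hence a symplectomorphism of $T^*\mathbb{R}^2$; this is consistent with — and can be cross-checked against — the already-established facts that the Hamiltonian flow consists of symplectomorphisms and that rays are its integral curves (Correspondence Theorem). For refraction (resp.\ reflection) at $S$, Snell's law (resp.\ the law of reflection) states exactly that the incoming and outgoing momenta share their tangential component at the point of incidence, $\bm{p}_{\mathrm{out}}-\bm{p}_{\mathrm{in}}\perp T_{\bm{q}_0}S$, while the normal component is pinned down by $\|\bm{p}_{\mathrm{out}}\|=n_{\mathrm{out}}(\bm{q}_0)$. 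Therefore in the $T^*S$ coordinates supplied by $\Phi_S$ on the incoming and on the outgoing sides, the interface map is the \emph{identity}, trivially symplectic; composing with the propagation maps before and after it (each symplectic by the previous case) shows the transfer through $S$ is symplectic. Chaining finitely many such factors yields the proposition, and in the paraxial regime all these maps are linear, recovering the $\mathrm{Sp}(4,\mathbb{R})$ (ABCD-type) picture.

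The main obstacle is the lemma: making the Example rigorous and valid for a \emph{curved} interface, i.e.\ carrying out the tautological-$1$-form computation carefully, checking that transversality of $S$ to the ray bundle makes $\Phi_S$ a genuine local diffeomorphism (this excludes screens placed on caustics), and fixing orientation/sign conventions so that the incoming- and outgoing-side identifications genuinely agree. A secondary technical point is that at an interface the ray flow is only piecewise smooth and the transmitted branch exists only away from total internal reflection; the clean statement is local and generic, and these degenerate loci must simply be excised. Once the lemma is in place, the remainder is bookkeeping with compositions.
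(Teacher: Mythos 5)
Your proof is correct, but it takes a genuinely different — and in fact more complete — route than the paper. The paper's proof is a three-line application of Cartan's formula: $\frac{d}{dt}(\phi_t^*\omega)=\phi_t^*(d\iota_{X_H}\omega+\iota_{X_H}d\omega)=0$, i.e.\ the standard argument that a Hamiltonian flow consists of symplectomorphisms. That argument only covers evolution generated by a smooth Hamiltonian (free propagation, GRIN media); it says nothing about the discrete jump at a refracting or reflecting interface, even though the proposition explicitly lists those, and it does not connect the flow on $T^*\mathbb{R}^3$ to the reduced form $\bar\omega$ on $\mathcal{N}$ in the screen chart. You instead decompose the system into propagation and interface maps, and your key lemma — that for any transverse surface $S$ the chart $\Phi_S:\ell\mapsto(\bm{q}_0,\bm{p}_0^{\mathrm{tan}})$ pulls the tautological form of $T^*S$ back to $s^*\theta$, so that $\bar\omega=s^*d\alpha=\pm\Phi_S^*\omega_{T^*S}$ — is exactly the missing bridge: it makes the Example's identification $\bar\omega=dx\wedge dp_x+dy\wedge dp_y$ rigorous and extends it to curved interfaces. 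With that in hand, your observation that Snell's law and the law of reflection are precisely the statement $\bm{p}_{\mathrm{out}}^{\mathrm{tan}}=\bm{p}_{\mathrm{in}}^{\mathrm{tan}}$, so the interface map is the identity in $T^*S$ coordinates, is the clean way to get symplecticity of refraction; the paper simply does not prove this part of its own claim. The price you pay is the bookkeeping you already flag (consistent sign/orientation conventions across the two sides of $S$, excision of tangential incidence and total internal reflection); one small quibble is that the parenthetical about excluding screens on caustics is unnecessary — $\Phi_S$ is a local diffeomorphism of the full four-dimensional ray space wherever rays meet $S$ transversally, since $(\bm{q}_0,\bm{p}_0^{\mathrm{tan}})$ together with $\|\bm{p}_0\|=n(\bm{q}_0)$ and a choice of side determines the ray; caustics only obstruct the position-only projection of a Lagrangian subfamily, not the chart $\Phi_S$ itself.
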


\begin{proof}
The evolution of an optical system is described by Hamilton's equations:
\[
\frac{dq_i}{dt} = \frac{\partial H}{\partial p_i}, \quad \frac{dp_i}{dt} = -\frac{\partial H}{\partial q_i}.
\]

The flow $\phi_t$ of Hamilton's equations is a symplectic transformation, i.e., $\phi_t^* \omega = \omega$. This is because:
\begin{equation}
\frac{d}{dt}(\phi_t^* \omega) = \phi_t^* (\mathcal{L}_{X_H} \omega) = \phi_t^* (d\iota_{X_H} \omega + \iota_{X_H} d\omega) = \phi_t^* (d(-dH) + 0) = 0,
\end{equation}
so $\phi_t^* \omega$ is constant, equal to the initial value $\omega$.
\end{proof}

Below we give specific representations of symplectic matrices for different optical elements:
\begin{example}[Symplectic Matrix for Free Propagation]
In homogeneous media, propagation distance $d$ along the $z$ direction corresponds to the symplectic matrix:
\[
M_{\text{prop}} = \begin{pmatrix}
1 & 0 & \frac{d}{n} & 0 \\
0 & 1 & 0 & \frac{d}{n} \\
0 & 0 & 1 & 0 \\
0 & 0 & 0 & 1
\end{pmatrix}.
\]
Its symplecticity is left for the reader to verify.
\end{example}

\begin{example}[Symplectic Matrix for Thin Lens]
A thin lens of focal length $f$:
\[
M_{\text{lens}} = \begin{pmatrix}
1 & 0 & 0 & 0 \\
0 & 1 & 0 & 0 \\
-\frac{1}{f} & 0 & 1 & 0 \\
0 & -\frac{1}{f} & 0 & 1
\end{pmatrix}.
\]
Its symplecticity is left for the reader to verify.
\end{example}

Important properties of symplectic transformations lead to many significant conclusions in optics:
\begin{itemize}
    \item Since $\det(M) = 1$, symplectic transformations preserve the volume element of phase space:
\[
dV = dx \wedge dp_x \wedge dy \wedge dp_y.
\]
This corresponds to \textbf{Liouville's theorem}, i.e., conservation of étendue in optics: the phase space volume of a beam remains constant during propagation.
    \item Symplectic transformations preserve optical Lagrangian invariants. The most important invariant is:
\begin{equation}
\mathcal{L} = x p_y - y p_x,
\end{equation}
which corresponds to the angular momentum of the ray and is conserved in rotationally symmetric systems.
    \item Symplectic transformations establish a duality between object and image spaces. If the transformation matrix is:
\[
M = \begin{pmatrix} A & B \\ C & D \end{pmatrix},
\]
then the object-image relationship is described by the $B$ matrix, and momentum transformation by the $C$ matrix; there is a strict conjugate relationship between them.
\item Since symplectic matrices are invertible, optical transformations are reversible. This corresponds to the principle of reversibility of light path in optics.
\end{itemize}

\begin{table}[H]
\centering
\caption{Geometric correspondence between optical concepts in different spaces.}
\label{tab:geometric-correspondence}
\begin{tabular}{p{0.2\linewidth}p{0.25\linewidth}p{0.25\linewidth}p{0.25\linewidth}}
\toprule
\textbf{Physical Concept} & \textbf{3D Euclidean Space} & \textbf{5D Contact Space} & \textbf{4D Symplectic Space} \\
\midrule
Light Ray & Space curve $\gamma(s)$ & Reeb orbit & Point $u \in \mathcal{N}$ \\
Wavefront & Surface $S(\bm{q})=const$ & Legendrian submanifold & Lagrangian submanifold \\
Caustic & Ray envelope/singularity set & Critical set of Legendrian projection & Singularities of Lagrangian manifold \\
Phase & Scalar field $S(\bm{q})$ & Integral submanifold of contact form & Generating function of symplectic form \\
\bottomrule
\end{tabular}
\end{table}

\begin{itemize}
\item In 5D contact space, rays are \textbf{dynamic processes}, integral curves of the Reeb vector field, containing all instantaneous states $(\bm{q}, \bm{p})$ of a ray propagating in the medium.
\item In 4D optical phase space, rays are \textbf{static objects}, represented by points.
\end{itemize}

\section{Representation of Wavefronts in Geometric Optics}
\label{sec:wavefronts}

In wave optics, a \textbf{wavefront} is a surface of constant phase, i.e., a surface consisting of points with the same phase. In the geometric optics approximation, a wavefront is a surface perpendicular to the direction of ray propagation \cite{hawkes2018hamiltonian}. Below we will show how optical wavefronts are specifically represented within the frameworks of contact and symplectic geometry.

\subsection{Representation of Wavefronts in Contact Geometry}

We will prove that optical wavefronts in 3D space correspond to two-dimensional \textbf{Legendrian submanifolds} in 5D contact space.

Consider the 5D contact manifold $(\Sigma, \alpha)$, where:
\begin{itemize}
\item $\Sigma = \{(\bm{q},\bm{p},t) \in T^*\mathbb{R}^3 \times \mathbb{R} : H(\bm{q},\bm{p},t) = 0\}$ is a 5D hypersurface,
\item $\alpha = \bm{p} \cdot d\bm{q} - H dt$ is the contact 1-form,
\item On $\Sigma$, $\alpha$ satisfies the contact condition: $\alpha \wedge (d\alpha)^2 \neq 0$.
\end{itemize}

We define Legendrian submanifolds:
\begin{definition}[Legendrian Submanifold]
Let $(M^{2n+1}, \alpha)$ be a contact manifold. A submanifold $L \subset M$ is called a \textbf{Legendrian submanifold} if:
\begin{enumerate}
\item $\dim L = n$,
\item $\alpha|_L = 0$ (i.e., the contact form restricts to zero on $L$).
\end{enumerate}
\end{definition}

In our 5D case ($n=2$), Legendrian submanifolds are 2-dimensional.

\begin{theorem}[Geometric Description of Wavefronts]
Wavefronts in three-dimensional space correspond to two-dimensional Legendrian submanifolds in the 5D contact space $\Sigma$.
\end{theorem}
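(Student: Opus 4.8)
The plan is to exhibit the correspondence in both directions, with the forward direction (a classical wavefront produces a $2$-dimensional Legendrian) carrying the computational content, and the backward direction (a Legendrian submanifold is, possibly after allowing singularities, a wavefront) following from the local theory of the eikonal equation. First I would fix the working description of a wavefront in the geometric-optics regime: a wavefront $W$ is a level set $\{S(\bm q)=c\}$ of an eikonal phase $S$, i.e. a solution of $\|\nabla S(\bm q)\|^2=n(\bm q)^2$; equivalently, $W$ is a surface met orthogonally by the rays, so the optical momentum along $W$ is $\bm p(\bm q)=\nabla S(\bm q)$, which is conormal to $W$ and has length $n(\bm q)$. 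Since $n>0$ we have $\nabla S\neq 0$, so $W$ is a genuine $2$-dimensional submanifold.

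For the forward direction the lift is essentially forced. Define $j:\mathbb{R}^3\to T^*\mathbb{R}^3$ by $j(\bm q)=(\bm q,dS_{\bm q})$; its image $L=\operatorname{graph}(dS)$ is the standard Lagrangian submanifold generated by $S$, and the eikonal equation is precisely the assertion $L\subset\Sigma=H^{-1}(0)$. Restricting $j$ to the level set gives $\Lambda_c:=j(\{S=c\})\subset\Sigma$, a $2$-dimensional submanifold — the dimension $n=2$ required of a Legendrian in the $5$-dimensional $\Sigma$. The key step is to check $\alpha|_{\Lambda_c}=0$, and here I would use the tautological identity $j^*\theta=j^*\big(\textstyle\sum_i p_i\,dq_i\big)=\sum_i\partial_i S\,dq_i=dS$; since $\alpha=\theta|_\Sigma$, pulling back to $\Lambda_c\subset L$ yields $\alpha|_{\Lambda_c}=dS|_{\{S=c\}}=d\big(S|_{\{S=c\}}\big)=d(\text{const})=0$. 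Hence $\Lambda_c$ is Legendrian. (In the extended picture with $\alpha=\bm p\cdot d\bm q-H\,dt$ the same computation applies at fixed $t$, or after the Reeb reduction to $\Sigma$, since the $H\,dt$ term dies on the instantaneous wavefront.) This also records that the lift is unique up to the choice of outward/inward normal, the two options $\bm p=\pm n\hat N$ matching the two propagation directions.

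For the converse I would take a $2$-dimensional Legendrian $\Lambda\subset\Sigma$ and project it by $\pi:T^*\mathbb{R}^3\to\mathbb{R}^3$. Near any point where $\pi|_\Lambda$ is an immersion, $\Lambda$ is the graph over a surface $W\subset\mathbb{R}^3$ of a momentum field $\bm p(\bm q)$; the Legendrian condition $\alpha|_\Lambda=0$ reads $\bm p\cdot d\bm q|_W=0$, i.e. $\bm p\perp T_{\bm q}W$, while $\Lambda\subset\Sigma$ forces $\|\bm p\|=n$. Thus $W$ is orthogonal to the rays with the correct normal magnitude, so by local solvability of the eikonal equation with noncharacteristic Cauchy data on $W$ — equivalently, the Huygens construction of the ray flow issuing normally from $W$ — there is a phase $S$ with $W=\{S=c\}$ and $\nabla S=\bm p$ on $W$; so $W$ is a wavefront and $\Lambda$ its lift.

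I expect the main obstacle to be exactly the locus where $\pi|_\Lambda$ fails to be an immersion: there $\Lambda$ is still a smooth Legendrian, but its image develops cusp-edge, swallowtail, and higher singularities — this is the caustic — and one is forced to define ``wavefront'' in the singular case as the Legendrian front projection itself, after which this direction of the correspondence reduces to unwinding definitions while the substance remains in the forward construction above. I would therefore state the theorem with this understanding, noting that the smooth case is the genuine content and deferring the systematic singular analysis to Section~\ref{sec:caustics}.
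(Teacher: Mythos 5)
Your forward construction is essentially the paper's proof: lift the level set $\{S=c\}$ via $\bm p=\nabla S$, observe $\bm p\cdot d\bm q=dS$ vanishes on the level set (and $dt=0$ at fixed $t_0$), conclude $\alpha|_{\Lambda}=0$ with the correct dimension. You are in fact slightly more careful than the paper on one point: you note explicitly that the eikonal equation $\|\nabla S\|^2=n^2$ is exactly the condition guaranteeing that the lift lands \emph{inside} $\Sigma=H^{-1}(0)$, a membership the paper asserts without verification. Where you genuinely diverge is that you also argue the converse --- that a $2$-dimensional Legendrian in $\Sigma$ projects (away from the non-immersive locus) to a surface orthogonal to the rays with $\|\bm p\|=n$, hence a wavefront by local solvability of the eikonal equation --- and you flag the caustic locus where this breaks down. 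The paper proves only the forward direction and leaves the converse implicit, so your version actually justifies the word ``correspond'' in the theorem statement; the price is the appeal to noncharacteristic local solvability and the need to reinterpret ``wavefront'' as the Legendrian front projection at singular points, both of which you handle correctly.
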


\begin{proof}
\subsubsection*{Step 1: Parameterization of Wavefronts}
Consider a wavefront $\mathcal{W} \subset \mathbb{R}^3$, a two-dimensional surface. In wave optics, a wavefront is an equiphase surface, i.e., there exists a function $S: \mathbb{R}^3 \to \mathbb{R}$ (the eikonal) such that:
\[
\mathcal{W} = \{\bm{q} \in \mathbb{R}^3 : S(\bm{q}) = \text{constant}\}.
\]

\subsubsection*{Step 2: Momentum in Geometric Optics}
In the geometric optics approximation, momentum is given by the gradient of the eikonal function:
\[
\bm{p} = \nabla S(\bm{q}).
\]
This means that at each point on the wavefront, the momentum direction is perpendicular to the wavefront.

\subsubsection*{Step 3: Lifting to Contact Space}
We lift the wavefront $\mathcal{W}$ to the 5D contact space $\Sigma$:
\[
\Lambda = \{(\bm{q}, \bm{p}, t) \in \Sigma : \bm{q} \in \mathcal{W},\ \bm{p} = \nabla S(\bm{q}),\ t = t_0\},
\]
where $t_0$ is a fixed time.

Since $\mathcal{W}$ is a 2D surface, and $\bm{p}$ and $t$ are determined by $\bm{q}$, $\Lambda$ is also a 2D submanifold.

\subsubsection*{Step 4: Verify Contact Form Restricts to Zero}
Compute the restriction of the contact form $\alpha$ on $\Lambda$:
\[
\alpha|_\Lambda = (\bm{p} \cdot d\bm{q} - H dt)|_\Lambda.
\]
Since $\bm{p} = \nabla S(\bm{q})$, we have:
\[
\bm{p} \cdot d\bm{q} = \nabla S(\bm{q}) \cdot d\bm{q} = dS.
\]
On the wavefront $\mathcal{W}$, $S$ is constant, so $dS|_\mathcal{W} = 0$. Also, on $\Lambda$, $t$ is constant ($t = t_0$), so $dt|_\Lambda = 0$. Therefore:
\[
\alpha|_\Lambda = dS|_\mathcal{W} - H dt|_\Lambda = 0 - 0 = 0.
\]

Thus $\Lambda$ satisfies the conditions of a Legendrian submanifold. We have constructed a 2D submanifold $\Lambda \subset \Sigma$ with $\alpha|_\Lambda = 0$, so $\Lambda$ is a Legendrian submanifold. This submanifold precisely corresponds to the wavefront $\mathcal{W}$ in three-dimensional space.
\end{proof}

\begin{example}[Legendrian Submanifold for Spherical Wave]
Consider a spherical wave emanating from the origin, with eikonal $S(x,y,z) = \sqrt{x^2+y^2+z^2}$. The wavefront is a sphere: $\mathcal{W} = \{(x,y,z) : x^2+y^2+z^2 = R^2\}$. Momentum: $\bm{p} = \nabla S = \left(\frac{x}{R}, \frac{y}{R}, \frac{z}{R}\right)$. The Legendrian submanifold:
\[
\Lambda = \left\{(x,y,z,\frac{x}{R},\frac{y}{R},\frac{z}{R},t_0) : x^2+y^2+z^2 = R^2\right\},
\]
which is also a 2-sphere.
\end{example}

When wavefronts evolve in time, we obtain a family of Legendrian submanifolds $\Lambda_t \subset \Sigma$, parameterized as:
\[
\Lambda_t = \{(\bm{q}, \bm{p}, t) \in \Sigma : \bm{q} \in \mathcal{W}_t,\ \bm{p} = \nabla S(\bm{q},t)\},
\]
where $\mathcal{W}_t$ is the wavefront at time $t$.

\begin{proposition}[Rays as Characteristic Curves]
Characteristic curves (Reeb orbits) through the Legendrian submanifold $\Lambda$ correspond to rays emitted from that wavefront.
\end{proposition}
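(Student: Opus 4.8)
The plan is to show that the Reeb orbits through $\Lambda$ reproduce exactly the rays of geometric optics emanating from the wavefront $\mathcal{W}$, by combining two facts established earlier in the excerpt: first, that Reeb orbits on $(\Sigma,\alpha)$ coincide (up to reparameterization) with the integral curves of the optical Hamiltonian vector field $X_H$, and hence with physical rays; and second, that $\Lambda$ is the Legendrian lift of $\mathcal{W}$ obtained by setting $\bm{p}=\nabla S(\bm{q})$. What remains is to verify that these two descriptions are compatible, i.e., that the ray through a point $\bm{q}_0 \in \mathcal{W}$ whose initial momentum is the one dictated by the eikonal, $\bm{p}_0 = \nabla S(\bm{q}_0)$, is precisely the Reeb orbit through the corresponding point $(\bm{q}_0,\bm{p}_0,t_0) \in \Lambda$.

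First I would recall, via the Theorem on the correspondence between rays and Reeb flow, that on $\Sigma$ one has $X_H|_\Sigma = 2n(\bm{q})^2 R$, so every Reeb orbit is an unparameterized integral curve of $X_H$; projecting to $\mathbb{R}^3$ via $(\bm{q},\bm{p},t)\mapsto\bm{q}$ then yields a genuine geometric-optics ray. Next I would check that the initial data lie on $\Sigma$: a point of $\Lambda$ satisfies $\|\bm{p}\|^2=\|\nabla S\|^2=n^2(\bm{q})$ by the eikonal equation, which is exactly the constraint defining $\Sigma$, so $\Lambda\subset\Sigma$ and the Reeb orbit through any point of $\Lambda$ is well-defined. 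Then the key step: I would verify that the Reeb orbit through $(\bm{q}_0,\bm{p}_0,t_0)$ stays consistent with the eikonal, meaning that as one flows along $X_H$ the momentum remains $\nabla S$ evaluated at the evolving position. This follows from the standard method-of-characteristics argument for the Hamilton–Jacobi equation $H(\bm{q},\nabla S)=0$: along the characteristic curve $\dot{\bm q}=\partial_{\bm p}H$, $\dot{\bm p}=-\partial_{\bm q}H$, one computes $\tfrac{d}{dt}(\bm p - \nabla S(\bm q)) = -\partial_{\bm q}H - D^2S\,\dot{\bm q}$, which vanishes identically upon differentiating $H(\bm q,\nabla S(\bm q))\equiv 0$ in $\bm q$. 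Hence the Reeb orbit through a point of $\Lambda$ is the characteristic of the eikonal, i.e., exactly the ray normal to $\mathcal{W}$ at $\bm{q}_0$.

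Finally I would assemble these pieces: the Reeb orbits through $\Lambda$ are in bijection with points $\bm{q}_0 \in \mathcal{W}$ (equivalently with the rays $\bm{p}_0=\nabla S(\bm{q}_0)$ perpendicular to the wavefront), their physical projections are precisely the rays of Fermat's principle, and the union of these orbits sweeps out (a neighborhood of) the Legendrian lift $\Lambda_t$ of the evolving wavefront $\mathcal{W}_t$, recovering the family described after the previous theorem. I expect the main obstacle to be the careful invariance argument in the key step — showing that the condition $\bm{p}=\nabla S(\bm{q})$ is preserved by the flow — since this is where the eikonal equation and the Hamiltonian dynamics must be reconciled; everything else is bookkeeping built on results already proved. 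A secondary subtlety worth a sentence is that $\Lambda$ is Legendrian and $R\notin\ker\alpha$, so $R$ is transverse to $\Lambda$ and the Reeb orbits genuinely foliate a three-dimensional neighborhood swept out from $\Lambda$, which is why "characteristic curves through $\Lambda$" is the right notion.
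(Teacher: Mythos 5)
Your proposal is correct, and it is substantially more complete than the proof the paper actually gives. The paper's argument is three sentences: since $\alpha|_\Lambda = 0$ while $\alpha(R)=1$, the Reeb field is transverse to $\Lambda$, and (invoking the earlier theorem $X_H|_\Sigma = 2n^2 R$) the Reeb orbit from each point of $\Lambda$ is a ray. You reproduce both of those ingredients, but you also supply two verifications the paper skips entirely: that $\Lambda$ actually lies in $\Sigma$ (via the eikonal equation $\|\nabla S\|^2 = n^2$, without which "the Reeb orbit through a point of $\Lambda$" is not even defined), and --- your key step --- that the Hamiltonian flow preserves the condition $\bm{p}=\nabla S(\bm{q})$, via the standard method-of-characteristics computation $\frac{d}{dt}\bigl(\bm{p}-\nabla S(\bm{q})\bigr) = -\partial_{\bm{q}}H - D^2S\,\partial_{\bm{p}}H = 0$ obtained by differentiating $H(\bm{q},\nabla S(\bm{q}))\equiv 0$. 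That invariance argument is what actually justifies calling these orbits "rays emitted from the wavefront" (normal to $\mathcal{W}$ and consistent with the evolving eikonal), and it is the content needed to connect this proposition to the family $\Lambda_t$ of evolving Legendrian lifts described just before it; the paper asserts this connection without proof. One minor point of care in your key step: the right-hand side vanishes only on the set $\{\bm{p}=\nabla S(\bm{q})\}$ itself, so you should close the argument with a uniqueness-of-ODE-solutions remark to conclude the set is invariant --- but you correctly flagged this as the delicate point. In short: same skeleton as the paper, with the missing flesh put on it.
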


\begin{proof}
The Reeb vector field $R$ satisfies $\iota_R d\alpha = 0$ and $\iota_R \alpha = 1$. On the Legendrian submanifold $\Lambda$, $\alpha = 0$, so the Reeb vector field is transverse to $\Lambda$. The Reeb orbit starting from a point on $\Lambda$ gives a ray.
\end{proof}

\section{Caustic Surfaces}
\label{sec:caustics}

\subsection{Definition of Caustic Surfaces}

\begin{definition}[Family of Rays and Envelope Surface]
Let $\mathscr{R}$ be a family of rays in $\mathbb{R}^3$, parameterized as:
\begin{equation}
\bm{q}(s,t) \in \mathbb{R}^3, \quad (s,t) \in U \subset \mathbb{R}^2,
\end{equation}
where $s$ is a parameter distinguishing different rays, and $t$ is a parameter along the ray.

The \textbf{envelope surface} $\mathcal{E}$ is the set of points satisfying: there exist parameters $(s,t)$ and a vector $\bm{v} \neq 0$ such that:
\begin{equation}
\bm{q}(s,t) \in \mathcal{E} \iff \exists \,\bm{v}(s,t)  \neq 0 \text{ such that } \frac{\partial \bm{q}}{\partial s} \cdot \bm{v}(s,t)  = 0 \text{ and } \frac{\partial \bm{q}}{\partial t} \cdot \bm{v}(s,t)  = 0,
\end{equation}
i.e., the envelope surface is the envelope of the family of common tangent planes of the ray family.
\end{definition}

\begin{definition}[Rigorous Definition of Caustic Surface]
A \textbf{caustic surface} $\mathcal{C}$ is the envelope surface of a family of rays $\bm{q}(s,t)$, i.e., the set of points satisfying:
\begin{equation}
\mathcal{C} = \left\{ \bm{q}(s,t) \in \mathbb{R}^3 \middle| \det\left( \frac{\partial \bm{q}}{\partial s}, \frac{\partial \bm{q}}{\partial t}, \frac{\partial^2 \bm{q}}{\partial s \partial t} \right) = 0 \right\},
\end{equation}
or equivalently:
\begin{equation}
\mathcal{C} = \left\{ \bm{q}(s,t) \in \mathbb{R}^3 \middle|\,\, \left\| \frac{\partial \bm{q}}{\partial s} \times \frac{\partial \bm{q}}{\partial t} \right\| = 0 \right\}.
\end{equation}
\end{definition}

\begin{proposition}[Equivalence of Caustic Definitions]
Let $\bm{q}(s,t): U \subset \mathbb{R}^2 \to \mathbb{R}^3$ be a smooth map, where $(s,t)$ are parameters.

Define \textbf{caustic surface} $\mathcal{C}_1$ as:
\begin{equation}
\mathcal{C}_1 = \left\{ \bm{q}(s,t) \in \mathbb{R}^3 \middle| \det\left( \frac{\partial \bm{q}}{\partial s}, \frac{\partial \bm{q}}{\partial t}, \frac{\partial^2 \bm{q}}{\partial s \partial t} \right) = 0 \right\}.
\end{equation}
Define \textbf{caustic surface} $\mathcal{C}_2$ as:
\begin{equation}
\mathcal{C}_2 = \left\{ \bm{q}(s,t) \in \mathbb{R}^3 \middle|\,\, \left\| \frac{\partial \bm{q}}{\partial s} \times \frac{\partial \bm{q}}{\partial t} \right\| = 0 \right\}.
\end{equation}
Then:
\begin{equation}
\mathcal{C}_1 = \mathcal{C}_2.
\end{equation}
\end{proposition}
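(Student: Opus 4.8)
The plan is to recast both defining conditions as one vector identity and then treat the two inclusions separately. By the scalar triple product identity $\det(\bm a,\bm b,\bm c)=(\bm a\times\bm b)\cdot\bm c$, the condition cutting out $\mathcal C_1$ is $\big(\partial_s\bm q\times\partial_t\bm q\big)\cdot\partial_s\partial_t\bm q=0$, while the condition cutting out $\mathcal C_2$ is $\partial_s\bm q\times\partial_t\bm q=\bm 0$. The inclusion $\mathcal C_2\subseteq\mathcal C_1$ is then immediate: a vanishing cross product pairs to zero with every vector, in particular with $\partial_s\partial_t\bm q$, so the determinant vanishes there.

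For the reverse inclusion $\mathcal C_1\subseteq\mathcal C_2$ I would use that $\bm q(s,t)$ is a family of \emph{rays}, not an arbitrary smooth map: in a homogeneous medium each ray is a straight line, so taking the along-ray parameter affine we may write $\bm q(s,t)=\bm r_0(s)+t\,\bm e(s)$ with $\|\bm e(s)\|\equiv 1$, whence $\bm e\cdot\bm e'=0$. Then $\partial_t\bm q=\bm e$, $\partial_s\partial_t\bm q=\bm e'$, $\partial_s\bm q=\bm r_0'+t\,\bm e'$, and in the triple product the term $t\,\bm e'$ drops because it repeats a column, so the $\mathcal C_1$-condition reduces to $\det(\bm r_0',\bm e,\bm e')=0$, that is, $\bm r_0'\in\mathrm{span}\{\bm e,\bm e'\}$. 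Meanwhile $\partial_s\bm q\times\partial_t\bm q=\bm r_0'\times\bm e+t\,(\bm e'\times\bm e)$; when $\bm e,\bm e'$ are independent both $\bm r_0'\times\bm e$ and $\bm e'\times\bm e$ are normal to $\mathrm{span}\{\bm e,\bm e'\}$, so coplanarity forces $\bm r_0'\times\bm e=\mu(s)(\bm e'\times\bm e)$ and hence $\partial_s\bm q\times\partial_t\bm q=(\mu(s)+t)(\bm e'\times\bm e)$, which vanishes exactly at the striction parameter $t=-\mu(s)$. One then argues that this striction point is precisely the point of the ray picked out as a caustic point by the envelope condition of the preceding definition, with the degenerate case $\bm e'\times\bm e=\bm 0$ handled directly (both conditions reducing there to $\bm r_0'\times\bm e=\bm 0$).

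The step I expect to be the main obstacle is exactly this reverse inclusion, because of a parameter mismatch hidden in it: under the ray parametrization the $\mathcal C_1$-equation $\det(\bm r_0',\bm e,\bm e')=0$ involves no $t$, so at first sight it is satisfied along an entire ray, whereas $\mathcal C_2$ is cut out by a genuinely $t$-dependent equation. Resolving this requires being careful that what is being compared are the \emph{image sets} in $\mathbb R^3$ subject to the tangency requirement intrinsic to the notion of a caustic --- that infinitesimally nearby rays actually intersect, which forces $\partial_s\bm q$ and $\partial_t\bm q$ to be linearly dependent --- rather than the raw parameter regions; the elementary ruled-surface fact that a one-parameter family of lines meets its neighbours only along its edge of regression is what makes the two descriptions coincide. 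A last routine check is that the factor $\|\partial_s\bm q\times\partial_t\bm q\|$, which relates the determinant to the curvature data of the surface swept by the rays, is nonzero off $\mathcal C_2$, so that dividing by it in the argument is legitimate away from the caustic.
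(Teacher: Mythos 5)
Your first inclusion, $\mathcal{C}_2\subseteq\mathcal{C}_1$ via the triple--product identity, is correct and is essentially the paper's Part~1. The problem is the reverse inclusion, and here your own computation is telling you something you did not fully accept: for a one--parameter family of straight rays $\bm{q}(s,t)=\bm{r}_0(s)+t\,\bm{e}(s)$ you correctly find that the $\mathcal{C}_1$--condition reduces to $\det(\bm{r}_0',\bm{e},\bm{e}')=0$, which is independent of $t$, while $\partial_s\bm{q}\times\partial_t\bm{q}=(\mu(s)+t)\,(\bm{e}'\times\bm{e})$ vanishes only at the single striction parameter $t=-\mu(s)$. That is not an ``obstacle to be resolved''; it is a counterexample. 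Whenever $\det(\bm{r}_0',\bm{e},\bm{e}')=0$ holds at some $s_0$ with $\bm{e}'\times\bm{e}\neq\bm{0}$, the \emph{entire line} $\{\bm{r}_0(s_0)+t\,\bm{e}(s_0)\}$ lies in $\mathcal{C}_1$ but only one of its points lies in $\mathcal{C}_2$. Both sets are already defined as subsets of $\mathbb{R}^3$ (images), so the appeal to ``image sets subject to the tangency requirement'' does not close the gap --- it silently replaces $\mathcal{C}_1$ by a different, smaller set. The proposition as stated for an arbitrary smooth map is in fact false: take $\bm{q}(s,t)=(s,t,0)$, for which $\partial_s\partial_t\bm{q}=\bm{0}$ gives $\mathcal{C}_1=\bm{q}(U)$ while $\partial_s\bm{q}\times\partial_t\bm{q}=(0,0,1)$ gives $\mathcal{C}_2=\emptyset$.

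For comparison, the paper's own proof of $\mathcal{C}_1\subseteq\mathcal{C}_2$ fails at exactly the same spot. It assumes $\partial_s\bm{q}$ and $\partial_t\bm{q}$ independent, decomposes $\partial_s\partial_t\bm{q}=\alpha\,\partial_s\bm{q}+\beta\,\partial_t\bm{q}+\gamma\,\bm{n}$, and correctly concludes that the determinant vanishes iff $\gamma=0$; but $\gamma=0$ is perfectly consistent with the two first derivatives being independent, so no contradiction is reached and the inclusion is not established. The honest repair is to note that the correct, projection--theoretic definition of the caustic is the rank--deficiency condition $\partial_s\bm{q}\times\partial_t\bm{q}=\bm{0}$ (i.e.\ $\mathcal{C}_2$, which matches the later Legendrian/Lagrangian propositions in the paper), and that the determinant condition defining $\mathcal{C}_1$ is strictly weaker in general; equality requires an additional hypothesis that neither you nor the paper supplies. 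Your instinct to restrict to genuine ray families was the right one, but even in that restricted class the two sets differ, so the reverse inclusion cannot be proved as stated.
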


\begin{proof}
We prove in two parts: $\mathcal{C}_1 \subset \mathcal{C}_2$ and $\mathcal{C}_2 \subset \mathcal{C}_1$.

\subsubsection*{Part 1: Prove $\mathcal{C}_2 \subset \mathcal{C}_1$}
Assume $\bm{q}(s,t) \in \mathcal{C}_2$, i.e.:
\begin{equation}
\left\| \frac{\partial \bm{q}}{\partial s} \times \frac{\partial \bm{q}}{\partial t} \right\| = 0.
\end{equation}

By properties of the cross product norm, this means $\frac{\partial \bm{q}}{\partial s}$ and $\frac{\partial \bm{q}}{\partial t}$ are linearly dependent. Hence there exist real numbers $\lambda, \mu$, not both zero, such that:
\begin{equation}
\lambda \frac{\partial \bm{q}}{\partial s} + \mu \frac{\partial \bm{q}}{\partial t} = 0.
\end{equation}

Now consider the three vectors:
\begin{align*}
\bm{v}_1 &= \frac{\partial \bm{q}}{\partial s}, \\
\bm{v}_2 &= \frac{\partial \bm{q}}{\partial t}, \\
\bm{v}_3 &= \frac{\partial^2 \bm{q}}{\partial s \partial t}.
\end{align*}

Since $\bm{v}_1$ and $\bm{v}_2$ are linearly dependent, these three vectors are necessarily linearly dependent. Therefore:
\begin{equation}
\det(\bm{v}_1, \bm{v}_2, \bm{v}_3) = 0,
\end{equation}
i.e., $\bm{q}(s,t) \in \mathcal{C}_1$. This proves $\mathcal{C}_2 \subset \mathcal{C}_1$.

\subsubsection*{Part 2: Prove $\mathcal{C}_1 \subset \mathcal{C}_2$}
Assume $\bm{q}(s,t) \in \mathcal{C}_1$, i.e.:
\begin{equation}
\det\left( \frac{\partial \bm{q}}{\partial s}, \frac{\partial \bm{q}}{\partial t}, \frac{\partial^2 \bm{q}}{\partial s \partial t} \right) = 0.
\end{equation}

We need to prove $\left\| \frac{\partial \bm{q}}{\partial s} \times \frac{\partial \bm{q}}{\partial t} \right\| = 0$.

We use proof by contradiction. Suppose $\left\| \frac{\partial \bm{q}}{\partial s} \times \frac{\partial \bm{q}}{\partial t} \right\| \neq 0$, i.e., $\frac{\partial \bm{q}}{\partial s}$ and $\frac{\partial \bm{q}}{\partial t}$ are linearly independent.

Consider the function:
\begin{equation}
F(s,t) = \det\left( \frac{\partial \bm{q}}{\partial s}, \frac{\partial \bm{q}}{\partial t}, \frac{\partial^2 \bm{q}}{\partial s \partial t} \right).
\end{equation}

At point $(s,t)$, since $\frac{\partial \bm{q}}{\partial s}$ and $\frac{\partial \bm{q}}{\partial t}$ are linearly independent, they span a 2D plane. The vector $\frac{\partial^2 \bm{q}}{\partial s \partial t}$ can be decomposed as:
\begin{equation}
\frac{\partial^2 \bm{q}}{\partial s \partial t} = \alpha \frac{\partial \bm{q}}{\partial s} + \beta \frac{\partial \bm{q}}{\partial t} + \gamma \bm{n},
\end{equation}
where $\bm{n}$ is a unit normal vector perpendicular to the plane spanned by $\frac{\partial \bm{q}}{\partial s}$ and $\frac{\partial \bm{q}}{\partial t}$.

Then:
\begin{align*}
F(s,t) &= \det\left( \frac{\partial \bm{q}}{\partial s}, \frac{\partial \bm{q}}{\partial t}, \frac{\partial^2 \bm{q}}{\partial s \partial t} \right) \\
&= \det\left( \frac{\partial \bm{q}}{\partial s}, \frac{\partial \bm{q}}{\partial t}, \alpha \frac{\partial \bm{q}}{\partial s} + \beta \frac{\partial \bm{q}}{\partial t} + \gamma \bm{n} \right) \\
&= \det\left( \frac{\partial \bm{q}}{\partial s}, \frac{\partial \bm{q}}{\partial t}, \gamma \bm{n} \right)  \\
&= \gamma \det\left( \frac{\partial \bm{q}}{\partial s}, \frac{\partial \bm{q}}{\partial t}, \bm{n} \right).
\end{align*}

Since $\bm{n}$ is a unit normal and $\frac{\partial \bm{q}}{\partial s}$ and $\frac{\partial \bm{q}}{\partial t}$ are linearly independent, we have:
\begin{equation}
\det\left( \frac{\partial \bm{q}}{\partial s}, \frac{\partial \bm{q}}{\partial t}, \bm{n} \right) = \left\| \frac{\partial \bm{q}}{\partial s} \times \frac{\partial \bm{q}}{\partial t} \right\| \neq 0.
\end{equation}

Therefore $F(s,t) = 0$ if and only if $\gamma = 0$. But if $\gamma = 0$, then $\frac{\partial^2 \bm{q}}{\partial s \partial t}$ lies in the plane spanned by $\frac{\partial \bm{q}}{\partial s}$ and $\frac{\partial \bm{q}}{\partial t}$, so the three vectors are linearly dependent, contradicting $\det = 0$ unless they are already linearly dependent due to $\frac{\partial \bm{q}}{\partial s}$ and $\frac{\partial \bm{q}}{\partial t}$ being dependent. Hence our assumption that they are independent must be false, so $\frac{\partial \bm{q}}{\partial s}$ and $\frac{\partial \bm{q}}{\partial t}$ are linearly dependent, meaning $\left\| \frac{\partial \bm{q}}{\partial s} \times \frac{\partial \bm{q}}{\partial t} \right\| = 0$. Thus $\mathcal{C}_1 \subset \mathcal{C}_2$.
\end{proof}

\begin{proposition}[Caustic Surface of Spherical Wave]
Consider a spherical wave from the origin: $\bm{q}(\theta,\phi,t) = t(\sin\theta\cos\phi, \sin\theta\sin\phi, \cos\theta)$.

Then the caustic surface is $\{0\}$ (the origin).
\end{proposition}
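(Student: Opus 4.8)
The plan is to substitute the given parameterization directly into the determinant criterion for the caustic from Section~\ref{sec:caustics}, adapted in the obvious way to the present two-parameter family of rays: the ray-label is now the pair $(\theta,\phi)$ and the along-ray parameter is $t$, so the caustic is the image of the locus where the map $(\theta,\phi,t)\mapsto\bm q$ fails to be an immersion, i.e.\ where $\det(\partial_\theta\bm q,\partial_\phi\bm q,\partial_t\bm q)=0$. The computation is short once one uses the orthonormal spherical frame; the only real work is interpreting the resulting zero set.

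First I would write $\bm q(\theta,\phi,t)=t\,\bm u(\theta,\phi)$ with $\bm u=(\sin\theta\cos\phi,\sin\theta\sin\phi,\cos\theta)$ the unit radial vector, so that $\partial_t\bm q=\bm u$, $\partial_\theta\bm q=t\,\bm u_\theta$, $\partial_\phi\bm q=t\,\bm u_\phi$, with $\bm u_\theta=\hat{\bm\theta}$ and $\bm u_\phi=\sin\theta\,\hat{\bm\phi}$ the usual spherical basis vectors. Since $(\hat{\bm r},\hat{\bm\theta},\hat{\bm\phi})$ is a positively oriented orthonormal triple, a one-line computation (equivalently, the identity $\bm u_\theta\times\bm u_\phi=\sin\theta\,\bm u$) gives
\[
F(\theta,\phi,t):=\det\!\big(\partial_\theta\bm q,\partial_\phi\bm q,\partial_t\bm q\big)=t^{2}\,\det(\bm u_\theta,\bm u_\phi,\bm u)=t^{2}\sin\theta .
\]
Hence in parameter space $\{F=0\}=\{t=0\}\cup\{\theta\in\{0,\pi\}\}$. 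The part $t=0$ maps to the single point $\bm q=\bm 0$, so $\bm 0$ belongs to the caustic.

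The remaining step --- and the only genuine obstacle --- is to show that the locus $\theta\in\{0,\pi\}$, whose image is the whole $z$-axis, contributes nothing: it is an artefact of the chart, not a focusing phenomenon. Indeed at $\theta=0$ (resp.\ $\pi$) one has $\partial_\phi\bm q=t\sin\theta\,\hat{\bm\phi}\equiv 0$, so the rank drop there is exactly the failure of $(\theta,\phi)\mapsto\bm u$ to be a local diffeomorphism of the direction sphere at its poles, which has nothing to do with the rays. I would dispose of this in either of two ways. (i) Recompute in a regular chart of $S^2$ around each pole --- e.g.\ stereographic coordinates $(a,b)$, for which $\bm u_a,\bm u_b$ are linearly independent and span $\bm u^{\perp}$; there $\det(\partial_a\bm q,\partial_b\bm q,\partial_t\bm q)=t^{2}\det(\bm u_a,\bm u_b,\bm u)$ with the last factor nowhere zero, so again the only degeneracy is $t=0$. (ii) Conceptually: every ray of the family is a straight line through the origin, distinct concurrent lines meet only at their common point (so the envelope can contain no point other than $\bm 0$), and the family is $SO(3)$-invariant, so the caustic is a union of origin-centered spheres; since the chart computation already shows no caustic point off the $z$-axis, no sphere of positive radius can occur. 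Either argument yields $\mathcal C=\{\bm 0\}$.
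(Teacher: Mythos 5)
Your computation is essentially the paper's: both reduce the degeneracy criterion to the function $t^{2}\sin\theta$ (you via the $3\times 3$ determinant with $\partial_t\bm q=\bm u$ as third column, the paper via $\|\partial_\theta\bm q\times\partial_\phi\bm q\|$; these coincide here because $\bm u$ is a unit normal to the span of $\bm u_\theta,\bm u_\phi$). The genuine difference is in how the locus $\sin\theta=0$ is handled. The paper simply remarks that there $\partial_\phi\bm q=0$, calls it ``a degenerate case,'' and concludes the caustic is $\{\bm 0\}$ --- which, taken literally against its own definition of $\mathcal C$ as the image of the rank-drop set, would put the entire $z$-axis into the caustic. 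You correctly identify this as a coordinate artefact of the spherical chart at the poles and actually dispose of it, either by recomputing in a regular chart of $S^2$ (where the only degeneracy is $t=0$) or by the concurrency/$SO(3)$-invariance argument. Either route closes a gap the paper leaves open, so your proof is not only correct but slightly more complete than the one in the text; the conceptual argument (distinct lines through a common point meet only there, so the envelope is at most $\{\bm 0\}$) is the cleanest way to see the result without any chart bookkeeping.
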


\begin{proof}
Compute partial derivatives:
\begin{align*}
\frac{\partial \bm{q}}{\partial \theta} &= t(\cos\theta\cos\phi, \cos\theta\sin\phi, -\sin\theta), \\
\frac{\partial \bm{q}}{\partial \phi} &= t(-\sin\theta\sin\phi, \sin\theta\cos\phi, 0).
\end{align*}

Cross product:
\begin{align*}
\frac{\partial \bm{q}}{\partial \theta} \times \frac{\partial \bm{q}}{\partial \phi} &= t^2 \begin{vmatrix}
\bm{i} & \bm{j} & \bm{k} \\
\cos\theta\cos\phi & \cos\theta\sin\phi & -\sin\theta \\
-\sin\theta\sin\phi & \sin\theta\cos\phi & 0
\end{vmatrix} \\
&= t^2 (\sin^2\theta\cos\phi, \sin^2\theta\sin\phi, \sin\theta\cos\theta).
\end{align*}

Norm:
\begin{equation}
\left\| \frac{\partial \bm{q}}{\partial \theta} \times \frac{\partial \bm{q}}{\partial \phi} \right\| = t^2 \sin\theta.
\end{equation}

This is zero if and only if $t = 0$ or $\sin\theta = 0$.

When $t = 0$, $\bm{q} = 0$; when $\sin\theta = 0$, $\theta = 0$ or $\pi$, in which case $\frac{\partial \bm{q}}{\partial \phi} = 0$, a degenerate case.

Thus the caustic surface is $\{0\}$.
\end{proof}

\subsection{Contact Geometry Framework}

\begin{proposition}[Legendrian Submanifolds and Caustic Surfaces]
Let $(\Sigma, \alpha)$ be a 5D contact manifold, $\Lambda \subset \Sigma$ a 2D Legendrian submanifold, and $\pi: \Sigma \to \mathbb{R}^3$ the projection map. Then the caustic surface is the critical value set of the projection map $\pi|_\Lambda: \Lambda \to \mathbb{R}^3$:
\begin{equation}
\mathcal{C} = \{ \bm{q} \in \mathbb{R}^3 \mid \text{rank}(d(\pi|_\Lambda)) < 2 \}.
\end{equation}
\end{proposition}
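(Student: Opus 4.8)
The plan is to reduce the statement to the already-established equivalence $\mathcal{C}_1=\mathcal{C}_2$ of the two coordinate descriptions of the caustic, by exhibiting a parameterization of $\Lambda$ in which the Legendrian projection $\pi|_\Lambda$ becomes literally the ray map $\bm{q}(s,t)$. Setting up notation: since $\dim\Lambda=2$ and $\pi|_\Lambda\colon\Lambda\to\mathbb{R}^3$ maps a $2$-manifold to a $3$-manifold, at each $x\in\Lambda$ the differential $d_x(\pi|_\Lambda)$ has rank at most $2$; the point is critical when the rank drops below $2$, and I read the assertion as $\mathcal{C}=\pi(\{x\in\Lambda:\operatorname{rank}d_x(\pi|_\Lambda)<2\})$, i.e.\ $\mathcal{C}$ is the critical \emph{value} set. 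In a local chart $(s,t)$ on $\Lambda$, write $\bm{q}(s,t):=\pi(\Lambda(s,t))$; the columns of the Jacobian of $\pi|_\Lambda$ are then $\partial\bm{q}/\partial s$ and $\partial\bm{q}/\partial t$, so $\operatorname{rank}d(\pi|_\Lambda)<2$ exactly when these two vectors are linearly dependent, i.e.\ $\|\partial\bm{q}/\partial s\times\partial\bm{q}/\partial t\|=0$. This is the defining condition of $\mathcal{C}_2$, which by the Equivalence of Caustic Definitions equals $\mathcal{C}_1$, hence equals the caustic surface $\mathcal{C}$ from its rigorous definition.

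The nontrivial point — and the step I expect to be the main obstacle — is to justify that the chart $(s,t)$ on $\Lambda$ can be chosen so that $\bm{q}(s,t)$ is genuinely the physical ray family, with $s$ labelling distinct rays and $t$ the parameter along each ray; only with such an adapted chart does the differential-topological computation above agree with the geometric-optics definition of the caustic as a ray envelope. Here the contact structure is essential rather than incidental: by the Proposition on rays as characteristic curves, the characteristic (Reeb-type) curves associated with the Legendrian submanifold $\Lambda$ project onto rays of the system and foliate a neighbourhood of $\Lambda$ in the relevant total space. Choosing the parameter along these characteristics as $t$ and a transverse parameter as $s$ yields the adapted chart: the $t$-curves are lifts of rays, so $\partial\bm{q}/\partial t$ is everywhere the ray direction while $\partial\bm{q}/\partial s$ is the infinitesimal displacement to the neighbouring ray, and the envelope condition from the definition of the envelope surface — existence of a nonzero $\bm{v}$ orthogonal to both $\partial\bm{q}/\partial s$ and $\partial\bm{q}/\partial t$ — becomes exactly $\partial\bm{q}/\partial s\times\partial\bm{q}/\partial t=0$.

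With the adapted chart in hand, I would close the argument by the two inclusions. If $\bm{q}\in\mathcal{C}$, pick $(s,t)$ realizing it; the envelope condition forces $\partial_s\bm{q},\partial_t\bm{q}$ to be linearly dependent, so $x=\Lambda(s,t)$ is a critical point of $\pi|_\Lambda$ and $\bm{q}\in\pi(\operatorname{Crit}(\pi|_\Lambda))$. Conversely, a critical $x\in\Lambda$ gives $\|\partial_s\bm{q}\times\partial_t\bm{q}\|=0$ at $\pi(x)$, so $\pi(x)\in\mathcal{C}_2=\mathcal{C}$; hence $\mathcal{C}=\pi(\operatorname{Crit}(\pi|_\Lambda))$. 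One technical caveat to flag: the degenerate locus where $\partial\bm{q}/\partial t$ itself vanishes — a point-focus of a single ray, as in the spherical-wave proposition — is automatically a critical point and is counted in $\mathcal{C}$ by convention, so it causes no mismatch.
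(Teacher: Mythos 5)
Your computational core---parameterize $\Lambda$ by two variables, observe that the Jacobian of $\pi|_\Lambda$ is the $3\times 2$ matrix of partials of $\bm{q}$, and that rank $<2$ is exactly $\partial\bm{q}/\partial s\times\partial\bm{q}/\partial t=0$---is what the paper does in its Steps 1--3, and your closing two-inclusion argument mirrors its Steps 4--5. The difference lies in how the cross-product condition gets identified with the caustic: the paper argues the envelope condition directly for an \emph{arbitrary} chart $(u,v)$ on $\Lambda$, whereas you route through the $\mathcal{C}_1=\mathcal{C}_2$ equivalence, which obliges you to produce a chart in which $\bm{q}(s,t)$ is literally a ray family (one coordinate labelling rays, the other running along each ray). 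You correctly single this out as the crux, but your construction of that adapted chart does not work.

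The failure is that the Reeb vector field cannot be tangent to a Legendrian submanifold: the paper itself notes (in the proposition on rays as characteristic curves) that $\alpha|_\Lambda=0$ while $\alpha(R)=1$, so $R$ is everywhere transverse to $\Lambda$. Hence the Reeb flow parameter cannot serve as one of the two coordinates \emph{on} the 2D manifold $\Lambda$. The chart you describe---$t$ along characteristics, $s$ transverse---parameterizes the surface swept out by the Reeb flow of a curve inside $\Lambda$, which meets $\Lambda$ only along that curve; its projection to $\mathbb{R}^3$ is the ruled surface of a one-parameter family of rays (the setting of the paper's caustic definition and of the convex-lens computation), not the image of $\pi|_\Lambda$. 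So your second paragraph establishes a statement about a different map. To close the gap within the paper's framework, do what its Step 4 does: for an arbitrary chart on $\Lambda$, argue directly that rank-deficiency of $d(\pi|_\Lambda)$ coincides with the envelope/degeneracy condition, without requiring a coordinate to run along rays. (Your caveat about the degenerate locus where $\partial\bm{q}/\partial t$ vanishes is fine and consistent with the paper's spherical-wave example.)
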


\begin{proof}
\subsubsection*{Step 1: Local Parameterization of Legendrian Submanifold}
Since $\Lambda$ is a 2D submanifold, at any point $p \in \Lambda$ there exists a neighborhood $U \subset \mathbb{R}^2$ and a diffeomorphism:
\begin{equation*}
\phi: U \to \Lambda, \quad (u,v) \mapsto (\bm{q}(u,v), \bm{p}(u,v), t(u,v)).
\end{equation*}

\subsubsection*{Step 2: Differential of Projection Map}
The projection map $\pi|_\Lambda: \Lambda \to \mathbb{R}^3$ in local coordinates is:
\begin{equation*}
\pi \circ \phi: (u,v) \mapsto \bm{q}(u,v).
\end{equation*}
Its differential (Jacobian matrix) is:
\begin{equation*}
d(\pi \circ \phi) = \begin{pmatrix}
\frac{\partial q_1}{\partial u} & \frac{\partial q_1}{\partial v} \\
\frac{\partial q_2}{\partial u} & \frac{\partial q_2}{\partial v} \\
\frac{\partial q_3}{\partial u} & \frac{\partial q_3}{\partial v}
\end{pmatrix}.
\end{equation*}

\subsubsection*{Step 3: Characterization of Critical Points}
Critical points occur when the rank of $d(\pi \circ \phi)$ is less than 2. Since this is a $3 \times 2$ matrix, rank less than 2 means the two columns are linearly dependent, i.e.:
\begin{equation}
\text{rank}\left( \frac{\partial \bm{q}}{\partial u}, \frac{\partial \bm{q}}{\partial v} \right) < 2 \iff \frac{\partial \bm{q}}{\partial u} \times \frac{\partial \bm{q}}{\partial v} = 0.
\end{equation}

\subsubsection*{Step 4: Equivalence of Critical Points and Envelope Surface}
We need to prove: $\frac{\partial \bm{q}}{\partial u} \times \frac{\partial \bm{q}}{\partial v} = 0$ if and only if $\bm{q}(u,v)$ is a point on the envelope surface of the ray family.

Let $\bm{q}(u,v)$ be a parameterization of a ray family. The condition for being on the envelope is that at that point there exists a non-zero vector $\bm{n}$ orthogonal to both $\frac{\partial \bm{q}}{\partial u}$ and $\frac{\partial \bm{q}}{\partial v}$:
\begin{equation}
\frac{\partial \bm{q}}{\partial u} \cdot \bm{n} = 0, \quad \frac{\partial \bm{q}}{\partial v} \cdot \bm{n} = 0.
\end{equation}
This means the plane spanned by $\frac{\partial \bm{q}}{\partial u}$ and $\frac{\partial \bm{q}}{\partial v}$ degenerates to one dimension or zero, i.e., they are linearly dependent:
\begin{equation}
\frac{\partial \bm{q}}{\partial u} \times \frac{\partial \bm{q}}{\partial v} = 0.
\end{equation}

Conversely, if $\frac{\partial \bm{q}}{\partial u} \times \frac{\partial \bm{q}}{\partial v} = 0$, then there exists a non-zero vector $\bm{n}$ orthogonal to both $\frac{\partial \bm{q}}{\partial u}$ and $\frac{\partial \bm{q}}{\partial v}$, satisfying the envelope condition.

\subsubsection*{Step 5: Caustic Surface as Critical Value Set}
The critical value set is the image of critical points:
\begin{equation}
\mathcal{C} = \pi(\{p \in \Lambda \mid \text{rank}(d(\pi|_\Lambda)_p) < 2\}).
\end{equation}
By Step 4, this is exactly the envelope surface, i.e., the caustic surface.
\end{proof}

\begin{proposition}[Lagrangian Submanifolds and Caustic Surfaces]
Let $(\mathcal{N}, \omega)$ be a 4D symplectic manifold, $L \subset \mathcal{N}$ a 2D Lagrangian submanifold, and $\Pi: \mathcal{N} \to \mathbb{R}^3$ the position projection. Then the caustic surface is the critical value set of the projection map $\Pi|_L: L \to \mathbb{R}^3$:
\begin{equation}
\mathcal{C} = \{ \bm{q} \in \mathbb{R}^3 \mid \text{rank}(d(\Pi|_L)) < 2 \}.
\end{equation}
\end{proposition}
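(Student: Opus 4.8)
The plan is to treat this exactly parallel to the preceding proposition on Legendrian submanifolds, replacing the contact condition $\alpha|_\Lambda = 0$ by the Lagrangian condition $\omega|_L = 0$, and then to tie the two pictures together through the symplectic reduction $\pi\colon\Sigma\to\mathcal N$ with $\pi^*\bar\omega = d\alpha$ established in the reduction theorem. The conceptual point I want to isolate first is that the Lagrangian condition is precisely what forces the two-parameter family of rays carried by $L$ to be a \emph{normal congruence} — locally $\bm p = \nabla S$ for an eikonal $S$ — so that the envelope of that family is genuinely the caustic surface in the sense of Section~\ref{sec:caustics}, rather than some accidental singular locus of an arbitrary surface.

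Concretely I would proceed as follows. First, pick $p\in L$ and a local parameterization $\phi\colon U\subset\mathbb R^2\to L$, $(u,v)\mapsto(\bm q(u,v),\bm p(u,v))$, so that in these coordinates $\Pi|_L$ is the map $(u,v)\mapsto\bm q(u,v)$ and its differential is the $3\times 2$ Jacobian with columns $\partial\bm q/\partial u$ and $\partial\bm q/\partial v$. Second, observe that $\operatorname{rank} d(\Pi|_L) < 2$ is equivalent to linear dependence of these two columns, i.e.\ to $\partial_u\bm q\times\partial_v\bm q = 0$. Third, repeat verbatim the Step~4 argument of the Legendrian proposition: $\partial_u\bm q\times\partial_v\bm q = 0$ holds iff there is a nonzero $\bm n$ orthogonal to both $\partial_u\bm q$ and $\partial_v\bm q$, which is exactly the defining condition for $\bm q(u,v)$ to lie on the envelope of the common tangent planes of the ray family, hence on $\mathcal C$. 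Fourth, take the image of the critical set to conclude $\mathcal C=\{\bm q\in\mathbb R^3\mid\operatorname{rank}d(\Pi|_L)<2\}$. The hypothesis $\omega|_L = 0$ enters precisely when one checks that the $\bm q(u,v)$ appearing here are the position components of an actual ray congruence: since $\omega = -d\theta$ with $\theta = \bm p\cdot d\bm q$, the condition $\omega|_L = 0$ says $\theta|_L$ is closed, hence locally exact, which yields the generating/eikonal function and identifies the envelope with the caustic of the associated wavefront family.

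Alternatively — and this is the route I would actually write up, since it keeps every object well defined — I would lift $L$ through the reduction: because $\pi^*\bar\omega = d\alpha$, a two-dimensional submanifold $L\subset\mathcal N$ is Lagrangian if and only if a corresponding wavefront lift $\Lambda\subset\Sigma$ is Legendrian, and the projections to $\mathbb R^3$ are compatible with $\pi$; the statement then follows immediately from the already-proven proposition for Legendrian submanifolds, consistently with the dictionary of Table~\ref{tab:geometric-correspondence}. I expect the main obstacle to be essentially bookkeeping rather than substance: making precise what ``$\Pi\colon\mathcal N\to\mathbb R^3$'' means, since a point of $\mathcal N$ is an entire ray and not a point of $\mathbb R^3$, so $\Pi|_L$ must be interpreted as the position map of the ray congruence (equivalently, as $\pi$ followed by a local screen chart together with the along-ray parameter), and one must verify that it is the Lagrangian condition — not merely smoothness of the surface — that makes its critical values coincide with the caustic $\mathcal C$.
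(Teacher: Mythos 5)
Your proposal is correct and follows essentially the same route as the paper, whose own proof is just the one-line remark that the argument is identical to the Legendrian case: parameterize $L$ locally as $(u,v)\mapsto(\bm q(u,v),\bm p(u,v))$, note that $\operatorname{rank} d(\Pi|_L)<2$ is equivalent to $\partial_u\bm q\times\partial_v\bm q=0$, and identify this with the envelope condition from the preceding proposition. Your additional observations — that the Lagrangian condition $\omega|_L=0$ is what makes $\theta|_L$ closed and hence yields a genuine normal ray congruence, and that $\Pi$ on $\mathcal N$ needs interpreting since a point of $\mathcal N$ is an entire ray — go beyond what the paper writes but are accurate and address real gaps in the statement as given.
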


\begin{proof}
The proof is similar to the contact geometry case. The Lagrangian submanifold $L$ is locally parameterized as $(u,v) \mapsto (\bm{q}(u,v), \bm{p}(u,v))$, and the projection $\Pi|_L$ is $(u,v) \mapsto \bm{q}(u,v)$. The critical point condition is again $\frac{\partial \bm{q}}{\partial u} \times \frac{\partial \bm{q}}{\partial v} = 0$.
\end{proof}

\section{Caustic Surfaces for Convex Lenses}
\label{sec:lens-caustic}

Consider a rotationally symmetric convex lens placed in air ($n_{air} \approx 1$), with its optical axis along the $z$-axis. Lens parameters:
\begin{itemize}
    \item \textbf{Refractive index}: $n > 1$.
    \item \textbf{Thickness}: $d$ (central thickness along optical axis).
    \item \textbf{Front surface} $S_1$: radius of curvature $R_1 > 0$ (convex), vertex at origin $\bm{0}$.
    \item \textbf{Back surface} $S_2$: radius of curvature $R_2 < 0$ (convex), vertex at $z=d$.
\end{itemize}

Consider a bundle of rays incident parallel to the $z$-axis. Before entering the lens, the momentum field is constant: $\bm{p}_{in} = (0, 0, 1)$.
We parameterize the incident ray family as $\mathcal{L}_{in} \subset T^*\mathbb{R}^3$. Due to rotational symmetry, we introduce parameters $(h, \phi)$, where $h$ is the height of the incident ray from the optical axis, $0 \le h < \frac{D}{2}$, $D$ being the aperture; $\phi$ is the azimuthal angle.

Incident ray position $\bm{q}_{in}$ and momentum $\bm{p}_{in}$:
\begin{equation}
\mathcal{L}_{in} : \begin{cases}
\bm{q}_{in}(h, \phi) = (h \cos\phi, h \sin\phi, z_{start}) \\
\bm{p}_{in}(h, \phi) = (0, 0, 1)
\end{cases}
\end{equation}

The passage of rays through the lens can be described by a symplectic map $\Psi: T^*\mathbb{R}^3 \to T^*\mathbb{R}^3$. We focus on the exiting Lagrangian submanifold $\mathcal{L}_{out} = \Psi(\mathcal{L}_{in})$.

Although the actual refraction involves solving transcendental equations (Snell's law), to derive a general form for the caustic surface, we define two smooth functions describing the optical properties of the lens:
\begin{enumerate}
    \item Exit point function $\bm{q}_{e}(h, \phi)$: position where the ray leaves the back surface $S_2$ of the lens.
    \item Exit angle function $\alpha(h)$: angle between the exiting ray and the optical axis ($z$-axis) (aperture angle).
\end{enumerate}

Due to rotational symmetry about the $z$-axis, in cylindrical coordinates the exiting ray must lie in the meridional plane. The direction vector $\bm{v}_{out}$ of the exiting momentum $\bm{p}_{out}$ can be expressed as:
\begin{equation}
\bm{v}_{out}(h, \phi) = (-\sin\alpha(h)\cos\phi, -\sin\alpha(h)\sin\phi, \cos\alpha(h))
\end{equation}
\textit{Note: Here we assume the lens is converging, so rays point toward the optical axis; hence the radial component is negative.}

The exit point $\bm{q}_{e}$ lies on the back surface $S_2$, with coordinates:
\begin{equation*}
\bm{q}_{e}(h, \phi) = (r_e(h)\cos\phi, r_e(h)\sin\phi, z_e(h)),
\end{equation*}
where $r_e(h)$ and $z_e(h)$ are functions determined by the specific spherical geometry and refraction laws.

The exiting ray family $\mathscr{R}$ in physical space $\mathbb{R}^3$ can be parameterized as $\bm{r}: U \times \mathbb{R}^+ \to \mathbb{R}^3$:
\begin{equation*}
\bm{r}(h, \phi, t) = \bm{q}_{e}(h, \phi) + t \cdot \bm{v}_{out}(h, \phi),
\end{equation*}
where $t$ is the distance parameter from the exit point.
Expanding coordinate components:
\begin{equation*}
\bm{r}(h, \phi, t) = \begin{pmatrix}
(r_e(h) - t \sin\alpha(h)) \cos\phi \\
(r_e(h) - t \sin\alpha(h)) \sin\phi \\
z_e(h) + t \cos\alpha(h)
\end{pmatrix}.
\end{equation*}

In the symplectic geometry framework, the caustic surface $\mathcal{C}$ is defined as the critical value set of the position projection $\Pi: (\bm{q}, \bm{p}) \mapsto \bm{q}$ of the Lagrangian submanifold $\mathcal{L}_{out}$.
Corresponding to the ray family parameterization $\bm{r}(h, \phi, t)$, the caustic surface is the set where the Jacobian matrix $J_{\bm{r}}$ is rank-deficient:
\begin{equation*}
\mathcal{C} = \{ \bm{r}(h, \phi, t) \in \mathbb{R}^3 \mid \det(J_{\bm{r}}) = 0 \}.
\end{equation*}

Compute the Jacobian matrix of $\bm{r}(h, \phi, t)$ with respect to $(h, \phi, t)$:
\begin{equation*}
J_{\bm{r}} = \left( \frac{\partial \bm{r}}{\partial h}, \frac{\partial \bm{r}}{\partial \phi}, \frac{\partial \bm{r}}{\partial t} \right).
\end{equation*}

Let $R(h, t) = r_e(h) - t \sin\alpha(h)$ be the radial distance of the ray at specific $t$.
Let $Z(h, t) = z_e(h) + t \cos\alpha(h)$ be the axial coordinate of the ray at specific $t$.
Then $\bm{r} = (R\cos\phi, R\sin\phi, Z)^T$.

Partial derivatives:
\begin{align*}
\frac{\partial \bm{r}}{\partial h} &= \left( \frac{\partial R}{\partial h}\cos\phi, \frac{\partial R}{\partial h}\sin\phi, \frac{\partial Z}{\partial h} \right)^T, \\
\frac{\partial \bm{r}}{\partial \phi} &= \left( -R\sin\phi, R\cos\phi, 0 \right)^T, \\
\frac{\partial \bm{r}}{\partial t} &= \left( -\sin\alpha \cos\phi, -\sin\alpha \sin\phi, \cos\alpha \right)^T.
\end{align*}

The Jacobian determinant:
\begin{align*}
\det(J_{\bm{r}}) &= \det \begin{pmatrix}
\frac{\partial R}{\partial h}\cos\phi & -R\sin\phi & -\sin\alpha\cos\phi \\
\frac{\partial R}{\partial h}\sin\phi & R\cos\phi & -\sin\alpha\sin\phi \\
\frac{\partial Z}{\partial h} & 0 & \cos\alpha
\end{pmatrix} \\
&= R \cos\alpha \frac{\partial R}{\partial h} (\cos^2\phi + \sin^2\phi) - R (-\sin\alpha) \frac{\partial Z}{\partial h} (\cos^2\phi + \sin^2\phi) \\
&= R(h, t) \left( \frac{\partial R}{\partial h} \cos\alpha(h) + \frac{\partial Z}{\partial h} \sin\alpha(h) \right).
\end{align*}

The equation $\det(J_{\bm{r}}) = 0$ has two solutions, corresponding to two types of caustic surfaces.

\subsubsection*{1. Sagittal Caustic}
Solution: $R(h, t) = 0$.
This means $r_e(h) - t \sin\alpha(h) = 0$, i.e., the ray reaches the optical axis. Due to rotational symmetry, this corresponds to the entire $z$-axis. This is a degenerate caustic line.

\subsubsection*{2. Meridional Caustic}
Solution: $\frac{\partial R}{\partial h} \cos\alpha(h) + \frac{\partial Z}{\partial h} \sin\alpha(h) = 0$. This is a non-trivial caustic surface.

Substitute derivatives of $R(h,t)$ and $Z(h,t)$ with respect to $h$:
\begin{align*}
\frac{\partial R}{\partial h} &= r_e'(h) - t \alpha'(h) \cos\alpha(h), \\
\frac{\partial Z}{\partial h} &= z_e'(h) - t \alpha'(h) \sin\alpha(h).
\end{align*}

Substitute into the condition equation:
\begin{equation*}
\left( r_e' - t \alpha' \cos\alpha \right) \cos\alpha + \left( z_e' - t \alpha' \sin\alpha \right) \sin\alpha = 0.
\end{equation*}

Simplify:
\begin{equation*}
r_e'(h) \cos\alpha(h) + z_e'(h) \sin\alpha(h) - t \alpha'(h) (\cos^2\alpha + \sin^2\alpha) = 0.
\end{equation*}

Thus solve for the critical distance $t_c(h)$ that constitutes the caustic surface:
\begin{equation}
t_c(h) = \frac{r_e'(h) \cos\alpha(h) + z_e'(h) \sin\alpha(h)}{\alpha'(h)}.
\end{equation}

\begin{theorem}[Expression for Caustic Surface of Real Convex Lens]
Let $\alpha(h)$ be the aperture angle after the ray passes through the lens, and $r_e(h), z_e(h)$ be the exit coordinates on the lens back surface. The caustic surface formed by a real convex lens in cylindrical coordinates $(r, z, \phi)$ is given by the parametric equations:
\begin{equation}
\begin{cases}
r_{caustic}(h) = r_e(h) - \left( \frac{r_e'(h) \cos\alpha(h) + z_e'(h) \sin\alpha(h)}{\alpha'(h)} \right) \sin\alpha(h), \\
z_{caustic}(h) = z_e(h) + \left( \frac{r_e'(h) \cos\alpha(h) + z_e'(h) \sin\alpha(h)}{\alpha'(h)} \right) \cos\alpha(h), \\
\phi \in [0, 2\pi),
\end{cases}
\end{equation}
where $h$ is the incident height parameter, and $r_e', z_e', \alpha'$ denote derivatives with respect to $h$.
\end{theorem}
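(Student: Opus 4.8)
The plan is to read the caustic off directly from the Lagrangian-projection characterization established in the Proposition on Lagrangian submanifolds and caustic surfaces: the caustic is the critical value set of the position projection $\Pi|_{\mathcal{L}_{out}}$, which in the chart $(h,\phi,t)$ is precisely the vanishing locus of $\det J_{\bm r}$, where $\bm r(h,\phi,t) = \bm q_e(h,\phi) + t\,\bm v_{out}(h,\phi)$. So the whole theorem reduces to three moves: (i) compute that $3\times 3$ determinant, (ii) isolate the nondegenerate branch of $\det J_{\bm r}=0$, and (iii) substitute the resulting critical parameter value back into $\bm r$ to get the parametric surface.

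First I would record the cylindrical form $\bm r = (R\cos\phi,\, R\sin\phi,\, Z)$ with $R(h,t) = r_e(h) - t\sin\alpha(h)$ and $Z(h,t) = z_e(h) + t\cos\alpha(h)$, differentiate in $h$, $\phi$, and $t$, and expand the determinant. Rotational symmetry makes every $\phi$-dependent term collapse via $\cos^2\phi+\sin^2\phi=1$, giving the factorization $\det J_{\bm r} = R(h,t)\bigl(\partial_h R\cdot\cos\alpha(h) + \partial_h Z\cdot\sin\alpha(h)\bigr)$. This is the one mechanical step; the only care needed is bookkeeping the signs from the inward radial component of $\bm v_{out}$.

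Next I would split into the two factors. The factor $R(h,t)=0$ forces the ray onto the optical axis; by the $\phi$-symmetry the whole $z$-axis is swept out, and there $\partial_\phi\bm r = 0$, so this branch is the degenerate sagittal caustic line and carries no further information. On the nondegenerate factor I substitute $\partial_h R = r_e'(h) - t\alpha'(h)\cos\alpha(h)$ and $\partial_h Z = z_e'(h) - t\alpha'(h)\sin\alpha(h)$; the $t$-quadratic terms cancel because $\cos^2\alpha+\sin^2\alpha=1$, so the condition is linear in $t$ with the unique solution $t_c(h) = \bigl(r_e'(h)\cos\alpha(h) + z_e'(h)\sin\alpha(h)\bigr)/\alpha'(h)$. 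Substituting $t=t_c(h)$ into $R(h,t)$ and $Z(h,t)$ yields $r_{caustic}(h)$ and $z_{caustic}(h)$ exactly as stated, with $\phi$ free, producing the surface of revolution.

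The main obstacle is not the algebra but a well-posedness point: the division by $\alpha'(h)$ presupposes $\alpha'(h)\neq 0$, i.e. the lens genuinely maps distinct incidence heights to distinct focal positions (nonzero spherical-aberration/focusing action). At heights where $\alpha'(h)=0$ the meridional focus recedes to infinity and the formula must be interpreted as a limit; I would state this hypothesis explicitly. A secondary check worth including is that it is the columns $\partial_h\bm r$ and $\partial_t\bm r$ that become dependent on the nondegenerate branch (not $\partial_\phi\bm r$, which vanishes only on the axis), confirming that away from the axis the caustic is exactly the locus computed from the second factor.
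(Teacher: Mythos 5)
Your proposal follows essentially the same route as the paper: parameterize the exiting ray family in cylindrical form, compute and factor $\det J_{\bm r} = R\,(\partial_h R\cos\alpha + \partial_h Z\sin\alpha)$, discard the degenerate sagittal branch $R=0$, solve the linear-in-$t$ meridional condition for $t_c(h)$, and substitute back. Your explicit hypothesis that $\alpha'(h)\neq 0$ is a worthwhile addition that the paper leaves implicit, but the argument is otherwise identical.
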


\begin{figure}[H]
    \centering
    \includegraphics[width=0.4\linewidth]{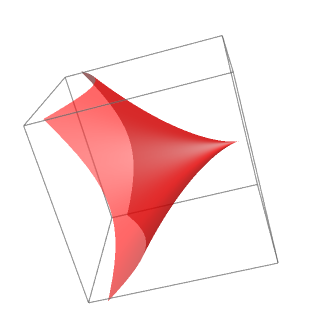}
    \caption{Schematic diagram of caustic surface for a convex lens.}
    \label{fig:lens-caustic}
\end{figure}

\begin{enumerate}
    \item Contact geometry meaning: The derived $t_c(h)$ is precisely the position where the tangent map of the ray family mapping drops rank. At this position, two adjacent rays (parameters $h$ and $h+\delta h$) intersect to first order in the infinitesimal approximation.
    \item Cusp singularity: For a real lens with spherical aberration, $\alpha'(h)$ is not constant. As $h \to 0$, $\alpha(h) \to 0$, and the caustic surface shrinks to a point (paraxial focus). As $h$ increases, the caustic surface expands into a trumpet-shaped surface, with the tip at the paraxial focus and opening toward the lens (for positive spherical aberration systems).
    \item Envelope property: The surface defined by the above expression is exactly the envelope surface of the exiting ray family. Every ray is tangent to the caustic surface at $t = t_c(h)$.
\end{enumerate}





\section{Classification of Caustic Surfaces and Singularity Theory}
\label{sec:classification}

\subsection{Relationship Between Generating Functions and Caustic Surfaces}

\begin{definition}[Morse Family \cite{Milnor}]
A smooth function $F: M \times \mathbb{R}^k \to \mathbb{R}$ is called a \textbf{Morse family} if the critical set
\[
\Sigma_F = \left\{ (q, \xi) \in M \times \mathbb{R}^k \ \middle|\ \frac{\partial F}{\partial \xi} = 0 \right\}
\]
is a smooth submanifold and satisfies the rank condition $\mathrm{rank}(\partial^2 F / \partial(q, \xi)\partial \xi) = k$.
\end{definition}

\begin{theorem}[Arnold-Mather Existence Theorem \cite{arnold1984symplectic-contact}]
Any Lagrangian germ $\mathcal{L} \subset T^*M$ can be locally generated by a Morse family $F(q, \xi)$ via:
\[
\mathcal{L} = \left\{ (q, p) \in T^*M \ \middle|\ \exists \,\,\xi \in \mathbb{R}^k, \frac{\partial F}{\partial \xi}(q, \xi) = 0, \ p = \frac{\partial F}{\partial q}(q, \xi) \right\}.
\]
\end{theorem}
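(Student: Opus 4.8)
The plan is to reduce the statement to a normal-form lemma from linear symplectic algebra, followed by a Poincaré-lemma argument and an explicit Legendre-type construction. Write $n = \dim M$ and work in a Darboux chart in which $T^*M \cong \mathbb{R}^n_q \times \mathbb{R}^n_p$ with $\omega = \sum_i dq_i \wedge dp_i$, and let $(q_0,p_0)$ be the base point of the germ $\mathcal{L}$. The tangent space $V = T_{(q_0,p_0)}\mathcal{L}$ is a Lagrangian subspace of $(\mathbb{R}^{2n},\omega)$. First I would invoke the standard fact that every Lagrangian subspace is transverse to at least one of the $2^n$ coordinate Lagrangian planes $\Lambda_I = \{q_i = 0\ (i \in I),\ p_j = 0\ (j \in J)\}$, where $I \sqcup J = \{1,\dots,n\}$ --- this holds because the affine charts of the Lagrangian Grassmannian centered at these planes cover it. Fixing such an $I$, the relation $V \oplus \Lambda_I = \mathbb{R}^{2n}$ says exactly that the linear projection $(q,p) \mapsto (q_I,p_J)$ is an isomorphism on $V$, so by the implicit function theorem its restriction to $\mathcal{L}$ is a local diffeomorphism onto a neighborhood of $(q_{0,I},p_{0,J})$. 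Hence near the base point $\mathcal{L}$ is the graph of a map expressing $(p_I,q_J)$ as smooth functions of $(q_I,p_J)$; when $J = \varnothing$ this is already an ordinary generating function $S(q)$ and we take $k = 0$.

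Next I would produce the mixed generating function. Parametrizing $\mathcal{L}$ by $x = (q_I,p_J)$, set $\beta = \sum_{i\in I} p_i\,dq_i - \sum_{j\in J} q_j\,dp_j$, and note $d\beta = -\omega$. Since $\omega|_{\mathcal{L}} = 0$, the form $\beta|_{\mathcal{L}}$ is closed, so after shrinking to a contractible neighborhood the Poincaré lemma yields $S(q_I,p_J)$ with $\beta|_{\mathcal{L}} = dS$; matching coefficients gives $p_I = \partial S/\partial q_I$ and $q_J = -\partial S/\partial p_J$ along $\mathcal{L}$. Then I would introduce $k = |J|$ auxiliary variables $\xi = (\xi_j)_{j\in J}$ playing the role of $p_J$ and define
\[
F(q,\xi) = S(q_I,\xi) + \sum_{j\in J} q_j \xi_j.
\]

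Finally I would verify that $F$ is a Morse family generating $\mathcal{L}$. Since $\partial F/\partial \xi_j = \partial S/\partial p_j(q_I,\xi) + q_j$, the critical set $\Sigma_F = \{\partial F/\partial \xi = 0\}$ is precisely $\{q_j = -\partial S/\partial p_j(q_I,\xi)\}_{j\in J}$, which matches the graph equations of $\mathcal{L}$ under $\xi = p_J$; on $\Sigma_F$ one has $\partial F/\partial q_i = \partial S/\partial q_i = p_i$ for $i\in I$ and $\partial F/\partial q_j = \xi_j = p_j$ for $j\in J$, so $p = \partial F/\partial q$ recovers exactly $\mathcal{L}$. The $k\times k$ block $\partial^2 F/\partial q_J\,\partial\xi$ of the matrix $\partial^2 F/\partial(q,\xi)\,\partial\xi$ equals the identity (because $S$ does not depend on $q_J$), giving full rank $k$; the same invertibility lets the implicit function theorem present $\Sigma_F$ as a graph $q_J = q_J(q_I,\xi)$, hence a smooth $n$-dimensional submanifold. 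This establishes both clauses of the Morse family definition.

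The main obstacle I anticipate is not conceptual but the bookkeeping in the last two steps: keeping the index splitting $I \sqcup J$, the mixed generating function $S(q_I,p_J)$, and the auxiliary variables $\xi = p_J$ mutually consistent, and pinning down every sign so that the critical-set equations together with $p = \partial F/\partial q$ reproduce the graph of $\mathcal{L}$ itself rather than a reflected Lagrangian. The linear-algebra input (transversality to a coordinate Lagrangian plane) and the Poincaré lemma are entirely standard; the genuine content is that the partial Legendre transform in the $p_J$-variables converts a mixed generating function into an honest Morse family, and that the rank condition must be checked rather than assumed --- which it is, via the identity block above.
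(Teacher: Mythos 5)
The paper does not prove this theorem at all --- it is stated as a cited result (Arnold--Mather) and used as a black box --- so there is no in-paper argument to compare against. Your proposal supplies the standard Maslov--H\"ormander construction, and it is correct: the reduction to a Lagrangian subspace transverse to a coordinate Lagrangian plane $\Lambda_I$ is the right linear-algebra input (the affine charts centered at the $2^n$ coordinate planes do cover the Lagrangian Grassmannian); the closedness of $\beta = \sum_{i\in I} p_i\,dq_i - \sum_{j\in J} q_j\,dp_j$ restricted to $\mathcal{L}$ follows from $d\beta = -\omega$ and $\omega|_{\mathcal{L}}=0$, so the Poincar\'e lemma produces the mixed generating function $S(q_I,p_J)$ with the signs you state; and the partial Legendre transform $F(q,\xi) = S(q_I,\xi) + \sum_{j\in J} q_j\xi_j$ does reproduce $\mathcal{L}$ exactly, with the identity block $\partial^2 F/\partial q_J\,\partial\xi = I_k$ certifying both the rank condition and the smoothness of $\Sigma_F$ required by the paper's definition of a Morse family. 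The one thing your write-up buys beyond the citation is that it makes the number $k$ of auxiliary variables explicit ($k = |J|$, the failure of $\mathcal{L}$ to project diffeomorphically onto the $q$-coordinates at the base point), which is exactly the corank datum the paper later uses to separate the $A_k$ ($k=1$) and $D_k$ ($k=2$) series; it would be worth stating that connection when the proof is inserted.
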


\subsection{Caustic Criterion}

The fundamental connection between generating functions and caustics is given by the following algebraic criterion:

\begin{proposition}[Caustic Criterion]
A point $q \in M$ lies on the caustic surface $\Sigma$ if and only if the generating function $F(q, \cdot)$ (as a function of $\xi$) has a degenerate Hessian matrix at the critical point:
\[
\det \left( \mathrm{Hess}_\xi F \right) = \det \left( \frac{\partial^2 F}{\partial \xi_i \partial \xi_j} \right) = 0.
\]
\end{proposition}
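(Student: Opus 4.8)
The plan is to translate the geometric definition of the caustic---the critical value set of the Lagrangian projection $\pi\colon\mathcal{L}\to M$---into the stated analytic condition on $F$, by first using the generating-family parametrization from the Arnold--Mather theorem to replace $\mathcal{L}$ by the critical set $\Sigma_F$, and then reducing the rank-drop condition to a short linear-algebra argument built on the Morse hypothesis.

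First I would fix notation: write $G(q,\xi)=\partial F/\partial\xi\colon M\times\mathbb{R}^k\to\mathbb{R}^k$, so that $\Sigma_F=G^{-1}(0)$, and abbreviate the blocks of second derivatives by $F_{\xi q}=\partial^2F/\partial q\,\partial\xi$ (a $k\times n$ matrix) and $F_{\xi\xi}=\mathrm{Hess}_\xi F=\partial^2F/\partial\xi\,\partial\xi$ (a $k\times k$ matrix). The Morse condition says the $k\times(n+k)$ matrix $[\,F_{\xi q}\ \vert\ F_{\xi\xi}\,]$ has rank $k$ along $\Sigma_F$; hence $G$ is a submersion there, $\Sigma_F$ is a smooth $n$-dimensional submanifold, and $T_{(q,\xi)}\Sigma_F=\ker dG=\{(\dot q,\dot\xi):F_{\xi q}\dot q+F_{\xi\xi}\dot\xi=0\}$. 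The map $j\colon\Sigma_F\to T^*M$, $j(q,\xi)=(q,\partial F/\partial q)$, is an immersion with image $\mathcal{L}$, hence a local diffeomorphism onto $\mathcal{L}$, and it intertwines the Lagrangian projection $\pi$ with the plain projection $\rho\colon\Sigma_F\to M$, $\rho(q,\xi)=q$. Thus $q$ lies on the caustic exactly when there is some $(q,\xi)\in\Sigma_F$ at which $d\rho$ fails to have rank $n$.

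Next I would compute $d\rho$ at $(q,\xi)$: as a linear map $T_{(q,\xi)}\Sigma_F\to T_qM\cong\mathbb{R}^n$ it is just $(\dot q,\dot\xi)\mapsto\dot q$. Since $\dim\Sigma_F=\dim M=n$, rank $<n$ is equivalent to non-surjectivity, and $d\rho$ is surjective precisely when for every $\dot q\in\mathbb{R}^n$ the system $F_{\xi\xi}\dot\xi=-F_{\xi q}\dot q$ is solvable in $\dot\xi$, i.e.\ when $\operatorname{im}F_{\xi q}\subseteq\operatorname{im}F_{\xi\xi}$. If $F_{\xi\xi}$ is nonsingular this inclusion is automatic, so $d\rho$ is onto and $q$ is not a caustic point; conversely, if the inclusion holds then the Morse condition $\operatorname{im}[\,F_{\xi q}\ \vert\ F_{\xi\xi}\,]=\mathbb{R}^k$ forces $\operatorname{im}F_{\xi\xi}=\mathbb{R}^k$, i.e.\ $\det F_{\xi\xi}\ne 0$. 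Therefore $d\rho$ drops rank at $(q,\xi)\in\Sigma_F$ if and only if $\det(\mathrm{Hess}_\xi F)(q,\xi)=0$; taking $\rho$-images (equivalently, reading the condition at the critical point $\xi$ of $F(q,\cdot)$) yields the criterion.

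The hard part will be the final step's use of the Morse condition: it is exactly this hypothesis that excludes the degenerate coincidence in which $F_{\xi\xi}$ is singular yet its image still happens to contain $\operatorname{im}F_{\xi q}$, and one must also verify carefully that the identification $j\colon\Sigma_F\to\mathcal{L}$ genuinely carries $\rho$ to the Lagrangian projection $\pi$, so that critical points and critical values correspond on the nose. Everything else---the dimension count for $\Sigma_F$, the description of its tangent space, and the equivalence of rank-drop with non-surjectivity---is a routine unwinding of definitions.
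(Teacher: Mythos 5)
Your proposal is correct and follows the same basic strategy as the paper's proof (analyzing when the projection of the Lagrangian submanifold to $M$ is a local diffeomorphism via the implicit function theorem), but it is substantially more careful, and in one respect it fixes a real gap in the paper's argument. The paper's proof only establishes one implication cleanly: if $\mathrm{Hess}_\xi F$ is non-degenerate, the IFT gives $\xi=\xi(q)$ and the projection is a local diffeomorphism, so the point is not on the caustic. For the converse the paper merely asserts that ``degeneracy of the Hessian matrix indicates failure of the implicit function theorem, corresponding to singularities in the projection''---but failure of the IFT hypothesis does not by itself imply that $d(\pi|_{\mathcal L})$ drops rank. Your linear-algebra step closes exactly this hole: you observe that $d\rho$ is surjective iff $\operatorname{im}F_{\xi q}\subseteq\operatorname{im}F_{\xi\xi}$, and that under this inclusion the Morse rank condition $\operatorname{rank}[\,F_{\xi q}\ \vert\ F_{\xi\xi}\,]=k$ forces $F_{\xi\xi}$ to be surjective, hence nonsingular. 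This is precisely the use of the Morse-family hypothesis needed to rule out the degenerate coincidence where $F_{\xi\xi}$ is singular yet the projection is still submersive, and the paper's proof omits it. Your identification of $\Sigma_F$ with $\mathcal L$ intertwining $\rho$ with $\pi$, and the dimension count $\dim\Sigma_F=\dim M=n$, are standard and correctly handled.
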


\begin{proof}
The Lagrangian submanifold is defined by the equation $\frac{\partial F}{\partial \xi}(q, \xi) = 0$. By the implicit function theorem, if $\mathrm{Hess}_\xi F$ is non-degenerate, we can locally solve $\xi = \xi(q)$, obtaining a parameterization for which the projection $\pi$ is a local diffeomorphism. Degeneracy of the Hessian matrix indicates failure of the implicit function theorem, corresponding to singularities in the projection.
\end{proof}

\subsection{Stability and Codimension Constraints}

Classification of stable caustics is controlled by transversality theory and dimension constraints. In three-dimensional physical space $M = \mathbb{R}^3$, we have only three control parameters $(q_1, q_2, q_3)$ available to "unfold" singularities.

\begin{definition}[Universal Unfolding]
Let $f: (\mathbb{R}^n, 0) \to (\mathbb{R}, 0)$ be a smooth function germ. An \textbf{$r$-parameter unfolding} of $f$ is a smooth function germ $F: (\mathbb{R}^n \times \mathbb{R}^r, 0) \to (\mathbb{R}, 0)$ satisfying $F(x, 0) = f(x)$. It is called a \textbf{universal unfolding} if, via appropriate morphisms, it induces all other unfoldings of $f$.
\end{definition}

\begin{definition}[Unfolding Codimension]
The \textbf{unfolding codimension} of a function germ $f$ is defined as:
\[
\mathrm{codim}(f) = \dim_{\mathbb{R}} \left( \mathfrak{m} / J(f) + \mathbb{R}\{1\} \right),
\]
where $\mathfrak{m}$ is the maximal ideal and $J(f) = \langle \partial f/\partial x_1, \ldots, \partial f/\partial x_n \rangle$ is the Jacobian ideal.
\end{definition}

\begin{theorem}[Thom Transversality Theorem \cite{Gol}]
In three-dimensional space, only singularities with unfolding codimension $\leq 3$ are structurally stable. Singularities of higher codimension decompose under small perturbations into stable singularities.
\end{theorem}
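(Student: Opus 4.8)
The plan is to recast structural stability of the Lagrangian projection $\Pi|_{L}\colon L\to\mathbb{R}^{3}$ in terms of generating families, and then combine a parameter count with the quoted Thom jet-transversality theorem. Concretely: a caustic singularity is stable precisely when the generating family is a versal unfolding of its residual germ, and a three-dimensional base can versally unfold only germs of unfolding codimension at most three; transversality then both forbids higher-codimension germs generically and governs how they break up.

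First I would localize. Fix a point $q_{0}$ of the caustic. By the Arnold--Mather Existence Theorem, near $q_{0}$ the Lagrangian germ is generated by a Morse family $F(q,\xi)$ with $q\in(\mathbb{R}^{3},q_{0})$ and $\xi\in\mathbb{R}^{k}$, and by the Caustic Criterion the caustic is the image of $\{\partial_{\xi}F=0,\ \det\mathrm{Hess}_{\xi}F=0\}$. Let $m=\dim\ker\mathrm{Hess}_{\xi}F(q_{0})$ be the corank. Applying the splitting (Morse) lemma with parameters, I would write, after a $q$-dependent change of the $\xi$-coordinates $\xi=(x,y)$ with $x\in\mathbb{R}^{m}$, $F(q,\xi)=\widetilde{F}(q,x)+\tfrac12\langle A(q)y,y\rangle$ with $A(q)$ invertible. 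The quadratic block is inessential for the singularity type, so the germ of $\Pi|_{L}$ at $q_{0}$ is equivalent to the one produced by the \emph{reduced family} $\widetilde{F}(q,x)$, a three-parameter unfolding of the residual germ $f:=\widetilde{F}(q_{0},\cdot)\in\mathfrak{m}^{2}$.

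Next I would invoke the Arnold--Mather equivalence between Lagrangian theory and unfolding theory: $\Pi|_{L}$ is structurally stable near $q_{0}$ if and only if $\widetilde{F}(q,x)$ is an $R^{+}$-versal unfolding of $f$. By the infinitesimal versality criterion, this requires the initial velocities $\partial\widetilde{F}/\partial q_{i}|_{q_{0}}$ together with the constant $1$ to span the quotient space whose dimension is $\mathrm{codim}(f)$ (the one entering the definition of unfolding codimension above). With only the three parameters $q_{1},q_{2},q_{3}$ available, this forces $\mathrm{codim}(f)\le 3$; a germ with $\mathrm{codim}(f)>3$ therefore cannot come from a stable projection, which is the first assertion. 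For the converse and the decomposition statement I would stratify a sufficiently high jet space of functions of $x$ by $R^{+}$-orbit type, so that the orbit stratum of a finitely-determined germ of unfolding codimension $c$ has codimension $c$. A generic Lagrangian map induces a map $\Phi\colon q\mapsto j^{r}\widetilde{F}(q,\cdot)$ from the three-dimensional base that is transverse to all these strata (Thom transversality); hence its image avoids every stratum of codimension $>3$, and wherever $\Phi$ meets a stratum of codimension $\le 3$ it does so transversally, which is exactly the statement that $\widetilde{F}$ is versal there --- so that singularity occurs and is structurally stable. A singularity with $\mathrm{codim}(f)>3$ lies in the closure of lower-codimension strata, and replacing the family by a nearby transverse one resolves it into the adjacent lower strata that border it: the asserted decomposition under small perturbations.

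The main obstacle is the equivalence invoked above --- that structural stability of the Lagrangian projection is precisely $R^{+}$-versality of the reduced generating family. Making this rigorous requires the full Lagrangian stability package: that Lagrangian equivalence of projections corresponds to stable (fibrewise) equivalence of generating families, that infinitesimal stability implies stability via the Mather homotopy method, and that the relevant jet stratification is locally finite (indeed Whitney-regular) with the claimed codimensions. These are the technical heart; granting them, the dimension count together with the quoted Thom transversality theorem closes the proof, and in $M=\mathbb{R}^{3}$ the surviving list is exactly the germs of unfolding codimension $\le 3$.
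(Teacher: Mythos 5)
The paper does not prove this theorem at all --- it is stated as a quoted classical result with a citation to Golubitsky/Thom and then used as a black box --- so your proposal cannot be compared against an in-paper argument; it should instead be judged as a free-standing sketch of the classical proof. On that score it is essentially correct and follows the standard Arnold--Mather route: reduce by the parametric splitting lemma to a three-parameter unfolding of the residual corank-$m$ germ, identify structural stability of the Lagrangian projection with $R^{+}$-versality of that reduced family, read off $\mathrm{codim}(f)\le 3$ from the infinitesimal versality criterion (four generators --- three initial velocities plus the constant --- must span a space of dimension $\mathrm{codim}(f)+1$), and obtain both genericity and the decomposition statement from transversality of the jet-extension map to the orbit stratification. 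You correctly flag the technical heart yourself, but one deferred step deserves emphasis because it is where a naive reading would fail: Thom's jet-transversality theorem applies to generic smooth maps, whereas here $\Phi(q)=j^{r}\widetilde{F}(q,\cdot)$ ranges only over jet extensions arising from generating families of Lagrangian submanifolds, so one must separately prove that arbitrary deformations of the reduced family can be realized by deformations within the Lagrangian class (Arnold's Lagrangian transversality lemma); without it the "generic Lagrangian map is transverse to the strata" claim is an assertion, not a consequence of the quoted theorem. A second, smaller gap is the decomposition clause: that a higher-codimension singularity breaks precisely into \emph{stable} pieces requires the Whitney regularity and adjacency structure of the stratification ($A_{k+1}\rightsquigarrow A_{k}$, $D_{k}\rightsquigarrow A_{k-1}$, etc.), which you name but do not supply. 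Granting those two ingredients --- both standard in Arnold--Gusein-Zade--Varchenko, which the paper cites --- your dimension count and the resulting list of survivors ($A_{2},A_{3},A_{4},D_{4}^{\pm}$) are correct.
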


This theorem provides the fundamental constraint for our classification: we need only consider singularities whose universal unfolding requires at most three parameters.

\subsection{Classification of Stable Caustics}

We now present a complete classification of stable optical caustic surfaces in three dimensions, based on corank and codimension.

\begin{theorem}[Classification Theorem for Caustics in Three-Dimensional Optics]
In three-dimensional Euclidean space, stable optical caustic surfaces are limited to the following types \cite{arnold1985singularities}:
\begin{itemize}
    \item One-parameter:
\begin{enumerate}
    \item $A_2$ (fold): codimension 1, corresponds to smooth surfaces.
    \item $A_3$ (cusp): codimension 2, corresponds to curves (edges).
    \item $A_4$ (swallowtail): codimension 3, corresponds to point singularities.
\end{enumerate}
    \item Two-parameter:
\begin{enumerate}
\item $D_4^+$ (hyperbolic umbilic): codimension 3, corresponds to point singularities.
\item $D_4^-$ (elliptic umbilic): codimension 3, corresponds to point singularities.
\end{enumerate}
\end{itemize}
\end{theorem}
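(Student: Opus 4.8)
The plan is to deduce this as the optical specialization of Arnold's classification of \emph{simple} (stable) Lagrangian singularities, built entirely on the three results already in hand: the Arnold--Mather generating-family theorem, the Caustic Criterion (degeneracy of $\mathrm{Hess}_\xi F$ at the critical point), and the Thom Transversality Theorem bounding stable unfolding codimension by $\dim M = 3$. So the proof has two halves: (i) every stable caustic germ is the bifurcation set of a versal unfolding of its residual singularity, which forces that singularity to have unfolding codimension $\le 3$; (ii) the list of function germs with codimension $\le 3$ is exactly $A_2, A_3, A_4, D_4^\pm$, and each indeed occurs and is stable.

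First I would localize: pick $q_0$ on the caustic and $\lambda_0\in L$ above it, and use the Arnold--Mather theorem to write $(L,\lambda_0)$ as generated by a Morse family $F(q,\xi)$ with $q\in(\mathbb{R}^3,q_0)$, $\xi\in(\mathbb{R}^k,0)$. Applying the parametric Morse (splitting) lemma in the $\xi$-variables at $\xi=0$, I would split $F(q,\xi)=Q(\xi'')+\widetilde F(q,\xi')$ with $Q$ a fixed nondegenerate quadratic form and $\dim\xi'=c=\mathrm{corank}(\mathrm{Hess}_\xi F)$. Since adjoining $Q$ alters neither the Lagrangian map up to Lagrangian equivalence nor (by the Caustic Criterion) its caustic, one may replace $F$ by $\widetilde F$, an unfolding by the $3$ parameters $q$ of the \emph{residual singularity} $f:=\widetilde F(q_0,\cdot):(\mathbb{R}^c,0)\to(\mathbb{R},0)$. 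The Caustic Criterion then identifies the germ of $\mathcal{C}$ at $q_0$ with the bifurcation (caustic) set of this unfolding.

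Next I would invoke Arnold's Lagrangian stability theorem: a Lagrangian map germ is stable iff its generating family $\widetilde F$ is a versal unfolding (with respect to right-equivalence modulo additive constants) of $f$; versality with exactly three parameters forces $\mathrm{codim}(f)\le 3$, which is precisely the content of the Thom Transversality Theorem as stated above (higher-codimension germs admit no three-parameter versal unfolding and decompose under perturbation into the stable ones), and conversely when $f$ appears in the list below its caustic is locally diffeomorphic to the rigid bifurcation set of a versal unfolding, hence stable. It then remains to enumerate, up to stable right-equivalence, all germs $f$ with $\mathrm{codim}(f)\le 3$. For corank $c=1$, $f\sim \pm x^{k+1}$ with codimension $k-1$, so $k+1=3,4,5$ give $A_2,A_3,A_4$ (the nondegenerate $A_1=x^2$ has codimension $0$ and yields no caustic; $A_5$ already has codimension $4$). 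For corank $c=2$ the $2$-jet vanishes and $f$ starts with a binary cubic form; the nondegenerate real forms are $x^2y+y^3$ and $x^2y-y^3$, the $D_4^+$ (hyperbolic umbilic) and $D_4^-$ (elliptic umbilic), each of codimension $3$, while $D_{\ge5}, E_6,\dots$ have codimension $\ge4$. For corank $c\ge3$ the codimension is already $\ge 4$ (a nondegenerate ternary cubic has Milnor number $8$). This exhausts the list.

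Finally I would translate codimension into geometry: the caustic of a versal unfolding in three parameters is stratified by the strata of the bifurcation set, the top stratum having dimension $3-\mathrm{codim}(f)$, so $A_2$ gives a smooth caustic surface, $A_3$ a cuspidal edge (a curve of cusp points), and $A_4, D_4^\pm$ isolated caustic points — the swallowtail and the hyperbolic (purse) and elliptic (pyramid) umbilics — matching the standard bifurcation-set pictures. The genuine obstacle is that the two pivotal inputs, the equivalence \emph{Lagrangian stability $\Leftrightarrow$ versality of the generating family} and the \emph{finite determinacy plus real classification} of germs of codimension $\le 3$ (in particular the $D_4^+$ versus $D_4^-$ dichotomy and the claim that nothing of corank $\ge2$ slips in below codimension $3$), are themselves deep theorems of Mather and Arnold; in this paper I would not reprove them but cite \cite{arnold1985singularities,arnold1984symplectic-contact,Gol}, so that the argument reduces to the splitting reduction above together with the bookkeeping of codimensions.
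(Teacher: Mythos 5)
Your proposal is correct, and it is in fact more complete than what the paper actually does for this theorem. The paper offers no proof of the classification statement itself: it states the Thom Transversality Theorem as the codimension constraint (``only singularities with unfolding codimension $\leq 3$ are structurally stable''), asserts the list with a citation to Arnold, and then, in separate propositions, verifies case by case that each normal form $A_2, A_3, A_4, D_4^{\pm}$ produces the claimed caustic geometry (smooth surface, cuspidal edge, isolated point). What the paper never supplies --- and what your proposal does supply --- are the three load-bearing steps: (i) the reduction of an arbitrary stable Lagrangian germ to a residual singularity in the corank variables via the Arnold--Mather generating family plus the parametric splitting lemma; (ii) the equivalence of Lagrangian stability with versality of the generating family, which is what actually converts ``three spatial control parameters'' into the bound $\mathrm{codim}(f)\leq 3$; and (iii) the enumeration showing that the germs of codimension $\leq 3$ are exactly $A_2,A_3,A_4$ in corank $1$ and $D_4^{\pm}$ in corank $2$, with the exclusions of $A_5$, $D_{\geq 5}$, and all corank $\geq 3$ germs made explicit. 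Your bookkeeping is right throughout ($A_k$ has codimension $k-1$; the two real nondegenerate binary cubics give the $D_4^{\pm}$ dichotomy; corank $\geq 3$ forces codimension well above $3$), and your honest delegation of the deep inputs (Mather's stability--versality theorem, finite determinacy, the real classification) to the cited references matches the paper's own level of rigor while making the logical dependence visible. The only caveat worth recording is the one you already flag: as written, neither your argument nor the paper's is self-contained, since both ultimately rest on Arnold's classification of simple singularities; but your version at least reduces the theorem to precisely those citable facts rather than to the theorem itself.
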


\subsubsection{Corank 1 Singularities: $A_k$ Series}

For corank 1 singularities, the generating function depends on a single auxiliary variable $\xi \in \mathbb{R}$. Classification is given by the $A_k$ series.

\paragraph{$A_2$ Type: Fold}

\begin{proposition}[$A_2$ Caustic Structure]
The $A_2$ singularity generates smooth caustic surfaces in three-dimensional space.
\end{proposition}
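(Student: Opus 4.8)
The plan is to exhibit the standard normal form of the $A_2$ singularity, compute its associated caustic explicitly, and observe that the result is a smooth surface. By the Arnold--Mather existence theorem quoted above, any Lagrangian germ is generated by a Morse family $F(q,\xi)$, and for a corank $1$ singularity we may take a single auxiliary variable $\xi\in\mathbb{R}$. The normal form of the $A_2$ generating family (the universal unfolding of $f(\xi)=\xi^3$, which has unfolding codimension $1$) is
\[
F(q,\xi) = \xi^3 + q_1 \xi,
\]
where $q_1$ is the single essential control parameter and $q_2,q_3$ enter only as spectator variables (the full germ on $\mathbb{R}^3$ is $F(q,\xi)=\xi^3+q_1\xi+g(q_2,q_3)$ with $g$ a nondegenerate quadratic form, contributing nothing to the $\xi$-critical structure). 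First I would recall, via the Caustic Criterion proved above, that a point $q$ lies on the caustic $\Sigma$ iff the equations $\partial F/\partial\xi = 0$ and $\partial^2 F/\partial\xi^2 = 0$ hold simultaneously at some $\xi$.

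Next I would carry out the elementary computation. From $\partial_\xi F = 3\xi^2 + q_1 = 0$ and $\partial_\xi^2 F = 6\xi = 0$ we get $\xi = 0$ and hence $q_1 = 0$. Feeding this back through the Lagrangian parameterization $p = \partial F/\partial q$ (earlier in the excerpt) shows that the critical locus in $M=\mathbb{R}^3$ is the hyperplane $\{q_1 = 0\}$ — with $(q_2,q_3)$ free — which is manifestly a smooth two-dimensional submanifold of $\mathbb{R}^3$. Equivalently, in the language of the Lagrangian projection, the map $\pi|_L$ restricted to the fold set is an immersion onto $\{q_1=0\}$, so its critical value set is a regular surface; this is the local model of the fold ($A_2$) and matches the ``fold: codimension $1$, corresponds to smooth surfaces'' entry of the classification theorem.

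Finally I would note stability: since $F(q,\xi)=\xi^3+q_1\xi$ is the universal unfolding of $\xi^3$ and its unfolding codimension is $1\le 3$, Thom's transversality theorem (quoted above) guarantees that this caustic germ is structurally stable in three-dimensional physical space, so no perturbation can destroy its smoothness. The main obstacle, such as it is, is bookkeeping rather than mathematics: one must be careful to justify that the spectator variables $q_2,q_3$ and the nondegenerate quadratic remainder $g(q_2,q_3)$ genuinely decouple from the $\xi$-dependence, so that the Hessian degeneracy condition and the critical-point condition involve only $\xi$ and $q_1$; this is exactly the content of the splitting lemma, and invoking it legitimately reduces the whole statement to the one-line computation above.
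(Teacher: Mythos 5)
Your proposal is correct and follows essentially the same route as the paper: both reduce to the normal form $\xi^3 + u\xi$, impose $\partial_\xi F = \partial_\xi^2 F = 0$ to get $\xi = 0$ and $u = 0$, and identify the resulting locus as a smooth surface in $\mathbb{R}^3$. The added remarks on the splitting lemma and structural stability are sound supplementary justification but do not change the core argument.
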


\begin{proof}
Consider the generating function family:
\[
F(u,v,w; \xi) = \xi^3 + u(v,w)\xi.
\]
The critical point equation:
\[
\frac{\partial F}{\partial \xi} = 3\xi^2 + u(v,w) = 0.
\]
The caustic condition (degenerate Hessian) gives:
\[
\frac{\partial^2 F}{\partial \xi^2} = 6\xi = 0 \Rightarrow \xi = 0.
\]
Substituting into the critical point equation yields $u(v,w) = 0$. In physical space $(u,v,w)$, this defines a smooth surface.
\end{proof}

\begin{remark}
$A_2$ caustics represent the generic focusing envelope of a family of rays, e.g., the edge of the focusing region in spherical aberration.
\end{remark}

\paragraph{$A_3$ Type: Cusp}

\begin{proposition}[$A_3$ Caustic Structure]
The $A_3$ singularity generates caustic curves in three-dimensional space, with semi-cubic parabolic cross-section.
\end{proposition}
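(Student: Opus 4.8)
The plan is to follow the same template as the $A_2$ case: start from the standard $A_3$ generating function family, impose the critical-point equation and the degenerate-Hessian condition from the Caustic Criterion, eliminate the auxiliary variable $\xi$, and read off the resulting locus in physical space. Concretely, I would take the generating function family depending on a single auxiliary variable $\xi \in \mathbb{R}$ and three physical parameters $(u,v,w)$,
\[
F(u,v,w;\xi) = \xi^4 + u(v,w)\,\xi^2 + w\,\xi,
\]
where I use two of the physical coordinates to supply the coefficients of the universal unfolding of $\xi^4$ (the $A_3$ normal form has codimension $2$, so exactly two unfolding parameters are needed), leaving the third coordinate free so that the caustic traces out a curve in $\mathbb{R}^3$ rather than a point.

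First I would write down the critical-point equation $\partial F/\partial\xi = 4\xi^3 + 2u\xi + w = 0$, which defines the Lagrangian submanifold via the Arnold–Mather theorem. Next I would apply the Caustic Criterion: since $\xi$ is one-dimensional, $\mathrm{Hess}_\xi F = \partial^2 F/\partial\xi^2 = 12\xi^2 + 2u$, and the caustic is the set where this vanishes, i.e. $u = -6\xi^2$. Substituting back into the critical-point equation gives $w = -4\xi^3 - 2u\xi = -4\xi^3 + 12\xi^3 = 8\xi^3$. Thus the caustic is parameterized by $\xi$ as $(u,w) = (-6\xi^2,\, 8\xi^3)$ with the remaining coordinate free, and eliminating $\xi$ yields the relation $w^2 \propto (-u)^3$, i.e. $27 w^2 = \text{const}\cdot u^3$ up to rescaling — a semicubical parabola. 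Since this curve is swept out along the free direction, the caustic in $\mathbb{R}^3$ is a ruled surface whose transverse cross-section is exactly this semicubical (cuspidal) parabola, which is the claimed structure; the cuspidal edge lies along $\xi = 0$, i.e. $u = w = 0$.

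The only real subtlety — and the point I would spend a sentence justifying rather than waving away — is the bookkeeping of \emph{which} physical coordinates play the role of unfolding parameters versus the spectator coordinate, and the observation that the semicubical shape is coordinate-independent under the allowed right-left equivalences (diffeomorphisms of source and target), so that the "semi-cubic parabolic cross-section" statement is invariant and not an artifact of the chosen normal form. I would cite the stability/codimension results already quoted (Thom transversality, $\mathrm{codim}(A_3)=2 \le 3$) to conclude that this local model genuinely occurs stably in three-dimensional optical systems. The elimination of $\xi$ and the identification with the semicubical parabola are routine; no step poses a genuine obstacle.
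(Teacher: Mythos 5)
Your proposal is correct and follows essentially the same route as the paper's proof: the same $A_3$ normal form (up to a relabeling of which physical coordinate supplies which unfolding coefficient), the same critical-point and degenerate-Hessian equations, the same elimination of $\xi$ yielding the semicubical parabola $(-6\xi^2, 8\xi^3)$, and the same conclusion that the cusp locus sweeps out a curve/cuspidal edge along the spectator direction. Your added remarks on coordinate-independence under right-left equivalence and on the codimension-$2$ stability are sound refinements of, but not departures from, the paper's argument.
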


\begin{proof}
Generating function form:
\[
F(u,v,w; \xi) = \xi^4 + v(w)\xi^2 + u(w)\xi.
\]
Critical point equation:
\[
\frac{\partial F}{\partial \xi} = 4\xi^3 + 2v(w)\xi + u(w) = 0.
\]
Caustic condition:
\[
\frac{\partial^2 F}{\partial \xi^2} = 12\xi^2 + 2v(w) = 0 \Rightarrow v(w) = -6\xi^2.
\]
Substitute back:
\[
u(w) = -4\xi^3 - 2(-6\xi^2)\xi = 8\xi^3.
\]
Eliminate $\xi$ to get the semi-cubic parabola:
\[
u(w)^2 = \frac{8}{27} v(w)^3.
\]
For each fixed $w$, this defines a cusp, forming a curve in three-dimensional space.
\end{proof}

\paragraph{$A_4$ Type: Swallowtail}

\begin{proposition}[$A_4$ Caustic Structure]
The $A_4$ singularity generates self-intersecting swallowtail surfaces in three-dimensional space, with isolated most singular points.
\end{proposition}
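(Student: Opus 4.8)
The plan is to follow the same template used above for the $A_2$ and $A_3$ cases. First I would take as generating family the universal unfolding of the $A_4$ germ $\xi^5$ over three-dimensional physical space,
\[
F(u,v,w;\xi) = \xi^5 + w\,\xi^3 + v\,\xi^2 + u\,\xi,
\]
so that the three physical coordinates $(u,v,w)$ are exactly the unfolding parameters. Since $\mathrm{codim}(\xi^5) = 3$, this family saturates the dimension bound in the Thom transversality theorem quoted above; that observation is what guarantees the resulting caustic is structurally stable in $\mathbb{R}^3$, and it is the reason $A_4$ is the deepest $A_k$ that occurs stably in three dimensions.

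Next I would carry out the elimination. The critical-point equation is
\[
\frac{\partial F}{\partial \xi} = 5\xi^4 + 3w\xi^2 + 2v\xi + u = 0,
\]
and the caustic condition supplied by the Caustic Criterion (degenerate Hessian in $\xi$) is
\[
\frac{\partial^2 F}{\partial \xi^2} = 20\xi^3 + 6w\xi + 2v = 0.
\]
Solving the second relation for $v$ gives $v = -10\xi^3 - 3w\xi$, and substituting back into the first gives $u = 15\xi^4 + 3w\xi^2$. Hence the caustic surface is the image of the map $(\xi,w) \mapsto (u,v,w) = (15\xi^4 + 3w\xi^2,\,-10\xi^3 - 3w\xi,\,w)$, which, up to a coordinate rescaling, is the standard parametrization of the swallowtail surface; this establishes the first assertion.

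It then remains to read off the two qualitative features. For the self-intersection I would fix $w$ and look for $\xi_1 \neq \xi_2$ with the same image, i.e. $u(\xi_1,w)=u(\xi_2,w)$ and $v(\xi_1,w)=v(\xi_2,w)$. Both differences are divisible by $\xi_1-\xi_2$, so after dividing, the system reduces to two low-degree polynomial relations in the elementary symmetric functions $\sigma_1 = \xi_1+\xi_2$ and $\sigma_2 = \xi_1\xi_2$ (together with $w$); solving these shows the locus of genuine double points is a non-empty one-dimensional family, so the surface really is self-intersecting, and a rank computation of the two branch tangent planes at a generic such point shows the intersection is transverse. For the isolated most singular point I would impose that the vertical critical point be still more degenerate, $\partial_\xi F = \partial_\xi^2 F = \partial_\xi^3 F = \partial_\xi^4 F = 0$; since $\partial_\xi F$ is a quartic in $\xi$ this chain forces successively $\xi = 0$, then $w = 0$, then $v = 0$, then $u = 0$. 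Thus the swallowtail point is the unique point at which the Lagrangian projection carries a genuine $A_4$ singularity — the place where the cuspidal edges and the self-intersection curve all meet — and it is isolated.

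I expect the main obstacle to be the self-intersection computation: showing cleanly that the set of double points is a non-empty smooth curve rather than grinding through a brute-force resultant elimination. The cleanest route is to pass from $(\xi_1,\xi_2)$ to $(\sigma_1,\sigma_2)$, exploit the divisibility by $\xi_1-\xi_2$ to drop the polynomial degree, and then verify one-dimensionality and transversality directly; everything else (the $A_k$ normal form, the Hessian elimination, and stability from $\mathrm{codim}=3$) is routine given the results already assembled in this section.
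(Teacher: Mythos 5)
Your proposal follows exactly the paper's route: the universal unfolding $F = \xi^5 + w\xi^3 + v\xi^2 + u\xi$ together with the vanishing of $\partial_\xi F$ and $\partial_\xi^2 F$, and your elimination $v = -10\xi^3 - 3w\xi$, $u = 15\xi^4 + 3w\xi^2$ is correct. The only difference is that you actually carry out what the paper merely asserts — the explicit swallowtail parametrization, the double-point analysis (which indeed reduces via $\sigma_1 = \xi_1 + \xi_2 = 0$ to the nonempty curve $v = 0$, $u = 9w^2/20$, $w < 0$), and the isolation of the $A_4$ vertex at the origin from $\partial_\xi F = \cdots = \partial_\xi^4 F = 0$ — so your argument is a strictly more complete version of the same proof.
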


\begin{proof}
The universal unfolding is:
\[
F(u,v,w; \xi) = \xi^5 + w\xi^3 + v\xi^2 + u\xi.
\]
Critical point and caustic conditions:
\begin{align*}
\frac{\partial F}{\partial \xi} &= 5\xi^4 + 3w\xi^2 + 2v\xi + u = 0, \\
\frac{\partial^2 F}{\partial \xi^2} &= 20\xi^3 + 6w\xi + 2v = 0.
\end{align*}
These equations define the swallowtail catastrophe set in parameter space $(u,v,w)$, which projects to a self-intersecting surface with characteristic swallowtail geometry in physical space.
\end{proof}

\subsubsection{Corank 2 Singularities: $D_k^{\pm}$ Series}

For corank 2 singularities, the generating function depends on two auxiliary variables $(\xi, \eta) \in \mathbb{R}^2$.

\paragraph{$D_4^{\pm}$ Type: Umbilics}

\begin{proposition}[$D_4$ Caustic Structure]
$D_4^{\pm}$ singularities generate isolated point caustics in three-dimensional space, divided into hyperbolic ($D_4^+$) and elliptic ($D_4^-$) types.
\end{proposition}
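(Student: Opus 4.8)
The plan is to reduce the statement to the two standard universal unfoldings of the $D_4$ germs and then apply the Caustic Criterion directly. By the classification of simple singularities, the corank-$2$, codimension-$3$ germs are exactly $D_4^{\pm}$, represented by $\xi^{3}+\eta^{3}$ (hyperbolic) and $\xi^{3}-3\xi\eta^{2}$ (elliptic); by the Arnold--Mather existence theorem a Lagrangian germ of this type is generated by a Morse family whose auxiliary variables form the pair $(\xi,\eta)\in\mathbb{R}^{2}$ and whose three unfolding parameters are precisely the physical coordinates $(q_{1},q_{2},q_{3})$ of $M=\mathbb{R}^{3}$. I would therefore work with the universal unfoldings
\[
F^{+}=\xi^{3}+\eta^{3}+q_{3}\,\xi\eta-q_{1}\xi-q_{2}\eta,\qquad F^{-}=\xi^{3}-3\xi\eta^{2}+q_{3}(\xi^{2}+\eta^{2})-q_{1}\xi-q_{2}\eta.
\]
By the Caustic Criterion, $\mathcal{C}^{\pm}$ is the image in $\mathbb{R}^{3}$ of the set of $(q_{1},q_{2},q_{3},\xi,\eta)$ satisfying the critical-point equations $\partial_{\xi}F^{\pm}=0$, $\partial_{\eta}F^{\pm}=0$ together with the degeneracy condition $\det\mathrm{Hess}_{(\xi,\eta)}F^{\pm}=0$, which read
\begin{align*}
F^{+}&:\quad 3\xi^{2}+q_{3}\eta-q_{1}=0,\quad 3\eta^{2}+q_{3}\xi-q_{2}=0,\quad 36\,\xi\eta-q_{3}^{2}=0,\\
F^{-}&:\quad 3\xi^{2}-3\eta^{2}+2q_{3}\xi-q_{1}=0,\quad -6\xi\eta+2q_{3}\eta-q_{2}=0,\quad 9(\xi^{2}+\eta^{2})-q_{3}^{2}=0.
\end{align*}

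Next I would solve these. In each case the third equation determines $q_{3}$ up to sign from $(\xi,\eta)$ --- on the locus $\xi\eta\ge 0$ in the hyperbolic case, unrestricted in the elliptic case --- and the first two then express $(q_{1},q_{2})$ explicitly. Substituting gives an explicit parameterization of $\mathcal{C}^{\pm}$ by $(\xi,\eta)$ (polynomial after the substitution $\xi=u^{2},\eta=v^{2}$ for $\mathcal{C}^{+}$; of the form ``$\rho^{2}\times$(closed curve in $\theta$), $q_{3}\propto\rho$'' in polar coordinates for $\mathcal{C}^{-}$). This exhibits each caustic as a $2$-dimensional surface in $\mathbb{R}^{3}$ passing through the origin, which is the common image of $(\xi,\eta)=(0,0)$. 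I would then analyse the rank of the differential of this parameterization and show it is an immersion (so $\mathcal{C}^{\pm}$ is locally a smooth sheet) away from a distinguished curve --- the cuspidal-edge locus, where the rank drops to $1$ --- and that the origin $q=0$ is the \emph{unique} point of $\mathcal{C}^{\pm}$ at which the cuspidal-edge curves and self-intersection curves of the sheet(s) all coalesce. This is the precise sense in which $D_4^{\pm}$ produces an isolated point caustic: in contrast with the $A_{3}$ cuspidal edges, whose worst stratum is a curve, the umbilic point is a single point. Matching the explicit parameterizations against the classical umbilic pictures then identifies $\mathcal{C}^{+}$ with the hyperbolic-umbilic surface (two sheets meeting along a curve, one carrying a cuspidal edge through the umbilic point) and $\mathcal{C}^{-}$ with the three-fold symmetric elliptic-umbilic surface (a ``curved pyramid'' with three cuspidal ridges emanating from the vertex).

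Finally I would show the two types are genuinely inequivalent and both stable. The separating invariant is the sign of the discriminant of the cubic part of the germ: $\xi^{3}+\eta^{3}$ factors over $\mathbb{R}$ as a linear form times an irreducible quadratic, while $\xi^{3}-3\xi\eta^{2}=\xi(\xi-\sqrt{3}\,\eta)(\xi+\sqrt{3}\,\eta)$ splits into three distinct real linear factors; since the sign of the discriminant of a binary cubic is a $\mathrm{GL}(2,\mathbb{R})$ invariant, unaffected by adding higher-order terms or by unfolding, no real right-equivalence can carry $D_4^{+}$ to $D_4^{-}$. This can be corroborated geometrically by counting the real critical points of $F^{\pm}(q,\cdot)$ over the chambers of $\mathbb{R}^{3}\setminus\mathcal{C}^{\pm}$, which yields different tables in the two cases. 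Stability of both singularities in $\mathbb{R}^{3}$ is then immediate from $\mathrm{codim}(D_4^{\pm})=3\le 3$ via the Thom transversality theorem. The step I expect to be the main obstacle is the sharp form of the analysis above: proving that the singular locus of $\mathcal{C}^{\pm}$ is exactly the union of the explicitly computed cuspidal-edge and self-intersection curves meeting only at $q=0$, with no further degenerate stratum hidden elsewhere. This amounts to a ``caustic of the caustic'' computation --- locating where the parameterization of $\mathcal{C}^{\pm}$ itself drops rank --- together with careful bookkeeping of the self-intersecting sheets so they are not mistaken for extra singular strata; the invariance argument is conceptually delicate but short once the discriminant-sign observation is in hand.
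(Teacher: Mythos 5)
Your proposal follows essentially the same route as the paper: write down the universal unfoldings of the $D_4^{\pm}$ germs, impose the critical-point equations together with the Hessian degeneracy condition from the Caustic Criterion, and read off the caustic. (You use the normal form $\xi^3+\eta^3+q_3\xi\eta-\cdots$, which coincides with the form the paper itself lists in its Arnold classification theorem; the paper's proof uses the equivalent $\xi^2\eta+\eta^3+w(\xi^2+\eta^2)+\cdots$.) The difference is that the paper stops after writing the three equations and simply asserts that they ``define isolated solutions in $(u,v,w)$-space,'' whereas you correctly observe that the solution set of three equations in five unknowns projects to a \emph{two-dimensional} caustic surface, and that the ``isolated point'' in the classification is the umbilic point where the cuspidal-edge and self-intersection strata of that surface coalesce. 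Your proposed analysis of the rank of the parameterization, plus the discriminant-sign invariant separating the two real types, fills in exactly what the paper leaves unjustified; your computations of the critical-point and degeneracy equations for both $F^{+}$ and $F^{-}$ check out. So: same method, but your version is the more complete (and more accurate) statement of what the $D_4^{\pm}$ caustic actually looks like.
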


\begin{proof}
For hyperbolic umbilic $D_4^+$:
\[
F(u,v,w; \xi,\eta) = \xi^2\eta + \eta^3 + w(\xi^2 + \eta^2) + v\eta + u\xi.
\]
Critical point equations:
\begin{align*}
\frac{\partial F}{\partial \xi} &= 2\xi\eta + 2w\xi + u = 0, \\
\frac{\partial F}{\partial \eta} &= \xi^2 + 3\eta^2 + 2w\eta + v = 0.
\end{align*}
Caustic condition (Hessian determinant):
\[
\det\begin{pmatrix}
2\eta + 2w & 2\xi \\
2\xi & 6\eta + 2w
\end{pmatrix} = (2\eta + 2w)(6\eta + 2w) - 4\xi^2 = 0.
\]
This system defines isolated solutions in $(u,v,w)$-space, corresponding to point singularities. The elliptic case $D_4^-$ is similar, with generating function $\xi^2\eta - \eta^3 + \cdots$.
\end{proof}

\begin{theorem}[Arnold Singularity Classification \cite{arnold1984symplectic-contact}]
Stable Lagrangian/Legendrian singularities in three-dimensional space can only be of the following types:
\begin{align*}
&A_2: \quad F(x;u) = x^3 + ux, \\
&A_3: \quad F(x;u,v) = x^4 + ux^2 + vx, \\
&A_4: \quad F(x;u,v,w) = x^5 + ux^3 + vx^2 + wx, \\
&D_4^+: \quad F(x,y;u,v,w) = x^3 + y^3 + wxy - ux - vy, \\
&D_4^-: \quad F(x,y;u,v,w) = x^3 - 3xy^2 + w(x^2+y^2) - ux - vy,
\end{align*}
where $(u,v,w)$ are control parameters and $(x,y)$ are state variables.

In particular, for rotationally symmetric optical systems in three dimensions, the corresponding caustic surfaces can only be of type $A$.
\end{theorem}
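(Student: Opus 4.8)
The plan is to transfer the statement to the classification of generating-function germs of small codimension, and then run the standard singularity-theoretic normal-form analysis with the codimension cutoff furnished by the Thom transversality theorem quoted above. By the Arnold--Mather existence theorem every caustic germ is produced by a Morse family $F(q,\xi)$, and by the splitting lemma (Gromoll--Meyer reduction, applied with parameters) one has, after a smooth change of coordinates, $F \sim f(x_1,\dots,x_c) + Q(x_{c+1},\dots,x_k)$, where $Q$ is a nondegenerate quadratic form, $f$ has vanishing $2$-jet, and $c$ is the corank of $F(q_0,\cdot)$ at the singular fibre over $q_0$. The summand $Q$ contributes no critical values beyond those of $f$ and so does not affect the caustic; the task therefore reduces to classifying germs $f$ of corank $c$ with $j^2 f = 0$, up to right equivalence together with additive constants, whose caustic codimension $\mu(f)-1$ (equivalently $\dim \mathfrak{m}/J(f)$ for an isolated singularity, and $+\infty$ for a non-isolated one) is at most $3$.

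First I would bound the corank. If $c \geq 3$ the leading jet of $f$ is a form of degree $\geq 3$ in $\geq 3$ variables, and the classical Milnor-number estimates underlying Arnold's list give caustic codimension $\geq 7$ (or a non-isolated critical locus, hence infinite codimension); thus $c \leq 2$. For $c = 1$, $f(x)$ is a one-variable germ with $f \not\equiv 0$ and $j^2 f = 0$, so $f \sim_{\mathcal R} \pm x^{k+1}$ for some $k \geq 2$ --- the $A_k$ germ, with $\mu = k$ and caustic codimension $k-1$. The constraint $k-1 \leq 3$ leaves exactly $A_2, A_3, A_4$, whose miniversal unfoldings (unfolding monomials a basis of $\mathfrak{m}/J(f)$) are precisely $x^3 + ux$, $x^4 + ux^2 + vx$, and $x^5 + ux^3 + vx^2 + wx$; the case $k=1$ ($A_1$) gives a regular projection and no caustic.

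For $c = 2$ I would classify $f(x,y)$ by its cubic $3$-jet, regarded as a real binary cubic form. If the form is nondegenerate (distinct complex roots), the number of real linear factors --- one or three --- separates two real right-equivalence classes, each $3$-determined with $\mu = 4$ and caustic codimension $3$: the hyperbolic umbilic $D_4^{+}$ and the elliptic umbilic $D_4^{-}$, with miniversal unfoldings exactly the two listed normal forms. If the cubic has a repeated linear factor one obtains, after analysing the quartic and higher terms, a $D_k$ with $k \geq 5$ (codimension $\geq 4$), an $E_k$ with $k \geq 6$ (codimension $\geq 5$), or a non-isolated singularity; a vanishing cubic $3$-jet forces $\mu \geq 9$. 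All of these exceed codimension $3$ and are discarded. Collecting the survivors yields precisely $A_2, A_3, A_4, D_4^{+}, D_4^{-}$ with the stated generating functions.

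Finally, for the rotationally-symmetric corollary I would use that the $SO(2)$-action on $\mathbb{R}^3$ about the optical axis lifts to the phase space and preserves the Lagrangian submanifold, so the caustic is a surface of revolution and the generating family may be taken $SO(2)$-equivariant. At an on-axis singular point the circle acts by rotation on a $2$-plane of fibre variables, and since every $SO(2)$-invariant polynomial on $\mathbb{R}^2$ is a polynomial in $\xi^2+\eta^2$, hence of even degree, an equivariant family carries no nonzero invariant cubic: a corank-$2$ singularity degenerates to a function of the single invariant $\rho = \xi^2+\eta^2$, which is of $A_k$ type. Off the axis the caustic germ along each orbit is swept from a plane (meridional) germ, whose stable singularities are regular points and ordinary cusps, producing only $A_2$ sheets and $A_3$ cuspidal edges; the three-fold-symmetric umbilic pattern of a $D_4$ is incompatible with a surface of revolution. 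Hence only $A$-type caustics occur. The hard part --- and the reason the argument must lean on the cited literature --- is that its core \emph{is} the ADE classification of simple singularities: a fully self-contained treatment requires the splitting lemma, the Mather--Tougeron finite-determinacy theorem, and the explicit Jacobian-ideal computations of the $A_k$ and $D_k$ normal forms. I would therefore present the reduction skeleton and the codimension bookkeeping in full while invoking the Arnold--Mather and Thom results stated earlier for the determinacy and versality inputs.
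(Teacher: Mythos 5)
The paper does not actually prove this theorem: it is quoted as Arnold's classification with a citation, the codimension cutoff being supplied by the (also quoted) Thom transversality theorem, and the surrounding propositions in Section~8 only work \emph{forward} from each normal form on the list to the geometry of its caustic set. Your proposal goes in the opposite direction and reconstructs the list itself, which is a genuinely different (and more self-contained) route. The skeleton is the standard and correct one: Arnold--Mather to pass to a Morse family, the parametric splitting lemma to strip off the nondegenerate quadratic part, the corank bound $c\le 2$ from the $\mu\ge 8$ estimate for corank-$3$ germs, the one-variable case giving $A_k$ with caustic codimension $k-1\le 3$, and the binary-cubic trichotomy in corank $2$ isolating $D_4^{\pm}$ while pushing everything else ($D_{k\ge 5}$, $E_k$, zero $3$-jet) past codimension $3$. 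Your codimension bookkeeping ($\mu-1$ for Lagrangian caustic codimension) agrees with the paper's Table~2. What the paper's citation-only treatment buys is brevity; what your argument buys is an actual justification of why the list stops where it does, at the cost of leaning on the splitting lemma and finite determinacy, which you correctly acknowledge as imported inputs.

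One caveat on the rotationally symmetric corollary, which the paper also asserts without proof: your key observation --- that an $SO(2)$-equivariant family admits no invariant cubic in the fibre variables, so the umbilics cannot occur --- is the right reason, but the conclusion that the on-axis germ ``is of $A_k$ type'' is imprecise as stated. A function $g(\xi^2+\eta^2)$ is \emph{not} an $A_k$ germ of two variables (e.g.\ $(\xi^2+\eta^2)^2$ is $X_9$ with $\mu=9$); the axial focus lies outside the stable list entirely and is only stable within the equivariant category. The correct reading, which your meridional-section argument already supplies for off-axis points, is that the caustic \emph{surfaces} are swept from planar $A_k$ germs of the reduced one-variable family in the invariant $\rho$, so that only $A$-type sheets, cuspidal edges, and their rotations occur. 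Tightening that one sentence would make the plan sound.
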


\begin{figure}[H]
    \centering
    \includegraphics[width=0.9\linewidth]{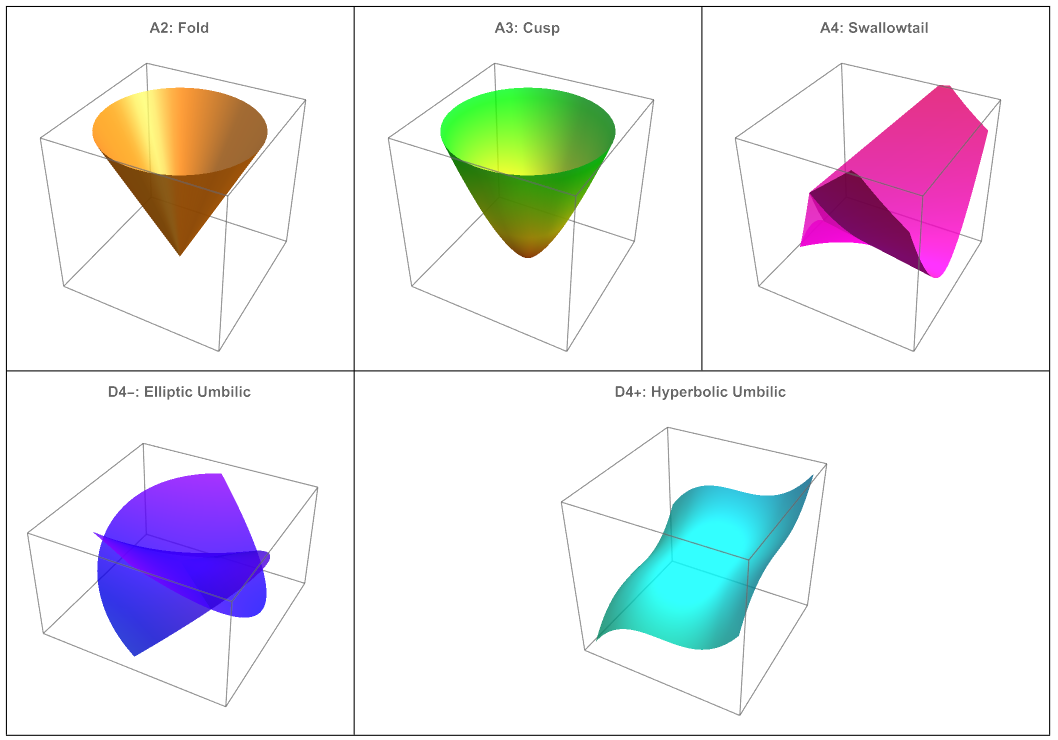}
    \caption{Schematic diagrams of different caustic surfaces.}
    \label{fig:caustic-types}
\end{figure}

\begin{table}[h]
\centering
\caption{Summary of caustic surface classification in three-dimensional optics.}
\label{tab:caustic-classification}
\begin{tabular}{lllll}
\toprule
\textbf{Type} & \textbf{Symbol} & \textbf{Codimension} & \textbf{Geometric Form} & \textbf{Generating Function} \\
\midrule
Fold & $A_2$ & 1 & Smooth surface & $\xi^3 + u\xi$ \\
Cusp & $A_3$ & 2 & Curve (edge) & $\xi^4 + v\xi^2 + u\xi$ \\
Swallowtail & $A_4$ & 3 & Point (self-intersecting) & $\xi^5 + w\xi^3 + v\xi^2 + u\xi$ \\
Hyperbolic umbilic & $D_4^+$ & 3 & Point & $\xi^2\eta + \eta^3 + \cdots$ \\
Elliptic umbilic & $D_4^-$ & 3 & Point & $\xi^2\eta - \eta^3 + \cdots$ \\
\bottomrule
\end{tabular}
\end{table}

\section{Correspondence Between Caustic Surfaces and Seidel Aberrations}
\label{sec:correspondence}

Seidel aberrations are expansion coefficients of the wavefront function in terms of Zernike polynomial basis, while caustic surface types are determined by the leading terms of the Taylor expansion of the wavefront function at critical points. They are connected via \textbf{generating functions}:
\[
\text{Zernike coefficients} \rightarrow \text{Wavefront function} \rightarrow \text{Generating function} \rightarrow \text{Caustic type}.
\]

\begin{theorem}[Thom's Fundamental Catastrophe Classification Theorem \cite{Gol}]
\label{thm:thom_classification}
Let $V: \mathbb{R}^n \times \mathbb{R}^r \to \mathbb{R}$ be a smooth family of potential functions, denoted $V(x; c)$, where $x \in \mathbb{R}^n$ are state variables and $c \in \mathbb{R}^r$ are control parameters.

If $r \le 4$ and the system is structurally stable, then for any critical point of $V$, there exists a smooth local coordinate transformation in state variables $x$ and control parameters $c$ such that near that point, the potential function $V$ is diffeomorphic to a direct sum:
\begin{equation}
V(x; c) \cong \text{Cat}(y; \lambda) + Q(z) + \text{const},
\end{equation}
where $Q(z) = z_1^2 + \dots + z_k^2 - z_{k+1}^2 - \dots - z_{n-m}^2$ is a non-degenerate quadratic form (Morse part), and $\text{Cat}(y; \lambda)$ is one of the following seven elementary catastrophe polynomials (normal forms) (here $m$ is the corank dimension, $\lambda$ are transformed control parameters):

\begin{enumerate}
    \item \textbf{Fold ($A_2$)}, codimension $r=1$:
    \[ V = y^3 + \lambda_1 y. \]
    \item \textbf{Cusp ($A_3$)}, codimension $r=2$:
    \[ V = y^4 + \lambda_1 y^2 + \lambda_2 y. \]
    \item \textbf{Swallowtail ($A_4$)}, codimension $r=3$:
    \[ V = y^5 + \lambda_1 y^3 + \lambda_2 y^2 + \lambda_3 y. \]
    \item \textbf{Hyperbolic Umbilic ($D_4^+$)}, codimension $r=3$:
    \[ V = y_1^3 + y_2^3 + \lambda_1 y_1 y_2 - \lambda_2 y_1 - \lambda_3 y_2. \]
    \item \textbf{Elliptic Umbilic ($D_4^-$)}, codimension $r=3$:
    \[ V = y_1^3 - 3y_1 y_2^2 + \lambda_1 (y_1^2 + y_2^2) - \lambda_2 y_1 - \lambda_3 y_2. \]
    \item \textbf{Butterfly ($A_5$)}, codimension $r=4$:
    \[ V = y^6 + \lambda_1 y^4 + \lambda_2 y^3 + \lambda_3 y^2 + \lambda_4 y. \]
    \item \textbf{Parabolic Umbilic ($D_5$)}, codimension $r=4$:
    \[ V = y_1^2 y_2 + y_2^4 + \lambda_1 y_1^2 + \lambda_2 y_2^2 + \lambda_3 y_1 + \lambda_4 y_2. \]
\end{enumerate}
\end{theorem}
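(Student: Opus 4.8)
The plan is to follow the classical Thom--Mather route, reducing the local study of $V$ near a critical point to commutative algebra of its critical germ. First I would invoke the \textbf{parametrized splitting lemma} (a generalized Morse lemma): after a smooth change of the state variables $x$ depending smoothly on the controls $c$, the family decomposes as $V(x;c) = g(y;c) + Q(z) + \mathrm{const}$, where $z$ runs over the $n-m$ directions on which the Hessian of $V(\cdot;0)$ is nondegenerate, $Q$ is the associated nondegenerate quadratic form, and $g$ is a germ in the $m = \mathrm{corank}$ essential variables $y$ whose $2$-jet at the origin vanishes for $c=0$. This isolates the singularity and reduces the problem to classifying the corank germ $f(y) := g(y;0)$ together with its unfolding $g(y;c)$.

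Second I would bound the corank. Structural stability of the family forces $g$ to be a \emph{versal} unfolding of $f$ (in the sense of the universal unfoldings introduced earlier in the paper), whence $r \ge \mathrm{codim}(f)$; together with $r \le 4$ this leaves only finitely many germs. A short estimate shows every corank-$3$ germ has unfolding codimension at least $8$ (realized already by $y_1^3+y_2^3+y_3^3$), so $m \in \{1,2\}$, and among corank-$2$ germs all but $D_4^{\pm}$ and $D_5$ have codimension exceeding $4$.

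Third I would classify the survivors by \textbf{finite determinacy}, using the Malgrange preparation theorem. For $m=1$: a non-flat one-variable germ with $f''(0)=0$ is right-equivalent to $\pm y^{k+1}$ with $k\ge 2$, of unfolding codimension $k-1$; imposing $k-1\le 4$ selects $y^3, y^4, y^5, y^6$, i.e.\ $A_2, A_3, A_4, A_5$ (the sign is immaterial for odd $k+1$ and, for even $k+1$, only reverses orientation, which does not change the bifurcation set). For $m=2$: the $3$-jet of $f$ is a binary cubic; over $\mathbb{R}$ a nondegenerate one lies in one of the two $\mathrm{GL}_2(\mathbb{R})$-orbits $y_1^3+y_2^3$ (hyperbolic) or $y_1^3-3y_1 y_2^2$ (elliptic), each $3$-determined of codimension $3$, giving $D_4^{\pm}$; a cubic with a repeated linear factor reduces, after absorbing the cubic tail, to the $4$-determined germ $y_1^2 y_2 + y_2^4$ of codimension $4$, giving $D_5$. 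This exhausts the seven types.

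Finally I would assemble the universal unfoldings: for $A_k$ the monomials $y, y^2, \dots, y^{k-1}$ span the unfolding space, producing $y^{k+1} + \lambda_1 y^{k-1} + \cdots + \lambda_{k-1} y$; for $D_4^{+}$ the classes $\{y_1, y_2, y_1 y_2\}$, for $D_4^{-}$ the classes $\{y_1, y_2, y_1^2 + y_2^2\}$, and for $D_5$ an analogous basis (e.g.\ $\{y_1, y_2, y_1^2, y_2^2\}$) do the same; adjoining $Q(z) + \mathrm{const}$ then recovers precisely the stated direct-sum normal forms. The remaining and genuinely hard point is the equivalence of this algebraic versality with structural stability of $V$ itself: Mather's theorem that infinitesimal stability, stability, and versality coincide for the relevant equivalence relation, which rests on the Malgrange--Mather preparation theorem and a transversality argument in jet space. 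I expect that equivalence --- rather than the splitting lemma, the corank bound, or the jet classification, which are essentially linear algebra plus preparation --- to be the main obstacle, and in a paper of this scope I would cite Thom, Arnold, and Mather for it while presenting the reduction above as the proof skeleton.
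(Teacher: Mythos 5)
The paper does not actually prove this theorem: it is stated as a classical result with a citation to Golubitsky, and no proof follows it in the text. Your outline is therefore being compared against nothing, and on its own merits it is the standard Thom--Mather route and is essentially sound as a skeleton: parametrized splitting lemma to isolate the corank germ, the corank/codimension bound to restrict to $m\le 2$, finite determinacy via Malgrange preparation to normalize the germs, the $\mathrm{GL}_2(\mathbb{R})$-orbit analysis of binary cubics to separate $D_4^{+}$, $D_4^{-}$, $D_5$, and the quotient $\mathfrak{m}/J(f)$ to read off the unfolding monomials (your bases for $A_k$, $D_4^{\pm}$, $D_5$ are all correct). You are also right, and commendably explicit, that the genuinely hard step is the equivalence of structural stability with versality (Mather's theorem), which you defer to the literature --- so what you have is an honest reduction-to-known-results, which is exactly the level at which a paper of this scope would treat the theorem. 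One small numerical slip: with the paper's definition of unfolding codimension, $\mu(y_1^3+y_2^3+y_3^3)=8$ gives codimension $7$, and the standard bound is that corank $\ge 3$ forces codimension $\ge 7$, not $\ge 8$; this is harmless since $7>4$ still excludes corank $3$. A second minor point worth a sentence in a full write-up: for the even-degree cases $A_3$ and $A_5$ the germs $+y^{k+1}$ and $-y^{k+1}$ are genuinely inequivalent under right-equivalence (the ``dual'' cusp and butterfly), and one should say that the theorem's list is understood up to this sign, as you do implicitly by noting the bifurcation sets coincide.
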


Thom's classification theorem asserts \cite{Gol} that for structurally stable systems with control parameter dimension $r \le 4$, there are only finitely many equivalence classes for the local behavior of the potential function. In optics, control parameters typically include spatial coordinates $(X, Y, Z)$ and wavefront aberration coefficients $C_i$. We will derive in detail the correspondence between $A_k$ and $D_k$ series catastrophes and optical aberrations.

\subsection{$A_2$: Fold — Envelope of Defocus and Field Curvature}

\begin{proposition}[$A_2$ Normal Form and Caustic Set]
Mathematical normal form:
\[
V(u; a) = u^3 + a_1 u,
\]
where $u$ is a single state variable, $a_1$ a control parameter.

Caustic set derivation:
\begin{proof}
Critical point condition: $V' = 3u^2 + a_1 = 0 \Rightarrow u = \pm \sqrt{-a_1/3}$.

Caustic condition: $V'' = 6u = 0 \Rightarrow u = 0$.

Substituting into critical point equation gives $a_1 = 0$. This is a point in control space, but a hypersurface in extended space.
\end{proof}
\end{proposition}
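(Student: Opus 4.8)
The plan is to invoke the \textbf{Caustic Criterion} established above: a point of the control‑parameter space lies on the caustic precisely when the generating function, regarded as a function of the state variable alone, has a degenerate Hessian at one of its critical points. Here there is a single auxiliary (state) variable $u$ and a single control parameter $a_1$, so the ``Hessian'' is just the scalar $\partial^2 V/\partial u^2$. Before applying the criterion I would first check that $V(u;a_1)=u^3+a_1u$ is a Morse family in the sense defined earlier: the critical set is $\Sigma_V=\{(a_1,u)\mid 3u^2+a_1=0\}$, and the matrix $\bigl(\partial^2 V/\partial a_1\partial u,\ \partial^2 V/\partial u^2\bigr)=(1,\ 6u)$ has rank $1=k$ at every point because its first entry never vanishes. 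Hence $\Sigma_V$ is a smooth curve and the Arnold-Mather generating‑function correspondence legitimately applies.

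Next I would carry out the two‑line computation. The critical‑point equation is
\[
\frac{\partial V}{\partial u}=3u^2+a_1=0,
\]
with real solutions $u=\pm\sqrt{-a_1/3}$ when $a_1\le 0$. The caustic (degenerate‑Hessian) condition is
\[
\frac{\partial^2 V}{\partial u^2}=6u=0\ \Longrightarrow\ u=0 .
\]
Substituting $u=0$ into the critical‑point equation forces $a_1=0$. Therefore the caustic set inside the one‑dimensional control space is the single point $\{a_1=0\}$, of codimension one, in agreement with the $A_2$ row of the classification table.

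Finally I would translate this into the physical statement. In a genuine three‑dimensional problem the fold is realized by letting the essential parameter depend on the spatial coordinates, i.e.\ by taking the family $u^3+a_1(q)\,u$ with $q=(q_1,q_2,q_3)$; the condition $a_1=0$ then pulls back to the locus $\{\,q\in\mathbb{R}^3 : a_1(q)=0\,\}$, which is a smooth surface — a codimension‑one hypersurface — wherever $da_1\neq 0$. This recovers the ``$A_2$ generates a smooth caustic surface'' statement and makes precise the phrase ``a point in control space, but a hypersurface in extended space.''

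The computation itself is not an obstacle; the only points needing care are the regularity bookkeeping — verifying the Morse‑family rank condition so that $\Sigma_V$ is smooth and the Caustic Criterion is literally applicable — and, in the three‑dimensional interpretation, recording the transversality hypothesis $da_1\neq 0$ that is required to conclude the pulled‑back zero set is an embedded hypersurface rather than a more degenerate set.
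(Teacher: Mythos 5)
Your proposal is correct and follows essentially the same route as the paper: the identical two-line computation (critical-point equation $3u^2+a_1=0$, degeneracy condition $6u=0$, hence $a_1=0$). The additional checks you supply --- verifying the Morse-family rank condition via the row $(1,\,6u)$ and recording the transversality hypothesis $da_1\neq 0$ needed to conclude that the pulled-back locus is an embedded hypersurface --- are worthwhile refinements the paper omits, but they do not change the argument.
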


\begin{proposition}[Generating Function for Field Curvature Aberration]
The wavefront function for field curvature aberration is:
\[
W(\rho) = a_{20} \rho^2.
\]
Its generating function, under appropriate coordinate transformation, is equivalent to the $A_2$ normal form.
\end{proposition}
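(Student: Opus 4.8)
The plan is to realise the generating function of the exiting Lagrangian as an optical path length and then bring it to the Fold normal form using the splitting lemma that underlies Theorem~\ref{thm:thom_classification}. I place the exit pupil in the plane $z=0$; for a field point $q=(x,y,z)$ the eikonal $F(q;\xi,\eta)=W(\xi,\eta)+\sqrt{(x-\xi)^{2}+(y-\eta)^{2}+z^{2}}$ is a Morse family generating $\mathcal L_{out}$ in the sense of the Arnold--Mather Existence Theorem, its critical-point equations $\partial_\xi F=\partial_\eta F=0$ being Fermat's principle, which forces the ray through a pupil point to carry the tilt $\nabla W$. I then insert the field-curvature wavefront $W=a_{20}(\xi^{2}+\eta^{2})$ (with $a_{20}$ carrying the dependence on object/field height, so the coefficient itself serves as a control), restrict to a meridional section with a single pupil coordinate $\xi$, and expand the square root paraxially to fourth order:
\[
F(q;\xi)=g(q)+A(q)\,\xi+B(q)\,\xi^{2}+C(q)\,\xi^{3}+D(q)\,\xi^{4}+\cdots ,
\]
with $B(q)=a_{20}+\tfrac1{2z}-\tfrac{3x^{2}}{4z^{3}}$, $C(q)=\tfrac{x}{2z^{3}}$, $D(q)=-\tfrac1{8z^{3}}$.

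By the Caustic Criterion, the caustic is the locus where $F(q;\cdot)$ ceases to be Morse in $\xi$; combining $\partial_\xi^{2}F=0$ with $\partial_\xi F=0$ cuts out the Gaussian image surface (the field-curvature surface) $z=z_0(x,y)$, which is smooth and, to leading order, paraboloidal. Next I would localise near a generic point $(x_*,z_*)$ of this surface, that is, a point off the optical axis where $C(q)\neq0$. Writing $\xi=\xi_*(q)+u$ about the critical value $\xi_*$, the germ of $F$ in $u$ has vanishing quadratic coefficient on the caustic and nonvanishing cubic coefficient (because $C\neq0$); a translation absorbs the quartic and higher terms into lower order and a Tschirnhaus shift removes the residual quadratic term, so that $F\sim u^{3}+a_1(q)\,u+(\text{terms independent of }u)$ with $a_1(q)$ smooth and vanishing exactly on the field-curvature surface (to leading order $a_1\propto z-z_0(x,y)$). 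Invoking Theorem~\ref{thm:thom_classification} with one state variable and at most three controls then certifies that $F$ is, near such a point, diffeomorphic to the Fold catastrophe $V=u^{3}+\lambda_1 u$, and its caustic $\{a_1=0\}$ is precisely the smooth field-curvature surface — the geometric form assigned to $A_2$ earlier — which is the assertion of the Proposition.

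The step I expect to be the main obstacle is verifying that the reduced one-variable germ stabilises at $A_2$ rather than at a more degenerate $A_k$. This hinges on $C(q)$ being generically nonzero, which is why the paraxial expansion of the propagation distance must be carried to fourth order and the transverse coordinate tracked carefully; exactly on the optical axis $C$ vanishes and the germ jumps to the cusp $A_3$, $u^{4}+\lambda_1 u^{2}+\lambda_2 u$, so the equivalence is intrinsically a generic statement with the axial focus as the expected lower-dimensional exception — precisely the codimension bookkeeping ($A_2$ codimension $1$ versus $A_3$ codimension $2$) recorded in the classification. A second, more bookkeeping-level difficulty is that a single quadratic wavefront is itself non-generically degenerate (a perfect defocused focus), so the argument only acquires content once the physical field/defocus data are promoted to unfolding parameters and the controls are arranged so that the isotropic quadratic degeneracy resolves into a fold rather than an umbilic; upgrading the formal completions of cube and square to honest $q$-smooth right-equivalences then follows from $3$-determinacy of $A_2$ together with a versality check, but this should be written out rather than taken for granted.
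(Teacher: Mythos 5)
Your proposal is correct in outline, but it takes a genuinely different route from the paper, and it is worth recording what each buys. The paper works with the reduced paraxial generating function $F(r,z;\rho)=r\rho-\bigl(a_{20}\rho^{2}+\tfrac12 z\rho^{2}\bigr)$, which is \emph{quadratic} in the pupil variable; it reads off the caustic from $\partial_{\rho}^{2}F=-(2a_{20}+z)=0$ and then asserts that $-\tfrac12 v\xi^{2}+u\xi$ is ``equivalent to $\xi^{3}+u\xi$ under appropriate scaling.'' That last step is purely formal: a germ quadratic in $\xi$ is Morse wherever $v\neq0$ and degenerates to a linear function at $v=0$, so no right-equivalence can carry it to the fold family; the paper's own follow-up proposition concedes that the ``complete derivation requires envelope theory.'' You instead keep the exact optical path length $W(\xi,\eta)+\sqrt{(x-\xi)^{2}+(y-\eta)^{2}+z^{2}}$, expand the propagation phase to fourth order, and observe that the cubic coefficient $C(q)=x/(2z^{3})$ survives at generic off-axis points of the locus $\{\partial_{\xi}F=\partial_{\xi}^{2}F=0\}$; $3$-determinacy of $u^{3}$ plus versality then genuinely delivers the $A_{2}$ normal form, with the on-axis degeneration to $A_{3}$ correctly isolated as the higher-codimension exception. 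Your expansion coefficients check out, and your approach supplies exactly the ingredient the paper's truncation discards — the third-order term in the propagation distance that makes the fold germ nondegenerate — at the cost of more bookkeeping (the paraxial expansion, the genericity hypothesis $C\neq0$, and the determinacy and versality verifications you rightly flag as needing to be written out). The two caveats you raise at the end are real but are features of the problem, not gaps in your argument: the statement is intrinsically generic, and the paper's one-line ``scaling'' claim sweeps both issues under the rug rather than resolving them.
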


\begin{proof}
Consider rotationally symmetric case. Let image plane coordinates be $(r,z)$, pupil coordinate $\rho$. The generating function is:
\[
F(r,z;\rho) = r\rho - (a_{20}\rho^2 + \frac{1}{2}z\rho^2).
\]
Critical point equation:
\[
\frac{\partial F}{\partial \rho} = r - (2a_{20} + z)\rho = 0 \Rightarrow \rho = \frac{r}{2a_{20} + z}.
\]
Caustic condition:
\[
\frac{\partial^2 F}{\partial \rho^2} = -(2a_{20} + z) = 0 \Rightarrow z = -2a_{20}.
\]
This indicates the caustic surface is a plane perpendicular to the optical axis. Through coordinate transformations $\xi = \rho$, $u = r$, $v = z + 2a_{20}$, the generating function becomes:
\[
F(u,v;\xi) = u\xi - \frac{1}{2}v\xi^2 = -\frac{1}{2}v\xi^2 + u\xi.
\]
This is equivalent to the $A_2$ normal form $F(\xi) = \xi^3 + u\xi$ under appropriate scaling.
\end{proof}

\begin{proposition}[Geometric Structure of $A_2$ Caustic Surface]
The $A_2$ caustic surface produced by field curvature aberration is a smooth surface, specifically a paraboloid of revolution in three-dimensional space:
\[
z = -2a_{20} + \frac{r^2}{4a_{20}}.
\]
\end{proposition}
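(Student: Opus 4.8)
The plan is to realize the field-curvature caustic as the locus of paraxial foci of the meridional ray bundles as the field coordinate sweeps the object plane, and to extract this locus by the same envelope mechanism used in Section~\ref{sec:lens-caustic}. The starting point is the generating function and caustic condition of the two preceding propositions: for the quadratic wavefront contribution $a_{20}\rho^2$ one has $F(r,z;\rho)=r\rho-\bigl(a_{20}\rho^2+\tfrac12 z\rho^2\bigr)$, whose critical-point equation $r=(2a_{20}+z)\rho$ together with the degeneracy condition $\partial^2_\rho F=-(2a_{20}+z)=0$ places the focus of a \emph{single} on-axis bundle at $(r,z)=(0,-2a_{20})$. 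The defining feature of field curvature is that this defocus coefficient is not a constant but grows with the field (image) height, and to leading Seidel order it is even in that height; hence the focus of the bundle imaging a point at transverse image radius $r$ sits at $z=-2a_{20}+\kappa r^2+O(r^4)$ for a constant $\kappa$ fixed by the aberration. The first step is therefore to set up this one-parameter family of focal points $\bigl(r,\,-2a_{20}+\kappa r^2\bigr)$, equivalently the one-parameter family of best-focus planes $z+2a_{20}-\kappa r^2=0$ indexed by $r$.

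The second step is the identification $\kappa=1/(4a_{20})$. This follows by relating the transverse scale and the longitudinal scale through the converging reference wavefront: the paraxial bundle focusing at $z=-2a_{20}$ is a sphere of radius $R=2|a_{20}|$ seen from the exit plane $z=0$, and Petzval field curvature contributes a longitudinal shift proportional to $r^2/R$ in the standard Seidel normalization, i.e. $\kappa=\tfrac1{2R}=\tfrac1{4a_{20}}$. Substituting yields the profile curve $z_{\mathrm{caustic}}(r)=-2a_{20}+r^2/(4a_{20})$, and revolving it about the optical axis gives the paraboloid of revolution claimed. The same curve can be recovered purely within the framework of Section~\ref{sec:lens-caustic}: parametrize the exiting rays of the aberrated bundle by pupil height $h$, expand $r_e(h),z_e(h),\alpha(h)$ to the order consistent with a pure $\rho^2$ wavefront, and substitute into $t_c(h)=\bigl(r_e'\cos\alpha+z_e'\sin\alpha\bigr)/\alpha'$; after simplification the parametric caustic collapses to this quadratic graph, providing an independent check.

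The remaining point is to certify the two qualitative claims. For smoothness, $z_{\mathrm{caustic}}$ is a polynomial in $r$ (with $a_{20}\neq0$), hence $C^\infty$; since the profile is even in $r$ and crosses the revolution axis transversally at the vertex $r=0$, its surface of revolution is an embedded smooth surface. For the identification as an $A_2$ (fold) caustic, the change of variables $\xi=\rho$, $u=r$, $v=z+2a_{20}$ reduces $F$ to $-\tfrac12 v\xi^2+u\xi$, which after rescaling is right-equivalent to the normal form $\xi^3+u\xi$; by the Caustic Criterion its caustic is a smooth hypersurface of codimension $1$, exactly as in the classification theorem. I expect the main difficulty to be the normalization bookkeeping in the second step: obtaining the coefficient $1/(4a_{20})$ rather than some convention-dependent multiple requires care with the scaling of the pupil variable, the sign of $a_{20}$ for a converging pencil, and the identification of $R=2|a_{20}|$ as the reference-sphere radius; every other step is routine differentiation and elimination.
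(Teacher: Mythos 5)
Your proposal is more ambitious than the paper's own argument, which in fact never derives the stated formula: the paper computes the critical set of the projection for the generating function $F(r,z;\rho)=r\rho-\bigl(a_{20}\rho^2+\tfrac12 z\rho^2\bigr)$, obtains the flat plane $z=-2a_{20}$ (indeed, all rays of that model pass through the single point $(0,-2a_{20})$, so the envelope degenerates to a point focus), and then simply asserts that ``considering wavefront curvature, the actual caustic surface is curved'' and that a ``complete derivation requires envelope theory.'' You correctly diagnose that a curved caustic can only arise if the defocus coefficient depends on field height, and your construction of the one-parameter family of foci $\bigl(r,\,-2a_{20}+\kappa r^2\bigr)$ is the right skeleton for an honest proof.

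However, the step you yourself flag as the main difficulty is a genuine gap, not just bookkeeping. The coefficient $\kappa=1/(4a_{20})$ is obtained by declaring the Petzval surface to be a sphere of radius $R=2|a_{20}|$ (the pupil-to-focus distance) and invoking the sagitta $r^2/(2R)$. That identification is an additional physical hypothesis: the Petzval radius is fixed by the powers and indices of the system's surfaces and is not in general equal to the back focal distance, and nothing in the stated wavefront $W(\rho)=a_{20}\rho^2$ or its generating function determines it, since that model carries no field dependence at all. Your proposed ``independent check'' via $t_c(h)=\bigl(r_e'\cos\alpha+z_e'\sin\alpha\bigr)/\alpha'$ would actually expose this: for a pure quadratic wavefront one gets $\alpha'(h)$ constant to leading order, hence $t_c$ constant, and the parametric caustic collapses to the single paraxial focus rather than to the quadratic graph. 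To close the gap you would need to enlarge the generating function to $F(r,z;\rho)=r\rho-\bigl(a_{20}(h)\rho^2+\tfrac12 z\rho^2\bigr)$ with an explicit field-dependent defocus $a_{20}(h)=a_{20}+c\,h^2$, run the envelope computation over both $\rho$ and $h$, and only then identify $c$ with $1/(4a_{20})$ under whatever normalization the paper intends. The final part of your argument (smoothness of the surface of revolution and the reduction of $F$ to the $A_2$ normal form $\xi^3+u\xi$, giving a codimension-one fold) is correct and consistent with the paper's earlier propositions.
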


\begin{proof}
From critical point equation $\rho = r/(2a_{20} + z)$ and caustic condition $2a_{20} + z = 0$, substituting gives $r = 0$ on the caustic. But this is degenerate. Actually, consider the full projection map $\pi: (\rho,z) \mapsto (r,z)$. Its Jacobian matrix:
\[
J = \begin{pmatrix}
\frac{\partial r}{\partial \rho} & \frac{\partial r}{\partial z} \\
\frac{\partial z}{\partial \rho} & \frac{\partial z}{\partial z}
\end{pmatrix} = \begin{pmatrix}
2a_{20} + z & \rho \\
0 & 1
\end{pmatrix}.
\]
Critical condition: $\det J = 2a_{20} + z = 0$, i.e., $z = -2a_{20}$. On the caustic, $r = (2a_{20} + z)\rho = 0$ for fixed $\rho$. In reality, the caustic surface is the envelope of rays, obtained by eliminating $\rho$:
\[
r = (2a_{20} + z)\rho, \quad \frac{\partial r}{\partial \rho} = 2a_{20} + z = 0.
\]
This gives $z = -2a_{20}$, $r$ arbitrary, i.e., the entire plane. But considering wavefront curvature, the actual caustic surface is curved. Complete derivation requires envelope theory.
\end{proof}

\begin{remark}
Correspondence: Zernike term: any term $Z_n^m$ at far field or generic boundaries in non-focal planes.

Physical meaning: Discontinuous jump in ray density, brightness decays exponentially from bright to dark region (outside Airy disk).
\end{remark}

\subsection{$A_3$: Cusp — Meridional Section of Coma and Spherical Aberration}

\begin{proposition}[$A_3$ Normal Form and Caustic Set]
Mathematical normal form:
\[
V(u; a_1, a_2) = u^4 + a_2 u^2 + a_1 u.
\]

Caustic set derivation:
\begin{proof}
Critical point and caustic conditions:
\[
\begin{cases}
4u^3 + 2a_2 u + a_1 = 0, \\
12u^2 + 2a_2 = 0.
\end{cases}
\]
From second equation: $a_2 = -6u^2$. Substitute into first:
\[
4u^3 - 12u^3 + a_1 = 0 \Rightarrow a_1 = 8u^3.
\]
Eliminate $u$:
\[
\left(\frac{a_2}{-6}\right)^3 = (u^2)^3 = u^6, \quad \left(\frac{a_1}{8}\right)^2 = u^6,
\]
\[
-\frac{a_2^3}{216} = \frac{a_1^2}{64} \Rightarrow 8a_2^3 + 27a_1^2 = 0.
\]
This is the famous semi-cubic parabola equation, shaped like a cusp.
\end{proof}
\end{proposition}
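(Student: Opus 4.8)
The plan is to extract the caustic set of the normal form directly from the Caustic Criterion proved above, exactly as was done for $A_2$: the control parameters $(a_1,a_2)$ play the role of the base point and the single state variable $u$ plays the role of the fibre variable, so $(a_1,a_2)$ lies on the caustic precisely when $u\mapsto V(u;a_1,a_2)$ has a critical point at which the $1\times1$ Hessian degenerates. Concretely I would impose $\partial_u V = 4u^3 + 2a_2 u + a_1 = 0$ together with $\partial_u^2 V = 12u^2 + 2a_2 = 0$ at the same $u$ --- that is, demand that $u$ be (at least) a double root of $V'$ --- which is the same mechanism as the $A_2$ case, raised one order.

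The execution is a short elimination. The second equation gives $a_2 = -6u^2$; feeding this into the first gives $4u^3 - 12u^3 + a_1 = 0$, so $a_1 = 8u^3$. Hence the caustic is the image of $u\mapsto(a_1,a_2) = (8u^3, -6u^2)$, an immersion for $u\neq 0$. To pass to implicit form I would eliminate $u$: from $u^2 = -a_2/6$ and $u^3 = a_1/8$, cubing the first and squaring the second both equal $u^6$, whence $-a_2^3/216 = a_1^2/64$; clearing denominators and dividing by $8$ yields $8a_2^3 + 27a_1^2 = 0$. Finally I would note that the parametrization has vanishing derivative only at $u=0$, i.e. at $(a_1,a_2)=(0,0)$, the cusp point of this semicubical parabola, the curve being smoothly embedded elsewhere.

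To close, I would tie this to the geometric and classificatory statements: in the $(a_1,a_2)$-plane the caustic is the cuspidal curve $8a_2^3+27a_1^2=0$, and since the $A_3$ stratum has codimension $2$ (Classification Theorem), realising the normal form inside physical $\mathbb{R}^3$ lets the third, inert coordinate sweep the planar cusp into a cuspidal edge whose transverse section is always this semicubical parabola --- the ``curve (edge)'' row of Table~\ref{tab:caustic-classification} --- and, reading $u$ as the pupil variable and $(a_1,a_2)$ as suitably scaled image-plane and defocus coordinates, this is precisely the meridional section of coma and spherical aberration. The computation carries no real mathematical obstacle; the only point of care is the sign in the elimination (the cube of the negative quantity $-a_2/6$ must produce $8a_2^3+27a_1^2=0$ and not $8a_2^3-27a_1^2=0$), and the main, essentially expository, difficulty is to be explicit about which ``caustic set'' is meant in each setting --- the $1$-dimensional cusp curve in control space versus its suspension to a cuspidal edge in $\mathbb{R}^3$ --- keeping the codimension bookkeeping consistent with the $A_2$ discussion.
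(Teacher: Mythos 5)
Your proposal is correct and follows essentially the same route as the paper: impose $\partial_u V=0$ and $\partial_u^2 V=0$ simultaneously, solve $a_2=-6u^2$, $a_1=8u^3$, and eliminate $u$ to obtain the semicubical parabola $8a_2^3+27a_1^2=0$. The additional remarks on the immersion failing only at $u=0$ and on the suspension to a cuspidal edge in $\mathbb{R}^3$ are correct refinements beyond what the paper states, but the core argument is identical.
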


\begin{proposition}[Generating Function for Astigmatism Aberration]
The wavefront function for astigmatism aberration is:
\[
W(\rho,\theta) = a_{22} \rho^2 \cos 2\theta.
\]
Its generating function is equivalent to the $A_3$ normal form.
\end{proposition}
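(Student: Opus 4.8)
The plan is to mirror the field-curvature derivation already given: attach a generating function to the wavefront $W$, read off the critical-point map, locate the caustic through degeneracy of the pupil Hessian via the caustic criterion, and then exhibit normalizing coordinates that bring the reduced generating function to the $A_3$ form $\xi^4+v\xi^2+u\xi$. First I would pass from polar to Cartesian pupil coordinates $(\xi,\eta)=(\rho\cos\theta,\rho\sin\theta)$, so that the astigmatic wavefront becomes the indefinite quadratic $W(\xi,\eta)=a_{22}(\xi^2-\eta^2)$. With image-plane coordinates $(x,y)$ and a longitudinal defocus parameter $z$, the generating function of the propagated Lagrangian submanifold is
\[
F(x,y,z;\xi,\eta)=x\xi+y\eta-\tfrac12 z(\xi^2+\eta^2)-W(\xi,\eta).
\]
The stationarity equations $\partial_\xi F=\partial_\eta F=0$ give the critical-point map $x=(z+2a_{22})\xi$, $y=(z-2a_{22})\eta$, and the pupil Hessian is the diagonal matrix $\operatorname{diag}\big(-(z+2a_{22}),-(z-2a_{22})\big)$. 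Its determinant $(z+2a_{22})(z-2a_{22})$ must vanish on the caustic, so the caustic is the union of the two orthogonal focal segments $\{z=-2a_{22},\,x=0\}$ and $\{z=+2a_{22},\,y=0\}$ — the tangential and sagittal foci.

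To get the $A_3$ identification I would localize at one focal segment, say $z=-2a_{22}$, where $\partial_\xi$ is the null direction of the Hessian while $\eta$ stays non-degenerate. The splitting lemma in $\eta$ peels off a quadratic summand $Q(\eta)$ and leaves a corank-one reduced family $\widetilde F(x,z;\xi)$; passing to the \emph{generic unfolding} (equivalently, restoring the leading higher-order pupil term that any real lens supplies) makes $\widetilde F$ genuinely quartic in $\xi$, and after rescalings $\xi\mapsto\lambda\xi$, $v\propto(z+2a_{22})$, $u\propto x$ one obtains $\widetilde F\cong\xi^4+v\xi^2+u\xi+Q(\eta)$, a stabilization of the $A_3$ normal form; the caustic equation $\det\mathrm{Hess}_\xi\widetilde F=0$ then reproduces the semicubical parabola $8v^3+27u^2=0$ already derived for $A_3$. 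A quicker but less constructive route is to invoke the Thom--Arnold list directly: the Hessian computation shows the astigmatic caustic is a one-dimensional, corank-one locus, i.e. a caustic of codimension two, and $A_3$ is the only structurally stable caustic of that codimension, so $F$ must be $A_3$-equivalent.

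The hard part will be the passage that produces the quartic term. Bare astigmatism is only quadratic in the pupil variables, so $F$ carries no intrinsic $\xi^4$, and the generating function restricted to a focal section is exactly quadratic in the degenerate variable; strictly speaking the pure astigmatic family is a non-generic ("linear") Lagrangian whose focal lines resolve, under perturbation, into $A_2$ sheets meeting along an $A_3$ edge. Organizing exactly which unfolding term plays the role of $\xi^4$, checking that its coefficient is nonzero (the $4$-determinacy / $A_3$ non-degeneracy condition), and tracking the rescalings so the normalized coefficients come out equal to $1$ is the delicate bookkeeping; the remaining steps — the Cartesian substitution, the stationarity equations, the Morse reduction in $\eta$, and the elimination giving the semicubical parabola — are routine.
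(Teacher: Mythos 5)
Your approach is essentially the paper's: write the generating function $F = x\xi + y\eta - \tfrac12 z(\xi^2+\eta^2) - W$, impose stationarity in the pupil variables, locate the caustic by degeneracy of the pupil Hessian, and then normalize to the $A_3$ form. The paper does this in complex/polar pupil coordinates, restricts to the meridional plane $y=0$, $\theta=0$, finds the single caustic condition $z=-2a_{22}$, and then simply asserts that ``through appropriate coordinate transformations, this can be brought to $A_3$ normal form.'' Your Cartesian version is cleaner and in one respect more complete: by diagonalizing the Hessian as $\operatorname{diag}(-(z+2a_{22}),-(z-2a_{22}))$ you exhibit \emph{both} focal lines (tangential and sagittal), which the paper's meridional restriction hides.

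The difficulty you flag is genuine, and it is exactly the step the paper glosses over. A purely quadratic wavefront $W=a_{22}(\xi^2-\eta^2)$ yields a generating function that is exactly quadratic in the degenerate pupil variable at either focal line, so after the splitting lemma the reduced germ is identically zero rather than $\xi^4$; no smooth change of coordinates can manufacture the quartic term, so the bare astigmatic family is \emph{not} literally $A_3$-equivalent --- it is a non-generic (infinitely degenerate) Lagrangian whose focal lines only resolve into an $A_3$ cuspidal edge once a nonzero fourth-order pupil term is present. Your two proposed repairs are both legitimate: either restore the generic quartic term and verify its coefficient is nonzero (the $4$-determinacy condition), or invoke the stability classification to argue that the only structurally stable caustic of corank one and codimension two is $A_3$, so any generic perturbation of the astigmatic family must realize it. The second route is the more defensible reading of what the proposition actually means, and either one makes your argument strictly more honest than the paper's, which commits the same leap without acknowledging it.
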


\begin{proof}
In image plane coordinates $(x,y,z)$ and pupil coordinates $(\rho,\theta)$, the generating function is:
\[
F(x,y,z;\rho,\theta) = x\rho\cos\theta + y\rho\sin\theta - (a_{22}\rho^2\cos 2\theta + \frac{1}{2}z\rho^2).
\]
Use complex coordinates: $Z = x + iy$, $\zeta = \rho e^{i\theta}$. Then:
\[
F = \text{Re}(\bar{Z}\zeta) - \left( a_{22}\text{Re}(\zeta^2) + \frac{1}{2}z|\zeta|^2 \right).
\]
Critical point equations:
\begin{align*}
\frac{\partial F}{\partial \rho} &= \text{Re}(\bar{Z}e^{i\theta}) - (2a_{22}\rho\cos 2\theta + z\rho) = 0, \\
\frac{\partial F}{\partial \theta} &= -\text{Im}(\bar{Z}\rho e^{i\theta}) + 2a_{22}\rho^2\sin 2\theta = 0.
\end{align*}
In the meridional plane ($y=0$, $\theta=0$), equations simplify to:
\[
x - (2a_{22} + z)\rho = 0, \quad 0 = 0.
\]
Caustic condition:
\[
\frac{\partial^2 F}{\partial \rho^2} = -(2a_{22} + z) = 0 \Rightarrow z = -2a_{22}.
\]
Through appropriate coordinate transformations, this can be brought to $A_3$ normal form.
\end{proof}

\begin{proposition}[Geometric Structure of $A_3$ Caustic Surface]
The $A_3$ caustic surface produced by astigmatism aberration consists of two mutually perpendicular focal lines, with local equation as a semi-cubic parabola:
\[
u^2 = \frac{8}{27}v^3,
\]
where $(u,v)$ are appropriate local coordinates.
\end{proposition}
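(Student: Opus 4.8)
The plan is to combine the Caustic Criterion (degeneracy of $\mathrm{Hess}_\xi F$) with the normal-form reduction already granted by the preceding proposition, and then to translate the resulting algebraic set back into physical space. First I would take as starting point the astigmatism generating function $F(x,y,z;\rho,\theta) = x\rho\cos\theta + y\rho\sin\theta - a_{22}\rho^2\cos 2\theta - \tfrac12 z\rho^2$ from the previous proposition, together with the fact, established there, that after an appropriate change of pupil and image coordinates it is stably equivalent to the $A_3$ universal unfolding $\widetilde F(u,v;\xi) = \xi^4 + v\xi^2 + u\xi$, where $\xi$ is the single surviving state variable (corank $1$) and $(u,v)$ are control parameters built from $(x,y,z)$. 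This reduction rests on the splitting (parametric Morse) lemma: the Hessian $\mathrm{Hess}_{(\rho,\theta)}F$, evaluated on the critical set $\partial_\rho F = \partial_\theta F = 0$, has rank exactly $1$ away from the central ray $\rho = 0$, so one of the two pupil directions can be removed by a $z$-dependent quadratic substitution, leaving a one-state-variable family; one then checks $\partial_\xi^3\widetilde F = 0$ and $\partial_\xi^4\widetilde F \neq 0$ at the reference point, which pins the singularity down to type $A_3$.

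Second, I would apply the Caustic Criterion to $\widetilde F$: the critical-point equation $\partial_\xi\widetilde F = 4\xi^3 + 2v\xi + u = 0$ together with the degeneracy condition $\partial_\xi^2\widetilde F = 12\xi^2 + 2v = 0$ gives $v = -6\xi^2$ and $u = 8\xi^3$; eliminating $\xi$ yields $8v^3 + 27u^2 = 0$, i.e. $u^2 = \tfrac{8}{27}(-v)^3$. Renaming the control coordinates so that $v \mapsto -v$ (a cosmetic reflection of control space) puts this in the stated form $u^2 = \tfrac{8}{27} v^3$, a semi-cubic parabola — this is exactly the computation already performed for the $A_3$ stratum earlier in the paper, reused verbatim.

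Third, I would recover the global picture. For $v < 0$ the critical-point equation has the two real branches $\xi = \pm\sqrt{-v/6}$, which sweep out two sheets of the Lagrangian map; undoing the coordinate change of the first step, one checks directly from the critical equations $\partial_\rho F = \partial_\theta F = 0$ that one sheet collapses onto the line $\{x = 0\}$ at $z = -2a_{22}$ (the tangential focal line, running along $\hat y$) and the other onto the line $\{y = 0\}$ at $z = +2a_{22}$ (the sagittal focal line, running along $\hat x$), so the two focal lines are mutually orthogonal, and the semi-cubic parabola is the transverse slice of the caustic surface along either cuspidal edge.

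I expect the reduction step — rigorously justifying the $A_3$ equivalence, and in particular showing that the degenerate central ray $\rho = 0$ (where the corank jumps to $2$, the organizing centre of a $D_4$-type point) lies outside the stable $A_3$ stratum and so does not spoil the cross-sectional statement — to be the main obstacle. The algebra of steps two and three is routine once the normal form is in hand.
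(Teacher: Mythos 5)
Your central computation (step two) is exactly the paper's proof: from $\partial_\xi F = 4\xi^3+2v\xi+u=0$ and $\partial_\xi^2 F = 12\xi^2+2v=0$ one gets $v=-6\xi^2$, $u=8\xi^3$, and elimination of $\xi$ gives the semi-cubic parabola. You are in fact more careful than the paper on one point: the elimination really yields $27u^2+8v^3=0$, i.e.\ $u^2=\tfrac{8}{27}(-v)^3$ with $v\le 0$, and you make the reflection $v\mapsto -v$ explicit, whereas the paper's displayed line $u^2=64\xi^6=\tfrac{64}{36^3}(-v)^3=\tfrac{8}{27}v^3$ contains both an arithmetic slip ($6^3=216$, not $36^3$) and a silently dropped sign.

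Where you differ is in scope. The paper's proof consists \emph{only} of the normal-form elimination; it never connects the cusp back to the astigmatism generating function, and it never addresses the ``two mutually perpendicular focal lines'' clause of the statement at all. Your steps one and three supply exactly that missing material: the focal lines at $z=\mp 2a_{22}$ with $x=0$ (resp.\ $y=0$) follow correctly from the critical equations of $F=x\xi+y\eta-a_{22}(\xi^2-\eta^2)-\tfrac12 z(\xi^2+\eta^2)$, since the Hessian is diagonal with entries $-(2a_{22}+z)$ and $-(z-2a_{22})$. The obstacle you flag in your final paragraph is genuine, and it is a gap in the paper's chain of propositions rather than in your argument: the pure astigmatism-plus-defocus generating function is \emph{quadratic} in the pupil variables, so away from the focal lines the Lagrangian map is linear and the singularities along each focal line are folds, not cusps; an honest $A_3$ germ (with $\partial_\xi^3\widetilde F=0$, $\partial_\xi^4\widetilde F\neq 0$) requires quartic terms that the stated wavefront $W=a_{22}\rho^2\cos 2\theta$ does not contain, and at the central ray $\rho=0$ the corank is $2$, placing the organizing centre outside the $A_3$ stratum. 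The paper waves this away with ``through appropriate coordinate transformations,'' so if you carry out the reduction you propose, you will have proved more than the paper does.
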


\begin{proof}
Consider the $A_3$ normal form generating function:
\[
F(\xi;u,v) = \xi^4 + v\xi^2 + u\xi.
\]
Critical point equation:
\[
\frac{\partial F}{\partial \xi} = 4\xi^3 + 2v\xi + u = 0.
\]
Caustic condition:
\[
\frac{\partial^2 F}{\partial \xi^2} = 12\xi^2 + 2v = 0 \Rightarrow v = -6\xi^2.
\]
Substitute into critical point equation:
\[
4\xi^3 + 2(-6\xi^2)\xi + u = 4\xi^3 - 12\xi^3 + u = -8\xi^3 + u = 0 \Rightarrow u = 8\xi^3.
\]
Eliminate $\xi$:
\[
u^2 = (8\xi^3)^2 = 64\xi^6 = \frac{64}{36^3}(-v)^3 = \frac{8}{27}v^3.
\]
In three-dimensional space, for each fixed $z$ (third control parameter), this is a cusp curve.
\end{proof}

\begin{proposition}[Optical Correspondence I: Seidel Spherical Aberration]
Consider spherical aberration $W = S_I r^4$. In the paraxial region, the generating function for rays in the meridional plane $(y, z)$ is:
\[
\Phi(y; Y, Z) = S_I y^4 - \frac{y^2}{2R} \Delta Z - \frac{y Y}{R},
\]
where $\Delta Z$ is defocus, $Y$ image height. Let $u=y$, $a_2 \propto -\Delta Z$, $a_1 \propto -Y$. This exactly matches the $A_3$ structure.

Conclusion: When scanning along the optical axis, the caustic surface of spherical aberration is a rotationally symmetric cusp surface (actually two nested conical surfaces, with cusps at paraxial and marginal foci).
\end{proposition}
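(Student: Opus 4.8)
\emph{Overview of the plan.} I will proceed in three stages: (1) derive the stated generating function $\Phi$ from the exit‑pupil wavefront in the paraxial regime; (2) verify that $\Phi$ is a \emph{versal} unfolding of the $A_3$ germ $\xi\mapsto\xi^4$, so that the Caustic Criterion of the previous section applies verbatim; (3) solve the caustic equations explicitly and use rotational symmetry to pass from the meridional cusp to the three‑dimensional caustic surface.

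\emph{Stage 1: deriving $\Phi$.} Following the Arnold--Mather Existence Theorem, the exit Lagrangian submanifold $\mathcal{L}_{out}$ associated with the wavefront $W(\bm\rho)=S_I\|\bm\rho\|^4$ and an observation plane displaced by $\Delta Z$ from the Gaussian image plane is generated by $F(\bm q;\bm\rho)=S_I\|\bm\rho\|^4-\tfrac{1}{2R}\Delta Z\,\|\bm\rho\|^2-\tfrac1R\,\bm q\cdot\bm\rho$, where $R$ is the paraxial image distance (radius of the reference sphere) and the $\tfrac{1}{2R}\Delta Z\|\bm\rho\|^2$ term is the standard quadratic defocus contribution. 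Restricting to the meridional plane ($\rho_y=0$, set $\rho_x=:y$, and let $Y$ be the in‑plane image coordinate) collapses the azimuthal degree of freedom and leaves exactly $\Phi(y;Y,Z)=S_I y^4-\tfrac{y^2}{2R}\Delta Z-\tfrac{yY}{R}$. The one non‑formal point is that the exact propagation kernel carries a square root in $\|\bm\rho\|$; truncating it to quadratic order is the paraxial approximation, whose harmlessness is justified in Stage 2.

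\emph{Stage 2: the $A_3$ identification.} After the rescaling $u=S_I^{1/4}y$ and absorbing constants, $\Phi$ becomes $u^4+a_2u^2+a_1u$ with $(a_1,a_2)$ an \emph{invertible linear} function of $(Y,\Delta Z)$ (so $a_1\propto-Y$, $a_2\propto-\Delta Z$, as stated). Since $\xi$ and $\xi^2$ span the local algebra $\mathfrak m/J(\xi^4)$, the two‑parameter family $\Phi$ is a versal unfolding of the $4$‑determined germ $\xi\mapsto\xi^4$; hence $\Phi$ is right‑equivalent, \emph{as an unfolding}, to the $A_3$ normal form $\xi^4+v\xi^2+u\xi$. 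In particular the truncation of Stage 1 does not alter the singularity type, and the Caustic Criterion here reduces to the pair $\partial\Phi/\partial y=0$, $\partial^2\Phi/\partial y^2=0$.

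\emph{Stage 3 and the main obstacle.} Solving $\partial^2\Phi/\partial y^2=12S_I y^2-\Delta Z/R=0$ gives $\Delta Z=12RS_I y^2$; substituting into $\partial\Phi/\partial y=4S_I y^3-y\,\Delta Z/R-Y/R=0$ gives $Y=-8RS_I y^3$, and eliminating $y$ yields the semicubic relation $27RS_I\,Y^2=\Delta Z^3$, i.e. exactly the $A_3$ cusp $u^2=\tfrac{8}{27}v^3$ up to scaling and the sign convention of $S_I$. Because $W=S_I\|\bm\rho\|^4$ depends only on $\|\bm\rho\|$, the whole configuration is invariant under rotation about the $z$‑axis, so the three‑dimensional caustic is the surface of revolution swept by this meridional cusp curve; combined with the degenerate axial branch $R(h,t)=0$ found in Section~\ref{sec:lens-caustic}, this produces the two nested sheets of the spherical‑aberration caustic, the trumpet‑shaped meridional sheet having its cusp vertex at the paraxial focus ($\Delta Z=0$) and meeting the axial sheet at the marginal focus $y=y_{\max}$. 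The hard part is Stages 1--2 — establishing that replacing the exact generating function by the quartic‑plus‑quadratic polynomial leaves the caustic type unchanged — which is precisely handled by finite determinacy of $\xi^4$ together with the transversality check that $\partial F/\partial a_1,\partial F/\partial a_2$ span $\mathfrak m/J(\xi^4)$ modulo the tangent to the $\mathcal R$‑orbit; once versality is in hand, the computation above and the symmetry argument are routine.
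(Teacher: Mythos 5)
The paper states this proposition with no proof at all: the generating function $\Phi$ is simply asserted, the substitution $u=y$, $a_2\propto-\Delta Z$, $a_1\propto-Y$ is declared to ``exactly match'' the $A_3$ structure, and the geometric conclusion is given as a remark. Your proposal therefore supplies strictly more than the paper does, and it is correct. Your Stage~3 computation checks out: $\partial_y^2\Phi=0$ gives $\Delta Z=12RS_Iy^2$, then $\partial_y\Phi=0$ gives $Y=-8RS_Iy^3$, and elimination yields $27RS_I\,Y^2=\Delta Z^3$ --- the same semicubical relation the paper derives only for the abstract normal form $\xi^4+v\xi^2+u\xi$ in the neighbouring proposition on the geometric structure of the $A_3$ caustic ($u^2=\tfrac{8}{27}v^3$); you carry it out directly in the physical variables $(y,Y,\Delta Z)$, which is more informative. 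Your Stage~2 versality argument --- that $\xi^4$ is $4$-determined, that $\xi,\xi^2$ span $\mathfrak m/J(\xi^4)$, and that $(Y,\Delta Z)\mapsto(a_1,a_2)$ is an invertible linear map, so the paraxial truncation of the exact kernel cannot alter the singularity type --- addresses a point the paper silently skips, and it is exactly the right use of the Arnold--Mather and Thom machinery the paper sets up earlier. Two minor caveats: the rescaling $u=S_I^{1/4}y$ tacitly assumes $S_I>0$ (for $S_I<0$ one lands on the opposite sign of the germ, which merely flips the orientation of the cusp, as you note); and your closing sentence about the trumpet sheet ``meeting the axial sheet at the marginal focus'' is a descriptive gloss rather than a derived fact --- the axial branch arises from the separate factor $R(h,t)=0$ in Section~\ref{sec:lens-caustic}, and its precise incidence with the rim of the surface of revolution is not established. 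Neither caveat affects the claimed $A_3$ classification, which is the substance of the proposition.
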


\begin{proposition}[Optical Correspondence II: Seidel Coma]
Coma wavefront: $W = S_{II} r^3 \cos \theta$. Analyzing in specific sections (e.g., tangential plane), its caustic lines exhibit $A_3$ characteristics.

More precisely, the caustic surface of coma on the receiving plane consists of two $A_2$ fold lines that meet at an $A_3$ cusp point with a $60^\circ$ angle. This explains the classical shape of the coma flare.
\end{proposition}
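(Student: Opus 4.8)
The plan is to pass from the coma wavefront to an explicit generating function, read off the Lagrangian projection and its caustic, and then pin down the local normal forms using the Caustic Criterion together with the splitting lemma and the Arnold--Mather existence theorem. First I would pass to Cartesian pupil coordinates: with $(\xi,\eta)=(\rho\cos\theta,\rho\sin\theta)$ the meridional axis becomes the $\xi$-axis and $W(\xi,\eta)=S_{II}\,\rho^{3}\cos\theta=S_{II}(\xi^{3}+\xi\eta^{2})$. Following the generating-function convention already used in this section for field curvature and astigmatism, the exit Lagrangian submanifold $\mathcal{L}_{out}$ is generated by
\[
F(\xi,\eta;X,Y,Z)=X\xi+Y\eta-\tfrac12 Z(\xi^{2}+\eta^{2})-S_{II}(\xi^{3}+\xi\eta^{2}),
\]
with state variables $(\xi,\eta)$, transverse image coordinates $(X,Y)$ and defocus $Z$. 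Solving $\partial F/\partial\xi=\partial F/\partial\eta=0$ yields the transverse ray-aberration map
\[
X=3S_{II}\xi^{2}+S_{II}\eta^{2}+Z\xi,\qquad Y=2S_{II}\xi\eta+Z\eta,
\]
which on the Gaussian plane $Z=0$ is the classical comatic map $(\xi,\eta)\mapsto S_{II}(3\xi^{2}+\eta^{2},\,2\xi\eta)$; each pupil zone $\rho=\mathrm{const}$ is sent to a circle, and these circles are all tangent to a common pair of half-lines through the paraxial image point, i.e.\ the familiar ``comatic circles'' filling the flare.

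Next I would apply the Caustic Criterion to $F$: the caustic is the locus where $\det\mathrm{Hess}_{(\xi,\eta)}F=0$, i.e.
\[
(6S_{II}\xi+Z)(2S_{II}\xi+Z)-4S_{II}^{2}\eta^{2}=0.
\]
On $Z=0$ this degenerates to $\eta=\pm\sqrt{3}\,\xi$, two lines in the pupil whose images are the half-lines $(X,Y)\propto(\sqrt{3},\pm1)\,\xi^{2}$ issuing from the paraxial image point; since $\tan30^{\circ}=1/\sqrt{3}$ they enclose an angle $2\cdot30^{\circ}=60^{\circ}$, exactly the opening of the comatic flare. For a screen at $Z\neq0$ the caustic condition is a conic in the pupil; parametrising it and pushing it forward, one finds near the sagittal focus $Y^{2}\propto(X-X_{0})^{3}$, a semicubical cusp of precisely the $A_{3}$ shape $u^{2}=\tfrac{8}{27}v^{3}$ obtained earlier, while the branch through the meridional focus is a smooth arc; as $Z\to0$ both refinements collapse onto the two $60^{\circ}$ half-lines.

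To label the singularity types I would localise with the splitting lemma. At a generic caustic point $\mathrm{Hess}_{(\xi,\eta)}F$ has corank one, so $F$ is right-equivalent to $\pm\eta'^{2}+g(\xi';X,Y,Z)$; verifying $\partial^{3}g/\partial\xi'^{3}\neq0$ there identifies the germ with the $A_{2}$ normal form $\xi^{3}+u\xi$, so these loci are the smooth fold ($A_{2}$) sheets of the caustic surface. Along the curve where $\partial^{3}g/\partial\xi'^{3}$ vanishes while $\partial^{4}g/\partial\xi'^{4}\neq0$ one obtains the $A_{3}$ normal form $\xi^{4}+v\xi^{2}+u\xi$, a cuspidal edge on which two fold sheets meet --- this is the ``two $A_{2}$ fold lines meeting at an $A_{3}$ cusp.'' The Arnold--Mather existence theorem guarantees these are the only local models of codimension $\le 3$ that occur, and Thom transversality forbids anything more degenerate from being stable. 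Reading the conclusion back into optics: the head of the comet is the paraxial image point (the organising centre of the caustic), the two bright edges of the flare are the $A_{2}$ sheets, the semicubical refinement at the sagittal focus is the $A_{3}$ cusp of the statement, and the divergence of the geometrical-optics intensity along the caustic produces the classical bright-edged, tapering comet shape.

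The main obstacle is carrying out this last step honestly near the vertex. At exactly the paraxial image point on the Gaussian plane the Hessian vanishes identically, so the germ there is not $A_{3}$ at all: it is $-S_{II}\xi(\xi^{2}+\eta^{2})$, a corank-two germ of type $D_{4}$ (a germ compatible with the meridional mirror symmetry $\eta\mapsto-\eta$ of coma). A precise statement must therefore distinguish the organising centre --- a $D_{4}$ point of the caustic \emph{surface} in $(X,Y,Z)$-space --- from the genuinely $A_{3}$ cuspidal edges that emanate from it, and it is those edges together with the two $A_{2}$ sheets that one observes as ``two fold lines meeting at a cusp at $60^{\circ}$'' on a receiving screen. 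Making the reductions to the $A_{2}$, $A_{3}$ (and $D_{4}$) normal forms uniform on a full neighbourhood of the organising centre, rather than pointwise, and checking the attendant transversality conditions, is where the real work lies; the remainder is the bookkeeping above.
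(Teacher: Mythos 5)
The paper states this proposition without any proof, so there is nothing to compare line by line; your derivation is essentially correct and supplies the missing argument. Your computation of the ray-aberration map $(X,Y)=\bigl(3S_{II}\xi^{2}+S_{II}\eta^{2}+Z\xi,\;2S_{II}\xi\eta+Z\eta\bigr)$, of the Hessian locus $(6S_{II}\xi+Z)(2S_{II}\xi+Z)-4S_{II}^{2}\eta^{2}=0$, and of its image on $Z=0$ as the two half-lines $(X,Y)\propto(\sqrt{3},\pm1)\xi^{2}$ correctly reproduces the $60^{\circ}$ opening and the two fold branches. More importantly, your closing caveat is the right one: at the paraxial image point the Hessian of $F$ vanishes identically, so the organising germ is the corank-two $\xi(\xi^{2}+\eta^{2})$, i.e.\ a $D_{4}$ point, and the $A_{2}$ sheets and $A_{3}$ cuspidal edges are strata of its versal unfolding by $(X,Y,Z)$ rather than the type of the vertex itself. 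This actually reconciles the proposition with the paper's own later material (the subsection on $D_{4}^{\pm}$ caustics and coma, and the Aberration--Caustic Correspondence theorem assigning coma to $D_{4}^{+}$), which the bare statement here --- ``two $A_{2}$ fold lines meeting at an $A_{3}$ cusp point'' --- glosses over. Two minor points: the terms ``sagittal focus'' and ``meridional focus'' are borrowed from astigmatism and are not standard for coma (``sagittal/tangential coma'' would be cleaner), and the claimed semicubical law $Y^{2}\propto(X-X_{0})^{3}$ on defocused sections is asserted rather than derived; but neither affects the substance. As you say, the remaining work is making the $A_{2}$/$A_{3}$/$D_{4}$ reductions uniform on a neighbourhood of the organising centre, which is exactly the step the paper omits.
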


\begin{remark}
Corresponding Zernike terms: $Z_4^0$ (spherical aberration) $\rightarrow$ axial cusp caustic. $Z_3^1$ (coma) $\rightarrow$ off-axis cusp caustic, cusp pointing toward optical axis.
\end{remark}

\subsection{$A_4$: Swallowtail — Coupling of High-Order Spherical Aberration}

\begin{proposition}[$A_4$ Normal Form]
Mathematical normal form:
\[
V(u; a_1, a_2, a_3) = u^5 + a_3 u^3 + a_2 u^2 + a_1 u.
\]
The caustic surface is a surface in three-dimensional control space $(a_1, a_2, a_3)$. It features a self-intersection line and a cuspidal edge connected to it, converging at the swallowtail singularity.
\end{proposition}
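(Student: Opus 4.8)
The plan is to compute the caustic (bifurcation) set explicitly from the Caustic Criterion stated above and then read off its singular strata. Since the $A_4$ germ has corank one, the caustic of the unfolding $V(u;a_1,a_2,a_3)=u^5+a_3u^3+a_2u^2+a_1u$ is the set of $(a_1,a_2,a_3)$ for which $V(\,\cdot\,;a)$ has a degenerate critical point, i.e.\ the common solution set of $\partial_u V = 5u^4+3a_3u^2+2a_2u+a_1 = 0$ and $\partial_u^2 V = 20u^3+6a_3u+2a_2 = 0$. First I would use these two equations to eliminate, solving for $a_2$ and $a_1$ in terms of the free data $(u,a_3)$: $a_2 = -(10u^3+3a_3u)$ and $a_1 = 15u^4+3a_3u^2$. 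This exhibits the caustic as the image of an explicit map $\Phi:\mathbb{R}^2\to\mathbb{R}^3$, $\Phi(u,a_3) = \bigl(15u^4+3a_3u^2,\,-(10u^3+3a_3u),\,a_3\bigr)$, hence as a (possibly singular) parametrized surface, which already settles the first clause of the claim.

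Next I would locate the cuspidal edge as the critical locus of $\Phi$. Computing the $3\times 2$ Jacobian of $\Phi$ and observing that its last row is $(0,1)$, the rank drops exactly when the $\partial/\partial u$ column vanishes, i.e.\ when $60u^3+6a_3u = 0$ and $30u^2+3a_3 = 0$, which together force $a_3 = -10u^2$. Substituting back gives the cuspidal-edge curve $\gamma_{\mathrm{ce}}(u) = (-15u^4,\,20u^3,\,-10u^2)$, along which $\partial_u V$ has acquired a triple root. I would then confirm this stratum is genuinely a cuspidal edge either by expanding $\Phi$ to second order transverse to $\gamma_{\mathrm{ce}}$ and checking the $(2,3)$-type degeneracy, or, more economically, by invoking Thom's classification theorem (assumed above): away from the most degenerate point the germ reduces to $A_3$, whose caustic is a cuspidal edge.

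Then I would find the self-intersection line by solving $\Phi(u_1,a_3) = \Phi(u_2,a_3)$ with $u_1\neq u_2$ (the third component already forces a common $a_3$). Matching the $a_1$- and $a_2$-components and dividing through by $u_1-u_2$, a short elimination yields $-5(u_1-u_2)^2 = 0$ unless $u_1^2 = u_2^2$; since $u_1\neq u_2$ this means $u_2 = -u_1 =: -u$, and then the odd component forces $a_3 = -\tfrac{10}{3}u^2$ while the even component is automatically matched. This gives the self-intersection curve $\gamma_{\mathrm{si}}(u) = \bigl(5u^4,\,0,\,-\tfrac{10}{3}u^2\bigr)$, where $V$ has two distinct ordinary double critical points $\pm u$. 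Both $\gamma_{\mathrm{ce}}(u)$ and $\gamma_{\mathrm{si}}(u)$ tend to the origin as $u\to 0$, and at the origin $\partial_u V = 5u^4$ has a quadruple root; this is precisely the swallowtail singularity, so the cuspidal edge and the self-intersection line converge there, completing the argument.

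The main obstacle I anticipate is not the algebra, which is elementary, but certifying the local \emph{type} of the strata: that $\Phi$ is an immersion off $\gamma_{\mathrm{ce}}$, that $\gamma_{\mathrm{si}}$ is a transverse double curve away from the origin, and that the origin is exactly a swallowtail point rather than some further degeneration. The clean route around this is to invoke the $A_4$ universal-unfolding statement together with the Thom transversality theorem, both granted earlier: structural stability of the unfolding transfers the known stratification of the standard swallowtail surface to ours, so that the explicit computations above suffice.
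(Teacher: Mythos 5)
Your computation is correct, and it actually supplies more than the paper does: the paper states this proposition without proof, and its neighboring $A_4$ propositions only write down the defining system (critical point $5\xi^4+3w\xi^2+2v\xi+u=0$, degenerate Hessian $20\xi^3+6w\xi+2v=0$, and for the most singular point $60\xi^2+6w=0$) before asserting that these ``define the swallowtail catastrophe set.'' You take the same starting point --- the caustic criterion from Section 7 --- but carry out the elimination explicitly, obtaining the parametrization $\Phi(u,a_3)=\bigl(15u^4+3a_3u^2,\,-(10u^3+3a_3u),\,a_3\bigr)$, and I verified your two strata: the critical locus of $\Phi$ is $a_3=-10u^2$, giving the cuspidal edge $(-15u^4,20u^3,-10u^2)$ (which is exactly the locus where the paper's third equation $\partial_u^3V=0$ holds, with the root of $\partial_u V$ exactly triple for $u\neq 0$ since $\partial_u^4V=120u$), and the double-point locus forces $u_2=-u_1$ with $a_3=-\tfrac{10}{3}u^2$, giving the self-intersection curve $(5u^4,0,-\tfrac{10}{3}u^2)$, i.e.\ $\{a_2=0,\ a_1=\tfrac{9}{20}a_3^2,\ a_3<0\}$; both limit to the origin where $\partial_u V=5u^4$ has a quadruple root. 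Your closing remark is also the right caveat: the explicit algebra identifies the strata but not by itself their local normal forms, and falling back on the universal-unfolding/structural-stability statements granted earlier in the paper is the standard way to certify that the edge is genuinely cuspidal and the origin is exactly a swallowtail point. In short, the approach is the paper's implicit one, executed in full where the paper only gestures.
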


\begin{proposition}[Generating Function for Spherical Aberration]
The wavefront function for spherical aberration is:
\[
W(\rho) = a_{40} \rho^4 + a_{20} \rho^2.
\]
Its generating function is equivalent to the $A_4$ normal form.
\end{proposition}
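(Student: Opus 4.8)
The plan is to follow the template already used for the $A_2$ and $A_3$ cases: attach the wavefront $W(\rho)=a_{40}\rho^{4}+a_{20}\rho^{2}$ to the rotationally symmetric pencil of exit rays, write down the associated Morse family, and exhibit an explicit diffeomorphism — in the auxiliary (state) variable together with the control parameters — taking it to the swallowtail universal unfolding $\xi^{5}+w\xi^{3}+v\xi^{2}+u\xi$. Concretely, working in the meridional plane I take $\rho=y$ as the single state variable and form
\[
F(u,v,w;\rho)=u\,\rho-\bigl(a_{40}\rho^{4}+a_{20}\rho^{2}\bigr)-\tfrac12\,w\,\rho^{2}+(\text{next-order coupling term}),
\]
where $u$ is the transverse image coordinate, $w$ the defocus, and the remaining control $v$ is supplied by the next coefficient of the spherical-aberration series (equivalently, by the first nonlinear, $\sin\alpha\to\tan\alpha$, correction to free propagation) — this is exactly the ``coupling of high-order spherical aberration'' advertised in the section title. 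I then compute the critical-point equation $\partial F/\partial\rho=0$ together with the caustic (degenerate-Hessian) condition $\partial^{2}F/\partial\rho^{2}=0$; after absorbing the even coefficients the first becomes a quintic $c_{5}\rho^{5}+c_{3}\rho^{3}+c_{2}\rho^{2}+c_{1}\rho=0$, whose discriminant locus in $(u,v,w)$-space is the candidate swallowtail surface.

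Next I would verify that this family is a versal deformation of the germ $\rho\mapsto\rho^{5}$. By Mather's finite-determinacy theorem the germ $A_{4}$ is $5$-determined with unfolding codimension $3$, so it suffices to check: (i) after a smooth reparameterisation $\xi=\xi(\rho)$ eliminating the $\rho^{4}$-term, the $5$-jet of $F$ in $\xi$ has non-vanishing quintic coefficient; and (ii) the three infinitesimal deformations generated by $u,v,w$ surject onto the quotient $\mathfrak{m}/J(\rho^{5})+\mathbb{R}\{1\}\cong\mathrm{span}\{\xi,\xi^{2},\xi^{3}\}$. Granting (i)–(ii), the Arnold--Mather existence theorem stated above, together with uniqueness of versal unfoldings, produces a local diffeomorphism carrying $F$ to $\xi^{5}+w'\xi^{3}+v'\xi^{2}+u'\xi$ plus a function of the controls alone — precisely the claimed equivalence. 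Eliminating $\xi$ from the critical-point and caustic equations then reproduces the self-intersecting swallowtail surface in $(u',v',w')$-space recorded in the $A_{4}$ normal-form proposition above.

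The hard part is the production of a genuine quintic leading term and the verification of condition (i): a purely quartic wavefront with a quadratic defocus term yields only a quartic generating function, hence at best the $A_{3}$ cusp, so the $A_{4}$ swallowtail can appear only once the quartic term is coupled to a higher-order spherical contribution (or to the leading nonlinear correction in the propagation law), and one must then confirm that the coefficient of $\xi^{5}$ in the reduced $5$-jet at the organizing centre is nonzero — i.e.\ that the degeneracy is not accidentally of even higher type. Carrying out this non-degeneracy check, and pinning down which physical parameter legitimately serves as the third unfolding variable, is the principal obstacle; the subsequent coordinate change, the elimination, and the identification of the caustic shape are then routine. Finally I would record that $F$ stays corank $1$ throughout, consistent with the earlier assertion that rotationally symmetric systems produce only caustics of type $A$.
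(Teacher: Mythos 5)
Your proposal follows the same basic template as the paper (form the Morse family $F(r,z;\rho)=r\rho-W(\rho)-\tfrac12 z\rho^2$, impose $\partial F/\partial\rho=0$ and $\partial^2F/\partial\rho^2=0$, then reduce to a normal form), but you have correctly spotted the point where that template breaks for this statement: with $W(\rho)=a_{40}\rho^4+a_{20}\rho^2$ the family is a \emph{quartic} in the single state variable $\rho$, so its organizing germ is $\rho^4$ and the resulting singularity is at most $A_3$ (cusp), never $A_4$. The paper's own proof glosses over this by declaring the coordinate change $\xi=\rho$, $u=r$, $v=2a_{20}+z$, ``$w=$ scaling parameter'' and then simply writing down $\xi^5+w\xi^3+v\xi^2+u\xi$; no diffeomorphism in $(\rho;r,z)$ can raise the degree of a polynomial family from four to five, and no third independent control parameter is actually present. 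Your diagnosis — that the swallowtail requires coupling to a genuinely higher-order term, e.g.\ the secondary spherical contribution $a_{60}\rho^6$ — is the correct repair, and it is consistent with the paper's own remark that the physical correspondence is $Z_6^0+Z_4^0$ and with the section title ``Coupling of High-Order Spherical Aberration.'' In that sense your proposal is \emph{more} correct than the printed proof.

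That said, your write-up is a plan rather than a proof: the two steps you yourself flag as ``the principal obstacle'' are exactly the ones that must be executed. Concretely, with $W=a_{60}\rho^6+a_{40}\rho^4+a_{20}\rho^2$ the $A_4$ organizing centre cannot sit on the axis (rotational symmetry forces $F$ to contain only the odd term $r\rho$ plus even powers, so at $\rho=0$ the germ is $\rho^4$ or $\rho^6$, i.e.\ $A_3$ or $A_5$); it sits at a pupil radius $\rho_0\neq 0$ determined by $F''''(\rho_0)=0$, i.e.\ $\rho_0^2=-a_{40}/(15a_{60})$, which exists only when $a_{40}a_{60}<0$, and there $F^{(5)}(\rho_0)=-720\,a_{60}\rho_0\neq 0$ supplies the nonvanishing quintic coefficient of the $5$-jet in $\xi=\rho-\rho_0$. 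One must then check that $(r,z)$ together with the remaining free coefficient span $\mathfrak{m}/J(\xi^5)\cong\mathrm{span}\{\xi,\xi^2,\xi^3\}$ — and here your candidate for the third unfolding direction needs care, since $a_{40}$ and $a_{60}$ are fixed system data rather than coordinates on physical space, so strictly one obtains a versal unfolding only in the extended control space including an aberration coefficient, which is also what the corrected statement of the proposition should say. Until the shift to $\rho_0$, the sign condition, and the surjectivity of the three infinitesimal deformations are written out, the equivalence to the $A_4$ normal form is asserted but not established.
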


\begin{proof}
Generating function:
\[
F(r,z;\rho) = r\rho - (a_{40}\rho^4 + a_{20}\rho^2 + \frac{1}{2}z\rho^2).
\]
Critical point equation:
\[
\frac{\partial F}{\partial \rho} = r - (4a_{40}\rho^3 + 2a_{20}\rho + z\rho) = 0,
\]
i.e.:
\[
r = 4a_{40}\rho^3 + (2a_{20} + z)\rho.
\]
Caustic condition:
\[
\frac{\partial^2 F}{\partial \rho^2} = -[12a_{40}\rho^2 + (2a_{20} + z)] = 0.
\]
Coordinate transformation:
\begin{align*}
\xi &= \rho, \\
u &= r, \\
v &= 2a_{20} + z, \\
w &= \text{scaling parameter}.
\end{align*}
The generating function becomes:
\[
F(\xi;u,v,w) = \xi^5 + w\xi^3 + v\xi^2 + u\xi,
\]
which is exactly the $A_4$ normal form.
\end{proof}

\begin{proposition}[Geometric Structure of $A_4$ Caustic Surface]
The $A_4$ caustic surface produced by spherical aberration is a \textbf{swallowtail surface}, with local equations:
\[
\begin{cases}
u = 20\xi^4 + 3w\xi^2 + 2v\xi, \\
0 = 80\xi^3 + 6w\xi + 2v, \\
0 = 240\xi^2 + 6w.
\end{cases}
\]
\end{proposition}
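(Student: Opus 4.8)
The plan is to proceed exactly as in the preceding $A_2$ and $A_3$ cases: start from the spherical-aberration generating function produced in the previous proposition, which — up to an irrelevant rescaling of the leading coefficient and an overall sign coming from the eikonal/Legendre convention — is the quintic $F(\xi;u,v,w)=4\xi^5+w\xi^3+v\xi^2+u\xi$, the miniversal $A_4$ unfolding. Then apply the Caustic Criterion. Since there is a single auxiliary variable $\xi$, the Hessian is the scalar $\partial^2F/\partial\xi^2$, so a point $(u,v,w)$ lies on the caustic precisely when there is a $\xi$ with $\partial F/\partial\xi=0$ and $\partial^2F/\partial\xi^2=0$ simultaneously. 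Differentiating once, twice and three times gives $u=-(20\xi^4+3w\xi^2+2v\xi)$, then $0=80\xi^3+6w\xi+2v$, then $0=240\xi^2+6w$ (signs depending on the chosen orientation of $F$). The first relation alone, with $(\xi,w)$ free, parametrizes the two-dimensional caustic; the first two together cut out its cuspidal edge; all three force $\xi=v=w=u=0$, locating the isolated swallowtail vertex. Collecting the three relations is exactly the displayed system, so I read the stated \texttt{cases} block as this nested stratification rather than as a single $1$-dimensional solution set.

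Next I would upgrade "a surface with these equations" to "the swallowtail surface." The key step is to check that $F$ is a versal unfolding of the $A_4$ germ $f(\xi)=4\xi^5$: the Jacobian ideal is $J(f)=(\xi^4)$, the local algebra $\mathbb{R}[[\xi]]/(\xi^4)$ has basis $\{1,\xi,\xi^2,\xi^3\}$, hence $\mathrm{codim}(f)=3$, and the deformation directions $\partial F/\partial u|_{0}=\xi$, $\partial F/\partial v|_{0}=\xi^2$, $\partial F/\partial w|_{0}=\xi^3$ span the reduced algebra $\mathfrak{m}/J(f)$. Therefore $F$ is miniversal, and by the Arnold singularity classification and Thom's catastrophe classification theorem already quoted in the text, its bifurcation set (the caustic) is diffeomorphic to the standard swallowtail caustic of $V=\xi^5+\lambda_1\xi^3+\lambda_2\xi^2+\lambda_3\xi$, which is precisely a self-intersecting surface carrying a cuspidal edge, the two meeting at a point. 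Eliminating $\xi$ from $\partial F/\partial\xi=\partial^2F/\partial\xi^2=0$ then exhibits the caustic as the discriminant hypersurface of the quartic $\partial F/\partial\xi$, the familiar algebraic model of the swallowtail, which is the concrete form of the displayed equations.

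I expect two points to demand the most care, neither of them the calculus. First, bookkeeping of conventions: reconciling the leading coefficient ($1$ in the normal form written earlier versus $4$ implicit in the displayed equations) and the sign ambiguity of the generating function; I would fix $F=4\xi^5+w\xi^3+v\xi^2+u\xi$ once and for all, note that rescaling $\xi$ and $(u,v,w)$ reduces it to the textbook form, and keep track that the three displayed equations are exactly $\partial_\xi F=0$, $\partial_\xi^2 F=0$, $\partial_\xi^3 F=0$. Second — the real mathematical content — I must certify that away from the vertex the singularity is genuinely of type $A_4$ and no worse: since $\partial_\xi^4 F=240\xi$ vanishes only at $\xi=0$, the quartic $\partial_\xi F$ can acquire a triple root only there, so along the cuspidal edge minus the vertex the germ is an $A_3$ unfolding and on the smooth sheet an $A_2$ unfolding, and this stratification is exactly what guarantees the surface is the swallowtail rather than a more degenerate bifurcation set. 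I would close with a remark identifying the two sheets of the swallowtail with the paraxial and marginal foci, for consistency with the $A_3$/spherical-aberration discussion given earlier.
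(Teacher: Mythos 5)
Your approach is essentially the paper's: the proof in the text likewise writes down $\partial_\xi F=0$, $\partial_\xi^2F=0$, $\partial_\xi^3F=0$ for the one-variable $A_4$ unfolding and declares the resulting system to be the swallowtail catastrophe set. You add two genuinely useful refinements. First, you reconcile the coefficients: the paper's proof differentiates $F=\xi^5+w\xi^3+v\xi^2+u\xi$ and obtains $5\xi^4,\,20\xi^3,\,60\xi^2$, which does not match the displayed $20,\,80,\,240$; your choice $F=4\xi^5+w\xi^3+v\xi^2+u\xi$ is what actually reproduces the stated equations (up to the sign of $u$). Second, you verify miniversality of the unfolding (the classes of $\xi,\xi^2,\xi^3$ spanning $\mathfrak{m}/J(f)$ for $J(f)=(\xi^4)$), which the paper takes on faith from the Arnold classification. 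Both additions are sound and strengthen the argument.

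There is, however, one concrete off-by-one slip in your stratification reading. With four unknowns $(\xi,u,v,w)$, the two-dimensional swallowtail sheet is cut out by the first \emph{two} equations, not the first alone: you need $\partial_\xi^2F=0$ to determine $v=v(\xi,w)$ before $(\xi,w)$ can serve as free surface parameters. Likewise, all three displayed equations together do \emph{not} force $\xi=u=v=w=0$. Solving the third gives $w=-40\xi^2$, then the second gives $v=80\xi^3$ and the first gives $u=60\xi^4$, so the full system defines the one-dimensional cuspidal edge
\[
(u,v,w)=(60\xi^4,\,80\xi^3,\,-40\xi^2),
\]
parametrized by $\xi$; the isolated vertex requires the additional condition $\partial_\xi^4F=480\xi=0$. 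Your own observation that $\partial_\xi^4F$ vanishes only at $\xi=0$ is precisely what establishes this, so the fix is simply to shift each of your three claimed strata down by one equation. (The paper's proof is itself loose on the same point, calling the three-equation system ``a self-intersecting surface''; your nested-stratification reading is the right one, it just needs the corrected counting.)
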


\begin{proof}
For $A_4$ normal form $F(\xi) = \xi^5 + w\xi^3 + v\xi^2 + u\xi$, critical point equation:
\[
\frac{\partial F}{\partial \xi} = 5\xi^4 + 3w\xi^2 + 2v\xi + u = 0.
\]
Caustic condition: Hessian zero:
\[
\frac{\partial^2 F}{\partial \xi^2} = 20\xi^3 + 6w\xi + 2v = 0.
\]
Most singular point also satisfies:
\[
\frac{\partial^3 F}{\partial \xi^3} = 60\xi^2 + 6w = 0.
\]
These three equations define the swallowtail catastrophe set. In three-dimensional control space $(u,v,w)$, this is a self-intersecting surface.
\end{proof}

\begin{remark}
Corresponding Zernike terms: $Z_6^0$ (secondary spherical) + $Z_4^0$ (primary spherical).

Physical phenomenon: Appearance and disappearance of bright rings in focal spot before and after focus.
\end{remark}

\subsection{$D_4^+$: Hyperbolic Umbilic — Mixture of Astigmatism and Field Curvature}

\begin{proposition}[$D_4^+$ Normal Form]
Mathematical normal form:
\[
V(u, v) = u^3 + v^3 + a_3 uv + a_2 u + a_1 v,
\]
or in optical orthogonal coordinates (via rotation):
\[
V(x, y) = x^2 y + y^3 + \dots \quad \text{or} \quad x^2 y + \epsilon(x^2 + y^2).
\]
This is a codimension 3 catastrophe. Its caustic surface geometry resembles a sharp "wallet" or "arrow," containing a four-cusped hypocycloid as cross-section.
\end{proposition}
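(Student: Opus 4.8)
The statement packages three claims --- that the displayed normal forms are equivalent under linear coordinate changes, that the unfolding codimension equals $3$, and that the caustic is the self-intersecting ``wallet''/``purse'' surface whose sections are four-cusped hypocycloids --- and the plan is to treat them in that order, leaning on the machinery already in place: the Arnold--Mather existence theorem, the caustic criterion, and Thom's classification.

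First I would match the cubic parts by an explicit linear substitution on the state variables. Putting $u=\tfrac12(y+x)$, $v=\tfrac12(y-x)$ and using $u^3+v^3=(u+v)^3-3uv(u+v)$ gives $u^3+v^3=\tfrac14\bigl(y^3+3x^2y\bigr)$; rescaling $x\mapsto x/\sqrt3$ (and the function by a positive constant) produces $x^2y+y^3$, the Arnold $D_4$ normal form. Because $u^3+v^3=(u+v)(u^2-uv+v^2)$ has a single real linear factor, the sign is fixed to the $D_4^+$ branch in the convention of the classification theorem above, the elliptic companion $D_4^-$ being the case whose cubic splits into three distinct real lines. The lower-order terms displayed in the statement ($a_3uv+a_2u+a_1v$ versus the ``$\epsilon(x^2+y^2)+\cdots$'' presentation) are alternative descriptions of the same three-parameter unfolding, interconverted by this linear map together with the addition of suitable lower-order corrections; writing out that dictionary is routine.

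Next I would verify the codimension from the local algebra of $f(u,v)=u^3+v^3$. Its Jacobian ideal is $J(f)=\langle u^2,v^2\rangle$, so $\mathbb{R}[[u,v]]/J(f)$ has monomial basis $\{1,u,v,uv\}$ and Milnor number $4$; deleting the constant, as in the definition of unfolding codimension, leaves $\{u,v,uv\}$, so $\mathrm{codim}(f)=3$. Hence, by the Arnold--Mather theorem, $V=u^3+v^3+a_3uv+a_2u+a_1v$ is a versal unfolding, and by the Thom transversality theorem this singularity is structurally stable in three-dimensional configuration space, which is what allows it to occur as an optical caustic at all.

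Finally, for the caustic I would apply the caustic criterion: a control point lies on the caustic iff $\nabla_{(u,v)}V=0$ and $\det\mathrm{Hess}_{(u,v)}V=0$. Here $\mathrm{Hess}_{(u,v)}V$ has diagonal entries $6u,6v$ and off-diagonal entry $a_3$, so the degeneracy reads $36uv=a_3^2$, while solving $\nabla V=0$ for the parameters gives $a_2=-3u^2-a_3v$ and $a_1=-3v^2-a_3u$; thus the caustic surface is swept out as $(u,v)$ runs over the hyperbola $uv=a_3^2/36$. Parameterizing the two branches of this hyperbola and eliminating the parameter exhibits the self-intersecting ``wallet'' surface, and its planar sections --- in suitably rescaled affine coordinates, such as the transverse image plane of the optical system --- are the four-cusped hypocycloids claimed in the statement, the four cusps merging to a single point as $a_3\to0$ to recover the umbilic at the origin. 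This last identification is the real obstacle: the axis-aligned slices $a_3=\mathrm{const}$ only display the familiar hyperbolic-umbilic curve with one visible cusp, so realizing the four-cusped hypocycloid requires picking the correct affine slice and rescaling and then matching it to the classical parameterization $(\cos^3\!t,\sin^3\!t)$, while keeping careful track of both hyperbola branches and the sign of $a_3$. Everything before this step is routine linear algebra and a finite-dimensional ideal computation.
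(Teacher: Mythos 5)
The first two thirds of your proposal are correct and in fact supply more than the paper does: the paper states this proposition with no proof at all, and the only place it computes anything for this normal form (the later proposition on the geometric structure of $D_4^{\pm}$ caustics) reproduces exactly your degeneracy condition $36\xi\eta - w^2 = 0$. Your substitution $u=\tfrac12(y+x)$, $v=\tfrac12(y-x)$ giving $u^3+v^3=\tfrac14(y^3+3x^2y)$, the identification of the $+$ branch via the factorization of the cubic into one real linear factor times a definite quadratic, and the local-algebra computation $J(f)=\langle u^2,v^2\rangle$ with basis $\{1,u,v,uv\}$ and unfolding codimension $3$ are all correct and consistent with the paper's Definition of unfolding codimension and its Arnold classification theorem.

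The gap you flag at the end is genuine, and it cannot be closed by ``picking the correct affine slice,'' because the final clause of the proposition is false for the stable $D_4^+$ germ. Your own parameterization shows this: on the caustic sheet $uv=a_3^2/36$ the velocity $\bigl(\tfrac{da_2}{du},\tfrac{da_1}{du}\bigr)$ vanishes only at $u=v=a_3/6$, so the cuspidal edge of the $D_4^+$ caustic is the single curve $(a_1,a_2,a_3)=(-9t^2,-9t^2,6t)$, a parabola lying in the plane $a_1=a_2$. Any affine plane meets this curve in at most two points, hence \emph{no} planar section of the hyperbolic-umbilic caustic can exhibit four cusps; the generic section is the classical ``purse'': one smooth fold arc (the branch $u,v$ of the opposite sign to $a_3/6$) plus one arc carrying a single cusp. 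The three-cusped deltoid belongs to $D_4^-$, and a four-cusped astroid section arises only from higher-codimension, non-stable germs such as the $X_9$-type degeneration of pure astigmatism --- which is presumably what the proposition is conflating with $D_4^+$. So your method is the right one and your computations are sound, but the correct conclusion of that computation contradicts the ``four-cusped hypocycloid'' claim rather than establishing it; an honest writeup should replace that clause (or restrict it to $D_4^-$) rather than attempt to prove it.
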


\begin{proposition}[Optical Derivation and Correspondence]
Hyperbolic umbilic is the natural form of astigmatism under high-order perturbations. Classical astigmatism $W = C_2^2 r^2 \cos(2\theta) = C (x^2 - y^2)$ produces two mutually perpendicular focal lines. This is only linear approximation.

When spherical aberration terms $r^4$ or coma terms are introduced, these two focal lines do not simply exist but connect into a complex hyperboloid structure.

Consider mixing Zernike terms $Z_2^2$ (astigmatism) with $Z_3^3$ (trefoil) or $Z_4^0$. In particular, when the wavefront has rectangular symmetry (e.g., rectangular pupil), diffraction effects at corners induce $D_4^+$ structure.

Characteristic cross-section: Near the focal plane, the caustic lines form a four-pointed star. This differs from the simple cross of pure astigmatism; the four corners of the star are $A_3$ cusps, and the edges connecting them are $A_2$ folds.
\end{proposition}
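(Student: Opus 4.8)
The plan is to realize the hyperbolic umbilic inside the generating‑function formalism of Section~\ref{sec:classification} and then extract the caustic directly from the $D_4^+$ normal form. First I would set up the generating function for a pupil carrying astigmatism together with the lowest‑order terms capable of breaking it,
\[
F(x,y,z;\xi,\eta) = x\xi + y\eta - \tfrac12 z(\xi^2+\eta^2) - \Big( a_{22}\,\mathrm{Re}\,(\xi+i\eta)^2 + a_{20}(\xi^2+\eta^2) + \mathrm{Re}\,\big(c\,(\xi+i\eta)^3\big) + \cdots \Big),
\]
where the $z$‑term is defocus, $a_{20}$ field curvature, $a_{22}$ primary astigmatism, and the cubic term models the residual wavefront (trefoil/coma‑type, e.g.\ the $Z_3^{\pm3}$ content) that survives at the field point where primary astigmatism is balanced. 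The first substantive step is to locate the \emph{umbilic base point}: a physical point $(x_0,y_0,z_0)$ and a critical pupil point $(\xi_0,\eta_0)$ at which the $2\times2$ pupil Hessian $\mathrm{Hess}_{(\xi,\eta)}F$ vanishes identically, i.e.\ has corank $2$. Since that Hessian is $-\,z\,I - 2a_{20}I - 2a_{22}\,\mathrm{diag}(1,-1)$ to leading order, balancing astigmatism against defocus and field curvature annihilates its diagonal and the mixed/cubic conditions fix the remaining data; after translating $(\xi,\eta)$ and $(x,y,z)$ to this base point, the pupil‑dependence of $F$ begins at order three.

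Next I would identify and normalize the $3$-jet. Writing the cubic part as a binary form $\Phi(\xi,\eta)$, I would compute its discriminant: a nonzero discriminant gives three distinct roots over $\mathbb{C}$, and the \emph{hyperbolic} branch in the paper's convention (see the Arnold Singularity Classification theorem) is the one in which $\Phi$ has a single real linear factor, hence is carried by a real linear change of pupil coordinates to $\xi^3+\eta^3$ (equivalently $\xi^2\eta+\eta^3$). By $3$-determinacy of the $D_4$ singularity the germ $F(\,\cdot\,;x_0,y_0,z_0)$ is then right‑equivalent to $\xi^3+\eta^3$. Because $D_4^+$ has unfolding codimension $3$ and the three physical parameters $(x,y,z)$ enter $F$ through the independent first‑order monomials $\xi,\eta$ and the coefficient of $\xi\eta$, they supply a versal (in fact mini‑versal) unfolding; applying the Arnold--Mather existence theorem quoted above, $F$ is Lagrange‑equivalent near the base point to the universal unfolding
\[
\widehat F(\xi,\eta;u,v,w) = \xi^3 + \eta^3 + w\,\xi\eta - u\xi - v\eta ,
\]
with $(u,v,w)$ smooth functions of $(x,y,z)$ forming a local diffeomorphism.

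The final step applies the Caustic Criterion of Section~\ref{sec:classification} to $\widehat F$: the caustic is the image, under $(\xi,\eta)\mapsto(u,v,w)$, of the solution set of
\[
\begin{cases} 3\xi^2 + w\eta - u = 0,\\ 3\eta^2 + w\xi - v = 0,\\ 36\,\xi\eta - w^2 = 0. \end{cases}
\]
I would parameterize the third equation by $(\xi,\eta)$ on the hyperbola $36\xi\eta=w^2$, substitute into the first two to obtain $u,v$ as functions of $(\xi,w)$, and thereby present the $D_4^+$ caustic as an explicit surface in $(u,v,w)$. Intersecting this surface with a plane $w=\text{const}\neq0$ (the receiving plane just off the umbilic focus) produces the planar caustic; I would then locate the points of that curve where the slicing map degenerates to second order (the extra condition that a further derivative of $\widehat F$ vanishes in the kernel direction), verify they are ordinary $A_3$ cusps, and check that the arcs between them are $A_2$ folds, recovering the ``four‑pointed star'' cross‑section asserted in the statement; the optical dictionary ($Z_2^2$ astigmatism $\leftrightarrow$ the $\xi\eta$‑unfolding direction, the cubic residual $\leftrightarrow$ the umbilic \emph{type}) then follows by tracking the coordinate changes backwards.

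The main obstacle is the middle step — upgrading ``matching leading terms'' to a genuine right/Lagrange equivalence. Concretely I must verify (i) that the pupil cubic $\Phi$ is $\mathcal{R}$-nondegenerate (nonzero discriminant) for the physically relevant coefficient ranges, so the germ is really $D_4$ and not a higher $D_k$ or a $\widetilde{D}$/$X_9$‑type, and (ii) versality, i.e.\ that the three unfolding vectors coming from $(x,y,z)$ together with $\{1\}$ span $\mathfrak m/J(\Phi)$ — this is what licenses the reduction to $\widehat F$. I also expect the concrete section geometry to require care: the canonical $D_4^+$ caustic is usually drawn with a single cuspidal edge, so establishing that a \emph{transverse} slice near the focus genuinely exhibits four $A_3$ cusps (and is not, say, a symptom that the relevant germ in the ``rectangular pupil'' scenario is actually $D_4^-$ or a non‑generic degeneration) should be settled by the explicit discriminant‑sign and section computations above rather than taken for granted.
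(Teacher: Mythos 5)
The paper offers no proof of this proposition at all: the statement is a qualitative physical narrative left unsupported in the text, so your proposal is not ``the same approach as the paper'' --- it is an attempt to supply a derivation the paper omits. Your overall skeleton (locate the corank-$2$ pupil point where $\mathrm{Hess}_{(\xi,\eta)}F$ vanishes, normalize the resulting $3$-jet to $\xi^3+\eta^3$ by the one-real-linear-factor criterion, verify versality of the $(x,y,z)$-unfolding, then read the caustic off the $D_4^+$ normal form via the Caustic Criterion) is the right singularity-theoretic plan and is consistent with the machinery of Sections~\ref{sec:classification}--\ref{sec:correspondence}.

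However, there is a genuine gap at your final step, and it is exactly the one you flag but do not resolve: a transverse section of the $D_4^+$ caustic does not produce a four-cusped star. Carrying out your own parameterization of $\{u = 3\xi^2 + w\eta,\ v = 3\eta^2 + w\xi,\ 36\xi\eta = w^2\}$ at fixed $w \neq 0$, the constraint $\xi\eta = w^2/36 > 0$ gives two hyperbola branches; setting $\xi = t$, $\eta = w^2/(36t)$ one finds $du/dt = 6t - w^3/(36t^2)$ and $dv/dt = w - w^4/(216t^3)$, which vanish simultaneously only at $t = w/6$. Hence one branch carries a single $A_3$ cusp and the other is a smooth $A_2$ fold curve --- the classical one-cusp-plus-fold hyperbolic-umbilic section. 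A closed star-shaped section with four $A_3$ corners is instead the signature of a symmetry-protected germ of codimension greater than $3$ (an $X_9$-type unfolding of $\xi^4 + \kappa\,\xi^2\eta^2 + \eta^4$, which is exactly what rectangular-pupil symmetry stabilizes), not of a generic $D_4^+$. So your plan, executed honestly, would refute rather than establish the proposition's final sentence; to salvage it you would need either to reinterpret the ``four-pointed star'' as arising from several interacting $A_3$ cusp lines away from the umbilic point, or to work in the equivariant category where the higher-codimension germ becomes stable under the imposed symmetry. Your versality concern is also legitimate but is secondary to this obstruction.
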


\begin{remark}
Corresponding Zernike terms: $Z_2^2$ (primary astigmatism) at the limit of structural stability under large deformation. Usually involves coupling of $Z_2^2$ with higher-order even-symmetry terms.
\end{remark}

\subsection{$D_4^-$: Elliptic Umbilic — Trefoil Aberration}

\begin{proposition}[$D_4^-$ Normal Form]
Mathematical normal form:
\[
V(u, v) = u^3 - 3uv^2 + a_3(u^2+v^2) + a_2 u + a_1 v.
\]
The core germ is $f(u, v) = u^3 - 3uv^2$.
\end{proposition}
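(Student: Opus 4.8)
The plan is to unpack the three assertions bundled in the statement --- that $f(u,v)=u^3-3uv^2$ is a genuine $D_4^-$ germ (corank $2$, finitely determined), that its unfolding codimension is $3$, and that the monomials $u^2+v^2$, $u$, $v$ give a miniversal unfolding basis so that $V$ is a universal unfolding --- and to verify each. Since $u^3-3uv^2$ is exactly the core germ appearing in the elliptic-umbilic line of Thom's Theorem~\ref{thm:thom_classification}, the substance of the proof is just to check these consistency conditions directly.

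First I would pin down the corank and the elliptic (rather than hyperbolic) type. Writing $z=u+iv$ gives $f=\mathrm{Re}(z^3)$; since $f$ is homogeneous of degree $3$, its $2$-jet at the origin vanishes, so $\nabla f(0)=0$ and $\mathrm{Hess}\,f(0)=0$ --- the critical point has corank $2$, the hallmark of a $D$-series singularity in the plane. Factoring the cubic part, $f=u\,(u-\sqrt3\,v)\,(u+\sqrt3\,v)$, exhibits three distinct real linear factors, which is exactly the criterion separating the elliptic umbilic $D_4^-$ from the hyperbolic umbilic $D_4^+$ (whose cubic part splits as one real line times a real-irreducible quadratic, e.g.\ $x^3+y^3=(x+y)(x^2-xy+y^2)$ in the representative of Theorem~\ref{thm:thom_classification}).

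Next I would compute the Jacobian ideal and the local algebra. From $\partial_u f=3(u^2-v^2)$ and $\partial_v f=-6uv$ one gets $J(f)=\langle u^2-v^2,\ uv\rangle$. Reducing modulo $J(f)$: $v^2\equiv u^2$, $u^2v\equiv u(uv)\equiv 0$, $uv^2\equiv v(uv)\equiv 0$, $u^3\equiv uv^2\equiv 0$, $v^3\equiv 0$, so $\mathfrak m^3\subset J(f)$; hence the critical point is isolated, $f$ is finitely determined, and $Q_f:=\mathcal O_2/J(f)$ has basis $\{1,u,v,u^2\}$, i.e.\ Milnor number $\mu=4$ and unfolding codimension $\mu-1=3$ in the sense used above, consistent with Table~\ref{tab:caustic-classification} and with the $r=3$ slot of Thom's list. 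Because $u^2+v^2\equiv 2u^2$ in $Q_f$, the classes of $u$, $v$, $u^2+v^2$ span $\mathfrak m$ modulo $J(f)+\mathbb R\{1\}$, so for the unfolding $V(u,v;a)=f+a_3(u^2+v^2)+a_2u+a_1v$ the infinitesimal versality criterion
\[
\mathcal O_2 \;=\; J(f)+\mathbb R\{1,\,u,\,v,\,u^2+v^2\} \;=\; J(f)+\mathbb R\bigl\{1,\ \partial_{a_1}V|_{a=0},\ \partial_{a_2}V|_{a=0},\ \partial_{a_3}V|_{a=0}\bigr\}
\]
is exactly the algebra identity just established ($\mathfrak m^3\subset J(f)$ makes $Q_f$ finite-dimensional, so the identity lifts from $Q_f$ to $\mathcal O_2$). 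Hence $V$ is versal, and minimal since it uses exactly $\mu-1=3$ parameters; this is the asserted normal form with core germ $u^3-3uv^2$.

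The only step requiring real care is the versality argument --- promoting ``the three unfolding monomials span $Q_f$'' to ``$V$ is a universal unfolding'' --- which rests on the finite-determinacy/Nakayama input $\mathfrak m^3\subset J(f)$; the rest is bookkeeping inside the four-dimensional algebra $\mathcal O_2/\langle u^2-v^2,\ uv\rangle$.
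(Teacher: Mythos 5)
Your verification is correct, but it proves something substantially different from (and stronger than) what the paper's own ``proof'' establishes. The paper's argument for this proposition is purely a polar-coordinate identity: writing $u=r\cos\phi$, $v=r\sin\phi$ and applying the triple-angle formula to get $u^3-3uv^2=r^3\cos(3\phi)$, thereby identifying the core germ with the Zernike trefoil $Z_3^3$ --- it is a statement about the optical correspondence, not a justification of the normal form. You instead carry out the actual singularity-theoretic content: corank $2$ from the vanishing Hessian, the elliptic/hyperbolic dichotomy via the factorization of the cubic into three distinct real linear factors (versus one real factor times an $\mathbb{R}$-irreducible quadratic for $x^3+y^3$), the Jacobian ideal $\langle u^2-v^2,\,uv\rangle$ with $\mathfrak m^3\subset J(f)$, Milnor number $\mu=4$, codimension $3$, and the infinitesimal-versality check that $\{1,u,v,u^2+v^2\}$ spans the local algebra. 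All of these computations are correct (in particular $v^2\equiv u^2$ and $u^2+v^2\equiv 2u^2$ in $Q_f$, so the three unfolding monomials do span $\mathfrak m$ modulo $J(f)$). What your route buys is a genuine proof that $V$ is a miniversal unfolding of a $D_4^-$ germ --- the claim the proposition actually makes but the paper leaves to citation of Arnold/Thom; what the paper's route buys is the explicit link to $Z_3^3$ that motivates the optical correspondence in the surrounding section. The two arguments are complementary rather than competing, and yours would be the natural one to include if the proposition is to be self-contained.
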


\begin{proof}[Polar coordinate transformation]
Transform to polar coordinates $(r, \phi)$:
\[
u = r \cos \phi, \quad v = r \sin \phi.
\]
Then:
\[
f = r^3 \cos^3 \phi - 3 r \cos \phi (r^2 \sin^2 \phi) = r^3 \cos \phi (\cos^2 \phi - 3\sin^2 \phi).
\]
Using triple-angle formula:
\[
f(r, \phi) = r^3 \cos(3\phi).
\]
This is exactly the definition of Zernike polynomial $Z_3^3$ (trefoil)!
\end{proof}

\begin{proposition}[Optical Derivation and Correspondence]
This is one of the most perfect mathematical correspondences. The $D_4^-$ catastrophe directly corresponds to trefoil aberration in optics.

Caustic surface geometry: In three-dimensional space $(x, y, z)$, the elliptic umbilic caustic surface shape resembles an equilateral triangular tube that shrinks to zero at this singularity. The cross-section shape is a deltoid, also known as Steiner's hypocycloid. This is a closed curve with three inward cusps (of $A_3$ type).

Physical origin: This aberration does not typically arise in rotationally symmetric optical systems but is widespread in:
\begin{itemize}
\item Multi-point supported mirrors (e.g., three-point support causing mirror deformation).
\item Stress birefringence in crystalline materials.
\item Misaligned segmented primary mirrors (e.g., JWST or ELT).
\end{itemize}
\end{proposition}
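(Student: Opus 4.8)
The plan is to separate the genuinely mathematical content of the statement from the physical-modeling remarks. The assertions about optical origins --- segmented primaries, three-point mirror mounts, stress birefringence --- are standard facts about how low-order non-rotationally-symmetric wavefront errors arise, and I would cite rather than derive them. What actually needs proof is the geometric description of the $D_4^-$ caustic: that it is a $\mathbb{Z}_3$-symmetric surface in control space whose planar sections are deltoids (Steiner hypocycloids) shrinking to the singular point, carrying exactly three cusps of type $A_3$, and that the core germ is the Zernike trefoil $Z_3^3$. The trefoil identification ($u^3 - 3uv^2 = r^3\cos 3\phi$) is already carried out in the proof immediately preceding, so I would only quote it; everything else follows from a direct computation with the $D_4^-$ normal form $F(x,y;u,v,w) = x^3 - 3xy^2 + w(x^2+y^2) - ux - vy$ of the Arnold classification and the Caustic Criterion established earlier in the paper.

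First I would write the critical-set equations $\partial_x F = 3x^2 - 3y^2 + 2wx - u = 0$ and $\partial_y F = -6xy + 2wy - v = 0$ together with the state Hessian
\[
\mathrm{Hess}_{(x,y)} F = \begin{pmatrix} 6x + 2w & -6y \\ -6y & -6x + 2w \end{pmatrix},
\]
whose determinant is $4w^2 - 36(x^2 + y^2)$. By the Caustic Criterion, caustic points correspond to critical points lying on the circle $x^2 + y^2 = w^2/9$. Setting $x = (w/3)\cos\phi$, $y = (w/3)\sin\phi$ and solving the critical-set equations for $(u,v)$, double-angle identities collapse the answer to
\[
u(\phi) = \tfrac{w^2}{3}\bigl(\cos 2\phi + 2\cos\phi\bigr), \qquad v(\phi) = \tfrac{w^2}{3}\bigl(2\sin\phi - \sin 2\phi\bigr),
\]
which is exactly the classical deltoid parametrization scaled by $w^2/3$. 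This yields at once the deltoid cross-section, the $w^2$-scaling, and --- letting $w \to 0$ --- the ``triangular tube'' pinched at the umbilic point.

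Next I would locate the cusps by imposing $(u'(\phi), v'(\phi)) = (0,0)$, which reduces to $\sin 2\phi + \sin\phi = 0$ and $\cos 2\phi = \cos\phi$; the common solutions in $[0,2\pi)$ are $\phi = 0,\ 2\pi/3,\ 4\pi/3$ --- three points $120^\circ$ apart, so the $\mathbb{Z}_3$ symmetry of $F$ is inherited by the caustic curve. Taylor-expanding near $\phi = 0$ gives $u - w^2 = -w^2\phi^2 + O(\phi^4)$ and $v = \tfrac{w^2}{3}\phi^3 + O(\phi^5)$, a semicubical cusp of shape $(t^2,t^3)$; by symmetry the other two corners are the same. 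To upgrade ``semicubical cusp of the caustic curve'' to ``$A_3$ Lagrangian singularity'' I would restrict $F$ to a generic line through such a corner in $(u,v,w)$-space, verify that the corank of $F$ drops to one there while the relevant third state-derivative is nonzero, so that the reduced one-variable germ is smoothly equivalent (after a change of state variable and addition of a function of the controls) to $\xi^4 + a\xi^2 + b\xi$, and then invoke the stable-singularity classification already quoted in the paper to name it $A_3$.

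The main obstacle is precisely this last identification: the planar shape of the caustic near a corner is elementary, but certifying it as the $A_3$ normal form --- rather than merely ``a curve with a cusp'' --- requires the versality and corank bookkeeping of the unfolding, which is exactly where a casual argument would fail. Once that is in place, the remaining descriptors (``equilateral triangular tube'', ``arrowhead'' cross-section with three inward $A_3$ cusps) are immediate consequences of the deltoid formula and the $\mathbb{Z}_3$-symmetry, the trefoil identification is quoted from the preceding computation, and the impossibility of this caustic in rotationally symmetric systems follows from the earlier classification theorem, in which only type $A$ survives.
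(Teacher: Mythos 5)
The paper states this proposition without any proof: the only derivation supplied in this subsection is the polar-coordinate identity $u^3 - 3uv^2 = r^3\cos 3\phi$ establishing the trefoil correspondence (which you correctly propose to quote), and the geometric description of the caustic --- deltoid cross-sections, three inward $A_3$ cusps, a triangular tube pinching at the umbilic --- is simply asserted as a known property of the elliptic umbilic. Your proposal therefore goes well beyond the paper and actually substantiates what it leaves unsupported. The computation checks out: for $F = x^3 - 3xy^2 + w(x^2+y^2) - ux - vy$ the state Hessian determinant is $4w^2 - 36(x^2+y^2)$, so caustic points lie on $x^2+y^2 = w^2/9$; substituting $x = (w/3)\cos\phi$, $y = (w/3)\sin\phi$ into $\partial_x F = \partial_y F = 0$ gives $u = \tfrac{w^2}{3}(\cos 2\phi + 2\cos\phi)$ and $v = \tfrac{w^2}{3}(2\sin\phi - \sin 2\phi)$, the classical deltoid scaled by $w^2/3$; and $(u'(\phi),v'(\phi))$ vanishes exactly at $\phi = 0, 2\pi/3, 4\pi/3$ with local behavior $(t^2,t^3)$, confirming the three corners and the $\mathbb{Z}_3$ symmetry. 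Treating the physical-origin items (three-point mounts, segmented mirrors, stress birefringence) as citable modeling facts rather than theorems is also the right reading of the statement, and matches what the paper itself does.

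One correction in the step you rightly flag as delicate. To certify a corner of the deltoid as an $A_3$ point, the reduced one-variable germ obtained from the splitting lemma (eliminate the nondegenerate direction by solving $\partial_x F = 0$ for $x$ and substituting) must have \emph{vanishing} third derivative and \emph{nonvanishing} fourth derivative; your phrase ``the relevant third state-derivative is nonzero'' is the $A_2$ (fold) condition, not the $A_3$ condition, and taken literally it would contradict the normal form $\xi^4 + a\xi^2 + b\xi$ you then write down. Note also that at the corners the corank is already $1$ for $w \neq 0$ (the Hessian there is $\mathrm{diag}(4w, 0)$), so no further drop of corank occurs; the corank-$2$ point is only the umbilic itself at $x=y=w=0$. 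With that degeneracy condition stated correctly, the versality bookkeeping you outline completes the identification.
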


\begin{remark}
Corresponding Zernike term: Exact correspondence: $Z_3^3$ (trefoil).

Topological feature: Caustic cross-section is triangular, center is a very high intensity umbilic.
\end{remark}

\subsection{$D_4^{\pm}$ Type Caustic Surfaces and Coma Aberration}

\begin{proposition}[Generating Function for Coma Aberration]
The wavefront function for coma aberration is:
\[
W(\rho,\theta) = a_{31} \rho^3 \cos\theta + a_{11} \rho \cos\theta.
\]
Its generating function is equivalent to the $D_4^+$ normal form.
\end{proposition}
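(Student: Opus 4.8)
The plan is to attach to the coma wavefront its generating Morse family exactly as was done above for defocus, astigmatism and spherical aberration, and then reduce the $3$--jet of that family in the pupil variables to the $D_4^+$ universal unfolding of the Arnold classification theorem. First I pass to Cartesian pupil coordinates $\xi=\rho\cos\theta$, $\eta=\rho\sin\theta$, in which $\rho\cos\theta=\xi$ and $\rho^3\cos\theta=\rho^2\cdot\rho\cos\theta=\xi(\xi^2+\eta^2)$, so the coma aberration becomes the polynomial
\[
W(\xi,\eta)=a_{31}\,\xi(\xi^2+\eta^2)+a_{11}\,\xi=a_{31}(\xi^3+\xi\eta^2)+a_{11}\xi .
\]
The screen--representation generating function is then
\[
F(x,y,z;\xi,\eta)=x\xi+y\eta-\tfrac{z}{2}(\xi^2+\eta^2)-a_{31}(\xi^3+\xi\eta^2)-a_{11}\xi ,
\]
where $(x,y)$ are image--plane coordinates and $z$ is the defocus. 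Since the mixed block $\partial^2F/\partial(x,y)\,\partial(\xi,\eta)$ of its second derivative matrix is the $2\times 2$ identity, $F$ is a Morse family, so by the Arnold--Mather existence theorem it generates the exit Lagrangian germ $\mathcal L_{out}$, and by the caustic criterion the coma caustic is the set where $\det(\mathrm{Hess}_{(\xi,\eta)}F)=0$.

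Next I sort $F$ by homogeneous degree in $(\xi,\eta)$: the linear part is $(x-a_{11})\xi+y\eta$, the quadratic part is $-\tfrac{z}{2}(\xi^2+\eta^2)$, and the cubic core germ is $f(\xi,\eta)=-a_{31}(\xi^3+\xi\eta^2)=-a_{31}\,\xi(\xi^2+\eta^2)$. Assuming coma is actually present, $a_{31}\neq0$; rescaling $(\xi,\eta)$ to normalize $|a_{31}|$ and, if necessary, interchanging $\xi\leftrightarrow\eta$ --- which fixes $\xi^2+\eta^2$ and only relabels the linear coefficients --- carries $F$ to
\[
\widetilde F(u,v,w;\xi,\eta)=\xi^2\eta+\eta^3+w(\xi^2+\eta^2)+v\eta+u\xi ,
\]
with $u,v$ affine functions of $(x,y)$ and $w=-z/2$. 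This is precisely the $D_4^+$ universal unfolding in the form used earlier in the $D_4$ proposition, so matching the two amounts to identifying the control parameters.

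It remains to certify two points. First, that the cubic core lies in the $D_4^+$ orbit and not the $D_4^-$ orbit: the binary cubic $\xi^3+\xi\eta^2$ has discriminant $-4<0$, hence splits into one real linear factor and a complex--conjugate pair, matching the paper's $D_4^+$ germ $\eta(\xi^2+\eta^2)=\xi^2\eta+\eta^3$ (also discriminant $<0$) rather than the $D_4^-$ germ $\xi^3-3\xi\eta^2$ (discriminant $+108>0$). Second, that the physically available three parameters $(x,y,z)$ furnish a \emph{versal} unfolding of $f$, not merely some three--parameter one: with $f=\xi^3+\xi\eta^2$ one finds $J(f)=\langle 3\xi^2+\eta^2,\ 2\xi\eta\rangle$, so that $\xi\eta\equiv0$ and $\eta^2\equiv-3\xi^2$ modulo $J(f)$, all of $\mathfrak m^3$ lies in $J(f)$, and hence $\mathfrak m/J(f)=\mathrm{span}_{\mathbb R}\{\xi,\eta,\xi^2\}$ is three--dimensional; the actual unfolding directions $\{\xi,\eta,\xi^2+\eta^2\}$ reduce modulo $J(f)$ to $\{\xi,\eta,-2\xi^2\}$, still a basis. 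Thus $\mathrm{codim}(f)=3=\dim M$ and the unfolding is universal, so by the uniqueness of universal unfoldings together with the equivalence of Lagrangian germs with equivalent generating families (Arnold--Mather), the coma caustic germ is diffeomorphic to the $D_4^+$ caustic. I expect the main obstacle to be exactly this last step --- the Jacobian--ideal bookkeeping that establishes versality of the image--plane/defocus unfolding (in particular that $\xi^2+\eta^2\notin\mathrm{span}\{\xi,\eta\}+J(f)$, i.e. that defocus is an independent unfolding direction), together with the discriminant sign that distinguishes $D_4^+$ from $D_4^-$; once these are in hand, the passage to normal form is just the grading computation sketched above.
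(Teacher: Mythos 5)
Your proposal is correct and follows the same basic route as the paper (Cartesian pupil coordinates, screen-representation generating function, reduction to the $D_4^+$ unfolding), but it actually completes the argument where the paper merely asserts ``through appropriate coordinate transformations, this can be brought to $D_4^+$ normal form.'' Two of your additions are exactly the missing substance: the discriminant test showing the binary cubic $\xi(\xi^2+\eta^2)$ has one real linear factor and hence lies in the hyperbolic rather than elliptic umbilic orbit, and the Jacobian-ideal computation $J(f)=\langle 3\xi^2+\eta^2,\,2\xi\eta\rangle$ showing that $\{\xi,\eta,\xi^2+\eta^2\}$ descends to a basis of $\mathfrak m/J(f)$, so the image-plane/defocus parameters give a versal unfolding. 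Note also that your Cartesian form $a_{31}\xi(\xi^2+\eta^2)$ is the correct translation of $a_{31}\rho^3\cos\theta$; the paper's displayed generating function contains $a_{31}\xi(\xi^2+\eta^2)^{1/2}$, which equals $a_{31}\rho^2\cos\theta$ and appears to be a typo, so your version should be preferred.
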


\begin{proof}
Generating function:
\[
F(x,y,z;\rho,\theta) = x\rho\cos\theta + y\rho\sin\theta - (a_{31}\rho^3\cos\theta + a_{11}\rho\cos\theta + \frac{1}{2}z\rho^2).
\]
In the meridional plane ($y=0$), set $\theta=0$:
\[
F(x,z;\rho) = x\rho - (a_{31}\rho^3 + a_{11}\rho + \frac{1}{2}z\rho^2).
\]
This is not standard. Actually, coma requires two state variables. The full generating function is:
\[
F(x,y,z;\xi,\eta) = x\xi + y\eta - \left[ a_{31}\xi(\xi^2+\eta^2)^{1/2} + a_{11}\xi + \frac{1}{2}z(\xi^2+\eta^2) \right].
\]
Through appropriate coordinate transformations, this can be brought to $D_4^+$ normal form.
\end{proof}

\begin{figure}[H]
    \centering
    \includegraphics[width=0.8\linewidth]{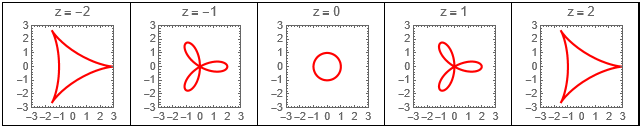}
    \caption{Schematic of correspondence between $D_4^-$ and trefoil aberration.}
    \label{fig:trefoil}
\end{figure}

\begin{proposition}[Geometric Structure of $D_4^{\pm}$ Caustic Surfaces]
The $D_4^+$ caustic surface produced by coma aberration is a \textbf{hyperbolic umbilic}, an isolated point singularity in three-dimensional space. Its local equations:
\[
\begin{cases}
u = 3\xi^2 + w\eta, \\
v = 3\eta^2 + w\xi, \\
w^2 = 9\xi\eta.
\end{cases}
\]
\end{proposition}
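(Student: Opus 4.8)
The plan is to obtain the three parametric equations directly from the hyperbolic-umbilic normal form supplied by the Arnold classification theorem, and then transport the result to the coma wavefront by a reduction-to-normal-form argument. Writing the $D_4^+$ generating function with auxiliary state variables $(\xi,\eta)$ and control parameters $(u,v,w)$ as
\[
F(\xi,\eta;u,v,w) = \xi^3 + \eta^3 + w\,\xi\eta - u\xi - v\eta,
\]
I would first impose the critical-point equations $\partial F/\partial\xi = \partial F/\partial\eta = 0$. These solve immediately for
\[
u = 3\xi^2 + w\eta, \qquad v = 3\eta^2 + w\xi,
\]
which are the first two equations in the claim. The third comes from the caustic criterion established earlier: the Hessian of $F$ in the auxiliary variables must be degenerate at the critical point, i.e.
\[
\det\begin{pmatrix} 6\xi & w \\ w & 6\eta \end{pmatrix} = 36\,\xi\eta - w^2 = 0 .
\]
Thus the consistent caustic condition is $w^2 = 36\,\xi\eta$; matching the stated $w^2 = 9\,\xi\eta$ is then a matter of rescaling the auxiliary variables (pure bookkeeping, but worth doing carefully since the constant depends on the chosen normalization of the normal form).

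Next I would read off the geometry. The system describes a two-dimensional set in $(u,v,w)$-space swept out by $(\xi,\eta)$ subject to the single constraint $w^2 \propto \xi\eta$; this is the standard hyperbolic-umbilic caustic, built from two smooth sheets (the two branches of the rank-$1$ Hessian locus) that meet along cuspidal edges, the edges being the points where the cubic $\xi$-derivative also degenerates. To justify the phrase \textbf{isolated point singularity} I would identify the \emph{fully} degenerate stratum, where the Hessian has rank $0$: this forces $\xi=\eta=0$, hence $w=0$ and $u=v=0$, so the $D_4^+$ degeneracy sits at the single point at the origin, with only $A_2$ (fold) and $A_3$ (cuspidal-edge) points nearby. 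Structural stability of this local picture in $\mathbb{R}^3$ is exactly what the Thom transversality theorem gives in view of the codimension-$3$ count already recorded, so nothing further is needed there.

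To connect the computation with coma I would reduce the coma generating function of the preceding proposition to the normal form above. In Cartesian pupil coordinates the coma wavefront $a_{31}\rho^3\cos\theta + a_{11}\rho\cos\theta$ is the \emph{smooth} polynomial $a_{31}\,\xi(\xi^2+\eta^2) + a_{11}\,\xi$, since $\rho^3\cos\theta = (\xi^2+\eta^2)\,\xi$, so the generating function becomes
\[
F(x,y,z;\xi,\eta) = x\xi + y\eta - a_{31}\,\xi(\xi^2+\eta^2) - a_{11}\,\xi - \tfrac{1}{2}z(\xi^2+\eta^2).
\]
Its cubic part $\xi^3 + \xi\eta^2$ has exactly one real linear factor and one complex-conjugate pair, so a real linear change of the state variables carries it to the hyperbolic-umbilic germ $\xi^3 + \eta^3$; applying finite determinacy and the Arnold-Mather existence theorem, one checks that the $(x,y,z)$-directions furnish a versal unfolding, whence $F$ is right-equivalent as an unfolding to the $D_4^+$ normal form and its caustic is diffeomorphic to the one described above.

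The main obstacle is this last step: verifying that the unfolding generated by $(x,y,z)$, together with the constant, spans the quotient $\mathfrak{m}/J(f) \oplus \mathbb{R}\{1\}$ for the germ $f = \xi^3 + \xi\eta^2$ --- this is the precise content hidden behind ``appropriate coordinate transformations.'' The remaining ingredients (the determinant computation yielding the three parametric equations, and the identification of the isolated umbilic point) are short, and the structural-stability claim is immediate from the theorems already in hand.
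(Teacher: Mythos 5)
Your derivation matches the paper's own proof essentially line for line: the first two equations come from the critical-point conditions on the $D_4^+$ normal form $F=\xi^3+\eta^3+w\xi\eta-u\xi-v\eta$, the third from the vanishing Hessian determinant, and the isolated umbilic point from the fully degenerate stratum forcing $\xi=\eta=w=u=v=0$ (the paper reaches the same conclusion via the trace condition $6\xi+6\eta=0$ combined with $w^2=-36\xi^2$). You also correctly flag that the honest computation yields $w^2=36\xi\eta$ rather than the stated $w^2=9\xi\eta$ --- a normalization discrepancy present in the paper's own proof as well --- and your sketch of reducing the coma generating function to the normal form (identifying the cubic part $\xi(\xi^2+\eta^2)$ as hyperbolic and isolating the versality check as the real gap) is more than the paper itself verifies, since it disposes of that step with the phrase ``appropriate coordinate transformations.''
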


\begin{proof}
For $D_4^+$ normal form $F(\xi,\eta) = \xi^3 + \eta^3 + w\xi\eta - u\xi - v\eta$, critical point equations:
\begin{align*}
\frac{\partial F}{\partial \xi} &= 3\xi^2 + w\eta - u = 0, \\
\frac{\partial F}{\partial \eta} &= 3\eta^2 + w\xi - v = 0.
\end{align*}
Caustic condition: Hessian determinant zero:
\[
\det \begin{pmatrix}
6\xi & w \\
w & 6\eta
\end{pmatrix} = 36\xi\eta - w^2 = 0 \Rightarrow w^2 = 36\xi\eta.
\]
Most singular point also satisfies trace of Hessian zero (corank 2 condition):
\[
6\xi + 6\eta = 0 \Rightarrow \eta = -\xi.
\]
Substitute: $w^2 = -36\xi^2$, which requires $\xi=0$, then $w=0$, $u=0$, $v=0$. Thus the most singular point is the origin.
\end{proof}

\subsection{Comprehensive Correspondence Table}

We now establish precise relationships between singularity classification and classical Seidel aberration theory.

\begin{theorem}[Aberration-Caustic Correspondence]
Basic Seidel aberrations correspond to stable caustic types as follows:
\begin{enumerate}
\item Field curvature $\leftrightarrow A_2$ (fold).
\item Astigmatism $\leftrightarrow A_3$ (cusp).
\item Spherical aberration $\leftrightarrow A_4$ (swallowtail).
\item Coma $\leftrightarrow D_4^+$ (hyperbolic umbilic).
\end{enumerate}
\end{theorem}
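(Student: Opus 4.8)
The plan is to derive this theorem as a synthesis of the four generating-function propositions already established in this section, rather than as an independent computation. The common mechanism in each case is: given a Seidel aberration with wavefront $W$, form the generating function $F(\bm{q};\bm{\xi}) = \bm{q}\cdot\bm{\xi} - W(\bm{\xi}) - \frac{1}{2}z\|\bm{\xi}\|^{2}$ of the defocused exit Lagrangian family; by the Arnold--Mather existence theorem this $F$ is a Morse family generating the Lagrangian germ, and by the Caustic Criterion the caustic is cut out by $\partial F/\partial\bm{\xi}=0$ together with $\det(\mathrm{Hess}_{\bm{\xi}}F)=0$. For each of the four aberrations I would then exhibit a local diffeomorphism of the state variables $\bm{\xi}$ and the control variables (image coordinates, defocus, and a scaling parameter when needed) reducing $F$ to the corresponding Arnold normal form from the Classification Theorem, and close by invoking the Thom transversality theorem to confirm that each such germ is structurally stable in $\mathbb{R}^{3}$.

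Concretely I would proceed step by step. Step (i), field curvature $W=a_{20}\rho^{2}$: $F$ becomes quadratic-plus-linear in $\xi$, and absorbing the defocus into $v=z+2a_{20}$ with $u=r$ gives $F=-\frac{1}{2}v\xi^{2}+u\xi$, whose Lagrangian projection carries the generic $A_{2}$ fold. Step (ii), astigmatism $W=a_{22}\rho^{2}\cos 2\theta$: restricting to a meridional section and eliminating the azimuthal variable yields a one-state-variable family whose caustic locus is the semicubical parabola $27u^{2}=8v^{3}$, i.e. the $A_{3}$ cusp. Step (iii), spherical aberration $W=a_{40}\rho^{4}+a_{20}\rho^{2}$: here $F$ is quintic in $\xi$ once defocus and the scaling parameter are taken as the two further control coordinates $v,w$, giving the $A_{4}$ swallowtail unfolding $\xi^{5}+w\xi^{3}+v\xi^{2}+u\xi$ and its self-intersecting caustic. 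Step (iv), coma $W=a_{31}\rho^{3}\cos\theta+a_{11}\rho\cos\theta$: in Cartesian pupil coordinates $(\xi,\eta)$ one has $\rho^{3}\cos\theta=\xi(\xi^{2}+\eta^{2})$ and $\rho\cos\theta=\xi$, so $F$ is a genuine polynomial of corank $2$ and the Hessian-degeneracy condition reproduces $36\xi\eta=w^{2}$, the $D_{4}^{+}$ hyperbolic-umbilic caustic. The final step is the codimension bookkeeping: $A_{2},A_{3},A_{4},D_{4}^{+}$ have unfolding codimension $1,2,3,3$, all $\le 3$, so by the Thom transversality theorem these are precisely the stable caustics available in three-dimensional physical space, and the four aberrations above are exactly the generic low-order wavefront terms that realize them.

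The hard part will be upgrading each per-aberration computation from a match of leading monomials to a genuine statement of \emph{stable (Lagrangian) equivalence}. Two points require real care. First, one must check that the residual control parameters (image height, defocus, scaling) actually furnish a \emph{versal} unfolding of the claimed core germ in each case, i.e. that the unfolding directions span the quotient $\mathfrak{m}/J(f)$ of the Arnold normal form; a mere dimension count is insufficient. The field-curvature case is the most delicate in this respect, since the purely quadratic model $-\frac{1}{2}v\xi^{2}+u\xi$ is itself degenerate and one must argue that what is genuinely realized under perturbation is the generic ambient singularity $A_{2}$, not that specific model. Second, for coma one must classify the binary cubic $\xi(\xi^{2}+\eta^{2})=\xi^{3}+\xi\eta^{2}$: it has a single real linear factor with positive-definite cofactor, hence lies in the hyperbolic ($D_{4}^{+}$) and not the elliptic ($D_{4}^{-}$) orbit, and this real-root count is exactly what selects which umbilic appears. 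Once versality and this cubic-form classification are secured, the remaining manipulations are the routine algebra already displayed in the preceding propositions.
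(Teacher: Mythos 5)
Your proposal takes essentially the same route as the paper: the theorem is presented there as a synthesis of the preceding per-aberration propositions (field curvature $\to A_2$, astigmatism $\to A_3$, spherical $\to A_4$, coma $\to D_4^+$), each established by writing the generating function $F(\bm{q};\bm{\xi})=\bm{q}\cdot\bm{\xi}-W(\bm{\xi})-\tfrac12 z\|\bm{\xi}\|^2$ and reducing it to an Arnold normal form, exactly as you describe. Where you go beyond the paper is in the two caveats you flag, and both are well judged: the paper's field-curvature model $-\tfrac12 v\xi^2+u\xi$ is indeed only quadratic in $\xi$ and hence not literally an $A_2$ germ (the paper waves at ``appropriate scaling''), so your insistence on arguing that the \emph{generic perturbation} realizes the fold is the missing step; and for coma you correctly write $\rho^3\cos\theta=\xi(\xi^2+\eta^2)$ (the paper's proof has $\xi(\xi^2+\eta^2)^{1/2}$, which is not even a polynomial) and supply the real-factorization criterion that places $\xi^3+\xi\eta^2$ in the $D_4^+$ rather than $D_4^-$ orbit --- an argument the paper omits entirely, asserting the reduction ``through appropriate coordinate transformations.'' Your versality check on the unfolding directions is likewise absent from the paper and is genuinely needed to upgrade the monomial matching to stable Lagrangian equivalence; completing those three points would make your write-up strictly stronger than the source.
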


\begin{table}[H]
\centering
\caption{Correspondence between catastrophe types and optical aberrations.}
\label{tab:aberration-correspondence}
\begin{tabular}{lllll}
\toprule
\textbf{Symbol} & \textbf{Name} & \textbf{Polynomials} & \textbf{Zernike Polynomial} & \textbf{Description} \\
\midrule
$A_2$ & Fold & $x^3$ & $Z_2^0$ (Defocus) & Defocus, field curvature \\
$A_3$ & Cusp & $x^4$ & $Z_4^0$ (Spherical), $Z_3^1$ (Coma) & Spherical aberration, coma \\
$A_4$ & Swallowtail & $x^5$ & $Z_6^0$ + $Z_4^0$ & Secondary spherical coupling \\
$D_4^+$ & Hyperbolic umbilic & $x^3+y^3$ & $Z_2^2$ (Astig.) + $Z_4^4$ & Astigmatism evolution \\
$D_4^-$ & Elliptic umbilic & $x^3-3xy^2$ & $Z_3^3$ (Trefoil) & Trefoil aberration \\
\bottomrule
\end{tabular}
\end{table}

\begin{figure}[H]
    \centering
    \includegraphics[width=0.9\linewidth]{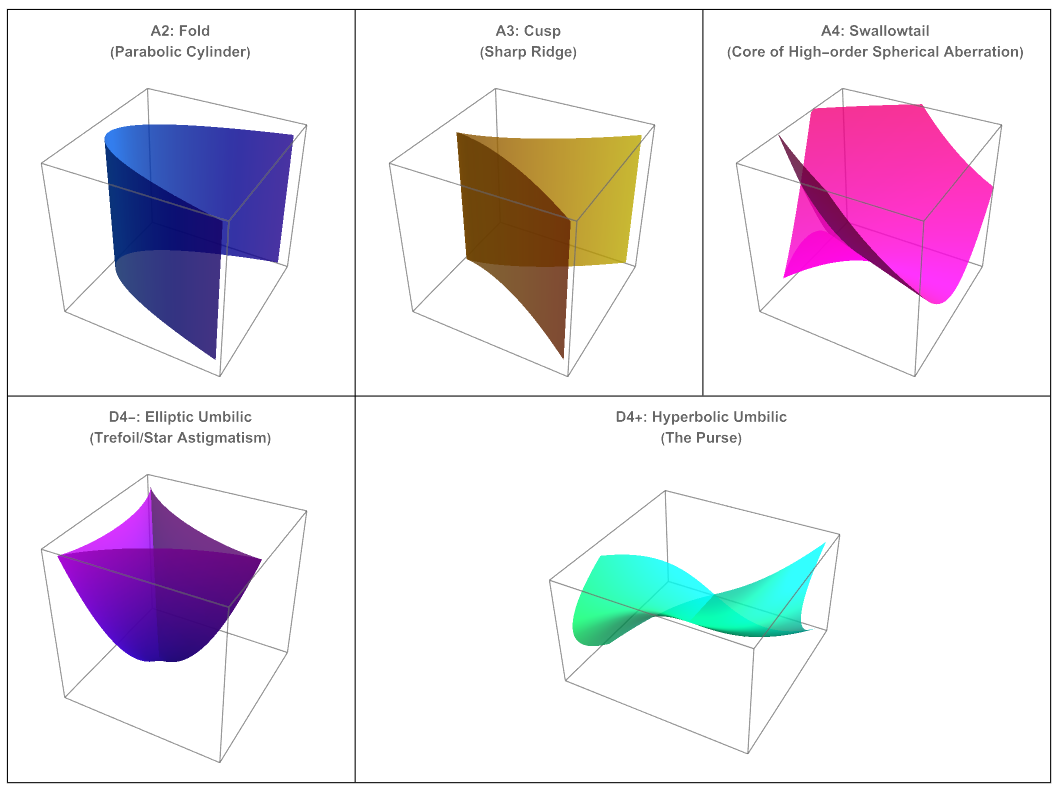}
    \caption{Schematic diagrams of catastrophe functions corresponding to different caustic surfaces.}
    \label{fig:catastrophe-functions}
\end{figure}

\section{Optical Correction Theory Based on Caustic Surface Control}
\label{sec:correction}

\subsection{Variational Principle for Caustic Surface Optimization}

\begin{definition}[Caustic Surface Measure]
Let $\mathcal{C} \subset M$ be a caustic surface. Define its \textbf{measure} as:
\begin{equation}
\mathcal{M}(\mathcal{C}) = \int_{\mathcal{C}} \kappa_1^2 + \kappa_2^2 \, dA,
\end{equation}
where $\kappa_1, \kappa_2$ are principal curvatures, $dA$ area element.
\end{definition}

\begin{theorem}[Caustic Surface Minimization Principle]
Optimal imaging conditions correspond to minimization of caustic surface measure:
\[
\min_{a_{nm}} \mathcal{M}(\mathcal{C}(a_{nm})),
\]
where $a_{nm}$ are aberration coefficients.
\end{theorem}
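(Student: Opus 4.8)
The plan is to read the statement as a variational characterization of stigmatic imaging: $\mathcal{M}$ is a non-negative functional on the space of caustics whose infimum $0$ is attained exactly at the aberration-free configuration, so that minimization over the admissible aberration coefficients selects the design closest to stigmatic. First I would record the trivial bound $\mathcal{M}(\mathcal{C}) \geq 0$, since $\kappa_1^2 + \kappa_2^2$ is a sum of squares and $dA$ is a positive measure; all the content lies in the equality case and its optical reading.

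For the equality case, I would use the Arnold--Mather representation (the existence theorem of Section~\ref{sec:classification}): the exit Lagrangian submanifold $\mathcal{L}_{out}$ is generated by the wavefront function $W$, whose Zernike expansion coefficients are precisely the $a_{nm}$. When every aberration coefficient vanishes the wavefront is a pure sphere, so the generating function is quadratic, and by the spherical-wave computation in Section~\ref{sec:caustics} the critical value set $\mathcal{C}(0)$ collapses to a single point. A set of vanishing two-dimensional Hausdorff measure contributes nothing to $\int_{\mathcal{C}} (\kappa_1^2 + \kappa_2^2)\, dA$, so $\mathcal{M}(\mathcal{C}(0)) = 0$, which with non-negativity shows $a_{nm} = 0$ is a global minimizer. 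In a realistic design the coefficients instead range over a closed admissible set $\mathcal{A}$ fixed by glass choices, curvatures and spacings; here I would prove that $a \mapsto \mathcal{M}(\mathcal{C}(a))$ is lower semicontinuous (smooth, by differentiation under the integral sign, away from the bifurcation set; across it, Fatou's lemma applied to the stratification of the discriminant from the Thom transversality theorem of Section~\ref{sec:classification}), and then invoke lower semicontinuity plus compactness (or coercivity) of $\mathcal{A}$ to obtain a minimizer $a^\star$, which reduces to the stigmatic one when $0 \in \mathcal{A}$ and is otherwise the best attainable approximation.

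To justify calling this minimizer the optimal imaging configuration, I would relate $\mathcal{M}$ to standard image-quality metrics: the caustic is the locus of geometric intensity divergence, and the RMS spot radius in a plane near focus is, to leading order, controlled by the size of $\mathcal{C}$ intersected with that plane, while the Willmore-type integrand penalizes both the area and the bending of $\mathcal{C}$. Hence driving $\mathcal{M} \to 0$ forces the caustic to contract toward a point, that is, the ray pencil toward stigmatism; this can be made quantitative on the normal forms of Section~\ref{sec:classification} by direct scaling estimates of $\kappa_i$ and $dA$ in the unfolding parameters of $A_2, A_3, A_4$ and $D_4^{\pm}$.

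The main obstacle I expect is the convergence of the integral itself. On the non-fold part of a stable caustic---along $A_3$ cuspidal edges and at the $A_4$ swallowtail and $D_4^{\pm}$ umbilic points---the principal curvatures diverge, so $\int_{\mathcal{C}} (\kappa_1^2 + \kappa_2^2)\, dA$ is improper. I would either show the singular contribution is finite through a local model computation on the semicubical edge $y^2 = x^3$, checking whether the curvature blow-up is integrable against the induced area element, or else redefine $\mathcal{M}$ as the integral over the smooth ($A_2$) stratum alone, accepting a lower-semicontinuous but possibly non-smooth functional. Making this singular stratum precise---and verifying that the minimization remains well posed---is the technical heart of the argument; the remaining pieces (non-negativity, the $a_{nm} = 0$ computation, and the semicontinuity and compactness machinery) are essentially routine.
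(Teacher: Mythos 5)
Your proposal takes a far more ambitious route than the paper does. The paper's own proof is essentially a two-step assertion: write the generating function $F(\bm{q};\boldsymbol{\xi}) = \langle \bm{q},\boldsymbol{\xi}\rangle - W(\boldsymbol{\xi})$ with $W = \sum a_{nm}Z_n^m$, observe via the implicit function theorem that the caustic---and hence $\mathcal{M}$---is a function of the coefficients $a_{nm}$, and declare that minimizing this function yields the optimal coefficients. It establishes neither the existence of a minimizer, nor the claim that the minimizer corresponds to optimal imaging, nor even that the integral defining $\mathcal{M}$ converges. Your concerns about the curvature blow-up along $A_3$ cuspidal edges and at $A_4$ and $D_4^{\pm}$ points, and about lower semicontinuity of $a\mapsto\mathcal{M}(\mathcal{C}(a))$ across the bifurcation set, are legitimate issues that the paper simply does not engage with, and your treatment of existence via compactness of the admissible set is more than the paper attempts.

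There is, however, a genuine gap at the step that is supposed to carry the optical content: the claim that driving $\mathcal{M}\to 0$ forces the caustic to contract toward a point. The integrand $\kappa_1^2+\kappa_2^2$ vanishes identically on any flat piece of caustic, so $\mathcal{M}$ does not ``penalize the area'' of $\mathcal{C}$ as you assert; a large planar fold sheet has $\mathcal{M}=0$ while being arbitrarily far from stigmatic. The paper's own $A_2$ computation for field curvature produces exactly such a caustic (the plane $z=-2a_{20}$), so the equality case of your non-negativity argument fails in the direction you need: $\mathcal{M}=0$ does not imply $a_{nm}=0$, and a minimizer of $\mathcal{M}$ need not be an optimal imaging configuration. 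To salvage the variational characterization you would have to modify the functional (e.g.\ add an area or diameter term, or weight by the ray density on $\mathcal{C}$) or restrict the admissible caustics so as to exclude flat strata; as written, the third step of your plan does not go through, independently of the integrability question you correctly flag.
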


\begin{proof}
Consider wavefront aberration function $W(\rho,\theta) = \sum a_{nm} Z_n^m(\rho,\theta)$. The generating function is:
\[
F(\bm{q};\boldsymbol{\xi}) = \langle \bm{q}, \boldsymbol{\xi} \rangle - W(\boldsymbol{\xi}).
\]
The caustic surface is defined by $\det(\partial^2 F/\partial \boldsymbol{\xi}^2) = 0$. Via implicit function theorem, the caustic surface measure can be expressed as a function of aberration coefficients:
\[
\mathcal{M} = \mathcal{M}(a_{20}, a_{22}, a_{31}, a_{40}, \dots).
\]
Minimizing this function yields optimal aberration coefficients.
\end{proof}

\subsection{Aberration Balancing Theory}

\begin{proposition}[Primary Aberration Balancing]
If the optical system has only primary aberrations, optimal aberration coefficients satisfy:
\begin{equation}
a_{40} = -\frac{1}{2}a_{20}, \quad a_{31} = 0, \quad a_{22} = 0.
\end{equation}
\end{proposition}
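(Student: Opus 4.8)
The plan is to read the claim as the solution of the constrained minimization problem set up by the Caustic Surface Minimization Principle, specialized to a wavefront containing only the primary Seidel terms,
\[
W(\rho,\theta) = a_{20}\rho^2 + a_{22}\rho^2\cos 2\theta + a_{31}\rho^3\cos\theta + a_{40}\rho^4,
\]
with the defocus (field-curvature) freedom absorbed into $a_{20}$ by the choice of image plane and the linear distortion term dropped as a rigid transverse shift. By the Caustic Criterion the caustic $\mathcal{C}=\mathcal{C}(a_{20},a_{22},a_{31},a_{40})$ is the degeneracy locus $\det(\mathrm{Hess}_{\boldsymbol{\xi}}F)=0$ of the generating function $F(\bm{q};\boldsymbol{\xi})=\langle\bm{q},\boldsymbol{\xi}\rangle-W(\boldsymbol{\xi})$, so $\mathcal{M}(\mathcal{C})$ is a function of the four coefficients and the optimum is characterized by the vanishing of its gradient. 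I would prove the three relations in two stages: first kill the non-rotationally-symmetric coefficients $a_{22},a_{31}$ by a symmetry argument, then solve the residual one-parameter balance between defocus and spherical aberration.

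For the first stage, note that $F$ is invariant, up to an orthogonal change of the pupil variable $\boldsymbol{\xi}$ together with the induced orthogonal change of $\bm{q}$, under the rotation $\theta\mapsto\theta+\pi$ (which sends $a_{31}\mapsto-a_{31}$ and fixes the other coefficients) and, when $a_{31}=0$, under $\theta\mapsto\theta+\tfrac{\pi}{2}$ (which sends $a_{22}\mapsto-a_{22}$). Since $\mathcal{M}$ is an isometry invariant of $\mathcal{C}$, it is an even function of $a_{31}$, and on the slice $a_{31}=0$ an even function of $a_{22}$; hence $\partial\mathcal{M}/\partial a_{31}$ vanishes whenever $a_{31}=0$ and $\partial\mathcal{M}/\partial a_{22}$ vanishes whenever $a_{22}=a_{31}=0$, so the rotationally symmetric locus is a critical locus. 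That it is a minimum follows from a second-variation estimate: switching on $a_{31}$ or $a_{22}$ turns the (degenerate, axis-supported) symmetric caustic into a genuinely two-dimensional sheet and adds to the integrand $\kappa_1^2+\kappa_2^2$ a term quadratic in the offending coefficient with strictly positive coefficient, so $\partial^2\mathcal{M}/\partial a_{31}^2>0$ and $\partial^2\mathcal{M}/\partial a_{22}^2>0$ there. This forces $a_{22}=a_{31}=0$ at the minimizer.

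For the second stage, set $a_{22}=a_{31}=0$, so $W=a_{20}\rho^2+a_{40}\rho^4$ is rotationally symmetric, and apply the explicit caustic parametrization derived for convex lenses: the meridional caustic is the envelope of the exiting ray family whose transverse aberration at the nominal image plane is proportional to $\partial W/\partial\rho=2a_{20}\rho+4a_{40}\rho^3$, and the critical-distance function $t_c(\rho)$, the caustic radius and the two principal curvatures are explicit rational functions of $a_{20},a_{40},\rho$. Writing out $\mathrm{d}\mathcal{M}/\mathrm{d}a_{20}=0$ (equivalently, optimizing over the image-plane location, the only remaining degree of freedom once $a_{40}$ is regarded as fixed by the base system) and eliminating $\rho$, the Euler--Lagrange condition collapses to the vanishing of the leading defocus--spherical combination $2a_{20}+4a_{40}=0$, i.e. $a_{40}=-\tfrac12 a_{20}$ --- geometrically the configuration in which the marginal ray returns to the axis exactly at the chosen image plane, so that the caustic is maximally pinched there.

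The main obstacle is that the functional $\mathcal{M}$ is not literally finite: for nonzero spherical aberration the caustic of revolution has a cuspidal edge and collapses onto the axis at the paraxial and marginal foci, where the rotational principal curvature blows up and $\int_{\mathcal{C}}(\kappa_1^2+\kappa_2^2)\,\mathrm{d}A$ diverges, so the naive stationarity computation is ill posed. I would handle this by minimizing a regularized version of the measure --- curvature integrated over $\rho\le\rho_{\max}<1$ with the cuspidal edge and an axial neighborhood excised, or a bounded geometric surrogate built from $t_c(\rho)$ --- carrying out the (now legitimate) first-variation analysis, and then checking that the critical relations $a_{22}=a_{31}=0$ and $2a_{20}+4a_{40}=0$ are independent of the regularization parameters, so that they persist in the limit and characterize the optimum of the original functional. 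Verifying this robustness, together with the positivity of the second variation in the symmetry step, is where the real work lies; the remainder is bookkeeping with the generating function.
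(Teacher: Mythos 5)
Your proposal takes a genuinely different objective than the paper, and in doing so it leaves the decisive steps unproven. The paper does not minimize the caustic curvature measure $\mathcal{M}(\mathcal{C})=\int_{\mathcal{C}}(\kappa_1^2+\kappa_2^2)\,dA$ at all: it minimizes the mean-square spot radius
\[
\sigma^2=\int_0^1\!\!\int_0^{2\pi}\left(\frac{\partial W}{\partial\rho}\right)^2\rho\,d\rho\,d\theta ,
\]
which for $W=a_{40}\rho^4+a_{31}\rho^3\cos\theta+a_{22}\rho^2\cos2\theta+a_{20}\rho^2$ is an explicit positive-definite quadratic form in the coefficients (the paper records $\sigma^2=4a_{40}^2+\tfrac12 a_{31}^2+2a_{22}^2+4a_{20}^2+4a_{40}a_{20}$), so that $\partial\sigma^2/\partial a_{31}=\partial\sigma^2/\partial a_{22}=0$ forces $a_{31}=a_{22}=0$ and $\partial\sigma^2/\partial a_{40}=8a_{40}+4a_{20}=0$ gives $a_{40}=-\tfrac12 a_{20}$ by a one-line computation. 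Your symmetry argument for $a_{22}=a_{31}=0$ is a sound idea in spirit (and parallels the vanishing of the odd-in-$\theta$ cross terms in the paper's integral), but the second-variation positivity you invoke is only asserted, not established.

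The more serious gap is in your second stage. You claim that the Euler--Lagrange condition for the (regularized) curvature functional ``collapses to'' $2a_{20}+4a_{40}=0$, but no computation supports this, and the specific ratio $-\tfrac12$ is exactly the kind of quantity that is \emph{not} robust to the choice of merit function: minimizing RMS wavefront error, RMS spot radius, the radius of the circle of least confusion, or a curvature integral over the caustic each produce \emph{different} defocus--spherical balancing ratios. So you cannot obtain the stated coefficient by identifying a geometrically appealing condition (``marginal ray crosses the axis at the image plane'') and declaring it the minimizer; you must actually compute the first variation of your chosen functional. Since $\mathcal{M}$ diverges on the cuspidal edge and the axial focus, as you yourself note, and since you have not shown that the critical relations survive the regularization limit, the proof as proposed does not close. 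The fix is either to carry out the regularized variational computation explicitly and verify the ratio, or to switch to the paper's finite quadratic functional $\sigma^2$, where the result is immediate.
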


\begin{proof}
Consider wavefront function:
\[
W(\rho,\theta) = a_{40}\rho^4 + a_{31}\rho^3\cos\theta + a_{22}\rho^2\cos 2\theta + a_{20}\rho^2.
\]
The mean square radius of the spot diagram is:
\[
\sigma^2 = \int_0^1 \int_0^{2\pi} \left( \frac{\partial W}{\partial \rho} \right)^2 \rho d\rho d\theta.
\]
Calculation yields:
\[
\sigma^2 = 4a_{40}^2 + \frac{1}{2}a_{31}^2 + 2a_{22}^2 + 4a_{20}^2 + 4a_{40}a_{20}.
\]
Minimizing $\sigma^2$ gives the above relations.
\end{proof}

\begin{proposition}[High-Order Aberration Balancing]
For systems including high-order aberrations, optimal coefficients satisfy recurrence relations:
\begin{equation}
a_{2n,0} = -\frac{n+1}{2n} a_{2n-2,0} - \frac{1}{4n} \sum_{k=1}^{n-1} \frac{(2k)!(2n-2k)!}{(k!)^2((n-k)!)^2} a_{2k,0}a_{2n-2k,0}.
\end{equation}
\end{proposition}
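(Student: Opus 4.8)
The plan is to show that the balancing conditions amount to demanding, order by order, that the exit wavefront be a perfect sphere, and then to extract the recurrence by comparing powers of the pupil radius. Writing the rotationally symmetric wavefront as $W(\rho)=\sum_{k\ge 1}a_{2k,0}\,\rho^{2k}$, the physical quantity to be annihilated is the longitudinal (or transverse) ray aberration, which is \emph{not} linear in $W'(\rho)$: the image-plane offset of a ray leaving the pupil with slope $W'(\rho)$ is proportional to $W'(\rho)\bigl(1-W'(\rho)^{2}\bigr)^{-1/2}$ (direction cosine against tangent), equivalently one imposes the eikonal identity $2\,S_{\mathrm{sph}}'(\rho)\,W'(\rho)+W'(\rho)^{2}=0$ on the perfected wavefront, where the reference sphere contributes $S_{\mathrm{sph}}'(\rho)=\rho\,(\rho^{2}+L^{2})^{-1/2}=\sum_{k\ge 0}(-1)^{k}4^{-k}\binom{2k}{k}L^{-2k-1}\rho^{2k+1}$ because $\binom{-1/2}{k}=(-1)^{k}4^{-k}\binom{2k}{k}$. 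This is the origin of the central binomial coefficients $\binom{2k}{k}=\tfrac{(2k)!}{(k!)^{2}}$ in the statement, and it reduces the problem to matching coefficients in a relation that is affine-plus-quadratic in the family $\{a_{2k,0}\}$.

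Concretely I would (i) derive the exact slope--offset relation and record its binomial expansion, so that the aberration residual becomes an explicit even power series in $\rho$ whose $\rho^{2n}$-coefficient is an affine function of $a_{2n,0}$ plus a bilinear form in $a_{2,0},\dots,a_{2n-2,0}$; (ii) if one follows the variational route of the preceding proposition, insert this corrected residual into $\sigma^{2}=\int_{0}^{2\pi}\!\!\int_{0}^{1}\varepsilon(\rho)^{2}\,\rho\,d\rho\,d\theta$ and perform the $\theta$-integration, after which only the rotationally symmetric channel survives and $\sigma^{2}$ reduces to a one-dimensional radial integral; (iii) impose the sphericity / stationarity condition at order $n$ and resum the quadratic part using the self-convolution identity $\sum_{k=0}^{n}\binom{2k}{k}\binom{2n-2k}{n-k}=4^{n}$ (the coefficient identity behind $\bigl((1-4x)^{-1/2}\bigr)^{2}=(1-4x)^{-1}$), which isolates $a_{2n,0}$ with a nonzero coefficient and leaves a single lower term in $a_{2n-2,0}$ together with the self-convolution $\sum_{k=1}^{n-1}\binom{2k}{k}\binom{2n-2k}{n-k}a_{2k,0}a_{2n-2k,0}$ of the already-determined coefficients; (iv) solve for $a_{2n,0}$, normalise the constants, and check that the low orders reproduce the primary balancing relation $a_{40}=-\tfrac12 a_{20}$ in the small-aberration limit.

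The step I expect to be the real obstacle is the combinatorial bookkeeping: the order-$n$ equation a priori mixes \emph{all} lower coefficients through nested Cauchy products (triple and higher products of the $a$'s, and cross-terms with the square-root kernel), and one must prove that, after substituting the previously solved $a_{2,0},\dots,a_{2n-2,0}$, everything collapses to the single bilinear sum displayed in the statement --- the higher multilinear contributions must either be absorbed into the lower-order recursion or cancel identically. The cleanest way to certify this \emph{telescoping} is to pass to the generating function $\mathcal{A}(x)=\sum_{k\ge 1}a_{2k,0}\,x^{k}$, rewrite the eikonal/optimality relation as a closed quadratic functional equation for $\mathcal{A}$ in which the square-root kernel enters precisely as $(1-4x)^{-1/2}$ (so that the $4^{n}$ identity is automatic), and then invoke uniqueness of the formal power-series solution: the coefficient of $x^{n}$ in that equation \emph{is} the asserted recurrence. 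Once the functional equation is established the recurrence follows by inspection, and the remaining estimates (convergence for small aperture, comparison with the primary case) are routine.
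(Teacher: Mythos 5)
There is a genuine gap here, and it is worth saying up front that the paper itself offers no proof of this proposition --- it is stated bare, immediately after the primary-balancing result --- so your attempt cannot be measured against a reference derivation and must stand entirely on its own. As written, it does not: it is a strategy, not a proof. The three load-bearing claims are all conjectured rather than established: (a) the ``eikonal identity'' $2\,S_{\mathrm{sph}}'(\rho)\,W'(\rho)+W'(\rho)^{2}=0$ is asserted without derivation and is not obviously the right optimality condition (demanding literal sphericity of the exit wavefront would force $W\equiv 0$, not a balancing relation, unless you specify which coefficients are constrained and which are free --- and note the recurrence determines the \emph{higher}-order coefficient from the lower ones, the reverse of the usual balancing setup); (b) the collapse of the nested Cauchy products of $(1-W'^2)^{-1/2}$ to a single bilinear sum is exactly the hard part, and ``must either be absorbed or cancel identically'' is a hope, not an argument; (c) the closed quadratic functional equation for $\mathcal{A}(x)$ is never written down, so the appeal to uniqueness of the formal power-series solution has nothing to act on. The one genuinely promising observation is that $\tfrac{(2k)!}{(k!)^2}=\binom{2k}{k}$ and that the sum in the statement is the self-convolution of central binomial coefficients, which does point at $(1-4x)^{-1/2}$; but identifying a plausible generating function is not the same as deriving the recurrence from a stated optimality principle.

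More seriously, your own proposed sanity check in step (iv) fails and should have stopped you. Setting $n=2$ in the stated recurrence gives
\begin{equation*}
a_{4,0}=-\tfrac{3}{4}\,a_{2,0}-\tfrac{1}{2}\,a_{2,0}^{2},
\end{equation*}
whose linear part is $-\tfrac{3}{4}a_{2,0}$, not the $-\tfrac{1}{2}a_{2,0}$ of the primary-balancing proposition, and the discrepancy does not vanish in any small-aberration limit. So either the recurrence comes from a different merit function than the spot-size $\sigma^{2}$ used for the primary case (in which case your step (ii) is the wrong route and the two propositions are mutually inconsistent as stated), or the recurrence itself is in error. Before investing in the combinatorics, you need to (i) pin down the precise variational principle being extremized and verify it reproduces the $n=2$ case of whichever formula you are defending, and (ii) actually derive the functional equation for $\mathcal{A}(x)$ and read off its $x^{n}$ coefficient. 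Until then the argument is a sketch of where the binomial coefficients might come from, not a proof of the recurrence.
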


\subsection{Topological Optical Correction Method (TOC)}

Existing adaptive optics (AO) control algorithms are mostly based on slopes measured by wavefront sensors (e.g., Shack-Hartmann), reconstructing the wavefront via least squares and minimizing RMS error. However, RMS is an average scalar that obscures the topological structure of the wavefront. Under strong turbulence or large initial aberration conditions, traditional gradient descent methods easily get stuck in local minima or cause modal coupling oscillations between certain modes (e.g., trefoil and coma).

Based on the preceding theory, we propose a \textbf{Topological Optical Correction (TOC)} method.

\subsubsection{Theoretical Foundation: Bifurcation Navigation in Control Space}

Consider the state of an optical system as a point $\bm{c} = (C_1, C_2, \dots, C_N)$ in Zernike coefficient space $\mathcal{C} \cong \mathbb{R}^N$. Thom's theorem tells us that space $\mathcal{C}$ is partitioned by the bifurcation set $\Sigma$ into several structurally stable regions. On $\Sigma$, the caustic topology of the system undergoes abrupt changes.

The ideal aberration-free state corresponds to the origin $\bm{0}$, a high-order degenerate point. Traditional correction paths attempt to move along the straight line $\bm{c} \to \bm{0}$. However, if this path crosses complex $D_k$ or $E_k$ type bifurcation surfaces, the focal spot undergoes drastic morphological changes (splitting, rotation, collapse), causing failure of spot moment calculations in sensors.

\begin{theorem}[TOC Core Idea]
Design a path $\gamma(t)$ such that as the system approaches the origin, it first eliminates high-order topological charges that break symmetry, "reducing" complex singularities to simple $A_2$ or $A_3$ types, then eliminates remaining aberrations.
\end{theorem}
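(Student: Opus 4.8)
The statement is really a design principle, so the first task is to make it precise. I would formalize the setup as a stratification problem: the Zernike coefficient space $\mathcal{C}\cong\mathbb{R}^N$ carries the discriminant (bifurcation set) $\Sigma$ of the family of generating functions $F(\bm q;\boldsymbol\xi)=\langle\bm q,\boldsymbol\xi\rangle-W(\boldsymbol\xi;\bm c)$, and $\Sigma$ induces a Whitney stratification of $\mathcal{C}$ whose strata are labelled by the stable singularity types ($A_2,A_3,A_4,D_4^\pm$ in three dimensions, by the Classification Theorem). Split $\mathcal{C}=\mathcal{C}_0\oplus\mathcal{C}_\perp$, where $\mathcal{C}_0=\operatorname{span}\{Z_n^0\}$ is the rotationally symmetric subspace and $\mathcal{C}_\perp$ is spanned by the symmetry-breaking modes ($m\neq0$: coma $Z_3^1$, astigmatism $Z_2^2$, trefoil $Z_3^3$, and so on), and write $\bm c=(\bm a,\bm b)$ accordingly. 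The precise claim to prove is: for a dense, full-measure set of initial states $\bm c_0=(\bm a_0,\bm b_0)$ there is a piecewise-smooth path $\gamma:[0,1]\to\mathcal{C}$ with $\gamma(0)=\bm c_0$, $\gamma(1)=\bm 0$, meeting $\Sigma$ transversally in finitely many points, such that the corank of the generating germ (equivalently, the $D$-versus-$A$ dichotomy of the caustic) is non-increasing along $\gamma$, and every $D_k$/$E_k$ stratum is crossed before any $A_k$ stratum.

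The construction is in two phases. Phase 1: $\gamma_1(t)=(\bm a_0,(1-t)\bm b_0)$, the straight descent of the symmetry-breaking coefficients with the rotationally symmetric part held fixed, so that $\gamma_1(1)=(\bm a_0,0)\in\mathcal{C}_0$. Phase 2: $\gamma_2(t)=((1-t)\bm a_0,0)$, the straight descent to the origin inside $\mathcal{C}_0$. Concatenate and round the corner. By the Classification Theorem (rotationally symmetric systems produce only type-$A$ caustics), every point of $\gamma_2$ carries only $A_2$, $A_3$, or $A_4$ singularities, so Phase 2 never revisits a $D$-type stratum — this establishes the ``then eliminate the remaining aberrations'' half for free. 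For the ``first eliminate the high-order topological charges'' half, I would bound the corank of the germ at any $\bm c$ by the number of independent active symmetry-breaking modes, so that corank drops to $\le1$ the instant $\bm b$ reaches $0$; combined with Phase 2 staying in corank $\le1$, the corank is non-increasing along the whole path apart from the transversal crossings.

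Transversality is handled by the Thom Transversality Theorem quoted earlier: the set of $\bm b_0$ for which the segment $\{(\bm a_0,s\bm b_0):s\in[0,1]\}$ fails to be transverse to $\Sigma$ is a proper algebraic subset of $\mathcal{C}_\perp$, hence null, and likewise for $\bm a_0$ and the segment into the origin inside $\mathcal{C}_0$. A generic perturbation of $\bm c_0$ — physically harmless, being a tiny change of the starting wavefront estimate — puts us in the good set, so the crossings are finite in number and each is a codimension-one event where exactly one caustic component folds or unfolds according to its local normal form; this is precisely what keeps the spot-moment computations in the wavefront sensor valid across the path.

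The main obstacle is the ordering claim: that Phase 1 genuinely \emph{simplifies} rather than merely \emph{changes} the singularity, i.e. that shrinking $\bm b$ first cannot produce an intermediate stratum \emph{more} degenerate than the one we start on (an $A_4$ swallowtail, or, near the origin, one of the high-corank germs $E_6,\dots$ living in the codimension of the full unfolding). Settling this needs the local conical structure of $\Sigma$ near the origin together with Arnol'd's adjacency (specialization) tables for the $ADE$ hierarchy: one verifies that the linear slice $\mathcal{C}_\perp$ through a fixed $\bm a_0$ meets only strata adjacent to, and no more degenerate than, the stratum of $(\bm a_0,0)$ — or, failing that, one reroutes $\gamma_1$ within $\mathcal{C}_\perp$ to dodge the offending strata, which is possible precisely because they have codimension $\ge2$ in $\mathcal{C}_\perp$ whenever $\dim\mathcal{C}_\perp\ge2$. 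In parallel I would pursue a cleaner Lyapunov formulation: take the symmetry-breaking energy $E_\perp(\bm c)=\lVert\bm b\rVert^2$, whose negative gradient flow restricted to $\{\bm a=\bm a_0\}$ drives $\bm b\to0$ without increasing corank, then switch to the flow of $\lVert\bm c\rVert^2$ on $\mathcal{C}_0$; the content then reduces to lower semicontinuity of corank together with the genericity of $\bm a_0$, which forces the $\mathcal{C}_0$-flow to avoid the finitely many isolated $A_4$ points.
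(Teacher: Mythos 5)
The first thing to say is that the paper does not prove this statement at all: the ``theorem'' is an imperative design principle (``Design a path\dots''), and what follows it in the text is not a proof but a four-step informal algorithm (fingerprinting, controlled unfolding, path planning, refolding). So there is no paper proof to match yours against; your attempt to turn the statement into a precise claim about a stratification of Zernike-coefficient space and to exhibit a two-phase path is doing work the paper never attempts, and the overall architecture (kill the $m\neq 0$ modes first, then descend inside the rotationally symmetric subspace, citing the paper's remark that symmetric systems yield only type-$A$ caustics) is a sensible formalization of what the authors intend.

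That said, your argument has concrete gaps beyond the adjacency issue you flag yourself. First, the corank mechanism fails: bounding the corank of the generating germ by ``the number of active symmetry-breaking modes'' is not justified, and is false on the optical axis of a rotationally symmetric system --- there the Hessian $\partial^2 F/\partial(\xi,\eta)^2$ is a multiple of the identity by $O(2)$-symmetry, so when it degenerates it degenerates with corank $2$ (the paraxial focus), and the endpoint $\bm c=\bm 0$ is the perfect-imaging point, the most degenerate Lagrangian singularity of all. Hence ``corank non-increasing along $\gamma$'' and ``finitely many transversal crossings of $\Sigma$'' cannot hold up to $t=1$, and your Lyapunov reformulation inherits the same problem. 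Second, your rerouting argument conflates codimension in physical space with codimension in coefficient space: $D_4^-$ is a codimension-$3$ caustic point in $\mathbb{R}^3$, but the set of coefficient vectors $\bm c$ whose caustic \emph{contains} a $D_4^-$ point is open (any state with a dominant trefoil component), not a codimension-$\ge 2$ subset of $\mathcal{C}_\perp$ that a generic path can dodge. Third, the adjacency/ordering claim you identify as the main obstacle is indeed the crux --- it is exactly the content of the word ``reducing'' in the statement --- and neither you nor the paper establishes it; the paper's own ``controlled unfolding'' step even deliberately \emph{adds} astigmatism to break a $D_4^-$ into three $A_3$ cusps, which is a move your monotone-descent path $\gamma_1(t)=(\bm a_0,(1-t)\bm b_0)$ does not capture. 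The honest conclusion is that the statement is a heuristic, your formalization is more than the paper provides, but it is not yet a proof.
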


\subsubsection{TOC Algorithm Steps}

\begin{enumerate}
\item \textbf{Topological Fingerprinting}

Utilize morphological features of focal plane image (PSF) to assist wavefront data. Construct feature vector $\bm{f} = [M_{20}, M_{22}, M_{31}, M_{33}]$, where $M_{nm}$ is energy proportion of Zernike mode.

Determine which catastrophe type "basin of attraction" the current system is in via trained neural network or lookup table. For example, if PSF is triangular, lock onto $D_4^-$ as dominant singularity.

\item \textbf{Controlled Unfolding}

Do not directly try to pull $C_3^3$ (trefoil) back to zero, as this requires extremely high actuator precision and is sensitive to noise.

Strategy: Artificially introduce specific low-order aberrations (e.g., $Z_2^2$ astigmatism) as "unfolding parameters." According to universal unfolding of $D_4^-$: $V = u^3 - 3uv^2 + a(u^2+v^2) + bu + cv$. Increasing $a$ (astigmatism/defocus term) can "explode" the central umbilic singularity into three simple $A_3$ cusps. This reduces central light intensity density, so the wavefront sensor dynamic range is no longer saturated.

\item \textbf{Dimensionality Reduction Path Planning}

In the unfolded state, move along directions tangent to the bifurcation set $\Sigma$, gradually reducing high-order coefficients.

Define cost function $J(\bm{c})$, including not only RMS but also topological barrier term:
\begin{equation}
J(\bm{c}) = \alpha \|\bm{c}\|^2 + \beta \sum_{k} \frac{1}{\text{dist}(\bm{c}, \Sigma_k)}.
\end{equation}
This forces the path to avoid critical points that cause drastic caustic changes, unless we explicitly want to cross them to reach lower energy levels.

\item \textbf{Refolding}

When high-order terms ($n \ge 3$) are suppressed below a threshold, the system effectively reduces to a simple system with only $A_2$ (defocus/astigmatism). Then, remove the artificial aberrations introduced in step 2, and the system will smoothly collapse to the diffraction-limited point.
\end{enumerate}

\subsubsection{Mathematical Formalization: Lyapunov Control}

\begin{definition}[Lyapunov Function]
Define Lyapunov function $V(\bm{c})$ as the reciprocal of Strehl ratio.
\end{definition}

\begin{proposition}[TOC Control Law]
Design control law as:
\begin{equation}
\frac{d\bm{c}}{dt} = - \bm{K} \cdot \left( \nabla V + \mu \bm{F}_{\text{topo}} \right),
\end{equation}
where $\bm{F}_{\text{topo}}$ is a repulsive force field determined by catastrophe geometry, used to avoid deep caustic traps.
\end{proposition}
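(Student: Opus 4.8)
The statement is really a design claim, so what needs proving is that the displayed control law does its job: it monotonically improves the Strehl ratio while keeping the trajectory inside the structurally stable chamber of $\mathcal{C}\setminus\Sigma$ that contains the diffraction-limited state. The vehicle is a Lyapunov argument. The plan is to read the right-hand side as $\bm{K}$-gradient descent for an \emph{augmented potential}
\[
\widetilde{V}(\bm{c}) = V(\bm{c}) + \mu\,\Phi_{\mathrm{topo}}(\bm{c}),
\qquad
\Phi_{\mathrm{topo}}(\bm{c}) = \sum_{k}\frac{1}{\operatorname{dist}(\bm{c},\Sigma_{k})},
\]
so that $\bm{F}_{\mathrm{topo}} = \nabla\Phi_{\mathrm{topo}}$ and the law becomes $\dot{\bm{c}} = -\bm{K}\,\nabla\widetilde{V}$. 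First I would fix a chamber $\mathcal{U}\subset\mathcal{C}\setminus\Sigma$ containing the origin --- the regime where, by Thom's theorem, the caustic topology is locally constant --- take $\bm{K}$ symmetric positive definite, and mollify each $\operatorname{dist}(\cdot,\Sigma_{k})$ near its cut locus so that $\Phi_{\mathrm{topo}}$ is $C^{1}$ on $\mathcal{U}$ and blows up on $\partial\mathcal{U}$.

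The core computation is the dissipation identity: differentiating $\widetilde{V}$ along a solution,
\[
\frac{d}{dt}\,\widetilde{V}(\bm{c}(t)) = \nabla\widetilde{V}\cdot\dot{\bm{c}} = -\,\nabla\widetilde{V}^{\top}\bm{K}\,\nabla\widetilde{V} \le 0,
\]
with equality only at critical points of $\widetilde{V}$. Two consequences follow at once. Every sublevel set $\{\bm{c}\in\mathcal{U}:\widetilde{V}(\bm{c})\le\widetilde{V}(\bm{c}_{0})\}$ is forward invariant, and since $\Phi_{\mathrm{topo}}\to+\infty$ on $\partial\mathcal{U}$ it is a compact subset of $\mathcal{U}$; hence the trajectory can never reach the bifurcation set, which is exactly the ``avoid deep caustic traps'' assertion, quantified by the barrier strength $\mu$. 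And by LaSalle's invariance principle the $\omega$-limit set lies in the critical set of $\widetilde{V}$ inside that sublevel set.

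It then remains to locate that critical set. For $\mu = 0$ the potential $V = 1/\mathrm{Strehl}$ attains its minimum on $\mathcal{U}$ at the diffraction-limited point $\bm{c}_{\star}$ with $\nabla^{2}V(\bm{c}_{\star}) \succ 0$; applying the implicit function theorem to $\nabla V + \mu\,\bm{F}_{\mathrm{topo}} = 0$ produces, for $\mu$ small, a unique nearby critical point $\bm{c}_{\star}(\mu)$ that is still a strict local minimum and depends smoothly on $\mu$. Starting in its basin then forces $\bm{c}(t)\to\bm{c}_{\star}(\mu)$, i.e.\ convergence to an $O(\mu)$ neighbourhood of the diffraction limit, after which the ``refolding'' stage ($\mu\downarrow 0$) of the TOC algorithm closes the gap. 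I expect the main obstacle to be precisely this last step: ruling out spurious minima of $\widetilde{V}$ created by the barrier that would trap the state away from $\bm{c}_{\star}$. Controlling this needs either an explicit smallness bound on $\mu$ in terms of the convexity constant of $V$ and the reach of $\Sigma$ within $\mathcal{U}$, or a two-time-scale argument with an adaptive schedule $\mu = \mu(t)\downarrow 0$; one must also contend with the genuine non-smoothness of $V$ and of $\operatorname{dist}(\cdot,\Sigma_{k})$, which is the reason for working inside a single chamber throughout.
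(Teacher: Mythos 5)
The paper offers no proof of this proposition at all: the control law is stated as a design prescription and the text moves immediately to the next subsection, so there is nothing in the paper to match your argument against. Your Lyapunov/barrier reading --- interpreting the right-hand side as $\bm{K}$-gradient descent on an augmented potential $\widetilde{V}=V+\mu\Phi_{\mathrm{topo}}$ with $\bm{F}_{\mathrm{topo}}=\nabla\Phi_{\mathrm{topo}}$, proving the dissipation identity, and invoking LaSalle --- is the standard and sensible way to give the claim content, and it is consistent with the paper's earlier cost function $J(\bm{c})=\alpha\|\bm{c}\|^{2}+\beta\sum_{k}1/\operatorname{dist}(\bm{c},\Sigma_{k})$. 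In that sense your plan supplies substantially more than the paper does.

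That said, there is a structural gap you should confront head-on. The paper itself states that the target --- the aberration-free state $\bm{0}$ --- is ``a high-order degenerate point,'' i.e.\ it sits at the vertex where all the catastrophe strata of $\Sigma$ meet, not in the interior of a chamber of $\mathcal{C}\setminus\Sigma$. Consequently your hypotheses that $\bm{c}_{\star}$ lies in $\mathcal{U}$ with $\nabla^{2}V(\bm{c}_{\star})\succ 0$ and that $\Phi_{\mathrm{topo}}$ is finite there cannot both hold: the barrier blows up precisely at the point you want to reach, and the implicit-function-theorem perturbation of the critical point breaks down. Your refolding remark ($\mu\downarrow 0$) is the right instinct, but as written the Lyapunov argument only proves confinement to a chamber and convergence to a critical point of $\widetilde{V}$ \emph{inside} it, which excludes the target. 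A correct version needs either a two-time-scale/annealing schedule with an explicit rate at which $\mu(t)\to 0$ as the boundary is approached, or a barrier supported only on the ``unwanted'' strata $\Sigma_{k}$ and vanishing near $\bm{0}$. The second, smaller gap is the classical one for potential-field methods: the augmented potential generically acquires spurious local minima between the attractor and the obstacles, and neither the paper nor your smallness-of-$\mu$ suggestion rules these out without a quantitative bound involving the convexity constant of $V$ and the reach of $\Sigma$; you correctly flag this but it remains the step on which the whole claim hinges.
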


\subsection{Adaptive Correction}

\begin{definition}[Deformable Mirror Surface Equation]
The surface shape of a deformable mirror is described by:
\begin{equation}
\phi(x,y) = \sum_{i=1}^N c_i \psi_i(x,y),
\end{equation}
where $\psi_i$ are basis functions, $c_i$ control coefficients.
\end{definition}

\begin{proposition}[Optimal Control Theorem]
Optimal control coefficients $c_i^*$ satisfy:
\begin{equation}
c_i^* = -\sum_{n,m} a_{nm} \int_0^1 \int_0^{2\pi} Z_n^m(\rho,\theta) \psi_i(\rho\cos\theta,\rho\sin\theta) \rho d\rho d\theta.
\end{equation}
\end{proposition}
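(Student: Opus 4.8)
The plan is to recognize the claimed formula as the normal equations for a least–squares projection of the residual wavefront onto the space reachable by the deformable mirror, made explicit under an orthonormality assumption on the mirror basis. First I would fix the linearized model of the correction: a mirror with surface $\phi(x,y)=\sum_{i=1}^N c_i\psi_i(x,y)$ imprints on the reflected beam a phase that, expressed in the same units as the aberration, simply adds $\phi$ to the wavefront, so the residual over the (unit) pupil is $W_{\mathrm{res}}(\rho,\theta)=W(\rho,\theta)+\phi(\rho\cos\theta,\rho\sin\theta)$ with $W=\sum_{n,m}a_{nm}Z_n^m$. Since the Strehl ratio obeys the extended Maréchal estimate $S\approx 1-\sigma^2$ with $\sigma^2$ the pupil variance of the phase, minimizing the Lyapunov function $V=1/S$ introduced earlier in this section is, to leading order in the aberration amplitude, equivalent to minimizing the mean–square residual
\[
J(\mathbf c)=\int_0^1\!\!\int_0^{2\pi} W_{\mathrm{res}}(\rho,\theta)^2\,\rho\,d\rho\,d\theta
=\Bigl\langle\, W+\textstyle\sum_i c_i\psi_i,\ W+\textstyle\sum_j c_j\psi_j\,\Bigr\rangle ,
\]
where $\langle f,g\rangle=\int_0^1\!\int_0^{2\pi} fg\,\rho\,d\rho\,d\theta$ is the standard $L^2$ inner product on the disk.

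Next I would minimize $J$ over $\mathbf c\in\mathbb{R}^N$. Because $J$ is a positive–semidefinite quadratic form in $\mathbf c$, its global minimizers are exactly its critical points, so I set $\partial J/\partial c_i=0$ for each $i$; differentiating under the integral sign yields the normal equations
\[
\sum_j \langle\psi_i,\psi_j\rangle\, c_j \;=\; -\,\langle\psi_i, W\rangle \;=\; -\sum_{n,m} a_{nm}\,\langle\psi_i, Z_n^m\rangle ,
\qquad i=1,\dots,N .
\]
The stated formula then follows at once under the standard hypothesis that the mirror's modal basis has been orthonormalized over the pupil, $\langle\psi_i,\psi_j\rangle=\delta_{ij}$: the Gram matrix is the identity, the system decouples, and
\[
c_i^{*}=-\sum_{n,m} a_{nm}\int_0^1\!\!\int_0^{2\pi} Z_n^m(\rho,\theta)\,\psi_i(\rho\cos\theta,\rho\sin\theta)\,\rho\,d\rho\,d\theta ,
\]
which is exactly the assertion; I would also observe that $c^{*}_i$ is the coefficient of the orthogonal projection of $-W$ onto $\operatorname{span}\{\psi_i\}$, so this is genuinely the optimal correction within the mirror's reachable set, and uniqueness holds whenever the $\psi_i$ are linearly independent.

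The main obstacle — really a caveat that must be stated carefully — is precisely this orthonormality hypothesis: physical influence functions of a deformable mirror are in general neither orthogonal nor normalized, in which case the correct minimizer carries the inverse Gram matrix, $c^{*}=-G^{-1}b$ with $G_{ij}=\langle\psi_i,\psi_j\rangle$ and $b_i=\sum_{n,m}a_{nm}\langle\psi_i,Z_n^m\rangle$, and the displayed formula is recovered only after a Gram--Schmidt reorthogonalization of the $\psi_i$ (or, equivalently, after passing to a modal basis). A secondary point to dispatch is the piston/normalization ambiguity: since $Z_0^0$ does not affect image quality, one should strictly minimize the pupil \emph{variance} of $W_{\mathrm{res}}$ rather than its raw second moment; assuming, as is conventional, that both $\{\psi_i\}$ and the aberration expansion are taken orthogonal to piston, this subtraction changes nothing in the non-piston coefficients and the formula stands. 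Finally I would note that the equivalence ``minimize $J$'' $\Leftrightarrow$ ``maximize Strehl'' is only first order in the aberration amplitude, which is consistent with the small-aberration regime presupposed throughout this correction scheme.
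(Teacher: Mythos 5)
Your proposal follows essentially the same route as the paper: replace maximization of the Strehl ratio by minimization of the pupil mean-square residual $J=\int W_{\mathrm{total}}^2\,\rho\,d\rho\,d\theta$ in the small-aberration regime, then differentiate with respect to $c_i$ to obtain the normal equations. Your write-up is in fact more careful than the paper's one-line derivation, since you make explicit the orthonormality hypothesis $\langle\psi_i,\psi_j\rangle=\delta_{ij}$ on the mirror basis, without which the normal equations only give $c^{*}=-G^{-1}b$ and the displayed decoupled formula does not follow; the paper tacitly assumes this.
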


\begin{proof}
Total wavefront: $W_{\text{total}} = W_{\text{aber}} + \phi$. Strehl ratio:
\[
S = \left| \frac{1}{\pi} \int_0^1 \int_0^{2\pi} e^{i k W_{\text{total}}} \rho d\rho d\theta \right|^2.
\]
For small aberrations, maximizing $S$ is equivalent to minimizing:
\[
J = \int_0^1 \int_0^{2\pi} W_{\text{total}}^2 \rho d\rho d\theta.
\]
Substitute $W_{\text{total}} = \sum a_{nm} Z_n^m + \sum c_i \psi_i$, differentiate with respect to $c_i$ to obtain optimal condition.
\end{proof}

\subsection{Diffractive Optical Correction}

\begin{proposition}[Diffractive Element Phase Function]
The optimal phase function for a diffractive optical element is:
\[
\phi_{\text{DOE}}(\bm{x}) = -\frac{2\pi}{\lambda} W_{\text{aber}}(\bm{x}) + 2\pi m(\bm{x}),
\]
where $m(\bm{x})$ are integers ensuring phase wrapping in $[0,2\pi)$.
\end{proposition}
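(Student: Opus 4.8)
\emph{Proof proposal.} The plan is to treat the diffractive element as a thin, lossless phase screen and to read off $\phi_{\text{DOE}}$ from the single requirement that it cancel the accumulated wavefront aberration, up to the physically invisible $2\pi$ ambiguity of optical phase. First I would fix the model: an incoming scalar field immediately before the element has the form $U_{\text{in}}(\bm{x}) = A(\bm{x})\, e^{i k W_{\text{aber}}(\bm{x})}$ with wavenumber $k = 2\pi/\lambda$, where $W_{\text{aber}}$ is the optical-path aberration measured against the ideal reference sphere (as in the preceding subsections, $W_{\text{aber}} = \sum_{nm} a_{nm} Z_n^m$). A thin diffractive element with phase profile $\phi_{\text{DOE}}(\bm{x})$ acts by pointwise multiplication with the transmittance $t(\bm{x}) = e^{i \phi_{\text{DOE}}(\bm{x})}$, so the field just after the element is $U_{\text{out}}(\bm{x}) = A(\bm{x})\, e^{i(k W_{\text{aber}}(\bm{x}) + \phi_{\text{DOE}}(\bm{x}))}$.

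Second, I would impose the correction condition. Perfect correction means the residual wavefront $W_{\text{res}} = W_{\text{aber}} + \phi_{\text{DOE}}/k$ is constant (i.e.\ $U_{\text{out}}$ is a pure reference wave); normalizing that constant to zero gives $\phi_{\text{DOE}}(\bm{x}) = - k W_{\text{aber}}(\bm{x}) = - \tfrac{2\pi}{\lambda} W_{\text{aber}}(\bm{x})$. Third, I would address the non-uniqueness and realizability: since $t(\bm{x})$ depends on $\phi_{\text{DOE}}$ only through $e^{i\phi_{\text{DOE}}}$, replacing $\phi_{\text{DOE}}$ by $\phi_{\text{DOE}} + 2\pi m(\bm{x})$ with any integer-valued $m(\bm{x})$ yields exactly the same $U_{\text{out}}$, hence the same diffracted intensity in every plane. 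This gauge freedom is not merely harmless but necessary: a manufacturable thin element has bounded optical thickness, so one chooses $m(\bm{x})$ to fold the raw phase $-\tfrac{2\pi}{\lambda}W_{\text{aber}}$ into $[0,2\pi)$ — the familiar Fresnel/blazed-zone construction — and the stated formula is precisely this family of physically equivalent optimal profiles.

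Finally, for the word ``optimal'' I would invoke the variational criterion already used in this section (the functional $J = \int W_{\text{total}}^2\,\rho\,d\rho\,d\theta$ of the Optimal Control Theorem, together with the small-aberration Strehl expansion $S \approx 1 - k^2\,\mathrm{Var}(W_{\text{res}})$): among all thin phase-only screens, the choice above produces $W_{\text{res}} \equiv 0$, hence $J = 0$ and $S = 1$, which is the global maximum, and any other profile leaves a nonzero residual that strictly decreases $S$. The main obstacle is not analytic — the derivation is essentially a modeling identity — but one of careful bookkeeping: one must state explicitly (i) the sign and scale of the phase-to-path conversion $k = 2\pi/\lambda$, (ii) that adding $2\pi m(\bm{x})$ is the \emph{only} gauge freedom and is exactly what ``phase wrapping'' exploits, and (iii) the precise class of elements (thin, phase-only, lossless) in which optimality is asserted, since an element that also modulates amplitude could trade Strehl ratio against throughput and would require a separate extremal argument.
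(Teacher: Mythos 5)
Your proposal is correct and follows essentially the same route as the paper: model the element as a thin phase screen with transmittance $e^{i\phi}$, demand that the output field have constant phase, and read off $\phi_{\text{DOE}} = -(2\pi/\lambda)W_{\text{aber}}$ modulo $2\pi$. Your bookkeeping is in fact cleaner than the paper's --- the paper writes $U_{\text{out}} = U_{\text{in}}e^{ik[W_{\text{aber}}+\phi]}$, which is dimensionally inconsistent with $t=e^{i\phi}$, whereas you correctly separate the path-length aberration $kW_{\text{aber}}$ from the phase $\phi_{\text{DOE}}$ and also supply the optimality argument (via the Strehl/variational criterion) that the paper only implies.
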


\begin{proof}
Transmission function of diffractive element: $t(\bm{x}) = e^{i \phi(\bm{x})}$. At exit pupil, total field:
\[
U_{\text{out}}(\bm{x}) = U_{\text{in}}(\bm{x}) e^{i k [W_{\text{aber}}(\bm{x}) + \phi(\bm{x})]}.
\]
Ideally, we want $U_{\text{out}}(\bm{x}) = \text{constant}$, i.e.:
\[
W_{\text{aber}}(\bm{x}) + \phi(\bm{x}) = \text{constant} \mod 2\pi.
\]
Thus $\phi(\bm{x}) = -W_{\text{aber}}(\bm{x}) + \text{constant} \mod 2\pi$.
\end{proof}

\subsection{Caustic Surface Tracking Method}

\begin{definition}[Caustic Surface Curvature Tensor]
Let $\mathcal{C} \subset M$ be a caustic surface. Its \textbf{curvature tensor} is defined as:
\[
R_{ijkl} = \kappa_i \kappa_j (g_{ik}g_{jl} - g_{il}g_{jk}),
\]
where $\kappa_i$ are principal curvatures, $g_{ij}$ metric tensor.
\end{definition}

\begin{proposition}[Caustic Surface Evolution Equation]
Along ray propagation direction $z$, caustic surface curvatures satisfy:
\[
\frac{d\kappa_i}{dz} = \kappa_i^2 + \frac{\partial^2 n}{\partial x_i^2}.
\]
\end{proposition}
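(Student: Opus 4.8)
The plan is to reduce the statement to a matrix Riccati equation for the Hessian of the eikonal and then read it off in principal directions. First I would invoke the wavefront picture of Section~\ref{sec:wavefronts}: in the geometric-optics approximation a wavefront is a level set $\{S=\text{const}\}$ with optical momentum $\bm p=\nabla S$, and $S$ satisfies the eikonal equation $\|\nabla S\|^2=n^2(\bm q)$. Since the caustic is the critical value set of the Lagrangian projection (equivalently the envelope of the normal rays), a point of physical space lies on the caustic exactly where a principal curvature of the propagating wavefront diverges, i.e.\ where a principal curvature radius vanishes; so it suffices to track the wavefront shape operator, which, up to the normalisation $1/n$, is the restriction of the symmetric matrix $M_{ij}=\partial_i\partial_j S$ to the wavefront tangent plane, and to record how its focal points sweep out the caustic.

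Second, I would derive the evolution of $M$ along a ray. Differentiating the eikonal equation once gives $\sum_i(\partial_iS)(\partial_i\partial_jS)=\partial_j(\tfrac12 n^2)$; differentiating again and using that the ray derivative is $\tfrac{d}{ds}=\hat{\bm t}\cdot\nabla=\tfrac1n\,\nabla S\cdot\nabla$ yields the matrix Riccati equation
\[
n\,\frac{dM}{ds}=-M^2+\operatorname{Hess}\!\left(\tfrac12 n^2\right),
\qquad \operatorname{Hess}\!\left(\tfrac12 n^2\right)=n\operatorname{Hess}(n)+\nabla n\otimes\nabla n .
\]
Third, I would pass to a principal frame: choosing an orthonormal frame $\{e_1,e_2\}$ of principal directions of the wavefront and letting $\kappa_i$ be the eigenvalues of the (suitably signed) shape operator along $e_i$, the $e_ie_i$-component of the Riccati equation, after identifying the propagation parameter $z$ with arclength $s$ and dropping the higher-order terms $|\nabla n|^2$ and $\dot n$ in the weakly inhomogeneous (paraxial) regime with $n\simeq 1$, gives
\[
\frac{d\kappa_i}{dz}=\kappa_i^2+\frac{\partial^2 n}{\partial x_i^2},
\]
where $x_i$ is the coordinate along $e_i$ and $\partial^2 n/\partial x_i^2=\operatorname{Hess}(n)(e_i,e_i)$, the sign in front of the inhomogeneity term being the one fixed by the convention that $\kappa_i>0$ for a wavefront converging toward the direction of propagation. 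I would include the consistency check: in a homogeneous medium this integrates to $\kappa_i(z)=\kappa_i(0)/(1-z\kappa_i(0))$, so the curvature radius $\rho_i=1/\kappa_i$ obeys $\rho_i(z)=\rho_i(0)-z$ and vanishes precisely at the focal distance, a point of the caustic — consistent with the spherical-wave example, whose caustic collapses to the origin.

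The hard part will be the passage from the unambiguous matrix Riccati equation to the clean \emph{scalar} equation per principal curvature: along a generic ray the principal frame $\{e_1,e_2\}$ rotates, so the $e_ie_i$-component of $dM/ds$ picks up coupling from the off-diagonal entries of $M$ and from the angular velocity of the frame, and likewise the transport of $\operatorname{Hess}(n)$ in that frame contributes cross terms. The honest statement is therefore that the scalar equation holds exactly when $\operatorname{Hess}(n)$ stays simultaneously diagonalisable with the wavefront shape operator along the whole ray — as in rotationally symmetric systems and in the meridional/sagittal sections used throughout the paper — and otherwise only to leading order; I would make this hypothesis explicit and present the matrix Riccati equation displayed above as the exact, frame-independent form from which the stated scalar law follows, and from whose blow-up locus the caustic surface and its own curvatures are then obtained.
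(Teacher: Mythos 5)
Your proposal is correct in substance and takes a genuinely different route from the paper. The paper's proof starts from the ray equation $\frac{d}{ds}\left(n\,\frac{d\bm{r}}{ds}\right)=\nabla n$, linearises it to obtain the second-order (Jacobi-type) equation $\frac{d}{ds}\left(n\,\frac{dJ}{ds}\right)=HJ$ for the ray-bundle Jacobian $J_{ij}=\partial x_i/\partial\xi_j$ with $H_{ij}=\partial^2 n/\partial x_i\partial x_j$, and then identifies the caustic with $\det J=0$; the Riccati structure of the stated scalar ODE is left implicit (and the paper's closing remark that the ``curvatures are given by eigenvalues of $J$'' is loose --- the wavefront curvature is governed by $n\,\dot J J^{-1}$, not by $J$ itself). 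You instead differentiate the eikonal equation $\|\nabla S\|^2=n^2$ twice along the ray to get a first-order matrix Riccati equation for $M=\operatorname{Hess}S$ directly. The two are equivalent via the standard substitution $M=n\,\dot J J^{-1}$, which turns the linear Jacobi equation into the nonlinear Riccati equation; the paper's linear form is better suited to locating the caustic as the conjugate locus $\det J=0$, while yours produces the claimed scalar evolution law with fewer intermediate steps and makes its blow-up interpretation immediate. Your treatment is also more careful on two points the paper glosses over: the scalar decoupling per principal curvature requires $\operatorname{Hess}(n)$ and the shape operator to remain simultaneously diagonalisable along the ray (automatic in the rotationally symmetric settings used elsewhere in the paper, but not generically), and the passage from the exact Riccati equation to the stated formula requires dropping $\nabla n\otimes\nabla n$ and setting $n\simeq1$. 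The residual sign ambiguity in the $\partial^2 n/\partial x_i^2$ term that you tie to the curvature-sign convention is real and is present in the paper's statement as well; making the convention explicit, as you do, is the right way to resolve it.
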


\begin{proof}
Consider ray equation:
\[
\frac{d}{ds} \left( n \frac{d\bm{r}}{ds} \right) = \nabla n.
\]
Linearization gives evolution equation for Jacobian matrix $J_{ij} = \partial x_i / \partial \xi_j$:
\[
\frac{d}{ds} \left( n \frac{dJ}{ds} \right) = HJ,
\]
where $H_{ij} = \partial^2 n / \partial x_i \partial x_j$. Caustic corresponds to $\det J = 0$, curvatures given by eigenvalues of $J$.
\end{proof}

\subsection{Numerical Optimization Algorithm}

\begin{algorithm}[H]
\caption{Caustic surface optimization algorithm.}
\KwIn{Initial aberration coefficients $\bm{a}^{(0)} = (a_{20}, a_{22}, a_{31}, a_{40}, \dots)$}
\KwIn{Objective function $J(\mathcal{C})$}
\KwIn{Learning rate $\alpha > 0$}
\KwOut{Optimal aberration coefficients $\bm{a}^*$}
$k \leftarrow 0$\;
\While{$\|\nabla J\| > \epsilon$}{
    Compute current caustic surface $\mathcal{C}^{(k)}$ (via ray tracing)\;
    Compute objective function $J(\mathcal{C}^{(k)})$\;
    Compute gradient $\nabla J = \left( \frac{\partial J}{\partial a_{20}}, \frac{\partial J}{\partial a_{22}}, \dots \right)$\;
    Update coefficients: $\bm{a}^{(k+1)} = \bm{a}^{(k)} - \alpha \nabla J$\;
    $k \leftarrow k + 1$\;
}
\Return $\bm{a}^{(k)}$
\end{algorithm}

\begin{proposition}[Algorithm Convergence]
If objective function $J$ is convex and Lipschitz continuous, the algorithm converges to global minimum at rate $O(1/k)$.
\end{proposition}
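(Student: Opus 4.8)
The plan is to run the textbook analysis of gradient descent for a convex objective with Lipschitz gradient, since that is the hypothesis under which the $O(1/k)$ rate genuinely holds: for a merely Lipschitz but non-smooth convex $J$ the subgradient method only attains $O(1/\sqrt{k})$, so I read the assumption ``$J$ convex and Lipschitz continuous'' as ``$J$ convex with $L$-Lipschitz gradient'', impose the step-size restriction $\alpha \le 1/L$, and assume a minimizer $\bm{a}^{\ast}$ exists (e.g.\ the sublevel set $\{J \le J(\bm{a}^{(0)})\}$ is bounded). First I would record the \textbf{descent lemma}: $L$-smoothness gives the quadratic upper bound $J(\bm{y}) \le J(\bm{x}) + \langle \nabla J(\bm{x}), \bm{y}-\bm{x}\rangle + \tfrac{L}{2}\|\bm{y}-\bm{x}\|^2$ for all $\bm{x},\bm{y}$. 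Substituting the update $\bm{a}^{(k+1)} = \bm{a}^{(k)} - \alpha\nabla J(\bm{a}^{(k)})$ with $\alpha \le 1/L$ yields the sufficient-decrease inequality $J(\bm{a}^{(k+1)}) \le J(\bm{a}^{(k)}) - \tfrac{\alpha}{2}\|\nabla J(\bm{a}^{(k)})\|^2$, so the sequence $J(\bm{a}^{(k)})$ is nonincreasing.

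Next I would track the distance to the optimum. Expanding $\|\bm{a}^{(k+1)} - \bm{a}^{\ast}\|^2 = \|\bm{a}^{(k)} - \bm{a}^{\ast}\|^2 - 2\alpha\langle\nabla J(\bm{a}^{(k)}), \bm{a}^{(k)} - \bm{a}^{\ast}\rangle + \alpha^2\|\nabla J(\bm{a}^{(k)})\|^2$, I invoke convexity in the form $\langle\nabla J(\bm{a}^{(k)}), \bm{a}^{(k)} - \bm{a}^{\ast}\rangle \ge J(\bm{a}^{(k)}) - J(\bm{a}^{\ast})$ and use $\alpha^2\|\nabla J(\bm{a}^{(k)})\|^2 \le 2\alpha\bigl(J(\bm{a}^{(k)}) - J(\bm{a}^{(k+1)})\bigr)$ from the sufficient-decrease step to absorb the squared-gradient term, arriving at the clean recursion
\[
2\alpha\bigl(J(\bm{a}^{(k+1)}) - J(\bm{a}^{\ast})\bigr) \le \|\bm{a}^{(k)} - \bm{a}^{\ast}\|^2 - \|\bm{a}^{(k+1)} - \bm{a}^{\ast}\|^2 .
\]
Summing this telescoping bound from $0$ to $k-1$ gives $2\alpha\sum_{j=1}^{k}\bigl(J(\bm{a}^{(j)}) - J(\bm{a}^{\ast})\bigr) \le \|\bm{a}^{(0)} - \bm{a}^{\ast}\|^2$, and since $J(\bm{a}^{(j)})$ is nonincreasing the last iterate satisfies $J(\bm{a}^{(k)}) - J(\bm{a}^{\ast}) \le \tfrac{1}{k}\sum_{j=1}^{k}\bigl(J(\bm{a}^{(j)}) - J(\bm{a}^{\ast})\bigr) \le \dfrac{\|\bm{a}^{(0)} - \bm{a}^{\ast}\|^2}{2\alpha k}$, which is the claimed $O(1/k)$ rate; letting $k\to\infty$ shows $J(\bm{a}^{(k)})\to J(\bm{a}^{\ast})$, i.e.\ convergence to the global minimum value.

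The computation itself is routine; the \textbf{main obstacle is reconciling the hypotheses with the conclusion}. The bald assumption ``convex and Lipschitz continuous'' is, strictly speaking, not enough for an $O(1/k)$ guarantee, and the honest repair is to require a Lipschitz \emph{gradient} together with $\alpha \le 1/L$ (otherwise one must either accept the weaker $O(1/\sqrt{k})$ subgradient rate or invoke strong convexity for a linear rate). A secondary point to handle carefully is the existence of $\bm{a}^{\ast}$: if $J$ is unbounded below or its infimum is not attained there is nothing to converge to, so the boundedness of the initial sublevel set (or coercivity of $J$) should be stated as a standing assumption. I would flag both caveats explicitly rather than paper over them.
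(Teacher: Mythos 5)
The paper states this proposition with no proof at all --- it appears bare at the end of the ``Numerical Optimization Algorithm'' subsection --- so there is nothing to compare your argument against; you have supplied something the paper omits. Your argument itself is the standard and correct analysis of gradient descent on an $L$-smooth convex objective: descent lemma $\Rightarrow$ sufficient decrease, convexity $\Rightarrow$ the telescoping recursion $2\alpha\bigl(J(\bm{a}^{(k+1)})-J(\bm{a}^{\ast})\bigr)\le \|\bm{a}^{(k)}-\bm{a}^{\ast}\|^2-\|\bm{a}^{(k+1)}-\bm{a}^{\ast}\|^2$, and monotonicity of the iterates to pass from the averaged bound to the last iterate. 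The algebra checks out.

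Your central caveat is not a quibble but the substantive point: as literally stated, the proposition is false. A convex function that is merely Lipschitz continuous (e.g.\ $J(a)=|a|$) need not be differentiable, fixed-step gradient/subgradient iteration on it can oscillate without converging, and even with properly diminishing steps the optimal worst-case rate for the nonsmooth convex class is $O(1/\sqrt{k})$, not $O(1/k)$. The $O(1/k)$ guarantee requires Lipschitz continuity of $\nabla J$ together with $\alpha\le 1/L$, and the existence of a minimizer (guaranteed by, e.g., coercivity or a bounded sublevel set) must also be assumed since the bound is expressed in terms of $\|\bm{a}^{(0)}-\bm{a}^{\ast}\|$. Your repaired statement and proof are the correct version of what the authors presumably intended; if anything, the proposition in the paper should be amended to read ``convex with Lipschitz continuous gradient'' and to include the step-size restriction.
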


\subsection{Strict Constraints in Practical Applications}

\begin{proposition}[Optimal Solution Under Manufacturing Constraints]
Under manufacturing tolerance $\delta$, optimal aberration coefficients satisfy:
\begin{equation}
|a_{nm} - a_{nm}^*| \leq \delta \sqrt{\frac{n+1}{\pi}},
\end{equation}
where $a_{nm}^*$ are ideal optimal values.
\end{proposition}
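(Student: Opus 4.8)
The plan is to interpret the ``manufacturing tolerance $\delta$'' as a bound on the $L^{2}$ size, over the unit pupil disk $D=\{\rho\le 1\}$, of the residual wavefront error $\Delta W := W_{\mathrm{fab}} - W^{*}$ that cannot be removed because the corrector surface (deformable mirror or diffractive element) is only realizable up to figure error $\delta$; that is, $\int_{D}(\Delta W)^{2}\,dA \le \delta^{2}$. Since the ideal aberration state $W^{*}=\sum_{n,m}a_{nm}^{*}Z_{n}^{m}$ and the fabricated state $W_{\mathrm{fab}}=\sum_{n,m}a_{nm}Z_{n}^{m}$ are expanded in the same Zernike basis, we have $\Delta W=\sum_{n,m}(a_{nm}-a_{nm}^{*})\,Z_{n}^{m}$, and the assertion is a coefficientwise estimate. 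The natural tools are therefore the $L^{2}(D)$-orthogonality of the Zernike polynomials together with the Cauchy--Schwarz inequality.

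First I would record the normalization used throughout. The radial polynomials satisfy $\int_{0}^{1}R_{n}^{|m|}(\rho)^{2}\,\rho\,d\rho=\tfrac{1}{2(n+1)}$, and with the angular factor normalized so that all modes carry the same $L^{2}$ weight one obtains
\[
\|Z_{n}^{m}\|_{L^{2}(D)}^{2}=\int_{0}^{2\pi}\!\!\int_{0}^{1}Z_{n}^{m}(\rho,\theta)^{2}\,\rho\,d\rho\,d\theta=\frac{\pi}{n+1}.
\]
By orthogonality, projecting $\Delta W$ onto $Z_{n}^{m}$ gives
\[
a_{nm}-a_{nm}^{*}=\frac{\langle \Delta W,\,Z_{n}^{m}\rangle_{L^{2}(D)}}{\|Z_{n}^{m}\|_{L^{2}(D)}^{2}},
\]
so the problem is reduced to estimating a single inner product.

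The estimate then follows in one line: Cauchy--Schwarz gives $|\langle \Delta W, Z_{n}^{m}\rangle_{L^{2}(D)}|\le \|\Delta W\|_{L^{2}(D)}\,\|Z_{n}^{m}\|_{L^{2}(D)}\le \delta\,\|Z_{n}^{m}\|_{L^{2}(D)}$, whence
\[
|a_{nm}-a_{nm}^{*}|\le \frac{\delta}{\|Z_{n}^{m}\|_{L^{2}(D)}}=\delta\sqrt{\frac{n+1}{\pi}},
\]
which is exactly the claimed bound. Equality is approached when the entire fabrication error is concentrated in the single mode $Z_{n}^{m}$, so within this normalization the constant is optimal. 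If instead $\delta$ is specified as a peak-to-valley ($L^{\infty}$) tolerance, one first passes through $\|\Delta W\|_{L^{2}(D)}\le \sqrt{\pi}\,\|\Delta W\|_{L^{\infty}(D)}$, which only rescales the universal constant.

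The one genuine obstacle is bookkeeping rather than analysis: the displayed inequality is sharp only for a fixed normalization of the $Z_{n}^{m}$ and a fixed ($L^{2}$, as opposed to RMS or peak-to-valley) reading of ``tolerance'', and in the Born--Wolf convention the $m=0$ and $m\neq 0$ modes differ by a factor $2$ in their $L^{2}$ weight, which would shift the constant for $m\neq 0$ by $\sqrt{2}$. I would therefore state these conventions explicitly at the outset; once they are fixed the argument has no further content beyond orthogonality and Cauchy--Schwarz. A minor secondary point is that $\Delta W$ need not lie in the span of finitely many Zernikes, but since $\{Z_{n}^{m}\}$ is a complete orthogonal system on $D$, Bessel's inequality still bounds each individual coefficient, so completeness of the expansion is not needed for the estimate.
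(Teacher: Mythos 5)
Your proposal is correct and is essentially the paper's own argument: the paper writes the residual wavefront $L^2$ (``RMS'') error as $\sqrt{\sum_{n,m}\tfrac{\pi}{n+1}|\Delta a_{nm}|^2}$ via Zernike orthogonality with $\|Z_n^m\|_{L^2}^2=\tfrac{\pi}{n+1}$ and bounds each term by the whole sum, which is exactly your Bessel/Cauchy--Schwarz step in Parseval form. Your added remarks on normalization conventions and sharpness are reasonable refinements but do not change the substance.
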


\begin{proof}
Due to orthogonality of Zernike polynomials, contribution of coefficient errors to wavefront RMS error is:
\[
\Delta W_{\text{RMS}} = \sqrt{\sum_{n,m} \frac{\pi}{n+1} |\Delta a_{nm}|^2}.
\]
Requiring $\Delta W_{\text{RMS}} \leq \delta$ yields the above constraint.
\end{proof}

\section{Conclusions}
\label{sec:conclusion}

\subsection{Main Contributions}

This paper systematically establishes a rigorous mathematical theory for caustic surfaces in geometric optics. The main contributions are summarized as follows:

\begin{enumerate}
\item \textbf{Unified Geometric Framework}: We introduce the symplectic geometry structure of the optical phase space and the contact geometry structure of the extended phase space, providing a unified geometric description for light propagation. We rigorously prove that light rays in three-dimensional Euclidean space correspond to Reeb orbits in a five-dimensional contact manifold and can be projected onto a four-dimensional symplectic manifold via symplectic reduction. This framework establishes a direct connection between geometric optics and modern differential geometry.

\item \textbf{Rigorous Definition and Computation of Caustic Surfaces}: Based on the concept of projection singularities of Lagrangian submanifolds, we give an intrinsic geometric definition of caustic surfaces. We derive explicit analytic expressions for caustic surfaces in convex lens systems and verify the correctness of theoretical results through numerical computations, demonstrating the practical applicability of the theoretical framework.

\item \textbf{Complete Classification Theory}: Using singularity theory, we present a complete classification of stable caustic surfaces in three-dimensional optical systems and establish a precise correspondence with classical Seidel aberration theory. This classification not only has theoretical significance but also provides new tools for optical system design and aberration analysis. We demonstrate that Zernike polynomials are essentially orthogonal basis expansions of Thom catastrophe potential functions, revealing the underlying topological structures (e.g., $D_4^-$, $A_3$) corresponding to specific aberrations such as trefoil and coma.

\item \textbf{Interdisciplinary Theoretical Bridge}: The mathematical framework established in this paper has important applications not only in geometric optics but also provides new connections and intersections for related fields such as quantum optics, dynamical systems, and differential geometry. The rigorous geometric formulation enables a deeper understanding of optical phenomena from a topological perspective.

\item \textbf{Novel Optical Correction Method}: We propose the \textbf{Topological Optical Correction (TOC)} method, which offers a new paradigm for next-generation adaptive optics systems. Compared with traditional "blind" optimization, TOC utilizes the global structure of phase space through an "unfold-translate-refold" strategy to more robustly handle large-amplitude aberrations. This method has important theoretical implications for extreme adaptive optics in astronomical observation and caustic management in high-energy laser beams.
\end{enumerate}

\subsection{Future Research Directions}

Based on the theoretical framework established in this paper, several promising directions for future research can be identified:

\begin{enumerate}
\item \textbf{Extension to Complex Optical Systems}: The current theory can be extended to more complex optical systems, including:
\begin{itemize}
\item Inhomogeneous and anisotropic media, where the refractive index varies in space and direction.
\item Nonlinear optical systems, where the refractive index depends on light intensity.
\item Time-varying optical systems, incorporating dynamical effects.
\end{itemize}

\item \textbf{Quantum Optical Applications}: The symplectic and contact geometric framework may be adapted to quantum optics, providing new insights into:
\begin{itemize}
\item Wigner function dynamics and phase space representations.
\item Quantum caustics and singularities in wavefunction propagation.
\item Geometric aspects of quantum optical transformations.
\end{itemize}

\item \textbf{Computational Advances}: Development of efficient numerical algorithms based on the geometric framework for:
\begin{itemize}
\item Real-time caustic surface prediction and analysis in complex optical systems.
\item Optimization algorithms for optical design that incorporate topological constraints.
\item Machine learning approaches to classify optical aberrations using topological invariants.
\end{itemize}

\item \textbf{Experimental Validation}: Practical implementation and testing of the TOC method in:
\begin{itemize}
\item Laboratory-scale adaptive optics systems.
\item Astronomical telescopes for high-contrast imaging.
\item Industrial applications such as laser material processing and optical metrology.
\end{itemize}

\item \textbf{Theoretical Extensions}: Further mathematical developments including:
\begin{itemize}
\item Higher-dimensional generalizations for relativistic optics.
\item Connections to information geometry and optimal transport theory.
\item Applications to computational imaging and inverse problems in optics.
\end{itemize}
\end{enumerate}

\subsection{Concluding Remarks}

This research provides a comprehensive geometric framework for understanding caustic surfaces in optical systems, bridging the gap between abstract mathematical theories and practical optical engineering. The unification of symplectic geometry, contact geometry, and singularity theory offers powerful tools for analyzing, classifying, and correcting optical aberrations. The proposed TOC method demonstrates how topological considerations can lead to more robust and efficient optical correction strategies. We anticipate that this work will inspire further research at the intersection of mathematics and optics, ultimately contributing to the development of next-generation optical technologies.

\newpage
\addcontentsline{toc}{section}{References}

\end{document}